\documentclass[11pt]{amsart}
\usepackage{amssymb}
\usepackage{MnSymbol}
\usepackage{mathrsfs}
\usepackage{eufrak,verbatim}
\usepackage{amsthm}
\usepackage{fancyhdr}
\usepackage{setspace}
\usepackage{color}
\usepackage[pdftex]{graphicx}
\usepackage{wasysym}
\usepackage{upgreek}
\usepackage{hyperref}

\pdfpagewidth 8.5in
\pdfpageheight 11in
\topmargin -.5in

\headsep 0in
\textheight 9.4 in
\textwidth 7.5 in
\oddsidemargin -.5 in
\evensidemargin -.5 in
\pagestyle{fancy}
\headheight 25pt

\rhead[]{\thepage}
\chead[\textsf{Jared Speck} \\ ]{\textsf{The Nonlinear Stability of the Trivial Solution to the Maxwell-Born-Infeld System} \\ }
\lhead[\thepage]{}
\rfoot[]{}
\cfoot{}
\lfoot{}
\numberwithin{equation}{subsection}

\newtheorem{theorem}{Theorem}
\newtheorem{proposition}{Proposition}[subsection]
\newtheorem{lemma}[proposition]{Lemma}
\newtheorem{corollary}[proposition]{Corollary}

\theoremstyle{definition}
\newtheorem{definition}{Definition}[subsection]
\newtheorem{remark}{Remark}[subsection]

\newcommand{\eqdef}{\overset{\mbox{\tiny{def}}}{=}}

\newcommand{\ux}{\underline{x}}
\newcommand{\uy}{\underline{y}}

\newcommand{\Ldual}{{^{\star \hspace{-.03in}} \mathscr{L}}}
\newcommand{\Far}{\mathcal{F}}
\newcommand{\Gar}{\mathcal{G}}
\newcommand{\Max}{\mathcal{M}}
\newcommand{\Fardual}{{^{\star \hspace{-.03in}}\mathcal{F}}}

\newcommand{\Maxdual}{{^{\star \hspace{-.05in}}\mathcal{M}}}
\newcommand{\Stress}{\dot{Q}}
\newcommand{\EMT}{Q}

\newcommand{\ualpha}{\underline{\alpha}}
\newcommand{\uL}{\underline{L}}

\newcommand{\ualphadot}{{^{\odot} \underline{\alpha}}}
\newcommand{\alphadot}{{^{\odot} \alpha}}
\newcommand{\rhodot}{{^{\odot} \rho}}
\newcommand{\sigmadot}{{^{\odot} \sigma}}

\newcommand{\emet}{e}
\newcommand{\Sigmafirstfund}{\underline{g}}
\newcommand{\SigmafirstfundNabla}{\underline{\nabla}}

\newcommand{\angn}{\not \nabla}
\newcommand{\angdiv}{{\not {\hspace{-.03in} \mbox{div}}} \ }
\newcommand{\angcurl}{{\not {\hspace{-.03in} \mbox{curl}}} \ }

\newcommand{\angepsilon}{{\not \epsilon}}
\newcommand{\uepsilon}{\underline{\epsilon}}

\newcommand{\ucurcl}{\underline{\mbox{curl}} \ }
\newcommand{\udiv}{\underline{\mbox{div}} \ }

\newcommand{\angg}{\not g}

\newcommand{\Lie}{\mathcal{L}}
\newcommand{\Liemod}{\hat{\mathcal{L}}}

\newcommand{\Knorm}{\nupdownline}
\newcommand{\Kintnorm}{\nUpdownline}

\newcommand{\Electricfield}{E}
\newcommand{\Displacement}{D}
\newcommand{\Magneticinduction}{B}
\newcommand{\Magneticfield}{H}
\newcommand{\SFar}{\mathfrak{E}}
\newcommand{\SFardual}{\mathfrak{B}}

\newcommand{\Farinvariant}{\mbox{\lightning}}

\newcommand{\Vmult}{X_{local}}

\begin{document}
\pagestyle{fancy}
\title{The Nonlinear Stability of the Trivial Solution to the Maxwell-Born-Infeld System}

\author{Jared Speck}
\thanks{The author was supported by the Commission of the European Communities, ERC Grant Agreement No 208007. He was also funded in parts by NSF through grants DMS-0406951 and DMS-0807705.}
\address{University of Cambridge, Department of Pure Mathematics \& Mathematical Statistics,
Wilberforce Road, Cambridge, CB3 0WB, United Kingdom}
\email{jspeck@math.princeton.edu}

\begin{abstract}
In this article, we use an electromagnetic gauge-free framework to establish the existence of small-data global solutions to the Maxwell-Born-Infeld (MBI) system on the Minkowski space background in $1 + 3$ dimensions. Because the nonlinearities in the system satisfy a version of the null condition, we are also able to show that these solutions decay at exactly the same rates as solutions to the linear Maxwell-Maxwell system. In addition, we show that on any Lorentzian manifold, the MBI system is hyperbolic in the interior of the field-strength regime in which its Lagrangian is real-valued.
\end{abstract}

\keywords{birefringence; canonical stress; characteristic subset;
energy current; energy estimate; global existence; global Sobolev inequality; Lorenz gauge; 
Morawetz vectorfield; null condition; null decomposition; separability; vectorfield method}

\subjclass{Primary: 35A01; Secondary: 35L03; 35Q60; 78M99}

\date{Version of \today}
\maketitle

\tableofcontents

\section{Introduction} \label{S:Introduction}

The Maxwell-Born-Infeld (MBI) system is a nonlinear model of classical electromagnetism that was introduced in
the 1930's by Born and Infeld \cite{mBlI1934}, with \cite{mB1933} a precursor by Born. In this article, we
study the source-free (i.e. the right-hand sides of \eqref{E:dFis0intro} - \eqref{E:dMis0intro} are $0$) MBI system in the fixed spacetime\footnote{By spacetime, we mean a $4-$dimensional time-oriented manifold $M$ together with a Lorentzian metric $g$ of signature $(-,+,+,+)$.} $(M,g).$ We will assume throughout the article that $(M,g)$ is equal to $\mathbb{R}^{1+3}$ equipped with the usual Minkowski metric, which has components $g_{\mu \nu} = \mbox{diag}(-1,1,1,1)$ in an inertial coordinate system
$(x^0,(x^1,x^2,x^3)) \eqdef (t,\ux).$ Nonetheless, much of our discussion regarding the structure of the MBI system remains valid in an arbitrary spacetime. As is explained in detail in Section \ref{S:MBI}, the MBI equations can be expressed as 

\begin{subequations}
\begin{align} 
	d \Far & = 0, \label{E:dFis0intro} \\
	d \Max & = 0, \label{E:dMis0intro}
\end{align}
\end{subequations}
where $d$ denotes the exterior derivative operator, the \emph{Faraday tensor} $\Far,$ which is a two-form, is the fundamental unknown, the \emph{Maxwell tensor} $\Max,$ which is also a two-form, is defined by

\begin{align} \label{E:Maxsdefintro}
	\Max & = \ell_{(MBI)}^{-1}\big( \Fardual + \Farinvariant_{(2)} \Far \big),
\end{align}
$\star$ denotes the Hodge dual, $\Farinvariant_{(1)} \eqdef \frac{1}{2} (g^{-1})^{\zeta \kappa} (g^{-1})^{\eta \lambda} \Far_{\zeta \eta} \Far_{\kappa \lambda},$ $\Farinvariant_{(2)} \eqdef \frac{1}{4} (g^{-1})^{\zeta \kappa} (g^{-1})^{\eta \lambda} \Far_{\zeta \eta} \Fardual_{\kappa \lambda}$ are the \emph{electromagnetic invariants}, $(g^{-1})^{\mu \nu}$ are the components of the inverse of the \emph{spacetime metric} $g,$ and $\ell_{(MBI)} \eqdef \big(1 + \Farinvariant_{(1)} - \Farinvariant_{(2)}^2 \big)^{1/2}.$ Born and Infeld's contribution to the above system was their provision of the constitutive relation \eqref{E:Maxsdefintro}, while equations \eqref{E:dFis0intro} - \eqref{E:dMis0intro} were postulated\footnote{Maxwell's formulation of electromagnetism was not presented using the framework of the Faraday tensor, nor that of the familiar electric field $\Electricfield$ and magnetic induction $\Magneticinduction;$ rather, he used the structure of quaternions to write down a system of 20 equations in 20 unknowns. The familiar ``vector'' formulation in terms of $\Electricfield$ and $\Magneticinduction$ was developed by Heaviside \cite{pN1988}.} in the 1860's by Maxwell \cite{jM1891a}, \cite{jM1891b}. We recall that in contrast to \eqref{E:Maxsdefintro}, Maxwell adopted the linear constitutive law $\Max = \Fardual.$ Hence, we refer to the nonlinear system \eqref{E:dFis0intro}, \eqref{E:dMis0intro}, \eqref{E:Maxsdefintro} as the ``Maxwell-Born-Infeld'' equations, and the linear system \eqref{E:dFis0intro}, \eqref{E:dMis0intro}, $\Max = \Fardual$ as the ``Maxwell-Maxwell'' equations. We summarize our main results here; they are thoroughly stated and proved in Sections \ref{S:IVP} and \ref{S:GlobalExistence}.

\begin{changemargin}{.25in}{.25in} 
\textbf{Main Results.} 
The trivial solution to the MBI system on the $1 + 3$ dimensional Minkowski space background is globally stable. More specifically, if the initial data for the MBI system are sufficiently small as measured by the weighted Sobolev norm defined in \eqref{E:HNdeltanorm}, then these data launch a unique classical solution to the MBI system existing in all of Minkowski space. Moreover, these small-data solutions decay at the same rate as solutions to the linear Maxwell-Maxwell equations. In addition, the MBI system is hyperbolic\footnote{By ``hyperbolic,'' we mean that there is a local energy estimate available that can be used to prove that initial data have a non-trivial development, and that furthermore, the system has finite speed of propagation.} in the interior of the field-strength regime in which its Lagrangian is real-valued. In particular, the system is locally well-posed in the aforementioned weighted Sobolev space.
\end{changemargin}

\begin{remark}
	Certain large fields can cause $\ell_{(MBI)}$ to become complex-valued. For such fields, MBI theory is
	not even well-defined. However, as is discussed in Remark \ref{R:BDHyperbolicity}, the MBI equations are 
	well-defined \emph{and hyperbolic} for all finite values of the state-space variables $(\Magneticinduction,\Displacement),$ 
	which are introduced in Section \ref{SS:electromagneticdecomposition}. In particular, the MBI system is well-posed
	for sufficiently regular initial data belonging to the interior of the region of state-space in which the equations are 
	well-defined.
\end{remark}

\begin{remark}
	Although we only discuss global existence to the future, our results apply to just as well to the past. Our notion of ``future'' is determined by assumption that the vectorfield $\partial_t$ is future-directed.
\end{remark}

Recently, several scientific communities have expressed renewed interest in the MBI system for a variety of reasons. 
As an interesting example, we cite the works \cite{mK2004a}, \cite{mK2004b}, in which Kiessling has proposed a model of classical electrodynamics with point charges that has the promise of self-consistency\footnote{The motion of point charges in linear Maxwell-Maxwell theory is \emph{mathematically ill-defined} under the usual Lorentz force law.}: it is expected that
the theory is well-defined without truncation, regularization, or renormalization. This theory couples a first-order guiding law for the point charges, whose corresponding relativistic guiding field satisfies a Hamilton-Jacobi type PDE, to the MBI field equations. In contrast to the case of the linear Maxwell-Maxwell equations, the electromagnetic potentials\footnote{Recall that an electromagnetic potential is a one form $A$ such that $\Far = d A.$} of the solutions to the MBI system \emph{with non-accelerating point charge sources} in Minkowski space can be chosen to be \emph{Lipschitz continuous}; it is expected that this continuity property should remain true even for accelerating point charges, which would then allow for a well-defined coupling to the Hamilton-Jacobi theory. As a second example, we note that the MBI system has mathematical connections to string theory, for its Lagrangian (see \eqref{E:LMBI}) appears in connection
with the motion gauge fields (arising in the study of attached, open strings) on a D-brane (see e.g. \cite{gG2003}).

As is true for Kiessling, our interest in the MBI system is further motivated by results contained in \cite{gB1969} and \cite{jP1970}, which show that it is the unique\footnote{More precisely, there is a one-parameter family of such theories indexed by $\upbeta > 0,$ where $\upbeta$ is Born's ``aether'' constant.} theory of classical electromagnetism that is derivable from an action principle and that satisfies the following $5$ postulates (see also the discussion in \cite{iBB1983}, \cite{mK2004a}):

\begin{enumerate}
	\item The field equations transform covariantly under the Poincar\'e group.
	\item The field equations are covariant under a Weyl (gauge) group.
	\item The electromagnetic energy associated to a stationary point charge is finite.
	\item The field equations reduce to the linear Maxwell-Maxwell equations in the weak field limit.
	\item The solutions to the field equations are not birefringent (we will soon elaborate upon this notion).
\end{enumerate}
We remark that the Maxwell-Maxwell system satisfies all of the above postulates except for $(3),$ and that
the MBI system was shown to satisfy $(3)$ by Born in \cite{mB1933}.
 
We would now like to further discuss postulate $(5).$ Physically, it is equivalent to the statement that the
``speed of light propagation'' is independent of the polarization of the wave fields. Mathematically, it can be recast as a statement about the \emph{characteristic subset} of the field equations. To flesh out this notion, we
need to discuss some technical details. First, we remark that equation \eqref{E:HmodifieddMis0summary}, which reads $H^{\mu \nu \kappa \lambda} \nabla_{\mu} \Far_{\kappa \lambda} = 0,$ is equivalent to \eqref{E:dMis0intro} - \eqref{E:Maxsdefintro} modulo equation \eqref{E:dFis0intro}, where the tensorfield $H^{\mu \nu \kappa \lambda}$ is defined in \eqref{E:Hdef}. Now for each covector $\xi \in T_p^* M,$ the cotangent space of $M$ at $p,$ we consider the quadratic form $\chi^{\mu \nu}(\xi) \eqdef H^{\mu \kappa \nu \lambda}\xi_{\kappa} \xi_{\lambda}.$ Because of the properties \eqref{E:hminussignproperty1} - \eqref{E:hsymmetryproperty}, which are also possessed by $H^{\mu \kappa \nu \lambda},$ it follows that $\xi$ is an element of $N\big(\chi(\xi)\big),$  the null space of $\chi(\xi).$ The characteristic subset of $T_p^* M,$ which we denote by $C_p^*,$ is defined
to be the set of all non-zero $\xi$ such that $N\big(\chi(\xi)\big)$ is strictly larger than span($\xi$); i.e.,

\begin{align}
	C_p^* \eqdef \lbrace \xi \neq 0 \in T_p^* M \mid N\big(\chi(\xi)\big) \slash \mbox{span}(\xi) \neq \emptyset \rbrace.
\end{align}
As discussed in detail in \cite[Chapter 6]{dC2000}, the set $C_p^*$ governs the local speeds of propagation of solutions to the MBI system\footnote{More precisely, it is $C_p,$ the characteristic subset of $T_p M,$ the tangent space of $M$ at $p,$  that corresponds to the local speeds of propagation. $C_p$ is dual to $C_p^*$ in a sense defined in \cite{dC2000}.}. It is easy to see that $C_p^*$ is a conical set in the sense that if $\xi \in C_p^*,$ then any non-zero multiple of $\xi$ is also an element of $C_p^*.$ In general, this conical subset may have several different ``sheets.'' However, in the case of the MBI system, there is a degeneracy resulting in the presence of \emph{only a single sheet} (i.e., there is only one ``null cone'' associated to the MBI system); this is the mathematical characterization of ``no birefringence.'' As we alluded to above, the Maxwell-Maxwell system also possesses this property. Moreover, in the case of the Maxwell-Maxwell system, $C_p^*$ exactly coincides with the gravitational null cone $\lbrace \xi \in T_p^* M \mid (g^{-1})^{\kappa \lambda} \xi_{\kappa} \xi_{\lambda} = 0 \rbrace.$ However, in a general nonlinear\footnote{More specifically, we mean quasilinear.} theory, and specifically in the case of the MBI system, $C_p^*$ does \emph{not} coincide with the gravitational null cone. Although we do not directly prove this fact in this article, we plan to discuss this issue in detail in a future publication, in which we will give a complete discussion of the geometry of the MBI system; see also the discussion in the proof of Proposition \ref{P:LocalExistenceCurrent} 
and in \cite{iBB1983}. We do however, as an aside, investigate a related issue that would be relevant if one wanted to couple the MBI system to the equations of general relativity. Namely, we prove that MBI system's energy-momentum tensor satisfies the \emph{dominant energy condition}; see Lemma \ref{L:dominantenergycondition}. Physically, this means that the speeds of propagation associated to the MBI system are no larger than the speeds associated to the gravitational null cone; i.e., the ``speed of MBI light is less than or equal to the speed of gravity.'' Mathematically, this means that $C_p^*$ lies outside\footnote{The dual picture perhaps more intuitively corresponds to the notion of MBI light traveling ``more slowly than gravity:'' $C_p$ lies inside of the gravitational null cone in $T_p M,$ which is defied to be $\lbrace X \in T_p M \mid g_{\kappa \lambda} X^{\kappa} X^{\lambda} = 0 \rbrace.$} of the gravitational null cone.

\subsection{Comparison with related work}

The core of our proof is based on a blend of ideas presented in \cite{dCsK1990} and \cite{dC2000}; in \cite{dCsK1990}, Christodoulou and Klainerman use methods similar to the ones used in this paper to analyze the decay properties of solutions to the linear Maxwell-Maxwell equations in Minkowski space, while in \cite{dC2000}, Christodoulou provides a framework for deriving positive ``almost conserved'' quantities for nonlinear, hyperbolic PDEs that are derivable from a Lagrangian, of which the MBI system is an example. In short, using the methods of \cite{dC2000}, we are able to construct certain almost-conserved (in the small-solution regime) energies that have coercive properties nearly identical to those of the conserved quantities constructed in \cite{dCsK1990}. 

The aforementioned works and the present one are applications of a collection of geometric-analytic techniques that are applicable to a large class of hyperbolic PDEs derivable from a Lagrangian. These techniques are often collectively referred to as the \emph{vectorfield method}. The term ``vectorfield'' refers to the fact that in typical applications, coercive quantities are constructed with the help of special vectorfields connected to the symmetries (or approximate symmetries) of the system. Originally introduced by Klainerman \cite{sK1985}, \cite{sK1986} in his analysis of small-data global solutions to nonlinear wave equations, the vectorfield method has grown into its own industry. As examples, we provide a non-exhaustive list of topics for which the vectorfield method has proven fruitful: 

\begin{itemize}
	\item Global nonlinear stability results for the Einstein equations \cite{lBnZ2009},
	\cite{dCsK1993}, \cite{mDgH2006}, \cite{sKfN2003}, \cite{hLiR2005}, \cite{hLiR2010}, \cite{iRjS2009}, \cite{jS2010b}.
	\item Small-data global existence for nonlinear elastic waves \cite{tS1996}.
	\item The formation of shocks in solutions to the relativistic Euler equations \cite{dC2007}.
	\item Decay results for linear equations on curved backgrounds \cite{lApB2009}, \cite{pB2008}, \cite{mDiR2005a}, 
		\cite{mDiR2008b}, \cite{mDiR2008a}, \cite{gH2010}.
	\item The formation of trapped surfaces in vacuum solutions to the Einstein equations \cite{dC2009}, \cite{sKiR2009a}.
	\item Local existence and non-relativistic limits for the relativistic Euler equations
		without the use of symmetrizing variables \cite{jS2008b}, \cite{jS2008a}, \cite{jSrS2010}.
\end{itemize}

\subsection{Difficulties in working with a four-potential}

In various contexts during the study of linear Maxwell-Maxwell theory, authors commonly analyze the components of a four-potential $A$  and its derivatives, rather than the Faraday tensor itself  (\cite[chapter $6$]{jJ1999} is a classic reference, and \cite{jL2008}, \cite{jL2009}  
are examples in the context of the Einstein-Maxwell system). Recall that a four-potential is a one-form $A$ such that $\Far = dA;$ the existence of such a one-form is guaranteed by \eqref{E:dFis0intro} and Poincar\'e's Lemma. $A$ is not unique, for any ``gauge'' transformation of the form $A \rightarrow A + \nabla \gamma,$ where $\gamma$ is a scalar-valued function, preserves the relation $\Far = dA.$ A well-known method of capitalizing on this gauge freedom is to work in the \emph{Lorenz gauge}, which is the added condition 

\begin{align} \label{E:Lorenz}
	\nabla_{\kappa} A^{\kappa} = 0. 
\end{align}	
The advantages of the Lorenz gauge are discussed below. Of course, the viability of the gauge condition \eqref{E:Lorenz}, which can be arranged to hold initially, depends on the fact that it is preserved by the flow of an appropriate version of the
Maxwell-Maxwell equations (e.g. the system \eqref{E:Maxwelldecoupledintro} below).

We would now like to discuss some subtle issues concerning the difficulties that may arise in an attempt to work with the Lorenz gauge when the electromagnetic equations are quasilinear. As in the remainder of the article, we assume in this section that  $(M,g)$ is Minkowski spacetime, and furthermore, that we are working in an inertial coordinate system. However, these assumptions have no substantial bearing on the issues at hand, for the same issues arise in any other spacetime $(M,g)$ 
equipped with any coordinate system. We begin with a brief summary of the framework used for discussing an arbitrary nonlinear covariant theory of electromagnetism that is derivable from a Lagrangian\footnote{We are slightly departing from the usual convention by referring to the Hodge dual of the Lagrangian, which we denote by $\Ldual,$ as the Lagrangian; $\mathscr{L}$ is a four-form, while $\Ldual$ is scalar-valued.}. If we choose to describe such a theory through the use of four-potentials $A,$ then the Lagrangian $\Ldual = \Ldual[\nabla A]$ can be written as a function of $\nabla A.$ A very detailed elaboration of this discussion can be found in \cite{dC2000}; here, we only introduce the facts that are relevant to the issues at hand. The Euler-Lagrange equations (equations of motion) for such a theory can be written as

\begin{align} \label{E:EquationsforA}
	h_{\mu \nu}^{\zeta \eta} \nabla_{\zeta} \nabla_{\eta} A^{\mu} = 0, && (\nu = 0,1,2,3),
\end{align}
where 

\begin{align} 
	h_{\mu \nu}^{\zeta \eta} \eqdef \frac{\partial^2 \Ldual}
	{\partial(\nabla_{\zeta} A^{\mu})\partial(\nabla_{\eta} A^{\nu})}.
\end{align}
Note that $h$ has a symmetry property that will be important for the construction of energies; it is invariant under the following simultaneous exchange of indices:

\begin{align} \label{E:hpotentialsymmetries}
	\mu & \leftrightarrow \nu,  && \zeta \leftrightarrow \eta.
\end{align}

Before discussing the difficulties that arise in the quasilinear case, let us recall the advantages of using the Lorenz gauge in the linear Maxwell-Maxwell theory. In this case, there is a well-known, remarkable simplification that occurs in Lorenz gauge: the equations \eqref{E:EquationsforA} can be written as a system of completely decoupled wave equations for the components of $A.$ That is, in Lorenz gauge, the components of $A$ are solutions to the following system:

\begin{align} \label{E:Maxwelldecoupledintro}
	g_{\mu \nu} (g^{-1})^{\zeta \eta} \nabla_{\zeta} \nabla_{\eta} A^{\mu} = 0, && (\nu = 0,1,2,3).
\end{align}
Consequently, we have that

\begin{align} \label{E:SeparableCase}
		h_{\mu \nu}^{\zeta \eta} = g_{\mu \nu} (g^{-1})^{\zeta \eta}.
\end{align}
Because of the full decoupling at the quasilinear level, we can multiply both sides of \eqref{E:Maxwelldecoupledintro} by the ``seemingly non-geometric'' quantity $\nabla_0 A_{\nu}$ (with the index $\nu$ downstairs!), integrate over $\mathbb{R}^3,$ and integrate by parts\footnote{These steps can alternatively be carried out using an energy current framework, similar to the energy current estimate \eqref{E:Indefiniteenergytimederivative} described below.} to show that the following energy $\mathcal{E}$ is conserved for solutions to \eqref{E:Maxwelldecoupledintro}:

\begin{align} \label{E:Maxwellconservedquantity}
	\mathcal{E}^2(t) \eqdef \frac{1}{2} 
		\sum_{\zeta, \eta = 0}^4 \int_{\mathbb{R}^3} \big(\nabla_{\zeta} A_{\eta}(t,\ux) \big)^2 \, d^3 \ux.
\end{align}
In the language of \cite{dC2000}, the special structure of $h_{\mu \nu}^{\zeta \eta}$ in \eqref{E:SeparableCase} is called
\emph{separability}; the existence of the conserved \emph{coercive} quantity \eqref{E:Maxwellconservedquantity} is because of this additional structure, \emph{which is not typically present in the equations of a quasilinear theory}.

Let us contrast this to the case of the MBI system (or any other quasilinear perturbation of 
linear Maxwell-Maxwell theory derivable from a covariant Lagrangian). In the case of the MBI system in Lorenz gauge, it can be shown using \eqref{E:Lorenz} that the MBI equations can be written in such a way that

\begin{align}  \label{E:MBIhforAsplitting}
	h_{\mu \nu}^{\zeta \eta} = g_{\mu \nu} (g^{-1})^{\zeta \eta} + \widetilde{h}_{\mu \nu}^{\zeta \eta},
\end{align}
where $\widetilde{h}_{\mu \nu}^{\zeta \eta},$ which has the symmetry property \eqref{E:hpotentialsymmetries},
is a term that is of quadratic order in $\nabla A$ in the small-solution regime. 
The corresponding system of PDEs is therefore

\begin{align} \label{E:MBIintegrationbypartsproblem}
	g_{\mu \nu} (g^{-1})^{\zeta \eta} \nabla_{\zeta} \nabla_{\eta} A^{\mu} 
	+ \widetilde{h}_{\mu \nu}^{\zeta \eta} \nabla_{\zeta} \nabla_{\eta} A^{\mu}
	= 0, && (\nu = 0,1,2,3).
\end{align}
Unfortunately, in general, it is not possible to simply multiply both sides of \eqref{E:MBIintegrationbypartsproblem} by 
$\nabla_0 A_{\nu},$ integrate over $\mathbb{R}^3,$ and integrate by parts; the tensorfield in \eqref{E:MBIhforAsplitting} 
is not separable in general, nor in the particular case of the MBI system \textbf{Note that this difficulty does not arise in the study of a single quasilinear wave equation}; e.g., small quasilinear perturbations of the linear wave equation in Minkowski space preserve hyperbolicity and the availability of a basic $L^2$ energy estimate. 


One may \emph{attempt} to resolve this difficulty by using the framework of \emph{energy currents} developed in \cite{dC2000}. 
A natural quantity that arises from an application of this framework is $I(t),$ which is defined by

\begin{align} \label{E:Indefiniteenergy}
	I(t) \eqdef \int_{\mathbb{R}^3} J_{(MBI+Lorenz)}^{0}(t,\ux) \, d^3 \ux,
\end{align}
where the \emph{energy current} $J_{(MBI+Lorenz)}^{\mu}$ is defined by

\begin{align} \label{E:JMBIintro}
	J_{(MBI+Lorenz)}^{\mu} & \eqdef - h_{\kappa \lambda}^{\mu \eta}(\nabla_{\eta} A^{\kappa})(\nabla_0 A^{\lambda})
		+ \frac{1}{2} \delta_0^{\mu} h_{\kappa \lambda}^{\zeta \eta} (\nabla_{\zeta} A^{\kappa})(\nabla_{\eta} 
		A^{\lambda}), && (\mu = 0,1,2,3).
\end{align}
As explained in detail in \cite{dC2000} and in Section \ref{S:CanonicalStress}, the current \eqref{E:JMBIintro} can be constructed by contracting a certain tensor, namely the canonical stress, against the vectorfield $\partial_t$. 
The details of this construction do not concern us here. We remark only that the quantity $I(t)$ is what one first tries to construct in an effort control solutions to the MBI system, i.e., during a proof of local well-posedness. On the one hand, it can be shown that $\frac{d}{dt} I(t)$ has one of the properties that is essential in order for it to be of use in analyzing the solution $\nabla A,$ namely that its time derivative can be bounded in terms of the $L^2$ norm of $\nabla A$ itself. More specifically, it can be shown that

\begin{align} \label{E:Indefiniteenergytimederivative}
	\frac{d}{dt} I(t) \leq C(\|\nabla h \|_{L^{\infty}}) \| \nabla A \|_{L^2}^2.
\end{align}
We remark that a quick way to see \eqref{E:Indefiniteenergytimederivative} is to use the equations \eqref{E:EquationsforA} and the symmetry property \eqref{E:hpotentialsymmetries} to show that $|\nabla_{\mu} J_{(MBI+Lorenz)}^{\mu}| \leq C(\|\nabla h \|_{L^{\infty}}) |\nabla A|^2;$ \eqref{E:Indefiniteenergytimederivative} then follows from the divergence theorem. Alternatively, one may multiply both sides of \eqref{E:MBIintegrationbypartsproblem} by the ``geometric'' quantity $\nabla_0 A^{\nu}$ (with the $\nu$ index upstairs!) and integrate by parts \emph{with the help of the symmetry property} \eqref{E:hpotentialsymmetries}, arriving at \eqref{E:Indefiniteenergytimederivative}.

However, we quickly run into a difficulty: \eqref{E:MBIhforAsplitting} and \eqref{E:JMBIintro} imply that in the small-solution regime, $J_{(MBI+Lorenz)}^{0}$ is \emph{indefinite} in $\nabla A:$

\begin{align} \label{E:J0MBIintro}
	J_{(MBI+Lorenz)}^{0} & = \frac{1}{2}\sum_{\zeta = 0}^3 g_{\kappa \lambda} (\nabla_{\zeta}A^{\kappa}) (\nabla_{\zeta}A^{\lambda})
		+ O(|\nabla A|^4).
\end{align}
Therefore, $I(t)$ is not a coercive quantity, and in particular, it is of no use in controlling the $L^2$ norms of solutions to \eqref{E:MBIintegrationbypartsproblem}.

These difficulties are not fatal in the sense that the fundamental unknown is the Faraday tensor $\Far = dA,$ and as explained in Section \ref{S:NormsandEnergies}, we can construct suitable positive energies by working directly with $\Far.$ 
More specifically, our energies control the combinations $\nabla_{\mu} A_{\nu} - \nabla_{\nu} A_{\mu}$ of any four-potential
satisfying $\Far = dA,$ but they do not control the individual components $\nabla_{\mu} A_{\nu}.$ Furthermore, there is an important advantage to working directly with $\Far:$ \emph{our conditions for global existence depend only on physical quantities, and not on auxiliary mathematical quantities such as the values of a four-potential} $A.$ We remark that we do not know whether or not an alternate argument\footnote{The authors in \cite{dChH2003} claim to have overcome these difficulties, but their chain of reasoning in going from equation \cite[Eqn. (3.8)]{dChH2003} to equation \cite[Eqn. (3.11)]{dChH2003} is difficult to follow; in particular, in equation \cite[Eqn. (3.10)]{dChH2003}, it is not clear whether the $\nu$ index for the four-potential $A$ is supposed to be ``upstairs'' or ``downstairs,'' a distinction which is essential for establishing the positivity of their energies.} could produce a positive quantity that is suitable for controlling the $L^2$ norm of the components $\nabla_{\mu} A_{\nu}$ of solutions $A$ to the MBI system (or, for that matter, any other covariant, quasilinear, non-separable system of electromagnetic equations derivable from a Lagrangian) in Lorenz gauge. That is to say, it is not clear whether or not the hyperbolicity of the equations is visible at the level of the components of $\nabla A$ in Lorenz gauge. If the answer turns out to be ``no,'' then this would mean that the Lorenz gauge is not viable. We believe that this is an interesting question worthy of further investigation.

\subsection{Comments on the analysis} \label{SS:Commentsonanalysis}
In this section, we summarize the main ideas of our proof. We first remark that all of the discussion in this section assumes that we have fixed an inertial coordinate system $(t,\ux)$ on $M,$ which is a global coordinate system in which the spacetime metric has the components $g_{\mu \nu} = \mbox{diag}(-1,1,1,1).$ Throughout this article, we work directly with the Faraday tensor $\Far,$ and thus avoid the aforementioned difficulties associated with choosing a gauge for the four-potential. To analyze $\Far,$ we use the framework of \cite{dCsK1990} and decompose it into its \emph{Minkowski null components}. Before discussing the notion of null components, we first introduce the following foliations of Minkowski space: the family of \emph{ingoing Minkowski null cones} $C_{s}^- \eqdef \lbrace (\tau,\uy) \mid |\uy| + \tau = s \rbrace;$ the family of \emph{outgoing Minkowski null cones} $C_{q}^+ \eqdef \lbrace (\tau,\uy) \mid |\uy| - \tau = q \rbrace;$ and the spacelike hypersurfaces $\Sigma_t \eqdef \lbrace (\tau,\uy) \mid \tau = t \rbrace,$ which intersect the null cones in spheres $S_{r,t} \eqdef \lbrace (\tau,\uy) \mid \tau = t, |\uy| = r \rbrace.$ All of these families of surfaces, will play an important role in this article. 

Now in a neighborhood of each spacetime point, there exists a \emph{null frame} $\lbrace \uL, L, e_1, e_2 \rbrace,$ where  $\uL \eqdef \partial_t - \partial_r$ is an ingoing geodesic vectorfield tangent to the corresponding cone $C_{s}^-,$ 
$L \eqdef \partial_t + \partial_r$ is an outgoing geodesic vectorfield tangent to the corresponding cone $C_{q}^+$ normalized by the condition $g(\uL,L) = - 2,$ and the orthonormal vectorfields $e_1,$ $e_2$ are tangent to the corresponding sphere $S_{r,t}$ and normal to both $\uL,$ $L.$ At each point $p$ where it is defined, the null frame forms a basis for $T_p M.$ The null components of the two-form $\Far$ are then defined to be the following pair of one-forms $\ualpha= \ualpha[\Far],$ $\alpha = \alpha[\Far]$ tangent to the $S_{r,t},$ and the following two scalar quantities 
$\rho = \rho[\Far],$ $\sigma = \sigma[\Far]:$

\begin{align*}
	\ualpha_A & = \Far_{A \uL}, \\
	\alpha_A & = \Far_{AL}, \\
	\rho & = \frac{1}{2} \Far_{\uL L}, \\
	\sigma & = \Far_{12},
\end{align*}
where we have abbreviated $\Far_{A \uL} = e_A^{\kappa} \uL^{\lambda}\Far_{\kappa \lambda}, \ \Far_{12} = e_1^{\kappa} e_2^{\lambda} \Far_{\kappa \lambda},$ etc; see Section \ref{S:Decompositions} for more details. 

\subsubsection{Linear analysis}

The following decay properties, which can be expressed with the help of the \emph{null coordinates} $q \eqdef r - t,$ $s \eqdef r + t,$ where $r \eqdef |\ux|,$ were shown in \cite{dCsK1990} for solutions to the linear Maxwell-Maxwell system
arising from data with suitable decay properties at infinity\footnote{The finiteness of $\| (\mathring{\Magneticinduction}, \mathring{\Displacement}) \|_{H_1^3}$ is sufficient, where $(\mathring{\Magneticinduction}, \mathring{\Displacement})$ is the 
electromagnetic decomposition of the data for $\Far$ described in Section \ref{SS:electromagneticdecomposition},
and the weighed Sobolev norm $H_1^3$ is defined in Definition \ref{D:HNdeltanorm}.}

\begin{itemize}
	\item  The worst decaying component is $\ualpha,$ which decays like $(1 + s)^{-1} (1 + |q|)^{-3/2}.$
	\item The fastest decaying component is $\alpha,$ which decays like $(1 + s)^{-5/2}.$
	\item $\rho$ and $\sigma$ each decay at the intermediate rate $(1 + s)^{-2} (1 + |q|)^{-1/2}.$
	\item Any derivative tangential to the outgoing cones $C_{q}^+$ (i.e. $\nabla_L, \nabla_{e_A}$) 
	creates additional decay of order $(1 + s)^{-1},$ while the transversal derivative $\nabla_{\uL}$ creates additional
	decay of order $(1 + |q|)^{-1},$ which is weaker than $(1 + s)^{-1}.$
\end{itemize}

In Section \ref{S:GlobalExistence} (see also Proposition \ref{P:GlobalSobolev}), we show that small-data solutions to the 
MBI system have exactly the same decay properties. Since the analysis of the linear theory also plays a key role in our
analysis of the MBI system, we first discuss the basic strategy for establishing the aforementioned decay of solutions to the linear Maxwell-Maxwell system

\begin{align} \label{E:Maxwellintro}
	d \Far & = 0, && d \Fardual = 0.
\end{align}
We recall that for any two-form $\Far,$ the corresponding Maxwell-Maxwell energy momentum tensor is 
\begin{align}
	\EMT_{\mu \nu}^{(Maxwell)} & \eqdef \Far_{\mu}^{\ \kappa} \Far_{\nu \kappa} - \frac{1}{4} g_{\mu \nu}
	\Far_{\kappa \lambda}\Far^{\kappa \lambda},
\end{align}
and that if $\Far$ is a solution of \eqref{E:Maxwellintro}, then $\nabla_{\mu} \EMT_{\ \nu}^{\mu(Maxwell)} = 0,$
$(\nu = 0,1,2,3).$
Furthermore, using the timelike conformal Killing field\footnote{Recall that a conformal Killing field is a vectorfield $X$ that satisfies $\nabla_{\mu} X_{\nu} + \nabla_{\nu} X_{\mu} = \phi_X g_{\mu \nu}$ for some scalar-valued function $\phi_X.$} $\overline{K},$ which has components $\overline{K}^{0} = 1 + t^2 + (x^1)^2 + (x^2)^2 + (x^3)^2, \overline{K}^j = 2t x^j,$ $(j = 1,2,3),$ we can construct the \emph{energy current} $J_{(Maxwell)}^{\mu} = - \EMT_{\ \kappa}^{\mu (Maxwell)} \overline{K}^{\kappa}.$ Because $\EMT_{\mu \nu}^{(Maxwell)}$ is symmetric, $(g^{-1})^{\kappa \lambda} \EMT_{\kappa \lambda}^{(Maxwell)} = 0,$ and $\overline{K}$ is a conformal Killing field, it thus follows that

\begin{align} \label{E:divJ0intro}
	\nabla_{\mu} J_{(Maxwell)}^{\mu} = 0.
\end{align}
Additionally, $\EMT_{\mu \nu}^{(Maxwell)}$ has the following positivity property: for every pair of future-directed causal vectors $X,Y,$ we have that $\EMT_{\kappa \lambda}^{(Maxwell)} X^{\kappa} Y^{\lambda} \geq 0.$ In particular, choosing 
$X^{\mu} \eqdef \delta_0^{\mu},$ $Y^{\mu} \eqdef \overline{K}^{\mu},$ it can be shown that (see Lemma \ref{L:CanonicalStressErrorTermExpansion})

\begin{align} \label{E:J0intro}
	J_{(Maxwell)}^0 = \frac{1}{2} \Big\lbrace(1 + q^2)|\ualpha|^2 + (1+s^2)|\alpha|^2 + (2 + q^2 + s^2)(\rho^2 + \sigma^2) \Big\rbrace,
\end{align}
where $\ualpha,$ $\alpha,$ $\rho,$ and $\sigma$ are the null components of $\Far.$ If we define the energy $\mathcal{E} \geq 0$ by

\begin{align} \label{E:Energyintro}
	\mathcal{E}^2(t) \eqdef \int_{\mathbb{R}^3} \, J_{(Maxwell)}^0(t,\ux) d^3 \ux,
\end{align}
then it follows from \eqref{E:divJ0intro} and the divergence theorem that $\mathcal{E}(t)$ is constant in time if it is initially finite:

\begin{align} \label{E:Energyconstantintro}
	\frac{d}{dt} \big(\mathcal{E}^2(t) \big) = 0.
\end{align} 

The various weights in \eqref{E:J0intro} are the first hint that different null components of $\Far$ have different 
$L^{\infty}$ decay properties. However, the full proof of decay requires that we commute the Maxwell equations with various conformal Killing fields and apply the \emph{global Sobolev inequality}. Let us explain what we mean by this. Given any solution $\Far$ of \eqref{E:Maxwellintro} and any conformal Killing field $Z,$ it can be shown that $\Lie_Z \Far$ 
is also a solution to the linear Maxwell-Maxwell equations. Here, $\Lie_Z \Far$ is the Lie derivative of $\Far$ with respect to the vectorfield $Z.$ Iterating this process, we conclude that $\Lie_{\mathcal{Z}}^I \Far$ is a solution, where $I$ is a multi-index, and $\Lie_{\mathcal{Z}}^I$ is shorthand notation for iterated Lie derivatives with respect to vectorfields $Z \in \mathcal{Z}.$ In this article, the relevant set of conformal Killing fields $\mathcal{Z}$ consists of: the $4$ translations 
$T_{(\mu)} \eqdef \partial_{\mu},$ $(\mu=0,1,2,3);$ the $3$ rotations $\Omega_{(jk)} \eqdef x_j \partial_k - x_k \partial_j,$
$(1 \leq j < k \leq 3);$ the $3$ Lorentz boosts $\Omega_{(0j)} \eqdef - t \partial_j - x_j \partial_t,$ $(j=1,2,3);$
and the scaling vectorfield $S \eqdef x^{\kappa} \partial_{\kappa}.$ Furthermore, as in \eqref{E:Energyconstantintro}, the weighted $L^2$ norms of the various null components of the $\Lie_{\mathcal{Z}}^I \Far$ are constant in time. Now in order to derive $L^{\infty}$ decay, we need to connect these weighted $L^2$ norms of $\Lie_{\mathcal{Z}}^I \Far$ to weighted $L^{\infty}$ norms of $\Far.$ This is exactly what the global Sobolev inequality provides; see Proposition \ref{P:GlobalSobolev} for the details. 

Let us also discuss the heuristic mechanism for the fact that derivatives tangential to the $C_q^+$ (i.e. $\nabla_L,$ $\nabla_{e_A}$) have better decay properties than the transversal derivative $\nabla_{\uL}.$  As examples, we consider the outgoing vectorfield $L \eqdef \partial_t + \partial_r$ and the ingoing vectorfield $\uL \eqdef \partial_t - \partial_r,$ where $\partial_r$ denotes the radial derivative. Simple algebraic computations lead to the identities

\begin{align} \label{E:heuristic}
	L =  \frac{S - \omega^a \Omega_{(0a)}}{s},
	&& \uL = - \frac{S + \omega^a \Omega_{(0a)}}{q},
\end{align}
where $\omega^i \eqdef x^i/r,$ and $q,s$ are the null coordinates mentioned above. Therefore, if we have achieved good control of the quantities $\nabla_{S} \Far$ and $\nabla_{\Omega_{(0i)}} \Far,$
then the formulas \eqref{E:heuristic} suggest that we can achieve even better control of the outgoing derivative $\nabla_L \Far,$ because of the favorable denominator $s^{-1}.$ On the other hand, the transversal derivative $\nabla_{\uL} \Far$ features a less favorable denominator $q^{-1}.$ More specifically, the $q^{-1}$ term fails to provide decay in the ``wave zone'' $r \approx t,$ while in the entire region $\lbrace t \geq 0 \rbrace$ we have that $s^{-1} \leq \min \lbrace t^{-1}, r^{-1} \rbrace;$ i.e., decay in $s$ implies decay in $r$ and $t.$ 

\subsubsection{Nonlinear analysis} \label{SSS:NonlinearAnalysis}
We now outline the key differences between the proof of decay of solutions to the linear Maxwell-Maxwell equations, and the proof of the global existence of and decay of solutions to the MBI system in the small-data regime. To analyze solutions to the MBI system, the ``working form'' of which is given below in \eqref{E:modifieddFis0summary} - \eqref{E:HmodifieddMis0summary}, we will use the \emph{same Minkowski null decomposition }of the Faraday tensor described above. In particular, in order to derive our desired estimates, \textbf{we do not need to use the characteristic geometry of the MBI system}; in using the ``wrong'' Minkoswkian geometry (which has the advantage of relative simplicity), 
we are deviating from the correct MBI geometry \big(which is governed by the reciprocal Maxwell-Born-Infeld metric $(b^{-1})^{\mu \nu}$ defined in \eqref{E:MBImetric}\big) by small error terms that are controllable. Now like the Maxwell-Maxwell system, the MBI system has a corresponding divergence-free energy-momentum tensor $\EMT_{\mu \nu}^{(MBI)},$ which is given below in \eqref{E:MBItensorupper}; the availability of this tensor is a well-known consequence of the fact that the MBI Lagrangian $\Ldual_{(MBI)}$ (see \eqref{E:LMBI}) depends covariantly on only the metric $g$ and the field variables $\Far.$ This tensor can be used in conjunction with the vectorfield $\overline{K}$ to estimate the weighted $L^2$ norm of the solutions $\Far$ to the MBI system. However, to estimate the weighted $L^2$ norm of $\Lie_{\mathcal{Z}}^I \Far,$ we need a different tensor, which is described by Christodoulou in detail in \cite{dC2000}: the \emph{canonical stress} $\Stress_{\ \nu}^{\mu},$ which is defined below in \eqref{E:Hstress}. Now in the case of the linear Maxwell-Maxwell system, the canonical stress corresponding to the $\Lie_{\mathcal{Z}}^I \Far$ coincides with the energy momentum tensor $\EMT_{\ \nu}^{\mu(Maxwell)}$ constructed out of the $\Lie_{\mathcal{Z}}^I \Far,$ but in a general nonlinear theory, the two tensors differ. The important point is that the $\Lie_{\mathcal{Z}}^I \Far$ are solutions to the linearized equations \eqref{E:EOVBianchi} - \eqref{E:EOVMBI}, which are derivable from a \emph{linearized Lagrangian} $\dot{\mathscr{L}}$ 
(see \eqref{E:LinearizedLagrangian}) depending on the metric $g,$ the linearized variables $\dot{\Far} \eqdef \Lie_{\mathcal{Z}}^I \Far,$ \emph{and also the background} $\Far.$ Although the dependence of $\dot{\mathscr{L}}$ on the background precludes the availability of a divergence-free tensor for the linearized system, we may nevertheless use Christodoulou's framework to construct the tensor $\Stress_{\ \nu}^{\mu}.$  Although $\Stress_{\mu \nu}$ is in general not even symmetric, nor is $\Stress_{\ \nu}^{\mu}$ divergence-free, the role that $\Stress_{\ \nu}^{\mu}$ plays in the analysis of the linearized equations is roughly analogous to the role played by the energy momentum tensor in the original equations: 
$\nabla_{\mu} \Stress_{\ \nu}^{\mu},$ though non-zero, is of lower order (in terms of the number of derivatives), and furthermore, $\Stress_{\ \nu}^{\mu}$ possesses some positivity properties under contractions against certain pairs $(\xi,X)$ of timelike (covectors, vectors).

Once we have $\Stress_{\ \nu}^{\mu},$ we can again use the vectorfield $\overline{K}$ construct energies 
$\mathcal{E}_N[\Far(t)],$ which are a sum over $|I|\leq N$ of the energy of $\Lie_{\mathcal{Z}}^I \Far,$ that are analogous to the energies \eqref{E:Energyintro} defined in the Maxwell-Maxwell case; the precise definition is given in \eqref{E:mathcalENdef} below. However, in the MBI system, $\mathcal{E}_N[\Far(t)]$ is not constant. Additionally, in the nonlinear problem, the $q, s$ weighted factors appearing in the expression $\mathcal{E}_N[\Far(t)]$ are not manifestly uniform, but instead depend on the solution $\Far$ itself. For these reasons, it is convenient to introduce a norm $\Kintnorm \Far(t) \Kintnorm_{\Lie_{\mathcal{Z}};N}$ whose $q, s$ weights don't depend on $\Far;$ see \eqref{E:MorawetzWeightedLieDerivativeIntegralNormN}. In order to compare the two quantities, we establish inequality \eqref{E:EnergyNormEquivalence}, which shows that in the small-solution regime, $\mathcal{E}_N[\Far(t)] \approx \Kintnorm \Far(t) \Kintnorm_{\Lie_{\mathcal{Z}};N}.$ The crux of the global existence proof is the following: even though $\mathcal{E}_N[\Far(t)]$ is not constant, we are nevertheless able to derive an a-priori bound for $\Kintnorm \Far(t) \Kintnorm_{\Lie_{\mathcal{Z}};N}$ which shows that it remains uniformly small on any time interval of existence for the solution. According to the \emph{continuation principle} of Proposition \ref{P:LocalExistence}, such an a-priori bound for $\Kintnorm \Far(t) \Kintnorm_{\Lie_{\mathcal{Z}};N}$ implies global existence.

Now in order to estimate $\Kintnorm \Far(t) \Kintnorm_{\Lie_{\mathcal{Z}};N},$ we need to handle the numerous ``error'' terms
arising in the expression for $\frac{d}{dt}\big( \mathcal{E}_N^2[\Far(t)] \big).$ The first source of error terms was alluded to above, namely that the divergence of $\Stress_{\ \nu}^{\mu}$ is non-zero. The second source comes from the fact that the $\Lie_{\mathcal{Z}}^I \Far$ are solutions to the linearized equations \emph{with inhomogeneous terms}: many 
inhomogeneous ``error'' terms arise from commuting the operator $\Lie_{\mathcal{Z}}^I$ through the equation \eqref{E:HmodifieddMis0summary}; see Proposition \ref{P:Inhomogeneousterms}. This commuting is accomplished through the use of \emph{modified} Lie derivatives $\Liemod_{Z},$ which are equal to ordinary Lie derivatives plus a scalar multiple of the identity; see Definition \ref{D:Liemoddef} and Lemma \ref{L:LiemodZLiemodMaxwellCommutator}. A careful analysis of the 
special structure of the error terms (which are discussed in the next section), in conjunction with the global Sobolev inequality, leads to the a-priori estimate \eqref{E:energydifferentialinequality}, which is valid in the small-solution regime, and which we restate here for convenience:

\begin{align} \label{E:Norminequalityeintro}
	\Kintnorm \Far(t) \Kintnorm_{\Lie_{\mathcal{Z}};N}^2 \leq 
		C \Big\lbrace \Kintnorm \Far(0) \Kintnorm_{\Lie_{\mathcal{Z}};N}^2 
		+ \int_{\tau = 0}^t \frac{1}{1 + \tau^2} \Kintnorm \Far(\tau) \Kintnorm_{\Lie_{\mathcal{Z}};N}^2 \, d \tau \Big\rbrace.
\end{align}
Applying Gronwall's inequality to \eqref{E:Norminequalityeintro}, we thus conclude the desired result: 
$\Kintnorm \Far(t) \Kintnorm_{\Lie_{\mathcal{Z}};N}$ is globally bounded in time, if $\Kintnorm \Far(0) \Kintnorm_{\Lie_{\mathcal{Z}};N}$ is sufficiently small. In addition, we remark that the aforementioned decay properties of the solution $\Far$ are a by-product of the previous analysis. More specifically, the decay properties of $\Far$ follow directly from the global bound on $\Kintnorm \Far(t) \Kintnorm_{\Lie_{\mathcal{Z}};N}$ and the global Sobolev inequality (Proposition \ref{P:GlobalSobolev}).

We have a few final comments to make concerning the smallness of the data. The initial data consist of a pair of
one-forms $(\mathring{\Displacement},\mathring{\Magneticinduction})$ that are tangent to the Cauchy-hypersurface $\Sigma_0,$
and that satisfy the constraints (which are familiar from linear Maxwell-Maxwell theory) $\SigmafirstfundNabla_a \mathring{\Displacement}^a = \SigmafirstfundNabla_a \mathring{\Magneticinduction}^a = 0.$ Here, $\SigmafirstfundNabla$ denotes the Levi-Civita connection corresponding to the first fundamental form $\Sigmafirstfund$ of $\Sigma_0$ (see Section \ref{S:Geometry}). As is described in Section \ref{SS:electromagneticdecomposition}, $\Far(0)$ can be constructed out of $(\mathring{\Displacement},\mathring{\Magneticinduction}).$ However, the quantity $\Kintnorm \Far(0) \Kintnorm_{\Lie_{\mathcal{Z}};N},$ the smallness of which is required to close the global existence argument, involves derivatives of $\Far$ that are normal to $\Sigma_0$ (i.e., time derivatives). However, by repeatedly using an appropriate version of the MBI system, the normal derivatives of $\Far$ along $\Sigma_0$ can be written in terms of the tangential derivatives (i.e. spatial derivatives) of $(\mathring{\Displacement},\mathring{\Magneticinduction}).$ Consequently, as is explained in detail in Section \ref{SS:DataNorms}, it is possible to devise a smallness condition involving only the data $(\mathring{\Displacement},\mathring{\Magneticinduction})$ and their tangential derivatives, from which the smallness of $\Kintnorm \Far(0) \Kintnorm_{\Lie_{\mathcal{Z}};N}$ necessarily follows. This allows for a ``proper'' formulation of the small-data global existence condition of Theorem \ref{T:GlobalExistence} in terms of quantities inherent to the data.

\subsubsection{The error terms} \label{SS:ErrorTerms}
Let us now make a few remarks concerning the many error terms that arise in our study of $\frac{d}{dt}\big(\mathcal{E}_N^2[\Far(t)] \big),$ since the study of these error terms is at the heart of our analysis. In the small-solution regime, the MBI system is a \emph{cubic} quasilinear perturbation of the linear Maxwell-Maxwell system. It is well-known that for linear hyperbolic PDEs whose solutions possess the decay properties of solutions to the Maxwell-Maxwell system, cubic perturbations\footnote{We are assuming that the perturbations involve only $1$ or fewer derivatives, and that the perturbed system is also hyperbolic.} do not destroy the existence of small-data global solutions. In fact, a much shorter proof of small-data global existence could be provided by using the vectorfield $\partial_t$ in place of the vectorfield $\overline{K}$ in our construction of the energies. However, in order to show that small-data MBI solutions have the same decay properties as solutions to the linear Maxwell-Maxwell system, we make full use of the vectorfield $\overline{K},$ together with an algebraic property of the MBI system: its nonlinearities satisfy the \emph{null condition}. The null condition, which was first identified by Klainerman \cite{sK1986} in the context of nonlinear wave equations, is a collection of algebraic properties that are satisfied by special nonlinearities. Roughly speaking, when a nonlinearity satisfies the null condition, the worst kind of terms (from the point of view of decay) are not present. More specifically, in the case of the MBI system, the expression for $\frac{d}{dt}\big( \mathcal{E}_N^2[\Far(t)] \big)$ involves quartic terms in $\Far$ and its iterated Lie derivatives $\Lie_{\mathcal{Z}}^I \Far,$ multiplied by weights in $q$ and $s$ arising from the vectorfield $\overline{K}$ and its covariant derivative $\nabla \overline{K}.$ Because these terms are fourth order, we do not need to perform a fully detailed null decomposition in order to prove our desired estimates. That is, there is room for imprecision; we only prove estimates that are sufficient recover the full decay properties possessed by solutions to the linear Maxwell-Maxwell system. Let us summarize the version of the null condition that we show is satisfied by the MBI system (see Section \ref{S:NullFormEstimates} for the details): for those terms involving a weight of $s$ or $1 + s^2,$ at most two of the four factors correspond to the worst decaying components $\ualpha[\Lie_{\mathcal{Z}}^I \Far].$ It is also true that for those terms involving a weight of $q$ or $1 + q^2,$ at most three of the four factors correspond to the worst decaying components $\ualpha[\Lie_{\mathcal{Z}}^I \Far].$ However, we do not make use of the availability of the one ``good'' factor since our estimates close without it. 

\subsubsection{The large-data well-posedness of the MBI system}

Finally, we would like to make a few remarks about the local existence proof that is briefly sketched in Section \ref{S:IVP}. This result is interesting in itself because it shows the following fact, which is arguably not manifest: the MBI system's initial value problem is well-posed in every field-strength regime in which its Lagrangian is real-valued, i.e., in every regime in which the theory is well-defined. The crucial estimate in this regard is contained in Proposition \ref{P:LocalExistenceCurrent}, which shows that it is always possible to construct an energy current for the linearized equations with positivity properties that are sufficient to deduce local existence in the weighted Sobolev space of relevance for our global existence result. This fact is strongly related to the internal geometry of and the hyperbolicity of the MBI system (i.e., the characteristic subsets), which will be explored in detail in an upcoming article by the author and his collaborators. We remark that the energy current we use for deducing the local existence result is constructed by contracting the canonical stress against a suitable ``multiplier''
vectorfield $\Vmult,$ so that it differs from the current used in our small-data global existence proof; the vectorfield $\overline{K}$ may not be a suitable multiplier for deducing large-data local existence.

For an alternative proof of the large-data well-posedness of the MBI system's initial value problem, one may consult \cite{yB2004} (see also \cite{dS2004}). In this work, Brenier ``augments'' the MBI system by taking as his $10$ unknowns the non-trivial components of the electromagnetic quantities (see Section \ref{SS:electromagneticdecomposition}) $\Magneticinduction,$ $\Displacement,$ $P,$ and $h.$ Along the ``MBI submanifold,'' $P$ coincides with the Poynting vector ($P = \Magneticinduction \times \Displacement),$ and $h$ coincides with the $00$ component of the MBI energy-momentum tensor
$\EMT_{(MBI)}^{\mu \nu}$ (see \eqref{E:MBItensorupper}), but for general augmented MBI solutions, $P$ and $h$ are independent unknowns. To compensate, the additional evolution equations \cite[Eqn. (1.9)]{yB2004} and \cite[Eqn. (1.10)]{yB2004}, which are redundant for solutions belonging to the MBI submanifold, were added to the MBI system (i.e., so that there are $10$ equations for the $10$ unknowns). From the point of view of hyperbolicity, the most important feature of this augmented system is that the function $S(\Magneticinduction,\Displacement,P,h) \eqdef \frac{1 + |\Magneticinduction|^2 + |\Displacement|^2 + |P|^2}{h},$ which coincides with a constant multiple of the quantity $h = \EMT_{(MBI)}^{00}$ for solutions constrained to the MBI submanifold, satisfies the properties of a \emph{smooth, strictly convex entropy function} of the augmented variables $(\Magneticinduction,\Displacement,P,h)$. Thus, using the general framework of hyperbolic conservation laws (see e.g. \cite{cD2010}), it follows that there exists a change of state-space variables in which the augmented MBI system becomes symmetric hyperbolic. For symmetric hyperbolic systems, there exists a well-established
theory of well-posedness based on energy estimates (see e.g. \cite{rCdH1962}, \cite{cD2010}, \cite{kF1954}, \cite{pL2006}, \cite{aM1984}, \cite{dS1999}).

\subsection{Outline of the article}

The remainder of the article is organized as follows:

\begin{itemize}
	\item In Section \ref{S:Notation}, we collect together much of the notation that is
		introduced throughout the article.
	\item In Section \ref{S:Geometry}, we recall some basic facts from differential geometry.
	\item In Section \ref{S:MBI}, we provide a detailed introduction to the MBI system.
	\item In Section \ref{S:ConformalKilling}, we discuss the collections of Minkowski conformal Killing fields that play a role 
		in our analysis. We also introduce modified Lie derivatives, which have favorable commutation properties with the 
		MBI equations. 
	\item In Section \ref{S:Decompositions}, we introduce the null frame and the null decomposition of a tensor. We then decompose
		the MBI system relative to a null frame. We also introduce several electromagnetic decompositions of the Faraday and 
		Maxwell tensors.
	\item In Section \ref{S:Commutation}, we provide some commutation lemmas that will be used throughout the remainder of the 
		article, especially in Section \ref{S:GlobalSobolev}.
	\item In Section \ref{S:CanonicalStress}, we discuss the energy-momentum tensor associated to the MBI system and the 
		canonical stress tensor associated to the equations of variation. 
	\item In Section \ref{S:NormsandEnergies}, we introduce the norms, seminorms, and energies that will be used in the proof of 
		our main theorem.
	\item In Section \ref{S:NullFormEstimates}, we perform a partial null decomposition of the nonlinear error terms that
		appear in the expression for the time derivative of the energy. It is here that the null condition is revealed.
	\item In Section \ref{S:GlobalSobolev}, we discuss the global Sobolev inequality, which connects weighted $L^2$ bounds
		to weighted $L^{\infty}$ bounds.
	\item In Section \ref{S:EnergyEstimates}, we prove the a-priori bound \eqref{E:Norminequalityeintro}, which is the most 
		important inequality in the article.
	\item In Section \ref{S:IVP}, we briefly discuss local existence for the MBI system. We also recall the availability of 
		a continuation principle, which provides criteria for the existence of a global classical solution.
	\item In Section \ref{S:GlobalExistence}, we combine the results of Sections \ref{S:EnergyEstimates} and 
		\ref{S:IVP} in order to establish our main theorem.
	\end{itemize}

\section{Notation} \label{S:Notation}
In this section, we collect together for convenience much of the notation that is introduced throughout the article.
\\

\noindent \hrulefill
\ \\

\subsection{Constants}
We use the symbol $C$ to denote a generic \emph{positive} constant that is free to vary from line to line. In general, $C$ can depend on many quantities, but in the small-solution regime that we consider in this article, $C$ can be chosen uniformly. Sometimes it is illuminating to explicitly indicate one of the quantities $\mathfrak{Q}$ that $C$ depends on; we do by writing $C_{\mathfrak{Q}}$ or $C(\mathfrak{Q}).$ If $A$ and $B$ are two quantities, then we often write 
\begin{align*}
	A \lesssim B
\end{align*}
to mean that ``there exists a $C > 0$ such that $A \leq C B.$'' Furthermore, if $A \lesssim B$ and $B \lesssim A,$ then we 
often write

\begin{align*}
	A \approx B.
\end{align*}

\subsection{Indices} \label{SS:Indices}
\begin{itemize}
	\item Lowercase Latin indices $a,b,j,k,$ etc. take on the values $1,2,3.$
	\item Greek indices $\kappa, \lambda, \mu, \nu,$ etc. take on the values $0,1,2,3.$
	\item Uppercase Latin indices $A,B$ etc. take on the values $1,2$ and are used to enumerate
		the two orthogonal null frame vectors tangent to the spheres $S_{r,t}.$
	\item Indices are lowered and raised with the Minkowski metric $g_{\mu \nu}$ and its inverse $(g^{-1})^{\mu \nu}.$
	\item Repeated indices are summed over.
	\item We sometimes use parentheses to distinguish indices that are labels from coordinate indices; e.g., 
		the ``$0$'' in	$T_{(0)}$ is a labeling index.
\end{itemize}

\subsection{Coordinates}

\begin{itemize}
	\item $\lbrace x^{\mu} \rbrace_{\mu = 0,1,2,3}$ are the \emph{spacetime coordinates};
		\textbf{in our fixed inertial coordinate system only}, we use the notation $t \eqdef x^0,$ $\ux = (x^1,x^2,x^3).$
	\item Relative to our inertial coordinate system, $r \eqdef |\ux| \eqdef \sqrt{(x^1)^2 + (x^2)^2 + (x^3)^2}$
		denotes the radial coordinate.
	\item $q \eqdef r-t,$ $s \eqdef r+t$ are the \emph{null coordinates}.
\end{itemize}

\subsection{Surfaces}
	Relative to the inertial coordinate system:
\begin{itemize}
	\item $C_{s}^- \eqdef \lbrace (\tau,\uy) \mid |\uy| + t = s \rbrace$
		are the \emph{ingoing null cones}.
	\item $C_{q}^+ \eqdef \lbrace (\tau,\uy) \mid |\uy| - t = q \rbrace$ are the \emph{outgoing null cones}. 
	\item $\Sigma_t \eqdef \lbrace (\tau,\uy) \mid \tau = t \rbrace$ are the constant time slices.
	\item $S_{r,t} \eqdef \lbrace (\tau,\uy) \mid \tau = t, |\uy| = r \rbrace$ are the Euclidean spheres.
\end{itemize}

\subsection{Metrics and volume forms}
\begin{itemize}
	\item $g$ denotes the standard Minkowski metric on $\mathbb{R}^{1+3};$ in our fixed inertial coordinate system,
		$g_{\mu \nu} = \mbox{diag}(-1,1,1,1).$
	\item $\emet$ denotes the standard Euclidean metric on $\mathbb{R}^{1+3};$
		in our fixed coordinate system, $\emet_{\mu \nu} = \mbox{diag}(1,1,1,1).$
	\item $\emet^{-1}$ denotes the inverse of the standard Euclidean metric on $\mathbb{R}^{1+3};$
		in our fixed inertial coordinate system, $(\emet^{-1})^{\mu \nu} = \mbox{diag}(1,1,1,1).$
	\item $\Sigmafirstfund$ denotes the first fundamental form of $\Sigma_t;$ in our fixed inertial coordinate system, 
		$\Sigmafirstfund_{\mu \nu} = \mbox{diag}(0,1,1,1).$
	\item $\angg$ denotes the first fundamental form of $S_{r,t};$ relative to an arbitrary coordinate system, \\
		$\angg_{\mu \nu} = g_{\mu \nu} + \frac{1}{2}\big(L_{\mu} \uL_{\nu} + \uL_{\mu} L_{\nu} \big),$
		where $\uL, L$ are defined in Section \ref{SS:NullFrames}.
\end{itemize}

\begin{itemize}
	\item $\epsilon_{\mu \nu \kappa \lambda} = |\mbox{det}g|^{1/2} [\mu \nu \kappa \lambda]$ denotes the
		\emph{volume form} of $g;$ $[0123]= 1 = -[1023],$ etc.
	\item $\epsilon^{\mu \nu \kappa \lambda} = - |\mbox{det}g|^{-1/2} [\mu \nu \kappa \lambda].$ 
	\item $\uepsilon_{\nu \kappa \lambda} = \epsilon_{\mu \nu \kappa \lambda} (T_{(0)})^{\mu}$ denotes the volume form of 
		$\Sigmafirstfund,$ where $T_{(0)}$ is defined in Section \ref{SS:Killingnotation}.
	\item $\angepsilon_{\mu \nu} = \frac{1}{2}\epsilon_{\mu \nu \kappa \lambda}\uL^{\kappa} L^{\lambda}$
		denotes the volume form of $\angg.$
\end{itemize}

\subsection{Hodge duality} \label{SS:Hodge}
	For an arbitrary two-form $\Far_{\mu \nu}:$
\begin{itemize}
	\item $\Fardual_{\mu \nu} = \frac{1}{2} g_{\mu \widetilde{\mu}} g_{\nu \widetilde{\nu}} 
		\epsilon^{\widetilde{\mu} \widetilde{\nu} \kappa \lambda} \Far_{\kappa \lambda}
		= - \frac{1}{2} |\mbox{det} \, g|^{-1/2} g_{\mu \widetilde{\mu}} g_{\nu \widetilde{\nu}}
		[\widetilde{\mu} \widetilde{\nu} \kappa \lambda] \Far_{\kappa \lambda}$ denotes the Hodge dual
		of $\Far_{\mu \nu}$ with respect to the spacetime metric $g.$
\end{itemize}

\subsection{Derivatives}
\begin{itemize}
	\item In an arbitrary coordinate system $\lbrace x^{\mu} \rbrace_{\mu=0,1,2,3},$ 
		$\partial_{\mu} = \frac{\partial}{\partial x^{\mu}},$ $\nabla_{\mu} = \nabla_{\frac{\partial}{\partial x^{\mu}}}.$
	\item $\nabla$ denotes the Levi-Civita connection corresponding to $g.$
	\item $\SigmafirstfundNabla$ denotes the Levi-Civita connection corresponding to $\Sigmafirstfund.$
	\item $\angn$ denotes the Levi-Civita connection corresponding to $\angg.$
	\item In our fixed inertial coordinate system, $\partial_r = \omega^a \partial_a$ denotes the radial 
		derivative, where $\omega^j = x^j/r.$
	\item $\nabla_X$ denotes the differential operator $X^{\kappa} \nabla_{\kappa}.$
	\item If $X$ is tangent to $\Sigma_t,$ then $\SigmafirstfundNabla_X$ denotes the differential operator $X^{\kappa} 
		\SigmafirstfundNabla_{\kappa}.$
	\item If $X$ is tangent to $S_{r,t},$ then $\angn_X$ denotes the differential operator $X^{\kappa} 
		\angn_{\kappa}.$
	\item $\nabla_{(n)}U$ denotes the $n^{th}$ covariant derivative tensorfield of the tensorfield $U.$
	\item $\SigmafirstfundNabla_{(n)}U$ denotes the $n^{th}$ $\Sigma_t-$intrinsic covariant derivative tensorfield of a tensorfield $U$	
		tangent to the hypersurfaces $\Sigma_t.$
	\item $\angn_{(n)}U$ denotes the $n^{th}$ $S_{r,t}-$intrinsic covariant derivative tensorfield of a tensorfield $U$
		tangent to the spheres $S_{r,t}.$
	\item $\udiv U = \Sigmafirstfund_{\kappa}^{\ \lambda} \nabla_{\lambda} U^{\kappa}$ denotes the intrinsic divergence of
		a vectorfield $U$ tangent to the hypersurfaces $\Sigma_t.$
	\item $(\ucurcl U)^{\nu} = \uepsilon_{\ \ \ \lambda}^{\nu \kappa} \nabla_{\kappa} U^{\lambda}$ are the 
		components of the intrinsic divergence of a vectorfield $U$ tangent to the hypersurfaces $\Sigma_t.$
	\item $\angdiv U = \angg_{\kappa}^{\ \lambda} \nabla_{\lambda} U^{\kappa}$ denotes the intrinsic divergence of
		a vectorfield $U$ tangent to the spheres $S_{r,t}.$
	\item $\angcurl U =\angepsilon_{\ \lambda}^{\kappa} \nabla_{\kappa} U^{\lambda}$
		denotes the intrinsic curl of a vectorfield $U$ tangent to the spheres $S_{r,t}.$
	\item $\Lie_X$ denotes the Lie derivative with respect to the vectorfield $X.$
	\item $[X,Y]^{\mu} = (\Lie_X Y)^{\mu} = X^{\kappa}\partial_{\kappa}Y^{\mu} - Y^{\kappa}\partial_{\kappa} X^{\mu}$
		denotes the Lie bracket of the vectorfields $X$ and $Y.$
	\item For $Z \in \mathcal{Z},$ $\Liemod_Z = \Lie_Z + 2c_Z$ denotes the \emph{modified Lie derivative},
		where the constant $c_Z$ is defined in Section \ref{SS:Killingnotation}.
	\item $\Lie_{\mathcal{A}}^I U,$ and $\Liemod_{\mathcal{A}}^I U,$ $\nabla_{\mathcal{A}}^I U$ respectively 
		denote an $|I|^{th}$ order iterated Lie, iterated modified Lie, and iterated covariant derivative of the tensorfield $U$ 
		with respect to vectorfields belonging to the set $\mathcal{A};$ $\angn_{\mathcal{O}}^I U$ is an iterated intrinsic (to the 
		spheres $S_{r,t}$) covariant derivative of $U$ with respect to rotation vectorfields.
\end{itemize}

\subsection{Minkowski conformal Killing fields} \label{SS:Killingnotation} \ \\
Relative to the inertial coordinate system $\lbrace x^{\mu} \rbrace_{\mu=0,1,2,3} = (t,\ux):$
\begin{itemize}
	\item $T_{(\mu)} = \partial_{\mu},$ $(\mu=0,1,2,3),$ denotes a translation vectorfield.
	\item $\Omega_{(jk)} = x_j \partial_k - x_k \partial_j,$ $(1 \leq j < k \leq 3),$ denotes a rotation vectorfield.
	\item $\Omega_{(0j)} = -t \partial_j - x_j \partial_t,$ $(j=1,2,3),$ denotes a Lorentz boost vectorfield.
	\item $S = x^{\kappa} \partial_{\kappa}$ denotes the scaling vectorfield.
	\item $K_{(\mu)} = - 2 x_{\mu} S + g_{\kappa \lambda}x^{\kappa} x^{\lambda} \partial_{\mu},$
		$(\mu=0,1,2,3)$ denotes an acceleration vectorfield.
	\item $\overline{K} = K_{(0)} + T_{(0)}$ denotes the Morawetz vectorfield.
	\item $\mathscr{T} = \lbrace T_{(\mu)} \rbrace_{0 \leq \mu \leq 3}.$
	\item $\mathcal{O} = \lbrace \Omega_{(12)}, \Omega_{(13)}, \Omega_{(23)} \rbrace.$
	\item $\mathcal{Z} = \lbrace T_{(\mu)}, \Omega_{(\mu \nu)}, S \rbrace_{1 \leq \mu < \nu \leq 3}.$
	\item $\mathbf{T}, \mathbf{O},$ and $\mathbf{Z}$ are the Lie algebras generated by
		$\mathscr{T}, \mathcal{O},$ and $\mathcal{Z}$ respectively.
	\item For $Z \in \mathcal{Z},$ $^{(Z)}\pi_{\mu \nu} = \nabla_{\mu} Z_{\nu} + \nabla_{\nu} Z_{\mu} = c_Z g_{\mu \nu},$	
		denotes the deformation tensor of $Z,$ where $c_Z$ is a constant. 
	\item Commutation properties with the Maxwell-Maxwell term 
		$\big[(g^{-1})^{\mu \kappa} (g^{-1})^{\nu \lambda} - (g^{-1})^{\mu \lambda} (g^{-1})^{\nu \kappa}\big] \nabla_{\mu} 
		\Far_{\kappa \lambda}:$ \\
		$\Liemod_{\mathcal{Z}}^I \Big\lbrace \big[(g^{-1})^{\mu \kappa} (g^{-1})^{\nu \lambda} - (g^{-1})^{\mu \lambda} 
			(g^{-1})^{\nu \kappa} \big] \nabla_{\mu} \Far_{\kappa \lambda} \Big\rbrace
		= \big[(g^{-1})^{\mu \kappa} (g^{-1})^{\nu \lambda} - (g^{-1})^{\mu \lambda} (g^{-1})^{\nu \kappa}\big] 
			\nabla_{\mu} \Lie_Z^I \Far_{\kappa \lambda}.$
\end{itemize}

\subsection{Null frames} \label{SS:NullFrames}
\begin{itemize}
	\item $\uL \eqdef \partial_t - \partial_r$ denotes the null vectorfield generating the $C_{s}^-$ and transversal to the $C_q^+.$
	\item $L \eqdef \partial_t + \partial_r$ denotes the null vectorfield generating the $C_q^+.$
	\item $e_A, \ A = 1,2$ denotes orthonormal vectorfields spanning the tangent space of the spheres $S_{r,t}.$  
	\item The set $\mathcal{L} \eqdef \lbrace L \rbrace$ contains only $L.$
	\item The set $\mathcal{T} \eqdef \lbrace L, e_1, e_2 \rbrace$ denotes the frame vector fields tangent
		to the $C_q^+.$
	\item The set $\mathcal{U} \eqdef \lbrace \uL,L, e_1, e_2 \rbrace$ denotes the entire null frame.
\end{itemize}

\subsection{Null frame decomposition}

\begin{itemize}
	\item For an arbitrary vectorfield $X$ and frame vector $U \in \mathcal{U},$ we define
		$X_U \eqdef X_{\kappa} U^{\kappa},$ where $X_{\mu} \eqdef g_{\mu \kappa} X^{\kappa}.$
	\item For an arbitrary vectorfield $X = X^{ \kappa}\partial_{\kappa} = X^{L} L + X^{\uL} \uL
		+ X^{A} e_A,$ where \\
		$X^{L} = - \frac{1}{2}X_{\uL},$ $X^{\uL} = - \frac{1}{2}X_{L},$ $X^A = X_A,$ $X_A \eqdef X_{e_A}.$
	\item For an arbitrary pair of vectorfields $X,Y:$ \\
	$g(X,Y) = X^{\kappa}Y_{\kappa} = -\frac{1}{2}X_{L}Y_{\uL} - \frac{1}{2}X_{\uL}Y_{L} + \delta^{AB}X_A Y_B.$
\end{itemize}

If $\Far$ is any two-form, its null components are

\begin{itemize}
	\item $\ualpha_{\mu} = \angg_{\mu}^{\ \nu} \Far_{\nu \lambda} \uL^{\lambda}.$
	\item $\alpha_{\mu} = \angg_{\mu}^{\ \nu} \Far_{\nu \lambda} L^{\lambda}.$
	\item $\rho = \frac{1}{2} \Far_{\lambda \kappa}\uL^{\kappa} L^{\lambda}.$
	\item $\sigma = \frac{1}{2} \angepsilon^{\kappa \lambda} \Far_{\kappa \lambda}.$
\end{itemize}

\subsection{Null Forms}

For arbitrary two-forms $\Far, \Gar:$

\begin{itemize}
	\item $\mathcal{Q}_{(1)}(F,G) = \Far^{\kappa \lambda} \Gar_{\kappa \lambda}
		= - \delta^{AB} \ualpha_A[\Far]\alpha_B[\Gar]
		- \delta^{AB} \ualpha_A[\Gar] \alpha_B[\Far] - 2 \rho[\Far]\rho[\Gar]
		+ 2\sigma[\Far] \sigma[\Gar].$
	\item $\mathcal{Q}_{(2)}(\Far, \Gar) = \Fardual^{\kappa \lambda} \Gar_{\kappa \lambda} 
			= \angepsilon^{AB} \ualpha_A[\Far] \alpha_B[\Gar] + \angepsilon^{AB} \ualpha_A[\Gar] \alpha_B[\Far] 
			- 2 \sigma[\Far]\rho[\Gar] - 2\rho[\Far] \sigma[\Gar].$
\end{itemize}

\subsection{Electromagnetic decompositions}
Given a two-form $\Far$ and its associated MBI Maxwell tensor \\
$\Max_{\mu \nu} = \ell_{(MBI)}^{-1}\big( \Fardual_{\mu \nu} + \Farinvariant_{(2)} \Far_{\mu \nu} \big),$ its electromagnetic 
components relative to an arbitrary coordinate system are
\begin{itemize}
	\item $\Electricfield_{\mu} = \Far_{\mu \kappa} T_{(0)}^{\kappa}.$ 
	\item $\Magneticinduction_{\mu} = - \Fardual_{\mu \kappa} T_{(0)}^{\kappa}.$ 
	\item $\Displacement_{\mu} = - \Maxdual_{\mu \kappa} T_{(0)}^{\kappa}.$
	\item $\Magneticfield_{\mu} = - \Max_{\mu \kappa} T_{(0)}^{\kappa}.$
	\item $\SFar_{\mu} = \Far_{\mu \kappa} S^{\kappa}.$
	\item $\SFardual_{\mu} = \Fardual_{\mu \kappa} S^{\kappa}.$
\end{itemize}

\subsection{Norms and energies}

For an arbitrary tensor $U$ of type $\binom{n}{m},$ and 
$\mathcal{A} \in \lbrace \mathscr{T}, \mathcal{O}, \mathcal{Z} \rbrace:$
\begin{itemize}
	\item $|U|^2 = |(\emet^{-1})^{\widetilde{\lambda}_1 \lambda_1 } \cdots (\emet^{-1})^{\widetilde{\lambda}_m \lambda_m } 
		\emet_{\widetilde{\kappa}_1 \kappa_1 } \cdots \emet_{\widetilde{\kappa}_n \kappa_n} 
			U_{\widetilde{\lambda}_1 \cdots \widetilde{\lambda}_m}^{\ \ \ \ \ \ \ \widetilde{\kappa}_1 \cdots \widetilde{\kappa}_n}
			U_{\lambda_1 \cdots \lambda_m}^{\ \ \ \ \ \ \ \kappa_1 \cdots \kappa_n}|.$
	\item $|U|_{\Lie_{\mathcal{A};N}}^2 = \sum_{|I| \leq N} |\Lie_{\mathcal{A}}^I U|^2.$ 
	\item $|U|_{\nabla_{\mathcal{A}};N}^2 = \sum_{|I| \leq N} |\nabla_{\mathcal{A}}^I U|^2.$
	\item $|U|_{\angn_{\mathcal{O}};N}^2 \eqdef \sum_{|I| \leq N} |\angn_{\mathcal{O}}^I U|^2.$
\end{itemize}

For an arbitrary type $\binom{0}{2}$ tensor $F,$ and $\mathcal{V}, \mathcal{W} \in \lbrace \mathcal{L}, \mathcal{T},\mathcal{U} \rbrace:$

\begin{itemize}
	\item $|F|_{\mathcal{V} \mathcal{W}} = \sum_{V \in \mathcal{V}, W \in \mathcal{W}} |V^{\kappa} W^{\lambda} 
		F_{\kappa \lambda}|.$
	\item $|\nabla F|_{\mathcal{V} \mathcal{W}} = \sum_{U \in \mathcal{U}, V \in \mathcal{V}, W \in \mathcal{W}} 
		|V^{\kappa} W^{\lambda} U^{\gamma} \nabla_{\gamma} F_{\kappa \lambda}|.$
\end{itemize}

For an arbitrary two-form $\dot{\Far}$ with null components $\dot{\ualpha},$ $\dot{\alpha},$ $\dot{\rho},$ $\dot{\sigma};$
and $\mathcal{A} \in \lbrace \mathscr{T}, \mathcal{O}, \mathcal{Z} \rbrace:$ 
\begin{itemize}
	\item $\Knorm \dot{\Far} \Knorm^2 =	(1 + q^2) |\dot{\ualpha}|^2 + (1 + s^2) |\dot{\alpha}|^2 + (2 + q^2 + s^2)(\dot{\rho}^2 + 
		\dot{\sigma}^2).$ 
	\item $\Knorm \dot{\Far} \Knorm_{\Lie_{\mathcal{A}};N}^2 \eqdef \sum_{|I| \leq N}	
		\Knorm \Lie_{\mathcal{A}}^I \dot{\Far} \Knorm^2.$  
	\item $\Kintnorm \dot{\Far}(t) \Kintnorm_{\Lie_{\mathcal{Z}};N}^2 \eqdef \int_{\mathbb{R}^3} \Knorm \dot{\Far}(t,\ux) 
		\Knorm_{\Lie_{\mathcal{Z}};N}^2 \, d^3 \ux.$
\end{itemize}

For an arbitrary tensorfield $U$ defined on the Euclidean space $\Sigma$ with Euclidean coordinate system $\ux:$

\begin{itemize}
	\item $\| U \|_{H_{\delta}^N}^2 = \sum_{n=0}^N \int_{\Sigma} (1 + |\ux|^2)^{(\delta + n)} 
		|\SigmafirstfundNabla_{(n)} U(\ux)|^2 \, d^3 \ux$ is a weighted Sobolev norm of $U.$ 
	\item $\| U \|_{C_{\delta}^N}^2 \eqdef \sum_{n=0}^N \sup_{\ux \in \Sigma} (1 + |\ux|^2 )^{(\delta + n)} 
		|\SigmafirstfundNabla_{(n)} U(\ux)|^2$ is a weighted pointwise norm of $U.$
\end{itemize}

For arbitrary two-forms $\Far$ and $\dot{\Far}:$

\begin{itemize}
	\item $H^{\mu \nu \kappa \lambda} \nabla_{\mu} \dot{\Far}_{\kappa \lambda},$ where $H^{\mu \nu \kappa \lambda}$
	depends on $\Far,$ is the principal term in the equations of variation \eqref{E:EOVMBI}.
	\item $\Stress_{\ \nu}^{\mu} = H^{\mu \zeta \kappa \lambda} \dot{\Far}_{\kappa \lambda} \dot{\Far}_{\nu \zeta}
			- \frac{1}{4} \delta_{\nu}^{\mu} H^{\zeta \eta \kappa \lambda} \dot{\Far}_{\zeta \eta} \dot{\Far}_{\kappa \lambda}$ is 
		the canonical stress tensor.
	\item $\dot{J}_{\Far}^{\mu}[\dot{\Far}] = - \Stress_{\ \nu}^{\mu} \overline{K}^{\nu}$ is the energy current (used
		during the proof of global existence) constructed from the variation $\dot{\Far},$ the background $\Far,$
		and the Morawetz-type vectorfield $\overline{K} = \frac{1}{2}\big\lbrace(1+s^2) L + (1 + q^2) \uL \big\rbrace.$
	\item $\mathcal{E}_N^2[\dot{\Far}(t)] = \sum_{|I| \leq N} \int_{\mathbb{R}^3} \dot{J}_{\Far}^0[\Lie_{\mathcal{Z}}^I 	
		\dot{\Far}] \, d^3 \ux$ is the square of the order $N$ energy of $\dot{\Far}.$
\end{itemize}

\subsection{Function spaces and the regularity of maps}
	\begin{itemize}
		\item $H_{\delta}^N$ is the set of all distributions $f$ such that $\| f \|_{H_{\delta}^N} < \infty.$
		\item $C_{\delta}^N$ is the set of all functions $f$ such that $\| f \|_{C_{\delta}^N} < \infty.$
		\item $C^k \big([0,T) \times \mathbb{R}^3\big)$ denotes the set of $k-$times continuously differentiable
			functions on $[0,T) \times \mathbb{R}^3.$
		\item If $X$ is a function space, then $C^k\big([0,T),X \big)$ denotes the set of $k-$times 
			continuously differentiable maps from $[0,T)$ to $X.$
	\end{itemize}

\section{Geometry} \label{S:Geometry}
In this section, we recall some basic facts from differential geometry that will be used throughout the article.

\noindent \hrulefill
\ \\

\subsection{Inertial coordinate systems, the spacetime metric, and the Riemannian metric}

In Minkowski space, there exists a family of global coordinate systems, which we refer to as \emph{inertial coordinate systems}, in which the metric $g_{\mu \nu}$ and its inverse $(g^{-1})^{\mu \nu}$ have the following components:

\begin{align}
	g_{\mu \nu} = (g^{-1})^{\mu \nu} = \mbox{diag}(-1,1,1,1).
\end{align}
It will be convenient to carry out calculations and to define various tensors relative to an inertial coordinate system. \textbf{Therefore, we fix a single inertial coordinate system $\lbrace x^{\mu} \rbrace_{\mu = 0,1,2,3}$ on Minkowski space. For the remainder of the article, when we decompose tensors with respect to an inertial frame, it will always be relative to the frame corresponding to this fixed inertial coordinate system.} When working in this coordinate system, we often use the abbreviations 

\begin{subequations}
\begin{align}
	x^0 & \eqdef t, & & \ux \eqdef (x^1,x^2,x^3), \\
	\partial_{\mu} & \eqdef \frac{\partial}{\partial x^{\mu}}, & & \partial_t \eqdef \partial_0 = T_{(0)}. 
\end{align}
\end{subequations}

We recall the following partitions of $T_p M$ and $T_p^* M$ induced by $g.$
\begin{definition}
	Vectors $X \in T_p M$ are classified as \emph{timelike}, \emph{null}, \emph{causal}, \emph{spacelike} as follows,
	where $g(X,X) \eqdef g_{\kappa \lambda}X^{\kappa} X^{\lambda}:$
	
	\begin{subequations}
	\begin{align}
		g(X,X) & < 0 && (\mbox{timelike}), \\
		g(X,X) & = 0 && (\mbox{null}), \\
		g(X,X) & \leq 0 && (\mbox{causal}), \\
		g(X,X) & > 0 && (\mbox{spacelike}).
	\end{align}
	\end{subequations}
	
	Furthermore, relative to our fixed inertial coordinate system, $X$ is classified as future-directed or past-directed
	as follows:
	
	\begin{subequations}
	\begin{align}
		X^0 & > 0 && (\mbox{future-directed}), \\
		X^0 & < 0 && (\mbox{past-directed}).
	\end{align} 
	\end{subequations}
	
	Covectors $\xi_{\mu}$ are defined to have the same classification as their metric dual $X^{\mu} \eqdef (g^{-1})^{\mu \kappa} 
	\xi_{\kappa}.$ We sometimes refer to $\xi$ as the $g-$dual of $X$ in order to emphasize that this notion of 
	duality depends on $g.$
\end{definition}

In order to measure the size of various tensor, it is convenient to introduce a Riemannian metric on $\mathbb{R}^4.$
A natural choice is the Euclidean metric $\emet,$ which has the following components relative to an arbitrary coordinate system:

\begin{align} \label{E:Riemanninanmetric}
	\emet_{\mu \nu} \eqdef g_{\mu \nu} + 2(T_{(0)})_{\mu} (T_{(0)})_\nu.
\end{align}
In the above formula, $T_{(0)}$ is the ``time translation'' vectorfield, which is defined to coincide with
$\partial_t$ in our inertial coordinate system. Therefore, relative to this coordinate system, 
the metric $\emet$ and its inverse $\emet^{-1}$ have the following components:

\begin{subequations}
\begin{align} \label{E:Riemannainmetricandinverseinertialframe}
	\emet_{\mu \nu} & = \mbox{diag}(1,1,1,1), \\
	(\emet^{-1})^{\mu \nu} & = \mbox{diag}(1,1,1,1).
\end{align}
\end{subequations}

We now define the aforementioned tensorial norm.

\begin{definition} \label{D:Riemanniannorm}
	If $U$ is a tensor of type $\binom{n}{m},$ then we define the norm $|\cdot | \geq 0$ of $U$ by
	
	\begin{align} \label{E:Riemanniannorm}
		|U|^2 = |(\emet^{-1})^{\widetilde{\lambda}_1 \lambda_1 } \cdots (\emet^{-1})^{\widetilde{\lambda}_m \lambda_m } 
			\emet_{\widetilde{\kappa}_1 \kappa_1 } \cdots \emet_{\widetilde{\kappa}_n \kappa_n} 
			U_{\widetilde{\lambda}_1 \cdots \widetilde{\lambda}_m}^{\ \ \ \ \ \ \ \widetilde{\kappa}_1 \cdots \widetilde{\kappa}_n}
			U_{\lambda_1 \cdots \lambda_m}^{\ \ \ \ \ \ \ \kappa_1 \cdots \kappa_n}|.
\end{align}
\end{definition}

\subsection{Lie derivatives and covariant derivatives}

Given any pair of vectorfields $X,Y,$ we recall that relative to an arbitrary coordinate system,
their \emph{Lie bracket} $[X,Y]$ can be expressed as

\begin{align} \label{E:bracket}
	[X,Y]^{\mu} & = X^{\kappa} \partial_{\kappa} Y^{\mu} - Y^{\kappa} \partial_{\kappa} X^{\mu}.
\end{align}
Furthermore, we have that 

\begin{align} \label{E:LieXY}
	\Lie_X Y = [X,Y], 
\end{align}	
where $\Lie$ denotes the \emph{Lie derivative operator}. Given a type $\binom{0}{m}$ tensorfield $U,$ and vectorfields $Y_{(1)}, \cdots Y_{(m)},$ the Leibniz rule for $\Lie$ implies that \eqref{E:LieXY} generalizes as follows:

\begin{align} \label{E:Liederivativebracketexpression}
	(\Lie_X U)(Y_{(1)}, \cdots, Y_{(m)}) & = X \lbrace U(Y_{(1)}, \cdots, Y_{(m)}) \rbrace
		- \sum_{i=1}^m U(Y_{(1)}, \cdots, Y_{(i-1)}, [X,Y_{(i)}], Y_{(i+1)}, \cdots, Y_{(m)}).
\end{align}

\begin{remark} \label{R:LieDerivatives}
	The Lie derivative operator does not commute with the raising and lowering of indices via the metric $g.$ Thus, in order to 
	avoid confusion, we use the convention that Lie derivatives are applied to two-forms $\Far_{\mu \nu}$ with both 
	indices down. In particular, the quantity $\Lie_Z \Far$ is understood to be a two-form with the indices down, and
	$\Lie_Z \Far^{\mu \nu} \eqdef (g^{-1})^{\mu \kappa} (g^{-1})^{\nu \lambda} \Lie_Z \Far_{\kappa \lambda}.$ 
\end{remark}

There is a unique affine connection $\nabla,$ which is known as the \emph{Levi-Civita connection}, that is torsion-free and compatible with the metric $g.$ These properties are equivalent to the requirement that the following identities hold for all vectorfields $X,Y,Z:$

\begin{align} 
	\nabla_X Y - \nabla_Y X & = [X,Y], \label{E:Torsionfree} \\
	\nabla_X \lbrace g(Y,Z) \rbrace & = g(\nabla_X Y,Z) + g(X,\nabla_Y Z). \label{E:derivativeofgiszero}
\end{align}
Furthermore, given a type $\binom{0}{m}$ tensorfield $U,$ and vectorfields $Y_{(1)}, \cdots, Y_{(m)},$ the Leibniz rule implies that
\begin{align} \label{E:CovariantLeibniz}
	(\nabla_X U)(Y_{(1)}, \cdots, Y_{(m)}) = X \lbrace U(Y_{(1)}, \cdots, Y_{(m)}) \rbrace
		- \sum_{i=1}^m U(Y_{(1)}, \cdots, Y_{(i-1)}, \nabla_X Y_{(i)}, Y_{(i+1)}, \cdots, Y_{(m)}).
\end{align}
We remark that relative to an arbitrary coordinate system, \eqref{E:derivativeofgiszero} is equivalent to

\begin{align} \label{E:CovariantDerivativeofgisZeroIndices}
	\nabla_{\lambda} g_{\mu \nu} & = 0, & & (\lambda, \mu, \nu = 0,1,2,3).
\end{align}
Furthermore, in our inertial coordinate system on Minkowski space, if $U$ is any type $\binom{n}{m}$ tensorfield, then $\nabla_{\mu} U_{\mu_1 \cdots \mu_m}^{\ \ \ \ \ \ \ \ \nu_1 \cdots \nu_n} = \partial_{\mu} U_{\mu_1 \cdots \mu_m}^{\ \ \ \ \ \ \ \ \nu_1 \cdots \nu_n}.$ In the above formulas and throughout the article, we use the notation

\begin{align} \label{E:NablaXdef}
	\nabla_X \eqdef X^{\kappa} \nabla_{\kappa}.
\end{align}

The \emph{Riemann curvature} tensor $R(\cdot,\cdot)\cdot$ is defined by the requirement that the following identities hold
for all vectorfields $X,Y,Z:$

\begin{align} \label{E:Curvature}
	\nabla_X \nabla_Y Z - \nabla_Y \nabla_X Z & = R(X,Y)Z + \nabla_{[X,Y]} Z.
\end{align}
In Minkowski space, $R(X,Y)Z \equiv 0.$

The following standard lemma gives a convenient formula relating Lie derivatives and covariant derivatives.

\begin{lemma}\cite{rW1984}  \label{L:Liederivativeintermsofnabla}

Let $X$ be a vectorfield, and let $U$ be a tensorfield of type $\binom{n}{m}.$ Then $\Lie_X U$ can be expressed in terms of covariant derivatives of $U$ and $X$ as follows:

\begin{align} \label{E:Liederivativeintermsofnabla}
	(\Lie_X U)_{\mu_1 \cdots \mu_m}^{\ \ \ \ \ \ \ \ \nu_1 \cdots \nu_n} \eqdef
		(\nabla_X U)_{\mu_1 \cdots \mu_m}^{\ \ \ \ \ \ \ \ \nu_1 \cdots \nu_n} 
	& + U_{\kappa \mu_2 \cdots \mu_m}^{\ \ \ \ \ \ \ \ \nu_1 \cdots \nu_n}\nabla_{\mu_1}X^{\kappa} 
		+ \cdots + U_{\mu_1 \cdots \mu_{m-1} \kappa}^{\ \ \ \ \ \ \ \ \ \nu_1 \cdots \nu_n}\nabla_{\mu_m}X^{\kappa} \\
	& - U_{\mu_1 \cdots \mu_m}^{\ \ \ \ \ \ \ \ \kappa \cdots \nu_n} \nabla_{\kappa}X^{\nu_1}
	 	- \cdots - U_{\mu_1 \cdots \mu_m}^{\ \ \ \ \ \ \ \ \nu_1 \cdots \nu_{n-1} \kappa} \nabla_{\kappa}X^{\nu_n}. \notag
\end{align}

\end{lemma}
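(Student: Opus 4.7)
The plan is to verify \eqref{E:Liederivativeintermsofnabla} by reducing to its action on scalars, vector fields, and one-forms, and then extending to general tensors via the Leibniz rule. Both $\Lie_X$ and $\nabla_X$ are $\mathbb{R}$-linear derivations on the tensor algebra that agree with $X(f) = X^\kappa \partial_\kappa f$ on scalars, and both commute with contractions. Thus the formula, which is a tensorial identity, is determined by its validity on vector fields and on one-forms.

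For a vector field $Y$, identity \eqref{E:LieXY} together with the torsion-free property \eqref{E:Torsionfree} gives
\begin{align*}
	(\Lie_X Y)^\nu = [X,Y]^\nu = (\nabla_X Y)^\nu - (\nabla_Y X)^\nu = X^\kappa \nabla_\kappa Y^\nu - Y^\kappa \nabla_\kappa X^\nu,
\end{align*}
which is exactly \eqref{E:Liederivativeintermsofnabla} in the case $m=0$, $n=1$. For a one-form $\omega$, I apply \eqref{E:Liederivativebracketexpression} to the pairing $\omega(Y)$ (a scalar) for an arbitrary vector field $Y$, obtaining $X\{\omega(Y)\} = (\Lie_X \omega)(Y) + \omega(\Lie_X Y)$. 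Since $X\{\omega(Y)\} = X^\kappa \nabla_\kappa(\omega_\lambda Y^\lambda) = (\nabla_X \omega)_\lambda Y^\lambda + \omega_\lambda (\nabla_X Y)^\lambda$ by \eqref{E:CovariantLeibniz} and the fact that the two derivatives agree on scalars, subtracting and substituting the formula just established for $\Lie_X Y$ yields
\begin{align*}
	(\Lie_X \omega)_\mu Y^\mu = (\nabla_X \omega)_\mu Y^\mu + \omega_\kappa Y^\mu \nabla_\mu X^\kappa.
\end{align*}
Since $Y$ is arbitrary, $(\Lie_X \omega)_\mu = (\nabla_X \omega)_\mu + \omega_\kappa \nabla_\mu X^\kappa$, matching \eqref{E:Liederivativeintermsofnabla} for $m=1$, $n=0$.

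To handle a general tensorfield $U$ of type $\binom{n}{m}$, I view $U$ as a multilinear map on $m$ vectors and $n$ one-forms, write the components $U_{\mu_1 \cdots \mu_m}^{\ \ \ \ \ \ \nu_1 \cdots \nu_n}$ as contractions of $U$ against coordinate basis vectors $\partial_{\mu_i}$ and coordinate basis one-forms $dx^{\nu_j}$, and apply \eqref{E:Liederivativebracketexpression} (and the analogue with one-form arguments). Each lower slot contributes a Lie-bracket correction $\Lie_X \partial_{\mu_i} = [X,\partial_{\mu_i}]$, which by the vector case produces the $+U_{\cdots \kappa \cdots}^{\ \ \ \cdots} \nabla_{\mu_i} X^\kappa$ terms; each upper slot contributes a Lie derivative $\Lie_X dx^{\nu_j}$, which by the one-form case produces the $-U_{\cdots}^{\ \ \ \cdots \kappa \cdots} \nabla_\kappa X^{\nu_j}$ terms. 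The remaining "leading" piece, in which $\Lie_X$ falls on the scalar $U(\partial_{\mu_1},\ldots,dx^{\nu_n})$, combines with the symmetric correction contributions to produce $(\nabla_X U)_{\mu_1 \cdots \mu_m}^{\ \ \ \ \ \ \nu_1 \cdots \nu_n}$ by \eqref{E:CovariantLeibniz}.

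The main obstacle is purely bookkeeping: tracking signs, the placement of the dummy contraction index $\kappa$, and ensuring the Leibniz correction terms line up with the slot-by-slot ones on the right-hand side of \eqref{E:Liederivativeintermsofnabla}. Since both sides are tensorial and multilinear in all their arguments, this verification reduces to checking a single basis element, and the result then holds identically.
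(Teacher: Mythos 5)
Your proof is correct: the reduction to the vector and one-form cases via the torsion-free property and the Leibniz/contraction compatibility of both derivations, followed by extension to general tensors by multilinearity, is the standard derivation of this identity. The paper itself supplies no proof — the lemma is quoted from the cited reference \cite{rW1984} and stated with only a $\qed$ — so your argument simply fills in exactly what that citation points to, and there is nothing in the paper's treatment to contrast it with.
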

\hfill $\qed$

It follows that
\begin{align}
	\Lie_X g_{\mu \nu} = ^{(X)}\pi_{\mu \nu},
\end{align}
where 
\begin{align} \label{E:deformationdef}
	^{(X)}\pi_{\mu \nu} \eqdef \nabla_{\mu} X_{\nu} + \nabla_{\nu} X_{\mu}
\end{align}
is the \emph{deformation tensor} of $X.$

\subsection{Volume forms and Hodge dual}

There is a canonical volume form $\epsilon_{\mu \nu \kappa \lambda}$ associated to the metric $g.$ 
Relative to any local coordinate system, we have that

\begin{align}
	\epsilon_{\kappa \lambda \mu \nu} & = |\mbox{det}(g)|^{1/2} [\kappa \lambda \mu \nu], \label{E:volumeform} \\
	\epsilon^{\kappa \lambda \mu \nu} & = - |\mbox{det}(g)|^{-1/2} [\kappa \lambda \mu \nu], \label{E:volumeforminverse}
\end{align}
where $[\kappa \lambda \mu \nu]$ is totally antisymmetric with normalization $[0123]= 1.$ It can be checked that 
the covariant derivative of the volume form vanishes:

\begin{align}	\label{E:volumeformcovariantlyconstant}
	\nabla_{\beta} \epsilon_{\kappa \lambda \mu \nu} & = 0, && (\beta, \kappa, \lambda, \mu, \nu = 0,1,2,3).
\end{align}

The \emph{Hodge dual} operator, which we denote by $\star,$ plays a fundamental role throughout our discussion. 

\begin{definition}
If $\Far$ is any two-form, then its Hodge dual $\Fardual$ is defined as follows: 

\begin{align}	\label{E:Fardualdef}
	\Fardual^{\mu \nu} & \eqdef \frac{1}{2} \epsilon^{\kappa \lambda \mu \nu} \Far_{\kappa \lambda}.
\end{align}

\end{definition}

\subsection{\texorpdfstring{$\Sigma_t, S_{r,t},$}{Constant time slices, Euclidean spheres,} 
and the first and second fundamental forms}

\begin{definition}

The following two classes of spacelike submanifolds of Minkowski space, which we define relative to the inertial coordinate system $\lbrace x^{\mu} \rbrace_{\mu=0,1,2,3}$ will play a role throughout the remainder of the article:
\begin{align}
	\Sigma_t & \eqdef \lbrace (\tau,\uy) \mid \tau = t \rbrace, \\
	S_{r,t} & \eqdef \lbrace (\tau,\uy) \mid \tau = t, |\uy| = r \rbrace,
\end{align}
where $|\uy|\eqdef \sqrt{(y^1)^2 + (y^2)^2 + (y^3)^2}.$ We refer to the $\Sigma_t$ as ``time slices,'' and the $S_{r,t}$ as ``spheres.'' 

\end{definition}

The future-directed normal to the $\Sigma_t$
is the time translation vectorfield $T_{(0)},$ while the $S_{r,t}$ have two linearly independent \emph{null} normals. We denote the one pointing in the ``outward'' direction by $L,$ and the one pointing in the ``inwards'' direction by $\uL.$ The vectorfields
$\uL$ and $L,$ which are defined on $M \slash 0,$ are unique up to multiplication by a scalar function. We choose the normalization so that they have the following components relative to our inertial coordinate system:

\begin{subequations}
\begin{align}
	\uL^{\mu} & = (1,-\omega^1,-\omega^2,-\omega^3), \label{E:uLdef}\\
	L^{\mu} & = (1,\omega^1,\omega^2,\omega^3), \label{E:Ldef} 
\end{align}
\end{subequations}
where $\omega^j = x^j/r.$ With $\partial_r \eqdef \frac{1}{r}x^a \partial_a$ 
denoting the radial vectorfield, $\uL,L$ can be expressed as

\begin{subequations}
\begin{align}
	\uL & = \partial_t - \partial_r, \label{E:underlineLradialdef} \\
	L & = \partial_t + \partial_r. \label{E:Lradialdef} 
\end{align}
\end{subequations}
We remark that beginning in Section \ref{SS:NullFrame}, $\uL$ and $L$ will play a key role in the 
\emph{null decomposition} of the MBI system.

We now recall the definitions of the first fundamental forms of $\Sigma_t$ and of $S_{r,t}.$
\begin{definition} \label{D:FirstFundamental}
		The \textbf{first fundamental forms} of $\Sigma_t, S_{r,t}$ are the Riemannian metrics on $\Sigma_t, S_{r,t}$
		respectively induced by the spacetime metric $g.$ In an arbitrary local coordinate system, $\Sigmafirstfund, \angg$ can be 
		expressed as follows:
		
	\begin{align} 
		\Sigmafirstfund_{\mu \nu} & \eqdef g_{\mu \nu} + (T_{(0)})_{\mu} (T_{(0)})_{\nu}, \\
		\angg_{\mu \nu} & \eqdef g_{\mu \nu} + \frac{1}{2}\big(\uL_{\mu} L_{\nu} + L_{\mu} \uL_{\nu} \big). \label{E:anggdef}
	\end{align}
	
\end{definition}
	We remark that the tensors $\Sigmafirstfund_{\mu}^{\ \nu} \eqdef \delta_{\mu}^{\nu} + (T_{(0)})_{\mu} (T_{(0)})^{\nu}$ and 
	$\angg_{\mu}^{\ \nu} \eqdef \delta_{\mu}^{\nu} + \frac{1}{2}\big(L_{\mu} \uL^{\nu} + 
	\uL_{\mu} L^{\nu} \big)$ orthogonally project onto $\Sigma_t$ and $S_{r,t}$ respectively. 
	Furthermore, the volume forms of $\Sigmafirstfund$ and $\angg,$ 
	which we respectively denote by $\uepsilon_{\nu \kappa \lambda}$ and $\angepsilon,$
	can be expressed as follows relative to an arbitrary coordinate system:
	
	\begin{align} 
		\uepsilon_{\nu \kappa \lambda} & = \epsilon_{\mu \nu \kappa \lambda} T_{(0)}^{\mu}, 
			\label{E:bargvolumeformdef} \\
		\angepsilon_{\mu \nu} & = \frac{1}{2}\epsilon_{\mu \nu \kappa \lambda}\uL^{\kappa} L^{\lambda}. \label{E:angvolumeformdef}
	\end{align}

\begin{definition}
	Let $U$ be a type $\binom{n}{m}$ spacetime tensor. We say that $U$ is tangent to the time slices $\Sigma_t$ if 
	
	\begin{align}
		U_{\mu_1 \cdots \mu_m}^{\ \ \ \ \ \ \ \nu_1 \cdots \nu_n} 
		= \Sigmafirstfund_{\mu_1}^{\ \widetilde{\mu}_1} \cdots \Sigmafirstfund_{\mu_m}^{\ \widetilde{\mu}_m} 
		\Sigmafirstfund_{\widetilde{\nu}_1}^{\ \nu_1}
		\cdots \Sigmafirstfund_{\widetilde{\nu}_n}^{\ \nu_1} U_{\widetilde{\mu}_1 \cdots \widetilde{\mu}_m}^{\ \ \ \ \ \ \ 
		\widetilde{\nu}_1 \cdots \widetilde{\nu}_n}.
	\end{align}
		Equivalently, $U$ is tangent to the $\Sigma_t$ if and only if any contraction of $U$ with $T_{(0)}$ results 
		in $0.$

	Similarly, we say that $U$ is tangent to the spheres $S_{r,t}$ if
	
	\begin{align}
		U_{\mu_1 \cdots \mu_m}^{\ \ \ \ \ \ \ \nu_1 \cdots \nu_n} 
		= \angg_{\mu_1}^{\ \widetilde{\mu}_1} \cdots \angg_{\mu_m}^{\ \widetilde{\mu}_m} \angg_{\widetilde{\nu}_1}^{\ \nu_1}
		\cdots \angg_{\widetilde{\nu}_n}^{\ \nu_1} U_{\widetilde{\mu}_1 \cdots \widetilde{\mu}_m}^{\ \ \ \ \ \ \ \widetilde{\nu}_1 
		\cdots \widetilde{\nu}_n}.
	\end{align}
	Equivalently, $U$ is tangent to the spheres $S_{r,t}$ if and only if any contraction of $U$ with either $\uL$ or $L$ results 
	in $0.$
	
\end{definition}

	We also recall the following relationships between the Levi-Civita connections $\SigmafirstfundNabla, \angn$ 
	corresponding to $\Sigmafirstfund,\angg$ and the Levi-Civita connection $\nabla$ corresponding to $g,$ which are valid for 
	any tensor $U$ of type $\binom{n}{m}$ tangent to the $\Sigma_t, S_{r,t}$ respectively:
	
	\begin{align} 
		\SigmafirstfundNabla_{\lambda} 
			U_{\mu_1 \cdots \mu_m}^{\ \ \ \ \ \ \ \nu_1 \cdots \nu_n} & = \Sigmafirstfund_{\lambda}^{\ \widetilde{\lambda}}
			\Sigmafirstfund_{\mu_1}^{\ \widetilde{\mu}_1} \cdots 
			\Sigmafirstfund_{\mu_m}^{\ \widetilde{\mu}_m} \Sigmafirstfund_{\widetilde{\nu}_1}^{\ \nu_1} \cdots 
			\Sigmafirstfund_{\widetilde{\nu}_n}^{\ \nu_n} 
			\nabla_{\widetilde{\lambda}} U_{\widetilde{\mu}_1 \cdots \widetilde{\mu}_m}^{\ \ \ \ 
			\ \ \ \widetilde{\nu}_1 \cdots \widetilde{\nu}_n}, 
			\label{E:SigmatIntrinsicintermsofExtrinsic} \\
		\angn_{\lambda} 
			U_{\mu_1 \cdots \mu_m}^{\ \ \ \ \ \ \ \nu_1 \cdots \nu_n} & = \angg_{\lambda}^{\ \widetilde{\lambda}}
			\angg_{\mu_1}^{\ \widetilde{\mu}_1} \cdots 
			\angg_{\mu_m}^{\ \widetilde{\mu}_m} \angg_{\widetilde{\nu}_1}^{\ \nu_1} \cdots \angg_{\widetilde{\nu}_n}^{\ \nu_n} 
			\nabla_{\widetilde{\lambda}}U_{\widetilde{\mu}_1 \cdots \widetilde{\mu}_m}^{\ \ \ \ 
			\ \ \ \widetilde{\nu}_1 \cdots \widetilde{\nu}_n}. \label{E:SphereIntrinsicintermsofExtrinsic}
	\end{align}

As in \eqref{E:NablaXdef}, throughout the article, we use the notation

\begin{align}
	\SigmafirstfundNabla_X & \eqdef X^{\kappa} \SigmafirstfundNabla_{\kappa}, && \mbox{(if $X$ is tangent to $\Sigma_t$)}, \\
	\angn_{X} & \eqdef X^{\kappa} \angn_{\kappa}, && \mbox{(if $X$ is tangent to $S_{r,t}$)}.
\end{align}	

We recall the definitions of the \emph{second fundamental form} of the $\Sigma_t,$ and
\emph{null second fundamental forms} of the $S_{r,t}.$

\begin{definition} \label{D:SecondFundamental}
	The \textbf{second fundamental form} of the hypersurface $\Sigma_t$ is defined to be the tensorfield
	
	\begin{align}
		\nabla_{\mu} (T_{(0)})_{\nu}.
	\end{align}

	The \textbf{null second fundamental forms} of the $S_{r,t}$ are defined to be the following pair of tensorfields:
	\begin{align}
		\nabla_{\mu} \uL_{\nu}, && \nabla_{\mu} L_{\nu}.
	\end{align}
\end{definition}

In the next lemma, we illustrate one of key properties of the second fundamental forms.

\begin{lemma} \label{L:SecondFundamentalFormsSymmetric}
	The second fundamental form $\nabla_{\mu} (T_{(0)})_{\nu}$ is a symmetric type $\binom{0}{2}$ tensorfield that is
	tangent to the time slices $\Sigma_t.$ Similarly, the null second fundamental forms $\nabla_{\mu} L_{\nu}, \nabla_{\mu} 
	\uL_{\nu}$ are symmetric type $\binom{0}{2}$ tensorfields that are tangent to the spheres $S_{r,t}.$
\end{lemma}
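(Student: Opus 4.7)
The plan is to carry out the computation directly in the fixed inertial coordinate system. Since $(M,g)$ is Minkowski spacetime with $g_{\mu\nu} = \mbox{diag}(-1,1,1,1),$ the Christoffel symbols of $\nabla$ vanish and $\nabla_{\mu}$ reduces to coordinate partial differentiation $\partial_{\mu}$ acting on components. This collapses the lemma to three short explicit computations, one for each of the three vectorfields.

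First I would handle $T_{(0)}.$ In this frame, the components $(T_{(0)})_{\nu} = g_{\nu 0}$ are constant, so $\nabla_{\mu}(T_{(0)})_{\nu} = 0$ identically. Both the symmetry and the tangency to $\Sigma_t$ (which requires annihilation upon contraction with $T_{(0)}$) are then automatic for the zero tensor.

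Next I would dispose of $L$ and $\uL$ together. The key observation, visible directly from \eqref{E:uLdef}--\eqref{E:Ldef} once the index is lowered with $g,$ is that as one-forms
\begin{align*}
    L_{\mu} = \partial_{\mu} q, && \uL_{\mu} = -\partial_{\mu} s,
\end{align*}
where $q = r-t$ and $s = r+t$ are the null coordinates; this simply amounts to comparing $L_{\mu} = (-1,\omega^{1},\omega^{2},\omega^{3})$ and $\uL_{\mu} = (-1,-\omega^{1},-\omega^{2},-\omega^{3})$ with $\partial_{\mu}q$ and $-\partial_{\mu}s.$ Thus both $L$ and $\uL$ are exact one-forms. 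It follows immediately that
\begin{align*}
    \nabla_{\mu}L_{\nu} = \partial_{\mu}\partial_{\nu} q,
    &&
    \nabla_{\mu}\uL_{\nu} = -\partial_{\mu}\partial_{\nu} s,
\end{align*}
and the symmetry in $(\mu,\nu)$ of both tensorfields is nothing but the equality of mixed partials.

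For the tangency assertion, I would evaluate these Hessians explicitly using $\partial_j r = \omega^j$ and $\partial_j \omega^k = r^{-1}(\delta_{jk} - \omega^j\omega^k)$; the $\mu = 0$ or $\nu = 0$ components vanish, and the spatial components assemble into
\begin{align*}
    \nabla_{\mu}L_{\nu} = \frac{1}{r}\angg_{\mu\nu}, && \nabla_{\mu}\uL_{\nu} = -\frac{1}{r}\angg_{\mu\nu},
\end{align*}
where the right-hand side is the coordinate expression of the first fundamental form $\angg$ of $S_{r,t}$ from \eqref{E:anggdef}. Since $\angg$ annihilates $\uL$ and $L$ upon contraction in either slot, both Hessians are tangent to $S_{r,t}$ in both indices, as required. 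The only step that carries any geometric content is matching $r^{-1}(\delta_{jk}-\omega^{j}\omega^{k})$ against $\angg_{jk}$ derived from \eqref{E:anggdef} and the frame formulas \eqref{E:uLdef}--\eqref{E:Ldef}; the rest is routine differentiation in the inertial frame, enabled by the vanishing of the Christoffel symbols.
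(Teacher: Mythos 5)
Your proof is correct, but it takes a genuinely different route from the paper's. The paper argues abstractly: symmetry of $\nabla_{\mu}L_{\nu}$ is deduced from the torsion-free property \eqref{E:Torsionfree} and metric compatibility \eqref{E:derivativeofgiszero} applied to vectorfields $X,Y$ tangent to $S_{r,t}$, while tangency follows by contracting with $\uL,L$ and invoking \eqref{E:LanduLaregeodesic}--\eqref{E:nablaLuLis0}. You instead work entirely in the fixed inertial frame, where the essential observation is that $L_{\mu}=\partial_{\mu}q$ and $\uL_{\mu}=-\partial_{\mu}s$ are exact one-forms, so their Hessians are symmetric by equality of mixed partials, and the explicit evaluation $\nabla_{\mu}L_{\nu}=r^{-1}\angg_{\mu\nu}$, $\nabla_{\mu}\uL_{\nu}=-r^{-1}\angg_{\mu\nu}$ makes tangency manifest since $\angg$ annihilates $\uL$ and $L$. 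Your computations check out (including $\angg_{jk}=\delta_{jk}-\omega^{j}\omega^{k}$, $\angg_{0\mu}=0$, and $\partial_{j}\omega^{k}=r^{-1}(\delta_{jk}-\omega^{j}\omega^{k})$), and your route has the bonus of establishing the explicit formulas \eqref{E:NablauL}--\eqref{E:NablaL} of Lemma \ref{L:NullSecondFundamantalFormSimpleExpression} as a byproduct, which the paper proves separately. What the paper's argument buys in exchange is coordinate independence: it would survive on a curved background where the normals are geodesic and mutually parallel, whereas your exactness observation is special to the Minkowskian null coordinates. Since the entire analysis is set in Minkowski space, this costs you nothing here.
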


\begin{proof}
	The fact that $\nabla_{\mu} L_{\nu}, \nabla_{\mu} \uL_{\nu}$ are tangent to the $S_{r,t}$ follows from contracting
	them with the vectors $\uL,L,$ which form a basis for the orthogonal complement (in $M$) of the tangent space of $S_{r,t},$
	and using \eqref{E:LanduLaregeodesic} - \eqref{E:nablaLuLis0}. For the symmetry property, let $X,Y$ be vectorfields tangent 
	that are tangent to $S_{r,t}.$ Then $[X,Y]$ is also tangent to $S_{r,t}.$ Therefore, using the fact that $\nabla_{\mu} 
	L_{\nu}$ is tangent to the $S_{r,t},$ \eqref{E:Torsionfree}, and \eqref{E:derivativeofgiszero}, we deduce that
	
	\begin{align}
		X^{\mu} Y^{\nu}\nabla_{\mu} L_{\nu} = g(\nabla_X L,Y) & = \nabla_X \overbrace{g(L,Y)}^0 - g(L,\nabla_X Y) \\
		& = - g(L,\nabla_Y X) - \overbrace{g(L,[X,Y])}^0 \notag \\
		& = - \nabla_Y \overbrace{g(L,X)}^0 + g(\nabla_Y L, X) \notag \\
		& = g(\nabla_Y L, X) = Y^{\mu} X^{\nu} \nabla_{\mu} L_{\nu}. \notag
	\end{align}
	The proofs for $\nabla_{\mu} \uL_{\nu}$ and $\nabla_{\mu} (T_{(0)})_{\nu}$ are similar.
\end{proof}

\begin{remark}
	Lemma \ref{L:NullSecondFundamantalFormSimpleExpression} provides very simple expressions for the null
	second fundamental forms.
\end{remark}

\begin{remark}

By Lemma \ref{L:SecondFundamentalFormsSymmetric}, we have that $\angg_{\mu}^{\ \widetilde{\mu}} \nabla_{\widetilde{\mu}} L^{\nu}
= \nabla_{\mu} L^{\nu},$ and similarly for $\uL.$ Therefore, we sometimes use the abbreviations
${\angn_{\mu} L^{\nu}}$ $\eqdef \angg_{\mu}^{\ \widetilde{\mu}} \nabla_{\widetilde{\mu}} L^{\nu}$ and 
$\angn_{\mu} \uL^{\nu} \eqdef \angg_{\mu}^{\ \widetilde{\mu}} \nabla_{\widetilde{\mu}} \uL^{\nu},$ which should cause no confusion.
\end{remark}

To conclude this section, we recall the following basic facts concerning the metrics $\Sigmafirstfund$ and $\angg.$

\begin{lemma} \label{L:BargandAnggareCovariantlyConstant}
	Let $\Sigmafirstfund$ and $\angg$ be the first fundamental forms of $g$ defined in Definition \ref{D:FirstFundamental}.
	Let $\SigmafirstfundNabla,$ $\angn$ be their corresponding Levi-Civita connections, as defined in 
	\eqref{E:SigmatIntrinsicintermsofExtrinsic}, \eqref{E:SphereIntrinsicintermsofExtrinsic} respectively.
	Then
	
	\begin{align}
		\SigmafirstfundNabla_{\lambda} \Sigmafirstfund_{\mu \nu} & = 0, && (\lambda, \mu, \nu = 0,1,2,3), \\
		\angn_{\lambda} \angg_{\mu \nu} & = 0, && (\lambda, \mu, \nu = 0,1,2,3).
		\label{E:AnggIntrinsicDerivativeis0}
	\end{align}
	
\end{lemma}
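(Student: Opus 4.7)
My plan is to apply the definitions \eqref{E:SigmatIntrinsicintermsofExtrinsic} and \eqref{E:SphereIntrinsicintermsofExtrinsic} of the induced connections directly to the induced metric tensors themselves, then use the defining expressions for $\Sigmafirstfund$ and $\angg$ from Definition \ref{D:FirstFundamental} to split each into a $g$-piece and a correction term. The $g$-piece will drop out by \eqref{E:CovariantDerivativeofgisZeroIndices}, and the correction pieces will be annihilated by the projectors $\Sigmafirstfund_\mu^{\ \nu}$ or $\angg_\mu^{\ \nu}$, since these were chosen to orthogonally project onto $\Sigma_t$ and $S_{r,t}$ respectively.

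For the $\Sigma_t$ identity, I will compute
\begin{align*}
\SigmafirstfundNabla_{\lambda} \Sigmafirstfund_{\mu \nu} = \Sigmafirstfund_{\lambda}^{\ \widetilde{\lambda}} \Sigmafirstfund_{\mu}^{\ \widetilde{\mu}} \Sigmafirstfund_{\nu}^{\ \widetilde{\nu}} \nabla_{\widetilde{\lambda}} \big( g_{\widetilde{\mu} \widetilde{\nu}} + (T_{(0)})_{\widetilde{\mu}} (T_{(0)})_{\widetilde{\nu}} \big).
\end{align*}
The first summand vanishes by \eqref{E:CovariantDerivativeofgisZeroIndices}. Expanding the second by the Leibniz rule produces two terms, each carrying an undifferentiated factor of $(T_{(0)})_{\widetilde{\mu}}$ or $(T_{(0)})_{\widetilde{\nu}}$. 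I will then verify the algebraic identity $\Sigmafirstfund_{\mu}^{\ \widetilde{\mu}} (T_{(0)})_{\widetilde{\mu}} = 0$, which follows from $\Sigmafirstfund_{\mu}^{\ \widetilde{\mu}} = \delta_\mu^{\widetilde\mu} + (T_{(0)})_\mu (T_{(0)})^{\widetilde\mu}$ together with the normalization $g(T_{(0)},T_{(0)}) = -1$. Applying this identity to the appropriate projector kills each remaining term, yielding $\SigmafirstfundNabla_{\lambda} \Sigmafirstfund_{\mu \nu} = 0$.

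The argument for $\angn_{\lambda} \angg_{\mu \nu} = 0$ is structurally identical. I will write $\angg_{\widetilde\mu \widetilde\nu} = g_{\widetilde\mu \widetilde\nu} + \tfrac{1}{2}(\uL_{\widetilde\mu} L_{\widetilde\nu} + L_{\widetilde\mu} \uL_{\widetilde\nu})$, differentiate, and again use $\nabla g = 0$. Each of the four remaining Leibniz terms retains an undifferentiated factor of $L$ or $\uL$ carrying the index $\widetilde\mu$ or $\widetilde\nu$. A direct check from the formula $\angg_\mu^{\ \widetilde\mu} = \delta_\mu^{\widetilde\mu} + \tfrac{1}{2}(\uL_\mu L^{\widetilde\mu} + L_\mu \uL^{\widetilde\mu})$, combined with the null conditions $g(L,L) = g(\uL,\uL) = 0$ and the normalization $g(\uL, L) = -2$, gives $\angg_\mu^{\ \widetilde\mu} L_{\widetilde\mu} = 0$ and $\angg_\mu^{\ \widetilde\mu} \uL_{\widetilde\mu} = 0$. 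Consequently one of the three sphere projectors annihilates each of the four terms, and the identity follows.

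I do not anticipate any substantive obstacle: the lemma is essentially the standard observation that the projection of a metric-compatible ambient connection onto a non-degenerate submanifold agrees with the Levi-Civita connection of the induced metric, and all the necessary algebraic ingredients (the explicit projector formulas, the null-frame relations, and $\nabla g = 0$) have already been recorded in the preceding pages.
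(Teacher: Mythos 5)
Your proof is correct and follows essentially the same route as the paper's: project the ambient covariant derivative of the decomposed metric, kill the $g$-piece by $\nabla g = 0$, and annihilate the remaining Leibniz terms using the projector identities $\Sigmafirstfund_{\mu}^{\ \widetilde{\mu}} (T_{(0)})_{\widetilde{\mu}} = 0$ and $\angg_{\mu}^{\ \widetilde{\mu}} L_{\widetilde{\mu}} = \angg_{\mu}^{\ \widetilde{\mu}} \uL_{\widetilde{\mu}} = 0$. The paper compresses this into a citation of the second fundamental form lemma, whereas you carry out the projector algebra explicitly; the content is the same.
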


\begin{proof}
	Lemma \ref{L:BargandAnggareCovariantlyConstant} follows from the expressions \eqref{E:SigmatIntrinsicintermsofExtrinsic} - 
	\eqref{E:SphereIntrinsicintermsofExtrinsic} and Lemma \ref{L:SecondFundamentalFormsSymmetric}.
\end{proof}

\section{The Maxwell-Born-Infeld System} \label{S:MBI}
\ \\

In this section, we first discuss the equations of motion for a generic covariant theory of classical electromagnetism that is derivable from a Lagrangian. We then introduce the Maxwell-Born-Infeld Lagrangian and derive several versions of its equations of motion. The final version, namely equations \eqref{E:modifieddFis0summary} - \eqref{E:HmodifieddMis0summary}, will be the one we use throughout the remainder of the article.

\noindent \hrulefill
\ \\

\subsection{The Lagrangian formulation of nonlinear electromagnetism}

In this section, we recall some facts from classical nonlinear electromagnetic field theory in a Lorentzian
manifold $(M,g)$ of signature $(-,+,+,+).$ We restrict our attention to theories of nonlinear electromagnetism derivable from a Lagrangian $\mathscr{L}.$ The fundamental quantity in such a theory is the \emph{Faraday tensor} $\Far_{\mu \nu},$ a two-form (i.e., an anti-symmetric tensorfield) 
that is postulated to be closed:

\begin{align} \label{E:Farisclosed}
	d \Far = 0,
\end{align}
where $d$ denotes the exterior derivative operator. This equation, which is the first of two equations that will define a particular nonlinear theory, is known as the \emph{Faraday-Maxwell law}. In local coordinates, it can be expressed in the following two ways
\begin{subequations}
\begin{align}
	\partial_{[\lambda} \Far_{\mu \nu]} & = 0, && (\lambda, \mu, \nu = 0,1,2,3) \\
	\nabla_{[\lambda} \Far_{\mu \nu]} & = 0, && (\lambda, \mu, \nu = 0,1,2,3),
\end{align}
\end{subequations}
where $[\cdots]$ denotes antisymmetrization.

In any covariant theory of classical electromagnetism, $\Ldual$ is a scalar-valued function of the two invariants of $\Far,$ which we denote by $\Farinvariant_{(1)}$ and $\Farinvariant_{(2)};$ i.e., $\Ldual = \Ldual \big(\Farinvariant_{(1)}[\Far], \Farinvariant_{(2)}[\Far] \big).$
They can be expressed in the following ways:

\begin{subequations}
\begin{align}
	\Farinvariant_{(1)} & = \Farinvariant_{(1)}[\Far] \eqdef \frac{1}{2} (g^{-1})^{\kappa \mu} (g^{-1})^{\lambda \nu} 
		\Far_{\kappa \lambda} \Far_{\mu \nu} = - {^{\star \hspace{-.025in}}( \Far \wedge \Fardual)} =|\Magneticinduction|^2 - 
		|\Electricfield|^2, \label{E:firstinvariant} \\
	\Farinvariant_{(2)} & = \Farinvariant_{(2)}[\Far] 
		\eqdef \frac{1}{4} (g^{-1})^{\kappa \mu} (g^{-1})^{\lambda \nu} \Far_{\kappa \lambda} \Fardual_{\mu \nu}
		= \frac{1}{8} \epsilon^{\kappa \lambda \mu \nu} \Far_{\kappa \lambda} \Far_{\mu \nu}
		= \frac{1}{2} {^{\star \hspace{-.025in}} (\Far \wedge \Far)} = \Electricfield_{\kappa} \Magneticinduction^{\kappa}, \label{E:secondinvariant}
\end{align}
\end{subequations}
where $\wedge$ denotes the wedge product, and $\Electricfield, \Magneticinduction$ are the
electromagnetic one-forms defined in Section \ref{SS:electromagneticdecomposition}. As we will discuss in Section \ref{SS:NullForms}, the invariants $\Farinvariant_{(1)}$ and $\Farinvariant_{(2)},$ viewed as quadratic forms in $\Far,$ have a special algebraic structure. More specifically, we will see that from the point of view of the decay estimates of Proposition \ref{P:GlobalSobolev}, the worst possible quadratic terms are absent from $\Farinvariant_{(1)}$ and $\Farinvariant_{(2)}.$ This is the one of the fundamental reasons that small-data solutions to the MBI system have the same decay properties as solutions to the linear Maxwell-Maxwell equations.

We now introduce the \emph{Maxwell tensor} $\Max$, a two tensor whose Hodge dual $\Maxdual$ is defined by

\begin{align} \label{E:Maxdualdef}
	\Maxdual^{\mu \nu} & \eqdef \frac{\partial \Ldual}{\partial \Far_{\mu \nu}}.
\end{align}	
Furthermore, we postulate that $\Max$ is closed: 
\begin{align} \label{E:Maxwelltensorclosed}
	d \Max = 0.
\end{align}
Taken together, \eqref{E:Farisclosed} and \eqref{E:Maxwelltensorclosed} are the equations of motion for the theory
arising from the Lagrangian $\mathscr{L}.$ We remark that \eqref{E:Farisclosed} and \eqref{E:Maxwelltensorclosed} are 
respectively equivalent to

\begin{subequations}
\begin{align} 
	\nabla_{\mu} \Fardual^{\mu \nu} & = 0, && (\nu = 0,1,2,3), \label{E:BianchiEM} \\
	\nabla_{\mu} \Maxdual^{\mu \nu} & = 0, && (\nu = 0,1,2,3). \label{E:Euler-Lagrange}
\end{align}
\end{subequations}
Equations \eqref{E:BianchiEM} are sometimes referred to as the \emph{Bianchi identities}. We furthermore remark
that the solutions to \eqref{E:Farisclosed}, \eqref{E:Maxwelltensorclosed} are exactly the stationary points (under
closed variations $d \dot{\Far} = 0$ with support contained in compact subsets $\mathfrak{C}$) of the action functional
\begin{align}
	\mathcal{A}_{\mathfrak{C}}[\Far] \eqdef \int_{\mathfrak{C} \Subset M} 
	\Ldual\big(\Farinvariant_{(1)}[\Far], \Farinvariant_{(2)}[\Far] \big) \, d \mu_g,
\end{align}
and that \eqref{E:Euler-Lagrange} are the Euler-Lagrange equations of $\Ldual.$ In the above formula, 
$d \mu_g \eqdef |\mbox{det} g|^{1/2} d^4 x$ is the measure associated to the spacetime volume form \eqref{E:volumeform}.

The Euler-Lagrange equations \eqref{E:Euler-Lagrange} can be written in the following form:

\begin{align} \label{E:Euler-LagrangeRewritten}
	h^{\mu \nu \kappa \lambda} \nabla_{\mu} \Far_{\kappa \lambda} = 0, && (\nu = 0,1,2,3),
\end{align}
where
\begin{align} \label{E:littlehdef}
	h^{\mu \nu \kappa \lambda} = - \frac{1}{2} \frac{\partial^2 \Ldual}{\partial \Far_{\mu \nu}
		\partial \Far_{\kappa \lambda}}.
\end{align}
The tensorfield $h^{\mu \nu \kappa \lambda},$ which has the properties

\begin{subequations}
\begin{align}
	h^{\nu \mu \kappa \lambda} & = - h^{\mu \nu \kappa \lambda}, \label{E:hminussignproperty1} \\
	h^{\mu \nu \lambda \kappa } & = - h^{\mu \nu \kappa \lambda}, \label{E:hminussignproperty2} \\
	h^{\kappa \lambda \mu \nu } & = h^{\mu \nu \kappa \lambda}, \label{E:hsymmetryproperty}
\end{align}
\end{subequations}
is of fundamental importance throughout this article. As is explained in Section \ref{S:CanonicalStress}, its algebraic and geometric properties are intimately related to the hyperbolic nature of the MBI system. In particular, the symmetry 
properties \eqref{E:hminussignproperty1} - \eqref{E:hsymmetryproperty} are needed to ensure that the \emph{canonical stress} tensor, which is defined in Section \ref{SS:Stress}, has lower-order divergence.

We state as a lemma the following identities, which will be used for various computations. We leave the proof
as an exercise for the reader.

\begin{lemma} \label{L:electromagneticidentities}
	The following identities hold:
	
	\begin{subequations}
	\begin{align}
		\frac{\partial |\mbox{det}g|}{\partial g_{\mu \nu}} & = |\mbox{det} \ g| (g^{-1})^{\mu \nu}, \\
		\frac{\partial (g^{-1})^{\kappa \lambda}}{g_{\mu \nu}} & = -(g^{-1})^{\kappa \mu} (g^{-1})^{\lambda \nu}, \\
		\Farinvariant_{(2)}^2 & = |\mbox{det} \ \Far| |\mbox{det} \ g|^{-1}, \\
		(g^{-1})^{\kappa \lambda} \Far_{\mu \kappa} \Far_{\nu \lambda} 
			- (g^{-1})^{\kappa \lambda} \Fardual_{\mu \kappa} \Fardual_{\nu \lambda} & = \Farinvariant_{(1)} g_{\mu \nu}, \\
		(g^{-1})^{\kappa \lambda }\Far_{\mu \kappa} \Fardual_{\nu \lambda} & = \Farinvariant_{(2)} g_{\mu \nu}, \\
		\frac{\partial \Farinvariant_{(1)}}{\partial g_{\mu \nu}} & = - g_{\kappa \lambda} \Far^{\mu \kappa}\Far^{\nu \lambda}, \\
		\frac{\partial \Farinvariant_{(2)}}{\partial g_{\mu \nu}} & = - \frac{1}{2} \Farinvariant_{(2)} (g^{-1})^{\mu \nu}, \\
		\Maxdual^{\mu \nu} & = 2\frac{\partial \Ldual}{\partial \Farinvariant_{(1)}} \Far^{\mu \nu}
			+ \frac{\partial \Ldual}{\partial \Farinvariant_{(2)}}\Fardual^{\mu \nu}, \\
		\frac{\partial \Farinvariant_{(1)}}{\partial \Far_{\mu \nu}} & = 2 \Far^{\mu \nu}, 
			\label{E:partialfirstinvariantpartialFarmunu} \\
		\frac{\partial \Farinvariant_{(2)}}{\partial \Far_{\mu \nu}} & = \Fardual^{\mu \nu}, \\
		\frac{\partial \Far^{\mu \nu}}{\partial \Far_{\kappa \lambda}} & 
			= (g^{-1})^{\mu \kappa} (g^{-1})^{\nu \lambda} - (g^{-1})^{\mu \lambda}(g^{-1})^{\nu \kappa}, \\
		\frac{\partial \Fardual^{\mu \nu}}{\partial \Far_{\kappa \lambda}} & = 
			\epsilon^{\mu \nu \kappa \lambda}. \label{E:partialFardualmunupartialFarkappalambda}
	\end{align}
	\end{subequations}

\end{lemma}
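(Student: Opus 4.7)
The plan is to dispose of the twelve identities in an order that exploits their logical dependencies, starting with the ones that merely unwrap definitions and progressing to the two quadratic identities and the Pfaffian relation, which are the genuinely substantive steps.

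First I would handle the five identities obtained by differentiating definitions. Because $\Far_{\mu\nu}$ is antisymmetric, the operator $\partial / \partial \Far_{\mu\nu}$ must be interpreted as automatically antisymmetrizing in $(\mu,\nu)$; with this convention $\partial \Far^{\alpha\beta} / \partial \Far_{\mu\nu} = (g^{-1})^{\alpha\mu}(g^{-1})^{\beta\nu} - (g^{-1})^{\alpha\nu}(g^{-1})^{\beta\mu}$ by inspection, and contracting with $\Far_{\alpha\beta}$ yields $\partial \Farinvariant_{(1)} / \partial \Far_{\mu\nu} = 2\Far^{\mu\nu}$. The formula $\partial \Fardual^{\alpha\beta} / \partial \Far_{\mu\nu} = \epsilon^{\alpha\beta\mu\nu}$ then follows at once from $\Fardual^{\alpha\beta} = \tfrac{1}{2}\epsilon^{\alpha\beta\kappa\lambda}\Far_{\kappa\lambda}$ since $\epsilon^{\alpha\beta\mu\nu}$ is already antisymmetric in $(\mu,\nu)$. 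Feeding these into the product rule applied to $\Farinvariant_{(2)} = \tfrac{1}{4}\Far_{\alpha\beta}\Fardual^{\alpha\beta}$, together with the easy contraction $\Far_{\alpha\beta}\epsilon^{\alpha\beta\mu\nu} = 2\Fardual^{\mu\nu}$, gives $\partial \Farinvariant_{(2)} / \partial \Far_{\mu\nu} = \Fardual^{\mu\nu}$. The constitutive identity for $\Maxdual$ is then the chain rule applied to $\Maxdual^{\mu\nu} = \partial \Ldual / \partial \Far_{\mu\nu}$ using the preceding two derivatives of the invariants.

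Next I would handle the matrix-calculus identities. The inverse-metric formula $\partial (g^{-1})^{\kappa\lambda} / \partial g_{\mu\nu} = -(g^{-1})^{\kappa\mu}(g^{-1})^{\lambda\nu}$ comes from differentiating the relation $(g^{-1})^{\kappa\alpha}g_{\alpha\beta} = \delta^{\kappa}_{\beta}$ and solving for the derivative, while Jacobi's formula $\partial |\mbox{det}\, g| / \partial g_{\mu\nu} = |\mbox{det}\, g|(g^{-1})^{\mu\nu}$ is standard, provable via cofactor expansion (or the $\log \det = \mbox{tr}\, \log$ identity). With these in hand, $\partial \Farinvariant_{(1)} / \partial g_{\mu\nu} = -g_{\kappa\lambda}\Far^{\mu\kappa}\Far^{\nu\lambda}$ is a direct chain-rule computation on $\Farinvariant_{(1)} = \tfrac{1}{2}(g^{-1})^{\alpha\gamma}(g^{-1})^{\beta\delta}\Far_{\alpha\beta}\Far_{\gamma\delta}$ with $\Far_{\alpha\beta}$ (indices down) held fixed; the two resulting terms coincide by the $\alpha \leftrightarrow \gamma$, $\beta \leftrightarrow \delta$ symmetry of the summand. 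For $\partial \Farinvariant_{(2)} / \partial g_{\mu\nu}$, I would rewrite $\Farinvariant_{(2)} = -\tfrac{1}{8}|\mbox{det}\, g|^{-1/2}[\alpha\beta\gamma\delta]\Far_{\alpha\beta}\Far_{\gamma\delta}$, isolating all $g$-dependence in the scalar $|\mbox{det}\, g|^{-1/2}$, and Jacobi's formula then delivers the claimed $-\tfrac{1}{2}\Farinvariant_{(2)}(g^{-1})^{\mu\nu}$.

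The remaining three identities (3), (4), (5) are the genuinely substantive ones, and I expect (4) and (5) to be the main obstacle. For the Pfaffian relation, the antisymmetry of $\Far_{\mu\nu}$ gives $\mbox{det}\, \Far = \mbox{Pf}(\Far)^2$ with $\mbox{Pf}(\Far) = \tfrac{1}{8}[\alpha\beta\gamma\delta]\Far_{\alpha\beta}\Far_{\gamma\delta}$; rewriting $\Farinvariant_{(2)} = -|\mbox{det}\, g|^{-1/2}\mbox{Pf}(\Far)$ using $\epsilon^{\alpha\beta\gamma\delta} = -|\mbox{det}\, g|^{-1/2}[\alpha\beta\gamma\delta]$, squaring, and taking absolute values yields $\Farinvariant_{(2)}^2 = |\mbox{det}\, \Far||\mbox{det}\, g|^{-1}$. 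For (4) and (5) the strategy is to substitute $\Fardual_{\mu\kappa} = \tfrac{1}{2}\epsilon_{\mu\kappa\alpha\beta}\Far^{\alpha\beta}$ into the left-hand sides, reducing them to contractions of two Levi-Civita tensors, and then apply the standard Lorentzian identity $\epsilon_{\mu\kappa\alpha\beta}\epsilon_{\nu\lambda}{}^{\alpha\beta} = -2(g_{\mu\nu}g_{\kappa\lambda} - g_{\mu\lambda}g_{\kappa\nu})$; collecting the resulting terms and matching the $g_{\mu\nu}$-coefficient against the definitions of $\Farinvariant_{(1)}$ and $\Farinvariant_{(2)}$ yields both identities. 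The one delicate step is keeping careful track of the Lorentzian sign in the $\epsilon\epsilon$ contraction, which is the main place a routine miscalculation could propagate through the rest of the argument.
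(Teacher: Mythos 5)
The paper gives no proof of this lemma---it is explicitly left as an exercise---so there is nothing to compare against; your proposal is the standard argument and is essentially correct. The one imprecision worth fixing before writing it up concerns the two quadratic identities, which you rightly single out as the substantive step: the two-index contraction $\epsilon_{\mu\kappa\alpha\beta}\epsilon_{\nu\lambda}{}^{\alpha\beta} = -2(g_{\mu\nu}g_{\kappa\lambda}-g_{\mu\lambda}g_{\kappa\nu})$ you quote is not the identity that actually arises. In $(g^{-1})^{\kappa\lambda}\Fardual_{\mu\kappa}\Fardual_{\nu\lambda}$ the two volume forms share only the single summed index $\kappa$, so what is needed is the six-term one-index contraction (the identity the paper records in its proof of Lemma \ref{L:LieXhodgedualcommutation}); and in $(g^{-1})^{\kappa\lambda}\Far_{\mu\kappa}\Fardual_{\nu\lambda}$ substituting the dual produces only \emph{one} $\epsilon$, so no double-$\epsilon$ identity applies until you first convert the undualized factor using ${}^{\star}({}^{\star}\hspace{-.02in}\Far) = -\Far$ (or, alternatively, argue via the vanishing of a five-fold antisymmetrization in four dimensions). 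With that adjustment the rest of the plan---Jacobi's formula, differentiation of $(g^{-1})^{\kappa\alpha}g_{\alpha\beta}=\delta^{\kappa}_{\beta}$, the antisymmetrized convention for $\partial/\partial\Far_{\mu\nu}$, the chain rule for $\Maxdual$, isolating the $g$-dependence of $\Farinvariant_{(2)}$ in the factor $|\mbox{det}\,g|^{-1/2}$, and the Pfaffian computation for $\Farinvariant_{(2)}^2$---goes through exactly as written, with the Lorentzian sign in the $\epsilon\epsilon$ contraction being, as you note, the only place requiring care.
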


\subsection{Derivation of the MBI equations}

The Lagrangian for the MBI model is 
\begin{align} \label{E:LMBI}
	\Ldual_{(MBI)}  \eqdef \frac{1}{\upbeta^4} - \frac{1}{\upbeta^4} \big(1 + \upbeta^4 \Farinvariant_{(1)}[\Far] - \upbeta^8 
		\Farinvariant_{(2)}^2[\Far] \big)^{1/2} = \frac{1}{\upbeta^4} - \frac{1}{\upbeta^4} \big(\mbox{det}_g(g + \Far) \big)^{1/2},
\end{align}
where $\upbeta > 0$ denotes \emph{Born's ``aether'' constant.} \emph{For the remainder of the article, we set $\upbeta = 1$
for simplicity}; however, the analysis in the case $\upbeta \neq 1$ easily reduces to the case $\upbeta = 1$ by making change of variable $\widetilde{\Far} = \upbeta^2 \Far$ in the equations. For future use, we introduce the abbreviation

\begin{align} \label{E:ldef}
	\ell_{(MBI)} \eqdef \big(1 + \Farinvariant_{(1)} - \Farinvariant_{(2)}^2 \big)^{1/2}, 
\end{align}
which implies that 
\begin{align} \label{E:LMBIbetaequals1}
	\Ldual_{(MBI)} = 1 - \ell_{(MBI)}.
\end{align}
Using definition \eqref{E:Maxdualdef} and Lemma \ref{L:electromagneticidentities}, we compute that in the MBI model, $\Maxdual^{\mu \nu}$ can be expressed as follows:

\begin{align} \label{E:MaxdualMBI}
	\Maxdual^{\mu \nu} & = - \ell_{(MBI)}^{-1}\big( \Far^{\mu \nu} - \frac{1}{4} \Far_{\kappa \lambda} \Fardual^{\kappa \lambda} 
	\Fardual^{\mu \nu} \big) = - \ell_{(MBI)}^{-1}\big( \Far^{\mu \nu} - \Farinvariant_{(2)} \Fardual^{\mu \nu} \big).
\end{align}
Taking the Hodge dual of \eqref{E:MaxdualMBI}, we have that

\begin{align} \label{E:MaxMBI}
	\Max^{\mu \nu} & = \ell_{(MBI)}^{-1}\big( \Fardual^{\mu \nu} + \Farinvariant_{(2)} \Far^{\mu \nu} \big).
\end{align}
From \eqref{E:MaxdualMBI}, it follows that the Euler-Lagrange equations \eqref{E:Euler-Lagrange} for the MBI model are 

\begin{align} \label{E:MBIEuler-Lagrange}
	\nabla_{\mu} \Far^{\mu \nu} & - \frac{1}{4} \Fardual^{\mu \nu} \nabla_{\mu}(\Far_{\kappa \lambda}
		\Fardual^{\kappa \lambda}) - \frac{1}{2}\ell_{(MBI)}^{-2}\Big(\Far^{\mu \nu} - \frac{1}{4}\Far_{\kappa \lambda}
	\Fardual^{\kappa \lambda} \Fardual^{\mu \nu} \Big) \nabla_{\mu}\Big( \frac{1}{2} \Far_{\kappa \lambda} 
		\Far^{\kappa \lambda} - \frac{1}{16} (\Far_{\kappa \lambda} \Fardual^{\kappa \lambda})^2 \Big) = 0. 
\end{align}
Furthermore, it follows from \eqref{E:MBIEuler-Lagrange} that the tensorfield $h^{\mu \nu \kappa \lambda}$ from \eqref{E:littlehdef} can be expressed as

\begin{align} \label{E:MBIlittleh}
	h^{\mu \nu \kappa \lambda} & = \frac{1}{2} \bigg\lbrace \ell_{(MBI)}^{-1}\big[(g^{-1})^{\mu \kappa} (g^{-1})^{\nu \lambda} - 	
		(g^{-1})^{\mu \lambda} (g^{-1})^{\nu \kappa}\big] - \ell_{(MBI)}^{-3} \Far^{\mu \nu} \Far^{\kappa \lambda} 
		+ \Farinvariant_{(2)} \ell_{(MBI)}^{-3} \Big(\Far^{\mu \nu} \Fardual^{\kappa \lambda} + \Fardual^{\mu \nu} \Far^{\kappa \lambda} 
		\Big) \\
	& \hspace{2in} - \Big(\ell_{(MBI)}^{-1} + \Farinvariant_{(2)}^2 \ell_{(MBI)}^{-3} \Big)\Fardual^{\mu \nu} \Fardual^{\kappa \lambda} 
		- \ell_{(MBI)}^{-1}\Farinvariant_{(2)} \epsilon^{\mu \nu \kappa \lambda}  \bigg\rbrace \notag.
\end{align}

\subsection{$H^{\mu \nu \kappa \lambda}$ and the working version of the MBI equations} \label{SS:MBISummary}

To ease the calculations, it is convenient to perform two simple modifications of the tensorfield 
$h^{\mu \nu \kappa \lambda}$ from \eqref{E:MBIlittleh} obtaining a new tensorfield; the modifications
will not alter the set of solutions to the MBI system. First, we drop the $\ell_{(MBI)}^{-1}\Farinvariant_{(2)} \epsilon^{\mu \nu \kappa \lambda}$ term from \eqref{E:MBIlittleh}. This is permissible because its contribution to the Euler-Lagrange equations is
$\ell_{(MBI)}^{-1}\Farinvariant_{(2)} \epsilon^{\mu \nu \kappa \lambda} \nabla_{\mu} \Far_{\kappa \lambda} = 
2 \ell_{(MBI)}^{-1}\Farinvariant_{(2)} \nabla_{\mu} \Fardual^{\mu \nu} = 0,$ on account of equation \eqref{E:BianchiEM}.
Second, we multiply the remaining terms in \eqref{E:MBIlittleh} by $\ell_{(MBI)}.$ We denote the resulting tensorfield by $H^{\mu \nu \kappa \lambda}.$ Furthermore, it is convenient to split $H^{\mu \nu \kappa \lambda}$ into a main term, which coincides with the tensorfield in the case of the Maxwell-Maxwell equations, and a quadratic error term, which we denote by $H_{\triangle}^{\mu \nu \kappa \lambda}.$ The end result is 

\begin{subequations}
\begin{align}
	H^{\mu \nu \kappa \lambda} & \eqdef \ell_{(MBI)} \big(h^{\mu \nu \kappa \lambda} 
		+ \frac{1}{2}\ell_{(MBI)}^{-1}\Farinvariant_{(2)} \epsilon^{\mu \nu \kappa \lambda}\big) 
		=\frac{1}{2} \big[(g^{-1})^{\mu \kappa } (g^{-1})^{\nu \lambda} - (g^{-1})^{\mu \lambda} (g^{-1})^{\nu \kappa}\big] 
		+ H_{\triangle}^{\mu \nu \kappa \lambda}, \label{E:Hdef} \\
	H_{\triangle}^{\mu \nu \kappa \lambda} & \eqdef \frac{1}{2}\bigg\lbrace 
		- \ell_{(MBI)}^{-2} \Far^{\mu \nu} \Far^{\kappa \lambda} 
		+ \Farinvariant_{(2)} \ell_{(MBI)}^{-2} \Big(\Far^{\mu \nu} \Fardual^{\kappa \lambda} + \Fardual^{\mu \nu} \Far^{\kappa \lambda} 		\Big) - \Big(1 + \Farinvariant_{(2)}^2 \ell_{(MBI)}^{-2} \Big)\Fardual^{\mu \nu} \Fardual^{\kappa \lambda} \bigg\rbrace.
	 	\label{E:Htriangledef}
\end{align}
\end{subequations}
It follows that the system \eqref{E:BianchiEM}, \eqref{E:Euler-Lagrange}, \eqref{E:MaxdualMBI} is equivalent to the following version of the MBI system:

\begin{subequations}
\begin{align} 
	\nabla_{\lambda} \Far_{\mu \nu} + \nabla_{\mu} \Far_{\nu \lambda} + \nabla_{\nu} \Far_{\lambda \mu} & = 0,
		&& (\lambda,\mu,\nu = 0,1,2,3), \label{E:modifieddFis0summary} \\
	H^{\mu \nu \kappa \lambda} \nabla_{\mu} \Far_{\kappa \lambda} & = 0,
	&& (\nu = 0,1,2,3). \label{E:HmodifieddMis0summary}
\end{align}
\end{subequations}

\section{Conformal Killing Fields and Modified Lie Derivatives} \label{S:ConformalKilling}

In this section, we recall the definition of conformal Killing fields. This collection of vectorfields, which has the structure of a Lie algebra under the Lie bracket operator $X,Y \rightarrow [X,Y],$ comprises the generators of the conformal symmetries of the spacetime $(M,g).$ We focus on the case of Minkowski space, which has the maximum possible number of generators (15). In particular, we introduce several subsets of the Minkowski conformal Killing fields, each of which will play a role throughout the remainder of the article. More specifically, they appear in the definitions of the norms and energies (see Section \ref{S:NormsandEnergies}) that are used during our global existence argument. Finally, for a special collection of Minkowski conformal Killing fields $\mathcal{Z}$ we define \emph{modified} Lie derivatives $\Liemod_{\mathcal{Z}},$ which are equal to ordinary Lie derivatives plus a scalar multiple of the identity. This definition is justified by Lemma \ref{L:LiemodZLiemodMaxwellCommutator}, which shows that the operator $\Liemod_{\mathcal{Z}}$ has favorable commutation properties with the linear Maxwell-Maxwell equation  $\big[(g^{-1})^{\mu \kappa} (g^{-1})^{\nu \lambda} - (g^{-1})^{\mu \lambda} (g^{-1})^{\nu \kappa} \big] \nabla_{\mu} \Far_{\kappa \lambda} = 0.$ \\

\noindent \hrulefill
\ \\

\begin{definition}

A \emph{Killing field} of the metric $g_{\mu \nu}$ is a vectorfield $X$ such that
\begin{align}
	^{(X)}\pi_{\mu \nu} = 0,
\end{align}
while a \emph{conformal Killing field} $X$ satisfies
\begin{align} \label{E:conformalKilling}
	^{(X)}\pi_{\mu \nu} = \phi_X g_{\mu \nu}
\end{align}
for some scalar-valued function $\phi_X(t,\ux).$ In the above formulas, the deformation tensor
$^{(X)}\pi_{\mu \nu}$ is defined in \eqref{E:deformationdef}.

\end{definition}

The conformal Killing fields of the Minkowski metric are generated by the following $15$ vectorfields
(see e.g. \cite{dC2008}): 
\begin{enumerate}
	\item the four \emph{translations} $T_{(\mu)} , \qquad (\mu =0,1,2,3),$ 
	\item the three \emph{rotations} $\Omega_{(jk)}, \qquad (1 \leq j < k \leq 3),$
	\item the three \emph{Lorentz boosts} $\Omega_{(0j)}, \qquad (j = 1,2,3),$
	\item the \emph{scaling} vectorfield $S,$
	\item the four \emph{acceleration} vectorfields $K_{(\mu)}, \qquad (\mu = 0,1,2,3).$ 
\end{enumerate}
Relative to the inertial coordinate system $\lbrace x^{\mu} \rbrace_{\mu=0,1,2,3},$ the above vectorfields can be expressed as

\begin{subequations}
\begin{align}
	T_{(\mu)} & = \partial_{\mu}, \label{E:Translationsdef} \\
	\Omega_{(\mu \nu)} & = x_{\mu} \partial_{\nu} - x_{\nu} \partial_{\mu}, \\
	S & = x^{\kappa} \partial_{\kappa}, \label{E:Sdef} \\
	K_{(\mu)} & = - 2 x_{\mu} S + g_{\kappa \lambda}x^{\kappa} x^{\lambda} \partial_{\mu}.
\end{align}	
\end{subequations}
When working in \emph{our fixed inertial coordinate system}, we use the notation $T_{(0)} = \partial_t$ interchangeably.
In this article, we will primarily make use of the vectorfields in $(1) - (4),$ together with $\overline{K} \eqdef K_{(0)} + T_{(0)},$ which has the following components relative to the inertial coordinate system:

\begin{subequations}
\begin{align} \label{E:Kfirstdef}
	\overline{K}^{0} & = 1 + t^2 + (x^1)^2 + (x^2)^2 + (x^3)^2, \\
	\overline{K}^j & = 2t x^j, && (j = 1,2,3).
\end{align}
\end{subequations}
We remark that the translations, rotations, and boosts are Killing fields, while relative to the inertial coordinate system,
we have that

\begin{subequations}
\begin{align}
	^{(S)}\pi_{\mu \nu} & = 2 g_{\mu \nu}, \\
	^{(K_{(\lambda)})}\pi_{\mu \nu} & = -4x_{\lambda} g_{\mu \nu}, \\
	^{(\overline{K})}\pi_{\mu \nu} & = 4t g_{\mu \nu}. \label{E:overlineKdeformationtensor}
\end{align}
\end{subequations}
  
The subset $\mathscr{T},$ which consists of all the translations\footnote{This is not to be confused with the subset $\mathcal{T}$ of frame field vectors, which is defined in \eqref{E:Framefieldsubsets}.},
the subset $\mathcal{O},$ which consists of the rotations, and the subset $\mathcal{Z},$ 
which consists of all generators except for the accelerations, and which have
cardinalities $4,$ $3,$ and $11$ respectively, will play a distinguished role throughout this article:

\begin{subequations}
\begin{align}
	\mathscr{T} & \eqdef \lbrace T_{(\mu)} \rbrace_{0 \leq \mu \leq 3}, \label{E:Translationsetdef} \\
	\mathcal{O} & \eqdef \lbrace \Omega_{(12)}, \Omega_{(23)}, \Omega_{(13)} \rbrace, \label{E:Rotationsetdef} \\
	\mathcal{Z} & \eqdef \lbrace T_{(\mu)}, \Omega_{(\mu \nu)}, S \rbrace_{0 \leq \mu < \nu \leq 3}.  \label{E:Zsetdef} 
\end{align}
\end{subequations}
We denote the Lie algebras generated by $\mathscr{T}, \mathcal{O},$ and $\mathcal{Z}$ by 
$\mathbf{T}, \mathbf{O},$ and $\mathbf{Z}$ respectively. Note that for each vectorfield $Z \in \mathbf{Z},$ there is a constant $c_Z$ such that

\begin{subequations}
\begin{align}
	\Lie_Z g_{\mu \nu} = c_Z g_{\mu \nu}, \label{E:LieZonmlower} \\
	\Lie_Z (g^{-1})^{\mu \nu} = - c_Z (g^{-1})^{\mu \nu}. \label{E:LieZonmupper}
\end{align}
\end{subequations}
Also note that on the left-hand side of \eqref{E:LieZonmupper}, the indices of $g$ are raised before differentiation occurs.

It will be convenient for us to work with \emph{modified Lie derivatives}\footnote{Note that these are not the same
modified Lie derivatives that appear in \cite{lBnZ2009}, \cite{dCsK1993}, and \cite{nZ2000}.} $\Liemod_Z.$

\begin{definition} \label{D:Liemoddef}
	For each vectorfield $Z \in \mathcal{Z},$ we define the modified Lie derivative $\Liemod_Z$ by
	\begin{align} \label{E:Liemoddef}
		\Liemod_Z \eqdef \Lie_Z + 2c_Z,
	\end{align}
	where $c_Z$ denotes the constant from \eqref{E:LieZonmlower}.
\end{definition}
The crucial feature of the above definition is captured by Lemma \ref{L:LiemodZLiemodMaxwellCommutator} below, which shows that 
for each $Z \in \mathcal{Z},$ the operator $\Liemod_Z$ can be commuted through the linear Maxwell-Maxwell equation \\ ${\big[(g^{-1})^{\mu \kappa} (g^{-1})^{\nu \lambda} - (g^{-1})^{\mu \lambda} (g^{-1})^{\nu \kappa}\big] \nabla_{\mu} \Far_{\kappa \lambda} = 0},$ resulting in the identity
${\big[(g^{-1})^{\mu \kappa} (g^{-1})^{\nu \lambda} - (g^{-1})^{\mu \lambda} (g^{-1})^{\nu \kappa}\big] \nabla_{\mu} \Lie_Z \Far_{\kappa \lambda} = 0}.$
As is shown in Lemmas \ref{L:HtriangleCommutator} and the proof of Lemma \ref{L:divergenceofdotJLinfinitybound}, a similar result (involving nonlinear error terms) also holds for the MBI equation \eqref{E:HmodifieddMis0summary}.

We now introduce some notation that will allow us to compactly express iterated derivatives. If $\mathcal{A}$
is one of the sets from \eqref{E:Translationsetdef} - \eqref{E:Zsetdef}, then we label the vectorfields in $\mathcal{A}$ as $Z^{\iota_{1}}, \cdots, Z^{\iota_{m}},$ where $m$ is the cardinality of $\mathcal{A}.$ Then for any multi-index 
$I = (\iota_1, \cdots, \iota_k)$ of length $k,$ where each $\iota_i \in \lbrace 1,2,\cdots, m \rbrace,$ we define

\begin{definition} \label{D:iterated}
\begin{subequations}
\begin{align} 
	\Lie_{\mathcal{A}}^I & \eqdef \Lie_{Z^{\iota_1}} \circ \cdots \circ \Lie_{Z^{\iota_k}}, \label{E:iteratedLie} \\
	\Liemod_{\mathcal{A}}^I & \eqdef \Liemod_{Z^{\iota_1}} \circ \cdots \circ \Liemod_{Z^{\iota_k}}, \\
	\nabla_{\mathcal{Z}}^I & \eqdef \nabla_{Z^{\iota_1}} \circ \cdots \circ \nabla_{Z^{\iota_k}},  \label{E:iteratedCovariant} \\
	\angn_{\mathcal{O}}^I & \eqdef \angn_{Z^{\iota_1}} \circ \cdots \circ \nabla_{Z^{\iota_k}},
\end{align}
\end{subequations}
etc.
\end{definition}
Under this convention, the Leibniz rule can be written as

\begin{align}
	\Lie_{\mathcal{Z}}^I (UV) = \sum_{I_1 + I_2 = I} (\Lie_{\mathcal{Z}}^I U)(\Lie_{\mathcal{Z}}^I V),
\end{align}
etc., where by a sum over $I_1 + I_2 = I,$ we mean a sum over all order preserving partitions of the index $I$ into two 
multi-indices; i.e., if $I = (\iota_1, \cdots, \iota_k),$ then $I_1 = (\iota_{i_1}, \cdots, \iota_{i_a}), 
I_2 = (\iota_{i_{a+1}}, \cdots, \iota_{i_k}),$ where $i_1, \cdots, i_k$ is any re-ordering of the integers
$1,\cdots,k$ such that $i_1 < \cdots < i_a,$ and $i_{a+1} < \cdots < i_k.$

We end this section with the following lemmas, which provide expressions for the commutators 
$[\Lie_X, \star]$ and $[\nabla_X, \star]$ acting on two-forms.

\begin{lemma} \label{L:nablaXhodgedualcommutation}
	If $X$ is a vectorfield, and $\Far$ is a two-form, then
	
	\begin{align} \label{E:LieXhodgedualcommutation}
		{^{\star \hspace{-.03in}} (\nabla_X \Far)}_{\mu \nu} = \nabla_X \Fardual_{\mu \nu}.
	\end{align}
	
\end{lemma}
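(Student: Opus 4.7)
The plan is a direct computation using the fact that both the spacetime metric $g$ and the volume form $\epsilon$ are covariantly constant. This lemma is essentially a compatibility statement between $\nabla$ and the Hodge star, and it reduces to the Leibniz rule applied to the defining expression for $\Fardual$.

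Specifically, I would start from the definition \eqref{E:Fardualdef}, namely $\Fardual^{\mu \nu} = \frac{1}{2} \epsilon^{\kappa \lambda \mu \nu} \Far_{\kappa \lambda},$ and lower the free indices using two copies of $g$:
\begin{align*}
\Fardual_{\mu \nu} = \tfrac{1}{2}\, g_{\mu \alpha}\, g_{\nu \beta}\, \epsilon^{\kappa \lambda \alpha \beta}\, \Far_{\kappa \lambda}.
\end{align*}
Then I would apply $\nabla_X = X^{\gamma} \nabla_{\gamma}$ to both sides. By the Leibniz rule for $\nabla,$ the right-hand side produces four terms, three of which involve $\nabla_X g_{\mu \alpha},$ $\nabla_X g_{\nu \beta},$ or $\nabla_X \epsilon^{\kappa \lambda \alpha \beta}$ as factors. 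By \eqref{E:CovariantDerivativeofgisZeroIndices} these are zero for the metric, and by \eqref{E:volumeformcovariantlyconstant} (combined with metric compatibility, which lets us raise indices through $\nabla$) the volume form term also vanishes. The only surviving term is
\begin{align*}
\nabla_X \Fardual_{\mu \nu} = \tfrac{1}{2}\, g_{\mu \alpha}\, g_{\nu \beta}\, \epsilon^{\kappa \lambda \alpha \beta}\, \nabla_X \Far_{\kappa \lambda},
\end{align*}
which is precisely the definition of ${^{\star \hspace{-.03in}} (\nabla_X \Far)}_{\mu \nu}$ applied to the two-form $\nabla_X \Far.$

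There is no real obstacle here: the entire argument is the observation that $\nabla$ commutes with raising and lowering indices and with the totally antisymmetric tensor $\epsilon.$ The only point worth being a bit careful about is the convention implicit in defining $\Fardual_{\mu \nu}$ versus $\Fardual^{\mu \nu}$ in \eqref{E:Fardualdef}, since the Hodge dual is first defined with indices up; once that is unwound, the rest is immediate.
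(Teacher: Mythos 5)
Your proof is correct and is essentially the same as the paper's: the paper likewise deduces the lemma from the definition \eqref{E:Fardualdef} together with $\nabla_X \epsilon_{\mu\nu\kappa\lambda}=0$ (a consequence of \eqref{E:volumeformcovariantlyconstant}) and metric compatibility. You have simply spelled out the Leibniz-rule bookkeeping that the paper leaves implicit.
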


\begin{proof}
	Lemma \ref{L:nablaXhodgedualcommutation} follows easily from definition \ref{E:Fardualdef} and
	the fact $\nabla_{X} \epsilon_{\mu \nu \kappa \lambda} = 0,$ which is a simple consequence of 
	\eqref{E:volumeformcovariantlyconstant}.
\end{proof}

\begin{lemma} \label{L:LieXhodgedualcommutation} \cite[Eqn. 3.25]{dCsK1990}
	If $X$ is a vectorfield, and $\Far$ is a two-form, then
	
	\begin{align} \label{E:LieXhodgedualcommutation}
		{^{\star \hspace{-.03in}} (\Lie_X \Far)}_{\mu \nu} = \Lie_X \Fardual_{\mu \nu} 
			- \Fardual_{\mu}^{\ \beta} {^{(X)}\pi_{\nu \beta}}
			+ {^{(X)}\pi_{\mu \beta}} \Fardual_{\nu}^{\ \beta} 
			+ \frac{1}{2} {^{(X)} \pi_{\ \beta}^{\beta}} \Fardual_{\mu \nu}.
	\end{align}
\end{lemma}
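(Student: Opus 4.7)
The strategy is direct computation. Starting from the definition
\begin{align*}
\Fardual_{\mu\nu} = \tfrac{1}{2}\, g_{\mu\widetilde{\mu}}\, g_{\nu\widetilde{\nu}}\, \epsilon^{\widetilde{\mu}\widetilde{\nu}\kappa\lambda}\, \Far_{\kappa\lambda},
\end{align*}
I will apply the Leibniz rule for $\Lie_X$ term by term. This produces four contributions: (i) $\Lie_X$ acting on the first metric factor, (ii) $\Lie_X$ acting on the second metric factor, (iii) $\Lie_X$ acting on $\epsilon^{\widetilde{\mu}\widetilde{\nu}\kappa\lambda}$, and (iv) $\Lie_X$ acting on $\Far_{\kappa\lambda}$. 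Contribution (iv) is, by definition of Hodge duality applied to the two-form $\Lie_X \Far$, exactly $\star(\Lie_X \Far)_{\mu\nu}$; moving it to the left-hand side reduces the problem to identifying the first three contributions with the three correction terms in the statement.

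For contributions (i) and (ii) I use $\Lie_X g_{\mu\nu} = {}^{(X)}\pi_{\mu\nu}$, which is immediate from \eqref{E:deformationdef} together with \eqref{E:Torsionfree}-\eqref{E:derivativeofgiszero}. After substituting ${}^{(X)}\pi$ for $\Lie_X g$ in each case, the remaining factor is a contraction of $\epsilon^{\widetilde{\mu}\widetilde{\nu}\kappa\lambda} \Far_{\kappa\lambda}$ against a single metric which I must re-identify as a component of $\Fardual$ with one raised index. The antisymmetry $\epsilon^{\widetilde{\mu}\widetilde{\nu}\kappa\lambda} = -\epsilon^{\widetilde{\nu}\widetilde{\mu}\kappa\lambda}$ produces the minus sign in front of $\Fardual_\mu^{\ \beta}\, {}^{(X)}\pi_{\nu\beta}$ and the plus sign in front of ${}^{(X)}\pi_{\mu\beta}\, \Fardual_\nu^{\ \beta}$, matching the asymmetric pattern in the claimed formula.

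For contribution (iii) I compute $\Lie_X \epsilon^{\widetilde{\mu}\widetilde{\nu}\kappa\lambda}$ using Lemma~\ref{L:Liederivativeintermsofnabla}, noting that $\nabla_\alpha \epsilon^{\widetilde{\mu}\widetilde{\nu}\kappa\lambda} = 0$ by \eqref{E:volumeformcovariantlyconstant}. This leaves four terms of the form $-\epsilon^{\alpha\,\cdots}\nabla_\alpha X^{\widetilde{\mu}}$, each antisymmetric in the free indices. Since any totally antisymmetric four-index tensor on a four-manifold is a scalar multiple of $\epsilon^{\widetilde{\mu}\widetilde{\nu}\kappa\lambda}$, I determine the scalar by contracting with $\epsilon_{\widetilde{\mu}\widetilde{\nu}\kappa\lambda}$ and applying the identity $\epsilon^{\alpha\nu\kappa\lambda}\epsilon_{\mu\nu\kappa\lambda} = -6\delta^\alpha_\mu$ (valid in signature $(-,+,+,+)$). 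The outcome is $\Lie_X \epsilon^{\widetilde{\mu}\widetilde{\nu}\kappa\lambda} = -(\nabla_\alpha X^\alpha) \epsilon^{\widetilde{\mu}\widetilde{\nu}\kappa\lambda}$. Since $\nabla_\alpha X^\alpha = \tfrac{1}{2}{}^{(X)}\pi^\beta_{\ \beta}$ by the trace of \eqref{E:deformationdef}, contribution (iii) contracts back with the surrounding metrics to produce exactly $-\tfrac{1}{2}{}^{(X)}\pi^\beta_{\ \beta}\, \Fardual_{\mu\nu}$, which after moving to the left-hand side yields the last term of the claim.

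The main obstacle is bookkeeping of signs: tracking how the antisymmetry of $\epsilon^{\widetilde{\mu}\widetilde{\nu}\kappa\lambda}$ interacts with the two Leibniz terms of the form $\Lie_X g$, which differ in which slot is being raised. A careful identification of $\tfrac{1}{2}g_{\nu\widetilde{\nu}}\epsilon^{\widetilde{\mu}\widetilde{\nu}\kappa\lambda}\Far_{\kappa\lambda}$ with $-\Fardual_\nu^{\ \widetilde{\mu}}$ (rather than $+\Fardual_\nu^{\ \widetilde{\mu}}$) is the one place a sign can easily be mishandled; the asymmetry between the two $\Fardual \cdot {}^{(X)}\pi$ terms in the final formula is precisely the signature of this computation having been done correctly. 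Once each of the three contributions is matched to its counterpart in the statement, the identity \eqref{E:LieXhodgedualcommutation} follows.
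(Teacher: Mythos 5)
Your proof is correct and follows essentially the same direct-computation strategy the paper sketches: apply the Leibniz rule to a representation of $\Fardual_{\mu\nu}$, use the Lie derivatives of the metric and of the volume form, and match the three correction terms. The only cosmetic difference is that you lower the free indices of $\Fardual$ with two factors of $g$ and work with $\epsilon^{\widetilde\mu\widetilde\nu\kappa\lambda}$ (hence $\Lie_X g_{\mu\nu} = {}^{(X)}\pi_{\mu\nu}$ and $\Lie_X \epsilon^{\widetilde\mu\widetilde\nu\kappa\lambda} = -\tfrac{1}{2}{}^{(X)}\pi^{\beta}_{\ \beta}\,\epsilon^{\widetilde\mu\widetilde\nu\kappa\lambda}$), whereas the paper's sketch cites $\Lie_X(g^{-1})^{\mu\nu}=-{}^{(X)}\pi^{\mu\nu}$ and $\Lie_X\epsilon_{\kappa\lambda\mu\nu}$, indicating the equivalent representation with lower-index $\epsilon$ and two factors of $g^{-1}$ raising the indices on $\Far$; both variants close after the same sign bookkeeping.
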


\begin{proof}
	The relation \eqref{E:LieXhodgedualcommutation} follows from the using the expressions
	\eqref{E:Fardualdef} and \eqref{E:Liederivativeintermsofnabla} for $\Fardual$ and $\Lie_X,$ together
	with the well-known identities
	
	\begin{align}
		\Lie_X \epsilon_{\kappa \lambda \mu \nu} & = \frac{1}{2} {^{(X)}\pi_{\ \beta}^{\beta}}
			\epsilon_{\kappa \lambda \mu \nu}, \label{E:LieX4volume} \\
		\Lie_X (g^{-1})^{\mu \nu} & = - {^{(X)}\pi^{\mu \nu}}, \label{E:LieZginverse} \\
		\epsilon^{\lambda \beta \mu \nu} \epsilon_{\alpha \beta \rho \sigma} & =
			\delta_{\alpha}^{\lambda} \delta_{\rho}^{\mu} \delta_{\sigma}^{\nu}
			- \delta_{\alpha}^{\lambda} \delta_{\rho}^{\nu} \delta_{\sigma}^{\mu}
			+ \delta_{\alpha}^{\mu} \delta_{\rho}^{\nu} \delta_{\sigma}^{\lambda}
			- \delta_{\alpha}^{\mu} \delta_{\rho}^{\lambda} \delta_{\sigma}^{\nu}
			+	\delta_{\alpha}^{\nu} \delta_{\rho}^{\lambda} \delta_{\sigma}^{\mu}
			- \delta_{\alpha}^{\nu} \delta_{\rho}^{\mu} \delta_{\sigma}^{\lambda},
	\end{align}
	to express $\Lie_X \Fardual_{\mu \nu}$ in terms of the components of
	$\Fardual.$ We leave the details to the reader.
\end{proof}

We are particularly interested in the case that $X$ is a conformal Killing field. Under this assumption, we have the following simple corollary of the lemma.

\begin{corollary} \label{C:ConformalKillingLieXhodgedualcommutation} \cite[Corollary of Proposition 3.3]{dCsK1990}
	If $X$ is a conformal Killing field and $\Far$ is a two-form, then
	
	\begin{align} \label{E:ConformalKillingLieXhodgedualcommutation}
	{^{\star \hspace{-.03in}} (\Lie_X \Far)}	= \Lie_X \Fardual.
	\end{align}
\end{corollary}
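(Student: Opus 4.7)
The plan is to derive Corollary \ref{C:ConformalKillingLieXhodgedualcommutation} as a direct algebraic consequence of Lemma \ref{L:LieXhodgedualcommutation}, using only the defining relation \eqref{E:conformalKilling} of a conformal Killing field. Since Lemma \ref{L:LieXhodgedualcommutation} already provides a general identity for $^{\star}(\Lie_X \Far)$ minus $\Lie_X \Fardual$ in terms of the deformation tensor $^{(X)}\pi$, the task reduces to showing that the three error terms on the right-hand side of \eqref{E:LieXhodgedualcommutation} cancel when $^{(X)}\pi_{\mu\nu} = \phi_X g_{\mu\nu}$.

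First I would substitute $^{(X)}\pi_{\mu \beta} = \phi_X g_{\mu \beta}$ into each of the three error terms. The first term becomes $-\Fardual_{\mu}^{\ \beta} \phi_X g_{\nu \beta} = -\phi_X \Fardual_{\mu \nu}$. The second term becomes $\phi_X g_{\mu \beta} \Fardual_{\nu}^{\ \beta} = \phi_X \Fardual_{\nu \mu} = -\phi_X \Fardual_{\mu \nu}$, where I use the antisymmetry of the Hodge dual of a two-form. For the third term, I take the trace $^{(X)}\pi_{\ \beta}^{\beta} = \phi_X g_{\ \beta}^{\beta} = 4 \phi_X$ in the $4$-dimensional setting, so the term equals $\tfrac{1}{2}(4\phi_X) \Fardual_{\mu \nu} = 2 \phi_X \Fardual_{\mu \nu}$.

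Summing the three contributions yields $(-\phi_X - \phi_X + 2\phi_X)\Fardual_{\mu \nu} = 0$, so Lemma \ref{L:LieXhodgedualcommutation} collapses to $^{\star}(\Lie_X \Far)_{\mu \nu} = \Lie_X \Fardual_{\mu \nu}$, which is the claim of the corollary. There is essentially no obstacle here; the only subtlety worth noting is that the cancellation relies on the spacetime being $4$-dimensional (so that $g_{\ \beta}^{\beta} = 4$) and on the antisymmetry of $\Fardual$, both of which are standing assumptions of the paper. The entire argument is a one-line substitution followed by an algebraic identity, so a formal write-up need only display the three substitutions and observe that their sum vanishes.
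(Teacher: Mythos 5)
Your proof is correct and follows exactly the paper's route: the paper's proof simply states that the corollary follows from \eqref{E:conformalKilling} and Lemma \ref{L:LieXhodgedualcommutation}, and your substitution of $^{(X)}\pi_{\mu\nu} = \phi_X g_{\mu\nu}$ with the resulting cancellation $(-\phi_X - \phi_X + 2\phi_X)\Fardual_{\mu\nu} = 0$ is precisely the computation left implicit there.
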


\begin{proof}
	Corollary \ref{C:ConformalKillingLieXhodgedualcommutation} follows from \eqref{E:conformalKilling} and Lemma 
	\ref{L:LieXhodgedualcommutation}.
\end{proof}

\section{Tensorial Decompositions} \label{S:Decompositions}

In this section, we have three main goals. First, we will introduce the \emph{null frame} decomposition of
tensorfields. Related to this decomposition is the notion of a \emph{null form} $\mathcal{Q}(\cdot,\cdot),$ which is a 
quadratic form that acts on a pair of type $\binom{0}{2}$ tensors, and that has a special algebraic property: the ``worst'' possible quadratic combinations, from the point of view of the null decomposition, are absent. Since the two invariants $\Farinvariant_{(i)}$ of the Faraday tensor are each multiples of a corresponding null form $\mathcal{Q}_{(i)}(\Far, \Far),$ 
the net effect is that every nonlinear term in the Euler-Lagrange equation \eqref{E:MBIEuler-Lagrange} of the MBI system can be expressed as functions of null forms in $\Far, \nabla \Far.$ Later in the article, with the help of the additional null form structure present in the the expression \eqref{E:divJdot} for the divergence of our energy currents, together with Proposition \ref{P:GlobalSobolev}, we will be able to prove sharp decay estimates for the null components of solutions $\Far$ to the MBI system. Next in this section, we decompose the MBI system relative to the null frame. From the point of view of proving the sharpest possible decay estimates, the most important equation is \eqref{E:MBIalphanulldecomp}, which shows that the ``worst'' derivative of the $\alpha$ null component of $\Far$ can be expressed in terms of ``good'' derivatives of other null components of $\Far,$ plus cubic error terms. Finally, we introduce electromagnetic decompositions of $\Far$ and $\Max,$
where $\Max$ is the Maxwell tensor from \eqref{E:MaxMBI}. These decompositions will be useful for proving various identities and inequalities concerning $\Far,$ and for expressing the smallness condition in our global existence theorem directly in terms of the data $(\mathring{\Magneticinduction},\mathring{\Displacement}),$ which are one-forms inherent to the Cauchy hypersurface $\Sigma_0.$
\\

\noindent \hrulefill
\ \\

\subsection{The Null frame} \label{SS:NullFrame}

Before proceeding, we introduce the subsets $C_{q}^+, C_{s}^-$ of Minkowski space,
which will play a role in the sequel.
\begin{definition}
	In our fixed inertial coordinate system $(t,\uy),$ we define the \emph{outgoing Minkowski 
	null cones} $C_{q}^+,$ and \emph{ingoing Minkowski null cones} $C_{s}^-,$ as follows:
	
	\begin{subequations}
	\begin{align}
		C_{q}^+ & \eqdef \lbrace (\tau,\uy) \mid |\uy| - \tau = q \rbrace, \\
		C_{s}^- & \eqdef \lbrace (\tau,\uy) \mid |\uy| + \tau = s \rbrace.
	\end{align}
	\end{subequations}
	In the above formulas, $\uy \eqdef (y^1,y^2,y^3),$ and 
	$|\uy| \eqdef \big((y^1)^2 + (y^2)^2 + (y^3)^2 \big)^{1/2}.$
\end{definition}

A \emph{Minkowski null frame} is a locally defined collection of four vectorfields on $M \slash 0$
that we will denote by $\uL, L, e_1, e_2,$ and that have the following properties:
\begin{itemize}
	\item At each point $p,$ the set $\lbrace \uL, L, e_1, e_2 \rbrace$ spans $T_p M \simeq M.$
	\item For each $s \in \mathbb{R},$ and for all nonzero $p \in C_{s}^-,$ $\uL|_p$ is a 
		future-directed, \emph{ingoing null geodesic vectorfield} tangent to $C_{s}^-.$
	\item For each $q \in \mathbb{R},$ and for all nonzero $p \in C_{q}^+,$ $L|_p$ is a future-directed, \emph{outgoing null 
		geodesic vectorfield} tangent to $C_{q}^+.$
	\item For all $t \in \mathbb{R},$ for all $r > 0,$ $r \eqdef |\ux|,$ and all points $p \in S_{r,t},$
		$\uL|_p$ and $L_p$ are normal to $S_{r,t}.$ 
	\item $g(\uL,L) = -2.$ 
	\item At each point $p \in S_{r,t},$ $e_1|_p$ and $e_2|_p$ are tangent to $S_{r,t}.$
	\item $g(e_A,e_B) = \delta_{AB}.$
\end{itemize} 
In formulas \eqref{E:Ldef} - \eqref{E:uLdef}, we provided the components of $\uL, L$ relative to our inertial coordinate
system. Relative to an arbitrary coordinate system, the aforementioned properties of $\uL,$ $L,$ $e_1,$ and $e_2$ can be expressed
as follows:

\begin{subequations}
\begin{align}
	\nabla_L L & = \nabla_{\uL} \uL = 0, \label{E:LanduLaregeodesic} \\
	\nabla_L \uL & = \nabla_{\uL} L = 0, \label{E:nablaLuLis0} \\
	\uL_{\kappa} L^{\kappa} & = - 2, \label{E:LunderlineLcontracted} \\
	e_A^{\kappa} \uL_{\kappa} & = e_A^{\kappa} L_{\kappa} = 0,  \label{E:LunderlineLnormaltospheres} \\
	g_{\kappa \lambda} e_A^{\kappa} e_B^{\lambda} & = \delta_{AB}.
\end{align}
\end{subequations}
Additionally, it follows from \eqref{E:Torsionfree} and \eqref{E:nablaLuLis0} that $\uL$ and $L$ commute as vectorfields: 
\begin{align}
	[\uL, L] & = 0. \label{E:LuLBracketis0}
\end{align}

In the analysis that will follow, we will see that the decay rates of the null components (see Section \ref{SS:NullComponents}) 
of $\Far$ will be distinguished according to the kinds of contractions of $\Far$ taken against $\uL,$ $L,$ $e_1,$ and $e_2.$ With these considerations in mind, we introduce the following sets of vectorfields:

\begin{subequations}
\begin{align} \label{E:Framefieldsubsets}
	\mathcal{L} \eqdef \lbrace L \rbrace, & & \mathcal{T} \eqdef \lbrace L, e_1, e_2 \rbrace,
	& & \mathcal{U} \eqdef \lbrace \uL, L, e_1, e_2 \rbrace.
\end{align}
\end{subequations}

We will often need to measure the size of the contractions of various tensors and their covariant derivatives against vectors
belonging to the sets $\mathcal{L}, \mathcal{T}, \mathcal{U}.$ This motivates the next definition. 

\begin{definition} \label{D:contractionnomrs}

If $\mathcal{V}, \mathcal{W}$ denote any two of the above sets, and $F$ is an arbitrary type $\binom{0}{2}$
tensor, then we define the following pointwise seminorms:

\begin{subequations}
\begin{align}
	|F|_{\mathcal{V} \mathcal{W}} & \eqdef \sum_{V \in \mathcal{V}, W \in \mathcal{W}} |V^{\kappa} W^{\lambda} F_{\kappa 
		\lambda}|, \label{E:contractionnorm} \\
	|\nabla F|_{\mathcal{V} \mathcal{W}} & \eqdef \sum_{U \in \mathcal{U}, V \in \mathcal{V}, W \in \mathcal{W}} 
		|V^{\kappa} W^{\lambda} U^{\gamma} \nabla_{\gamma} F_{\kappa \lambda}|.
\end{align}
\end{subequations}

\end{definition}
Observe that if $F$ is a two tensor, then $|F| \approx |F|_{\mathcal{U} \mathcal{U}},$ where
$|F|$ is defined in \eqref{E:Riemanniannorm}.

\subsection{Null frame decomposition of a tensorfield}

For an arbitrary vectorfield $X$ and frame vector $U \in \mathcal{U},$ we define 

\begin{align} \label{E:XlowerUdef}
	X_U & \eqdef X_{\kappa} U^{\kappa}, \ \mbox{where} \ X_{\mu} \eqdef g_{\mu \kappa} X^{\kappa}. 
\end{align}
The components $X_U$ are known as the \emph{null components} of $X.$ In the sequel, we will abbreviate

\begin{align}
	X_A \eqdef X_{e_A}, && \nabla_A \eqdef \nabla_{e_A}, \ \mbox{etc}.
\end{align}
It follows from \eqref{E:XlowerUdef} that

\begin{subequations}
\begin{align}
	X & = X^{\kappa} \partial_{\kappa} = X^{L} L + X^{\uL} \uL
		+ X^{A} e_A,  \label{X:AbstractLuLAdecomp} \\
	X^{L} & = - \frac{1}{2}X_{\uL}, && X^{\uL} = - \frac{1}{2}X_L, && X^A = X_A.
		\label{E:XupperUdef}
\end{align}
\end{subequations}
Furthermore, it is easy to check that
\begin{align} \label{E:XYnullframeinnerproduct}
	g(X,Y) = X^{\kappa}Y_{\kappa} = -\frac{1}{2}X_{L}Y_{\uL} - \frac{1}{2}X_{\uL}Y_{L} + \delta^{AB} X_A Y_B. 
\end{align}

The above null decomposition of a vectorfield generalizes in the obvious way to higher order tensorfields. In the next section,
we provide a detailed version of the null decomposition of two-forms $\Far,$ since they are the fundamental unknowns in 
any classical theory of electromagnetism.

\subsection{The detailed null decomposition of a two-form} \label{SS:NullComponents}

\begin{definition} \label{D:null}
Given any two-form $\Far,$ we define its \emph{null components} $\ualpha, \alpha, \rho, \sigma$ as follows:

\begin{subequations}
\begin{align}
	\ualpha_{\mu} & \eqdef \angg_{\mu}^{\ \nu} \Far_{\nu \lambda} \uL^{\lambda}, 
		\label{E:ualphadef} \\
	\alpha_{\mu} & \eqdef \angg_{\mu}^{\ \nu} \Far_{\nu \lambda} L^{\lambda}, \\
	\rho & \eqdef \frac{1}{2} \Far_{\kappa \lambda}\uL^{\kappa} L^{\lambda}, \\
	\sigma & \eqdef \frac{1}{2} \angepsilon^{\kappa \lambda} \Far_{\kappa \lambda}. \label{E:sigmadef}
\end{align}
\end{subequations}

\end{definition}

It is a simple exercise to check that $\ualpha, \alpha$ are tangent to the spheres $S_{r,t}:$

\begin{subequations}
\begin{align}
	\ualpha_{\kappa}\uL^{\kappa} & = 0, & \ualpha_{\kappa}L^{\kappa} & = 0, \\
	\alpha_{\kappa}\uL^{\kappa} & = 0, & \alpha_{\kappa}L^{\kappa} & = 0.
\end{align}
\end{subequations}
Furthermore, relative to the null frame $\lbrace \uL, L, e_1, e_2 \rbrace,$ we have that

\begin{subequations}
\begin{align}
	\underline{\alpha}_A & = \Far_{A \uL}, \\
	\alpha_A & = \Far_{AL}, \\
	\rho & = \frac{1}{2} \Far_{\uL L}, \\
	\sigma & = \Far_{12}.
\end{align}
\end{subequations}

In terms of the norms introduced in Definition \ref{D:contractionnomrs}, it follows that

\begin{subequations}
\begin{align}
	|\Far| & \approx |\Far|_{\mathcal{U}\mathcal{U}} \approx |\ualpha| + |\alpha| + |\rho| + |\sigma|, \\
	|\Far|_{\mathcal{L}\mathcal{U}} & \approx |\alpha| + |\rho|, \\
	|\Far|_{\mathcal{T} \mathcal{T}} & \approx |\alpha| + |\sigma|. 
\end{align}
\end{subequations}

The null components of $\Fardual$ can be expressed in terms of the above null components of $\Far.$
Denoting the null components\footnote{We use the symbol $\odot$ in order to avoid confusion with the Hodge dual; i.e., it is
not true that ${^{\star \hspace{-.03in}}(\ualpha[\Far])} = \ualpha[\Fardual].$} of $\Fardual$ by $\ualphadot, \alphadot, \rhodot, \sigmadot,$ we leave it as a simple exercise to the reader to check that  

\begin{subequations}
\begin{align}
	\ualphadot_A & = - \underline{\alpha}^B \angepsilon_{BA}, \label{E:Fardualalpha} \\
	\alphadot_A & = \alpha^B \angepsilon_{BA}, \\
	\rhodot & = \sigma, \\  
	\sigmadot & = - \rho. \label{E:Fardualsigma}
\end{align}
\end{subequations}

\subsection{Null forms} \label{SS:NullForms}

\begin{definition}

Let $F,G$ be any pair of type $\binom{0}{2}$ tensors. We define the null forms 
$\mathcal{Q}_{(1)}(\cdot, \cdot), \mathcal{Q}_{(2)}(\cdot, \cdot)$ as follows:

\begin{subequations}
\begin{align}
	\mathcal{Q}_{(1)}(F,G) & \eqdef F^{\kappa \lambda} G_{\kappa \lambda}, \label{E:nullform1} \\
	\mathcal{Q}_{(2)}(F,G) & \eqdef {^{\star \hspace{-.03in}}F}^{\kappa \lambda} G_{\kappa \lambda} \label{E:nullform2}.
\end{align}	
\end{subequations}
\end{definition}
It is easy to check that $\mathcal{Q}_{(i)}(F,G) = \mathcal{Q}_{(i)}(G,F)$ for $i =1,2.$

\begin{remark}
	Observe that the two invariants \eqref{E:firstinvariant} - \eqref{E:secondinvariant} of the Faraday tensor $\Far$	are
	\begin{align}
		\Farinvariant_{(1)} = \frac{1}{2}\mathcal{Q}_{(1)}(\Far, \Far), 
		&& \Farinvariant_{(2)} = \frac{1}{4}\mathcal{Q}_{(2)}(\Far, \Far). 
	\end{align}
\end{remark}

Based on the above remark, it is clear that we will be primarily interested in the case in which $F,G$ are two-forms. The next lemma describes the fundamental algebraic properties of null forms; i.e., the absence of the ``worst possible'' quadratic terms.

\begin{lemma} \label{L:NullForms}
	Let $\Far, \Gar$ be a pair of two-forms, and let $\mathcal{Q}_{(i)}(\cdot,\cdot)$ be one the null forms defined 
	in \eqref{E:nullform1} - \eqref{E:nullform2}, and let $|\cdot|_{\mathcal{V}\mathcal{W}}$ 
	be the norms defined in $\eqref{E:contractionnorm}$ Then for $i = 1,2,$ the following pointwise estimate holds:
	\begin{align} \label{E:NullFormBound}
		\mathcal{Q}_{(i)}(\Far,\Gar) \lesssim 
			|\Far| |\Gar|_{\mathcal{L}\mathcal{U}} + |\Far|_{\mathcal{L}\mathcal{U}} |\Gar|
			+ |\Far|_{\mathcal{T}\mathcal{T}} |\Gar|_{\mathcal{T}\mathcal{T}}.
	\end{align}
	
\end{lemma}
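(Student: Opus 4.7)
The plan is to reduce the claim to the explicit null-frame expansions of $\mathcal{Q}_{(1)}$ and $\mathcal{Q}_{(2)}$ that were already written down in the Notation section, namely
\begin{align*}
\mathcal{Q}_{(1)}(\Far,\Gar) &= -\delta^{AB}\ualpha_A[\Far]\alpha_B[\Gar] - \delta^{AB}\ualpha_A[\Gar]\alpha_B[\Far] - 2\rho[\Far]\rho[\Gar] + 2\sigma[\Far]\sigma[\Gar], \\
\mathcal{Q}_{(2)}(\Far,\Gar) &= \angepsilon^{AB}\ualpha_A[\Far]\alpha_B[\Gar] + \angepsilon^{AB}\ualpha_A[\Gar]\alpha_B[\Far] - 2\sigma[\Far]\rho[\Gar] - 2\rho[\Far]\sigma[\Gar].
\end{align*}
These identities can be derived by expanding $\Far^{\kappa\lambda}\Gar_{\kappa\lambda}$ using \eqref{X:AbstractLuLAdecomp}--\eqref{E:XYnullframeinnerproduct} (applied to each index slot), and similarly for $\mathcal{Q}_{(2)}$ after using \eqref{E:Fardualalpha}--\eqref{E:Fardualsigma}. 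The crucial algebraic feature that the reader should notice is that \emph{the products $\ualpha[\Far]\ualpha[\Gar]$ are entirely absent}; this is the manifestation of the null condition.

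Next, I would connect the null components to the seminorms of Definition \ref{D:contractionnomrs}. Because $\alpha_A=\Far_{AL}$ and $\rho=\tfrac{1}{2}\Far_{\uL L}$ both involve a contraction against $L\in\mathcal{L}$, we have the pointwise bounds
\begin{align*}
|\alpha[\Far]| + |\rho[\Far]| \lesssim |\Far|_{\mathcal{L}\mathcal{U}},
\end{align*}
and analogously for $\Gar$. Similarly, since $\sigma=\Far_{12}$ is a contraction with two frame vectors in $\mathcal{T}$, we have $|\sigma[\Far]| \lesssim |\Far|_{\mathcal{T}\mathcal{T}}$. For $\ualpha$, which contracts with $\uL\notin\mathcal{L}\cup\mathcal{T}$, no better bound than $|\ualpha[\Far]| \lesssim |\Far|$ is available — but this is exactly why the absence of $\ualpha\cdot\ualpha$ cross-terms above matters.

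Now I would estimate each surviving term in the two expansions termwise. The mixed terms $\ualpha_A[\Far]\alpha_B[\Gar]$ satisfy
\begin{align*}
|\ualpha_A[\Far]\alpha_B[\Gar]| \lesssim |\Far|\,|\Gar|_{\mathcal{L}\mathcal{U}},
\end{align*}
and symmetrically $|\ualpha_A[\Gar]\alpha_B[\Far]| \lesssim |\Far|_{\mathcal{L}\mathcal{U}}\,|\Gar|$. The $\rho\rho$ term is controlled by $|\Far|_{\mathcal{L}\mathcal{U}}|\Gar|_{\mathcal{L}\mathcal{U}}$, which is already dominated by the first two terms on the right of \eqref{E:NullFormBound}. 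The $\sigma\sigma$ term contributes $|\Far|_{\mathcal{T}\mathcal{T}}|\Gar|_{\mathcal{T}\mathcal{T}}$ directly. The mixed $\sigma\rho$ terms appearing in $\mathcal{Q}_{(2)}$ are bounded by $|\Far|_{\mathcal{T}\mathcal{T}}|\Gar|_{\mathcal{L}\mathcal{U}} + |\Far|_{\mathcal{L}\mathcal{U}}|\Gar|_{\mathcal{T}\mathcal{T}}$, both of which are absorbed into the right-hand side of \eqref{E:NullFormBound}. Summing yields the claimed inequality for both $i=1,2$.

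There is essentially no obstacle: once the null-frame expansions are in hand, the proof is purely a matter of bookkeeping, matching null components to seminorms. The only thing requiring a small amount of care is verifying the expansions themselves — in particular, the extra minus signs arising from $\uL_\kappa L^\kappa=-2$ in \eqref{E:XYnullframeinnerproduct}, and the sign bookkeeping in \eqref{E:Fardualalpha}--\eqref{E:Fardualsigma} when passing from $\mathcal{Q}_{(1)}$ to $\mathcal{Q}_{(2)}$ — but these are straightforward and the precise signs are irrelevant for the upper bound \eqref{E:NullFormBound}.
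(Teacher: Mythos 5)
Your proposal is correct and follows the same route as the paper: compute the null-frame expansions of $\mathcal{Q}_{(1)}(\Far,\Gar)$ and $\mathcal{Q}_{(2)}(\Far,\Gar)$, observe that the $\ualpha[\Far]\ualpha[\Gar]$ product is absent, and match the remaining terms against the seminorms $|\cdot|_{\mathcal{L}\mathcal{U}}$ and $|\cdot|_{\mathcal{T}\mathcal{T}}$ via $|\Far|_{\mathcal{L}\mathcal{U}}\approx|\alpha|+|\rho|$ and $|\Far|_{\mathcal{T}\mathcal{T}}\approx|\alpha|+|\sigma|$. The paper simply records the two expansions and declares the bound "follows," whereas you spell out the termwise estimates — this is the same argument, stated with more care.
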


\begin{proof}
	Let $\ualpha[\Far],\alpha[\Far], \sigma[\Far], \rho[\Far]$ and $\ualpha[\Gar],\alpha[\Gar], \sigma[\Gar], \rho[\Gar]$
	denote the null components of $\Far$ and $\Gar$ respectively. Then using the relations \eqref{E:Fardualalpha} - 
	\eqref{E:Fardualsigma} (for the case $i=2$), we compute that
	\begin{align}
		\Far^{\kappa \lambda} \Gar_{\kappa \lambda} & = - \delta^{AB} \ualpha_A[\Far]\alpha_B[\Gar]
		- \delta^{AB} \ualpha_A[\Gar] \alpha_B[\Far] - 2 \rho[\Far]\rho[\Gar]
		+ 2\sigma[\Far] \sigma[\Gar], \\
		\Fardual^{\kappa \lambda} \Gar_{\kappa \lambda} 
			& = \angepsilon^{AB} \ualpha_A[\Far] \alpha_B[\Gar] + \angepsilon^{AB} \ualpha_A[\Gar] \alpha_B[\Far] 
			- 2 \sigma[\Far]\rho[\Gar] - 2\rho[\Far] \sigma[\Gar].
	\end{align}
	from which \eqref{E:NullFormBound} follows. 
	
\end{proof}

\subsection{Intrinsic divergence and curl, and the cross product}

In this section, we recall the definitions of the intrinsic divergence and curl of 
vectorfields $U$ that are tangent to the submanifolds $\Sigma_t$ and $S_{r,t}.$

\begin{definition}
	
	If $U$ is a vectorfield tangent to $\Sigma_t,$ then its intrinsic
	divergence and curls are defined relative to an arbitrary spacetime coordinate system as follows:
	
	\begin{align}
		\udiv U & \eqdef \SigmafirstfundNabla_{\kappa} U^{\kappa} = \Sigmafirstfund_{\kappa}^{\ \lambda} \nabla_{\lambda} 
		U^{\kappa}, && \\
		(\ucurcl U)^{\nu} & \eqdef \uepsilon_{\ \ \ \lambda}^{\nu \kappa} \nabla_{\kappa} U^{\lambda}, && (\nu = 0,1,2,3),		
	\end{align}
	where the volume form $\uepsilon_{\nu \kappa \lambda}$ of $\Sigma_t$ is defined in \eqref{E:bargvolumeformdef}.
	Relative to the Euclidean coordinate system $\bar{x} = (x^1, x^2, x^3)$ on $\Sigma_t,$ we have that
	\begin{align}
		\udiv U & = \SigmafirstfundNabla_a U^a, && \\
		(\ucurcl U)^j & \eqdef	\uepsilon^{ja}_{\ \ b} \SigmafirstfundNabla_a U^b, && (j=1,2,3).	
	\end{align}
	
	If $U$ is a vectorfield tangent to the spheres $S_{r,t},$ then its intrinsic
	divergence and curls are defined to be the following scalar quantities:
	
	\begin{align}
		\angdiv U & \eqdef \angn_{\kappa} U^{\kappa} = \angg_{\kappa}^{\ \lambda} \nabla_{\lambda} U^{\kappa}, 
			\\
		\angcurl U & \eqdef	\angepsilon_{\ \lambda}^{\kappa} \nabla_{\kappa} U^{\lambda},
	\end{align}
	where $\angepsilon_{\mu \nu},$ the volume form $S_{r,t},$ is defined in \eqref{E:angvolumeformdef}. Relative to a null frame, we
	have that
	
	\begin{align}
		\angdiv U & \eqdef \delta^{AB} \angn_A U_B, \label{E:angdivframe} \\
		\angcurl U & \eqdef	\angepsilon_{AB} \angn_A U_B. \label{E:angcurlframe}
	\end{align}
	
	Note that in the above definitions, contractions against the frame vectors $e_1, e_2$ is taken \emph{after}
	covariant differentiation; e.g., $\angn_A U_B \eqdef e_A^{\kappa} e_B^{\lambda} \angn_{\kappa} U_{\lambda}.$ 
\end{definition}

\begin{definition} \label{D:Crossproduct}
	If $U$ and $V$ are vectors tangent to $\Sigma_t,$ then we define their cross product, which is also tangent to $\Sigma_t,$ 
	as follows, relative to the Euclidean coordinate system $\ux = (x^1, x^2, x^3)$ on $\Sigma_t:$
	
	\begin{align} \label{E:Crossproduct}
		(U \times V)^j & \eqdef \uepsilon^{j}_{\ ab} U^a V^b.
	\end{align}
\end{definition}

\subsection{The null components of \texorpdfstring{$\overline{K}$ and $\nabla \overline{K}$}{the Morawetz vectorfield and its 
covariant derivatives}}

In this section, we provide the null components of the vectorfield $\overline{K},$ which is defined in \eqref{E:Kfirstdef},
and its covariant derivative $\nabla \overline{K}.$ $\overline{K}$ is central to our global existence argument,
because it is a fundamental ingredient in the energies we construct; see \eqref{E:Jdotdef} and \eqref{E:mathcalENdef}.
We don't provide any proofs in this section, but instead leave the simple computations as an exercise for the reader. 

We first note that using definitions \eqref{E:Lradialdef} and \eqref{E:underlineLradialdef}, $\overline{K}$ can be expressed as follows:
\begin{align} \label{E:Kdef}
	\overline{K} = \frac{1}{2}\big\lbrace(1+s^2) L + (1 + q^2) \uL \big\rbrace.
\end{align}
From \eqref{E:LunderlineLcontracted}, \eqref{E:LunderlineLnormaltospheres}, and \eqref{E:Kdef}, it easily follows that
\begin{subequations}
\begin{align}   
	\overline{K}_{L} & = -(1 + q^2), \label{E:MorawetznulldecompL} \\
	\overline{K}_{\uL} & = -(1 + s^2), \\
	\overline{K}_{A} & = 0. \label{E:MorawetznulldecompA}
\end{align}
\end{subequations}
Finally, we compute that 

\begin{subequations}
\begin{align}
	\nabla_L \overline{K}_{\uL} & \eqdef L^{\kappa} \uL^{\lambda} \nabla_{\kappa} \overline{K}_{\lambda} = -4s, 
		\label{E:nablaoverlineKLunderlineL} \\
	\nabla_{\uL} \overline{K}_L & \eqdef \uL^{\kappa} L^{\lambda} \nabla_{\kappa} \overline{K}_{\lambda} = 4q, \\
	\nabla_A \overline{K}_B & \eqdef e_A^{\kappa} e_{B}^{\lambda} \nabla_{\kappa} \overline{K}_{\lambda} = 2t \delta_{AB}, \\
	\nabla_L \overline{K}_L & \eqdef L^{\kappa} L^{\lambda} \nabla_{\kappa} \overline{K}_{\lambda} = 0, \\
	\nabla_L \overline{K}_A & \eqdef L^{\kappa} e_A^{\lambda} \nabla_{\kappa} \overline{K}_{\lambda} = 0, \\
	\nabla_{\uL} \overline{K}_{\uL} & \eqdef \uL^{\kappa} \uL^{\lambda} \nabla_{\kappa} \overline{K}_{\lambda} = 0, \\
	\nabla_{\uL} \overline{K}_A & \eqdef \uL^{\kappa} e_A^{\lambda} \nabla_{\kappa} \overline{K}_{\lambda} = 0, \\
	\nabla_A \overline{K}_L & \eqdef e_A^{\kappa} L^{\lambda} \nabla_{\kappa} \overline{K}_{\lambda} = 0, \\
	\nabla_A \overline{K}_L & \eqdef e_A^{\kappa} \uL^{\lambda} \nabla_{\kappa} \overline{K}_{\lambda} = 0.
	 \label{E:nablaoverlineKAL}
\end{align}
\end{subequations}

\subsection{The null decomposition of the MBI system}

In this section, we decompose the MBI system into equations for the null components of $\Far.$ 
We begin by noting that after simple computations, the MBI system \eqref{E:dFis0intro} - \eqref{E:Maxsdefintro}
can be expressed in the following equivalent form:

\begin{subequations}
\begin{align}
	& \nabla_{\lambda} \Far_{\mu \nu} + \nabla_{\mu} \Far_{\nu \lambda} + \nabla_{\nu} \Far_{\lambda \mu} = 0,
		\label{E:dFis0nullsection} \\
	& \nabla_{\lambda} \Fardual_{\mu \nu} + \nabla_{\mu} \Fardual_{\nu \lambda} + \nabla_{\nu} \Fardual_{\lambda \mu} 
		 \label{E:dMis0nullsection} \\
	& \ \ \ - \frac{1}{2}\ell_{(MBI)}^{-2} \Big\lbrace (\nabla_{\lambda}\Farinvariant_{(1)} - 2 \Farinvariant_{(2)} \nabla_{\lambda} \Farinvariant_{(2)})\Fardual_{\mu \nu}
			+ (\nabla_{\mu}\Farinvariant_{(1)} - 2 \Farinvariant_{(2)} \nabla_{\mu} \Farinvariant_{(2)})\Fardual_{\nu \lambda}
			+ (\nabla_{\nu}\Farinvariant_{(1)} - 2 \Farinvariant_{(2)} \nabla_{\nu} \Farinvariant_{(2)})\Fardual_{\lambda \mu} \Big\rbrace \notag  \\
	& \ \ \ - \frac{1}{2}\ell_{(MBI)}^{-2} \Farinvariant_{(2)} \Big\lbrace (\nabla_{\lambda}\Farinvariant_{(1)} - 2 \Farinvariant_{(2)} \nabla_{\lambda} \Farinvariant_{(2)})\Far_{\mu 
		\nu} + (\nabla_{\mu}\Farinvariant_{(1)} - 2 \Farinvariant_{(2)} \nabla_{\mu} \Farinvariant_{(2)})\Far_{\nu \lambda}
		+ (\nabla_{\nu}\Farinvariant_{(1)} - 2 \Farinvariant_{(2)} \nabla_{\nu} \Farinvariant_{(2)})\Far_{\lambda \mu} \Big\rbrace \notag \\
	& \ \ \ + (\nabla_{\lambda} \Farinvariant_{(2)}) \Far_{\mu \nu} + (\nabla_{\mu}\Farinvariant_{(2)}) \Far_{\nu \lambda} + (\nabla_{\nu}\Farinvariant_{(2)}) 
			\Far_{\lambda \mu} = 0. \notag
\end{align}
\end{subequations}

In our calculations below, we will make use of the identities

\begin{align} \label{E:nablaALnablaAuL}
	\nabla_A \uL = -r^{-1} e_A, && \nabla_A L = r^{-1} e_A, 
\end{align}
which can be directly calculated in our inertial coordinate system using \eqref{E:uLdef} - \eqref{E:Ldef}. We will also make use
of the identity 

\begin{align} \label{E:nablaAeBcovriantintrintermsofinsicnablaAeBcovraint}
	\angn_A e_B & = \nabla_A e_B + \frac{1}{2} g(\nabla_A e_B, \uL) L + \frac{1}{2} g(\nabla_A e_B, L) \uL \\
	& =  \nabla_A e_B - \frac{1}{2} g(e_B, \nabla_A \uL) L - \frac{1}{2} g(e_B, \nabla_A L)\uL \notag \\
	& = \nabla_A e_B + \frac{1}{2}r^{-1} \delta_{AB} (L - \uL), \notag
\end{align}
which follows from \eqref{E:SphereIntrinsicintermsofExtrinsic} and \eqref{E:nablaALnablaAuL}.

Furthermore, if $U$ is a type $\binom{0}{m}$ tensorfield, and $X_{(i)}, 1 \leq i \leq m$ and $Y$ are vectorfields, then 
by \eqref{E:CovariantLeibniz}, we have that
\begin{align} \label{E:nablaLeibnizrule}
	\nabla_Y \big(U(X_{(1)}, X_{(2)}, \cdots, X_{(m)})\big) 
	& = (\nabla_Y U)(X_{(1)}, X_{(2)}, \cdots, X_{(m)}) + U(\nabla_Y X_{(1)}, X_{(2)}, \cdots , X_{(m)}) \\
	& \ \ + \cdots + U(X_{(1)}, X_{(2)}, \cdots , \nabla_Y X_{(m)}). \notag
\end{align}
Similarly, if $U$ is tangent to the spheres $S_{r,t},$ and $e_{B_{(1)}}, \cdots, e_{B_{(m)}} \in \lbrace e_1, e_2 \rbrace,$
then

\begin{align} \label{E:instrinsicnablaLiebnizrule}
	\angn_A (U_{B_{(1)} \cdots B_{(m)}}) & \eqdef \angn_{e_A} \big(U(e_{B_{(1)}}, e_{B_{(2)}}, \cdots , e_{B_{(m)}} \big) \\
	& = (\angn_A U)(e_{B_{(1)}}, e_{B_{(2)}}, \cdots ,e_{B_{(m)}}) + 
		U(\angn_A e_{B_{(1)}}, e_{B_{(2)}}, \cdots , e_{B_{(m)}}) + \cdots + U(e_{B_{(1)}}, e_{B_{(2)}}, \cdots , \angn_A 
		e_{B_{(m)}}). \notag
\end{align}
Applying \eqref{E:nablaLeibnizrule} and \eqref{E:instrinsicnablaLiebnizrule} to $\Far,$ and using \eqref{E:nablaALnablaAuL} plus \eqref{E:nablaAeBcovriantintrintermsofinsicnablaAeBcovraint}, we compute that the identities contained in the following lemma hold.

\begin{lemma} \cite[pg. 161]{dCsK1990}
Let $\Far$ be a two-form, and let $\ualpha,$ $\alpha,$ $\rho,$ $\sigma$ be its null components as defined in Definition \ref{D:null}. 
Then the following identities hold:

\begin{subequations}
\begin{align}
	\nabla_A \Far_{B \uL} & = \angn_A \ualpha_B - r^{-1}(\rho \delta_{AB} + \sigma \angepsilon_{AB}), 
		\label{E:nablaAFarBuLintermsofintrinsic} \\
	\nabla_A \Far_{BL} & = \angn_A \alpha_B - r^{-1}(\rho \delta_{AB} - \sigma \angepsilon_{AB}), \\
	\nabla_A \Fardual_{B \uL} & = - \angepsilon_{CB}\angn_A \ualpha_C - r^{-1}(\sigma \delta_{AB} - \rho \angepsilon_{AB}), \\
	\nabla_A \Fardual_{BL} & = \angepsilon_{CB}\angn_A \alpha_C - r^{-1}(\sigma \delta_{AB} + \rho \angepsilon_{AB}), \\
	\frac{1}{2} \nabla_A \Far_{\uL L} & = \angn_A \rho + \frac{1}{2} r^{-1}(\ualpha_A + \alpha_A), \\
	\frac{1}{2} \nabla_A \Fardual_{\uL L} & = \angn_A \sigma + \frac{1}{2} r^{-1}(-\angepsilon_{BA}\ualpha_B + \angepsilon_{BA} \alpha_B), \\
	\nabla_A \Far_{BC} & = \angepsilon_{BC} \Big\lbrace \angn_A \sigma + \frac{1}{2}r^{-1}
		(-\angepsilon_{DA}\ualpha_D + \angepsilon_{DA} \alpha_D) \Big\rbrace. \label{E:nablaAFarBCintermsofintrinsic}
\end{align}
\end{subequations}
Note that in all of the above expressions, contractions are taken after differentiating; e.g., 
$\nabla_A \Far_{BL} \eqdef e_A^{\mu} e_B^{\kappa} \uL^{\lambda} \nabla_{\mu} \Far_{\kappa \lambda}.$

\end{lemma}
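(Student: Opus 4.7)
The plan is to prove each identity by one uniform procedure: expand the covariant derivative via the Leibniz rule \eqref{E:nablaLeibnizrule}, handle the frame-vector derivatives $\nabla_A \uL$ and $\nabla_A L$ via \eqref{E:nablaALnablaAuL}, convert between extrinsic and intrinsic derivatives of $e_B$ via \eqref{E:nablaAeBcovriantintrintermsofinsicnablaAeBcovraint}, and recognize the spherical gradient of a null component by invoking the intrinsic Leibniz rule \eqref{E:instrinsicnablaLiebnizrule} on the $S_{r,t}$-tangent quantities $\ualpha, \alpha, \rho, \sigma$.

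As the representative case, consider \eqref{E:nablaAFarBuLintermsofintrinsic}. Applying \eqref{E:nablaLeibnizrule} gives
\begin{align*}
(\nabla_A \Far)(e_B,\uL) = e_A\bigl(\Far(e_B,\uL)\bigr) - \Far(\nabla_A e_B,\uL) - \Far(e_B,\nabla_A \uL).
\end{align*}
By \eqref{E:nablaALnablaAuL}, $-\Far(e_B,\nabla_A\uL) = r^{-1}\Far_{BA} = -r^{-1}\sigma\angepsilon_{AB}$, since on each sphere $\Far_{AB}=\sigma\angepsilon_{AB}$. For the middle term, \eqref{E:nablaAeBcovriantintrintermsofinsicnablaAeBcupravdnablaAeBcovraint} — i.e., \eqref{E:nablaAeBcovriantintrintermsofinsicnablaAeBcovraint} — yields
\begin{align*}
\Far(\nabla_A e_B,\uL) = \Far(\angn_A e_B,\uL) + \frac{1}{2}r^{-1}\delta_{AB}\bigl(\Far_{L\uL} - \Far_{\uL\uL}\bigr) = \Far(\angn_A e_B,\uL) - r^{-1}\delta_{AB}\rho,
\end{align*}
using $\Far_{\uL\uL} = 0$ and $\Far_{L\uL} = -2\rho$. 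For the first term, the intrinsic Leibniz rule \eqref{E:instrinsicnablaLiebnizrule} applied to the $S_{r,t}$-tangent one-form $\ualpha$ gives $e_A(\ualpha_B) = \angn_A \ualpha_B + \ualpha(\angn_A e_B) = \angn_A \ualpha_B + \Far(\angn_A e_B,\uL)$. Substituting, the two $\Far(\angn_A e_B,\uL)$ contributions cancel and \eqref{E:nablaAFarBuLintermsofintrinsic} emerges.

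The identity for $\nabla_A \Far_{BL}$ is obtained by the same calculation, replacing $\uL$ by $L$ and using $\nabla_A L = +r^{-1}e_A$ (which flips the sign of the $\sigma\angepsilon_{AB}$ term); the identity for $\frac{1}{2}\nabla_A \Far_{\uL L}$ follows by placing $\uL,L$ in both slots, so that the Leibniz expansion produces only the $\nabla_A\uL$ and $\nabla_A L$ correction terms (which give $\frac{1}{2}r^{-1}(\ualpha_A+\alpha_A)$) together with the scalar derivative $e_A\rho = \angn_A \rho$; and the identity for $\nabla_A \Far_{BC}$ comes from feeding \eqref{E:nablaAeBcovriantintrintermsofinsicnablaAeBcovraint} into both arguments, after which the tangential part assembles into $\angepsilon_{BC}\angn_A \sigma$ via $\Far_{BC} = \sigma\angepsilon_{BC}$ and the transverse part produces the $r^{-1}$ correction. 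The four identities for $\Fardual$ then follow either by running the same derivation with $\Fardual$ in place of $\Far$, or, more efficiently, by substituting the relations \eqref{E:Fardualalpha}--\eqref{E:Fardualsigma} into the corresponding $\Far$ identities, valid because $\nabla_A$ commutes with Hodge duality on Minkowski space by Lemma \ref{L:nablaXhodgedualcommutation}.

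The only genuine obstacle is careful bookkeeping: tracking the antisymmetry $\Far_{AB}=-\Far_{BA}$, the orientation conventions for $\angepsilon_{AB}$, the opposite signs in $\nabla_A \uL$ versus $\nabla_A L$, and ensuring that the $\angn_A e_B$ contributions from the two Leibniz expansions cancel correctly. No deeper ingredient is needed beyond the two Leibniz rules \eqref{E:nablaLeibnizrule}, \eqref{E:instrinsicnablaLiebnizrule}, the conversion formula \eqref{E:nablaAeBcovriantintrintermsofinsicnablaAeBcovraint}, and the null-frame identities \eqref{E:nablaALnablaAuL}.
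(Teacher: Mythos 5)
Your method is exactly the paper's: the paper proves this lemma by precisely the combination of \eqref{E:nablaLeibnizrule}, \eqref{E:instrinsicnablaLiebnizrule}, \eqref{E:nablaALnablaAuL}, and \eqref{E:nablaAeBcovriantintrintermsofinsicnablaAeBcovraint} that you describe, and your treatment of the $\sigma$-term, the cancellation of the $\Far(\angn_A e_B,\uL)$ contributions, the remaining cases, and the reduction of the $\Fardual$ identities to the $\Far$ identities via \eqref{E:Fardualalpha}--\eqref{E:Fardualsigma} and Lemma \ref{L:nablaXhodgedualcommutation} are all sound.

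However, your one fully displayed computation contains a sign error that, if carried through, contradicts the identity you claim to obtain. Since \eqref{E:nablaAeBcovriantintrintermsofinsicnablaAeBcovraint} gives $\nabla_A e_B = \angn_A e_B - \frac{1}{2}r^{-1}\delta_{AB}(L-\uL)$, the correction term enters with a \emph{minus} sign:
\begin{align*}
\Far(\nabla_A e_B,\uL) = \Far(\angn_A e_B,\uL) - \frac{1}{2}r^{-1}\delta_{AB}\bigl(\Far_{L\uL} - \Far_{\uL\uL}\bigr)
= \Far(\angn_A e_B,\uL) + r^{-1}\delta_{AB}\rho,
\end{align*}
using $\Far_{L\uL} = -2\rho.$ You wrote $+\frac{1}{2}r^{-1}\delta_{AB}(\Far_{L\uL}-\Far_{\uL\uL})$ and hence $-r^{-1}\delta_{AB}\rho;$ feeding that into $-\Far(\nabla_A e_B,\uL)$ would produce $+r^{-1}\rho\delta_{AB}$ in the final answer rather than the $-r^{-1}\rho\delta_{AB}$ appearing in \eqref{E:nablaAFarBuLintermsofintrinsic}. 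With the sign corrected as above, $-\Far(\nabla_A e_B,\uL) = -\Far(\angn_A e_B,\uL) - r^{-1}\delta_{AB}\rho,$ and the stated identity does emerge. The same care is needed in the $\nabla_A\Far_{BL}$ case, where the corrected computation shows the $\rho\delta_{AB}$ term keeps the same sign as in the $\uL$ case and only the $\sigma\angepsilon_{AB}$ term flips — consistent with what you assert, but not with the arithmetic as you displayed it.
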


\hfill $\qed$

\begin{lemma}
The MBI system can be decomposed into principal terms and ``cubic error terms'' as follows, relative to a 
Minkowski null frame:

\begin{subequations}
\begin{align}
	\nabla_L \ualpha_A & + r^{-1}\ualpha_A + \angn_A \rho - \angepsilon_{AB} \angn_B \sigma  \label{E:MBIualphanulldecomp} \\
	& - \frac{1}{4}\ell_{(MBI)}^{-2} \Big\lbrace (\nabla_{\uL}\Farinvariant_{(1)} - 2 \Farinvariant_{(2)} \nabla_{\uL} 
		\Farinvariant_{(2)})\alpha_A
		- 2\angepsilon_{AB}(\angn_{B}\Farinvariant_{(1)} - 2 \Farinvariant_{(2)} \angn_{B} \Farinvariant_{(2)})\sigma
		+ (\nabla_{L}\Farinvariant_{(1)} - 2 \Farinvariant_{(2)} \nabla_{L} \Farinvariant_{(2)}) \ualpha_A \Big\rbrace  
		\notag  \\
	& - \frac{1}{4}\ell_{(MBI)}^{-2} \Farinvariant_{(2)} \Big\lbrace (\nabla_{\uL}\Farinvariant_{(1)} - 2 \Farinvariant_{(2)} 
		\nabla_{\uL} \Farinvariant_{(2)}) \angepsilon_{AB} \alpha_B
		- 2\angepsilon_{AB}(\angn_{B}\Farinvariant_{(1)} - 2 \Farinvariant_{(2)} \angn_{B} \Farinvariant_{(2)}) \rho
		- (\nabla_{L}\Farinvariant_{(1)} - 2 \Farinvariant_{(2)} \nabla_{L} \Farinvariant_{(2)})\angepsilon_{AB} \ualpha_B 
		\Big\rbrace  \notag \\
	& + \frac{1}{2}\Big\lbrace (\nabla_{\uL} \Farinvariant_{(2)}) \angepsilon_{AB} \alpha_B 
		- 2\angepsilon_{AB}(\angn_B \Farinvariant_{(2)}) \rho  
		- (\nabla_{L}\Farinvariant_{(2)}) \angepsilon_{AB} \ualpha_B \Big\rbrace = 0,	 \notag \\
	\nabla_{\uL} \alpha_A & - r^{-1} \alpha_A - \angn_A \rho - \angepsilon_{AB} \angn_B \sigma 
		\label{E:MBIalphanulldecomp} \\
	& - \frac{1}{4}\ell_{(MBI)}^{-2} \Big\lbrace (\nabla_{\uL}\Farinvariant_{(1)} - 2 \Farinvariant_{(2)} \nabla_{\uL} 
		\Farinvariant_{(2)})\alpha_A
		- 2\angepsilon_{AB}(\angn_{B}\Farinvariant_{(1)} - 2 \Farinvariant_{(2)} \angn_{B} \Farinvariant_{(2)})\sigma
		+ (\nabla_{L}\Farinvariant_{(1)} - 2 \Farinvariant_{(2)} \nabla_{L} \Farinvariant_{(2)}) \ualpha_A \Big\rbrace 
		\notag \\
	& - \frac{1}{4}\ell_{(MBI)}^{-2} \Farinvariant_{(2)} \Big\lbrace (\nabla_{\uL}\Farinvariant_{(1)} - 2 \Farinvariant_{(2)} 
		\nabla_{\uL} \Farinvariant_{(2)}) \angepsilon_{AB} \alpha_B
		- 2\angepsilon_{AB}(\angn_{B}\Farinvariant_{(1)} - 2 \Farinvariant_{(2)} \angn_{B} \Farinvariant_{(2)}) \rho
		- (\nabla_{L}\Farinvariant_{(1)} - 2 \Farinvariant_{(2)} \nabla_{L} \Farinvariant_{(2)})\angepsilon_{AB} \ualpha_B \Big\rbrace  		\notag \\
	& + \frac{1}{2}\Big\lbrace (\nabla_{\uL} \Farinvariant_{(2)}) \angepsilon_{AB} \alpha_B 
		- 2\angepsilon_{AB}(\angn_B \Farinvariant_{(2)}) \rho  
		- (\nabla_{L}\Farinvariant_{(2)}) \angepsilon_{AB} \ualpha_B \Big\rbrace = 0, \notag \\
	- \angdiv \ualpha & - \nabla_{\uL} \rho + 2r^{-1} \rho    
		- \frac{1}{2}\ell_{(MBI)}^{-2} \Big\lbrace - (\nabla_{\uL}\Farinvariant_{(1)} - 2 \Farinvariant_{(2)} \nabla_{\uL} 	
		\Farinvariant_{(2)}) \rho
		- \delta^{AB}(\angn_A\Farinvariant_{(1)} - 2 \Farinvariant_{(2)} \angn_A \Farinvariant_{(2)})\ualpha_B \Big\rbrace 
		\label{E:intrinsicdivualphanulldecomp} \\
		& - \frac{1}{2}\ell_{(MBI)}^{-2} \Farinvariant_{(2)} \Big\lbrace (\nabla_{\uL}\Farinvariant_{(1)} - 2 \Farinvariant_{(2)} 
			\nabla_{\uL} \Farinvariant_{(2)}) \sigma
			+ \angepsilon_{AB}(\angn_A\Farinvariant_{(1)} - 2 \Farinvariant_{(2)} \angn_A \Farinvariant_{(2)})\ualpha_B \Big\rbrace 
			+ \Big\lbrace (\nabla_{\uL} \Farinvariant_{(2)}) \sigma + \angepsilon_{AB} (\angn_A \Farinvariant_{(2)}) \ualpha_B 	
			\Big\rbrace = 0, \notag \\
	\angcurl \ualpha & + \nabla_{\uL} \sigma -2 r^{-1} \sigma  = 0, \label{E:intrinsiccurlualphanulldecomp} \\
	\angdiv \alpha & - \nabla_L \rho - 2 r^{-1} \rho 
		- \frac{1}{2}\ell_{(MBI)}^{-2} \Big\lbrace - (\nabla_L\Farinvariant_{(1)} - 2 \Farinvariant_{(2)} \nabla_L 	
		\Farinvariant_{(2)}) \rho
		+ \delta^{AB}(\angn_A\Farinvariant_{(1)} - 2 \Farinvariant_{(2)} \angn_A \Farinvariant_{(2)})\alpha_B \Big\rbrace 
		\label{E:intrinsicdivalphanulldecomp} 	\\
	& - \frac{1}{2}\ell_{(MBI)}^{-2} \Farinvariant_{(2)} \Big\lbrace (\nabla_{L}\Farinvariant_{(1)} - 2 \Farinvariant_{(2)} 	
		\nabla_{L} \Farinvariant_{(2)}) \sigma
		+ \angepsilon_{AB}(\angn_A\Farinvariant_{(1)} - 2 \Farinvariant_{(2)} \angn_A \Farinvariant_{(2)})\alpha_B \Big\rbrace 
		+ \Big\lbrace (\nabla_L \Farinvariant_{(2)}) \sigma + \angepsilon_{AB} (\angn_A \Farinvariant_{(2)}) \alpha_B \Big\rbrace = 0, 
		\notag \\
	\angcurl \alpha & + \nabla_L \sigma + 2 r^{-1} \sigma  = 0, \label{E:intrinsiccurlalphanulldecomp}
\end{align}
\end{subequations}
where the differential operators $\angdiv$ and $\angcurl$ are defined in \eqref{E:angdivframe} - \eqref{E:angcurlframe}.

\end{lemma}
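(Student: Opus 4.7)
The proof is essentially a systematic null-frame contraction of the two tensorial equations \eqref{E:dFis0nullsection} and \eqref{E:dMis0nullsection}, together with careful application of the Leibniz rule for $\nabla$. My plan is to derive each of the six scalar/one-form equations by choosing a specific triple of null-frame vectors against which to contract the appropriate antisymmetrized equation, then converting each resulting term into the intrinsic null-component language.

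\textbf{Choice of contractions.} First, I would derive \eqref{E:MBIualphanulldecomp} and \eqref{E:MBIalphanulldecomp} by contracting \eqref{E:dMis0nullsection} with $(e_A^{\lambda}, \uL^{\mu}, L^{\nu})$ and with $(e_A^{\lambda}, L^{\mu}, \uL^{\nu})$ respectively. Next, \eqref{E:intrinsicdivualphanulldecomp} and \eqref{E:intrinsicdivalphanulldecomp} come from contracting \eqref{E:dMis0nullsection} with $\delta^{AB} e_A^{\mu} e_B^{\nu}$ against $\uL^{\lambda}$ and $L^{\lambda}$ (after dualizing, equivalently one may contract \eqref{E:dFis0nullsection} in the same slots, exploiting $\frac{1}{2}\angepsilon^{AB}\Fardual_{AB} = -\rho$ and $\frac{1}{2}\angepsilon^{AB}\Far_{AB} = \sigma$). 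The two curl equations \eqref{E:intrinsiccurlualphanulldecomp} and \eqref{E:intrinsiccurlalphanulldecomp}, which contain no cubic error terms, are obtained by contracting \emph{the homogeneous equation} \eqref{E:dFis0nullsection} with $\angepsilon^{AB} e_A^{\mu} e_B^{\nu}$ against $\uL^{\lambda}$ and $L^{\lambda}$; cubic terms are absent precisely because $\Far$ (not $\Fardual$) is the one satisfying the unmodified Bianchi identity.

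\textbf{Reduction to intrinsic quantities.} For each contracted equation, the terms of the form $U^{\alpha}V^{\beta}W^{\gamma}\nabla_{\alpha}\Far_{\beta\gamma}$ must be rewritten as $\nabla_U(\Far_{VW})$ minus correction terms $\Far(\nabla_U V, W)$ and $\Far(V, \nabla_U W)$ via \eqref{E:nablaLeibnizrule}. Using $\nabla_L L = \nabla_{\uL}\uL = \nabla_L \uL = \nabla_{\uL} L = 0$ together with the key identities \eqref{E:nablaALnablaAuL}, the correction terms collapse into $r^{-1}$-weighted null components. The intrinsic angular derivatives then appear via \eqref{E:nablaAeBcovriantintrintermsofinsicnablaAeBcovraint}, which converts $\nabla_A e_B$ into $\angn_A e_B$ modulo a $\frac{1}{2}r^{-1}\delta_{AB}(L-\uL)$ contribution. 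Substituting the pre-lemma identities \eqref{E:nablaAFarBuLintermsofintrinsic}–\eqref{E:nablaAFarBCintermsofintrinsic} (and their Hodge-dual analogues) allows one to replace every spherical covariant derivative of $\Far$ or $\Fardual$ by the corresponding $\angn$ of $\rho$, $\sigma$, $\ualpha$, or $\alpha$, plus lower-order $r^{-1}$ terms.

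\textbf{Treatment of the cubic error piece.} The cubic terms in \eqref{E:dMis0nullsection} are already in the form $(\text{directional derivative of an invariant}) \times (\text{two-form})$. Under the chosen contractions, the directional derivative reduces simply to $\nabla_{\uL}\Farinvariant_{(i)}$, $\nabla_L\Farinvariant_{(i)}$, or $\angn_A \Farinvariant_{(i)}$, while the two-form factor contributes one of the null components $\ualpha_A$, $\alpha_A$, $\rho$, or $\sigma$, with sign flips coming from the Hodge relations \eqref{E:Fardualalpha}–\eqref{E:Fardualsigma} whenever the two-form in question is $\Fardual$ rather than $\Far$. Collecting coefficients and matching with the stated forms completes the derivation of each equation.

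\textbf{Main obstacle.} The principal difficulty is purely organizational: every term must be tracked through the Leibniz expansion, and sign errors from $\angepsilon_{AB}$, from the Hodge dualization of angular components, and from the normalization $g(\uL,L) = -2$ must be reconciled. In particular, in the $\angdiv$-type equations \eqref{E:intrinsicdivualphanulldecomp} and \eqref{E:intrinsicdivalphanulldecomp}, the $r^{-1}$-terms arising from $\nabla_A \uL = -r^{-1}e_A$ via the Leibniz correction must combine exactly with those from \eqref{E:nablaAFarBCintermsofintrinsic} to yield the displayed $2r^{-1}\rho$ coefficient; verifying this cancellation is the only genuinely delicate point of the computation.
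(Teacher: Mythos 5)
Your overall strategy—contract the two cyclic-sum equations against triples of null-frame vectors and reduce via the pre-lemma identities—is the right one, but the specific contractions you propose for the first two equations cannot work. The left-hand side of each of \eqref{E:dFis0nullsection} and \eqref{E:dMis0nullsection} is a cyclic sum over $(\lambda,\mu,\nu)$, and since $\Far$ and $\Fardual$ are antisymmetric, that cyclic sum is \emph{totally} antisymmetric in $(\lambda,\mu,\nu)$. Consequently, contracting \eqref{E:dMis0nullsection} with $(e_A^{\lambda},\uL^{\mu},L^{\nu})$ and with $(e_A^{\lambda},L^{\mu},\uL^{\nu})$ produces the \emph{same scalar equation} up to an overall sign, not two independent equations. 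That single equation couples $\nabla_L\ualpha_A$, $\nabla_{\uL}\alpha_A$, and $\angn_A\sigma$ all at once; you cannot extract \eqref{E:MBIualphanulldecomp} and \eqref{E:MBIalphanulldecomp} from it alone. To decouple, you must contract the \emph{same} triple $\uL^{\lambda}L^{\mu}e_A^{\nu}$ against \emph{both} \eqref{E:dFis0nullsection} (which yields a linear combination of $\nabla_L\ualpha_A$, $\nabla_{\uL}\alpha_A$, and $\angn_A\rho$ with no error terms) and \eqref{E:dMis0nullsection}, and then \emph{add and subtract} the two resulting scalar equations. This is why the cubic error blocks in \eqref{E:MBIualphanulldecomp} and \eqref{E:MBIalphanulldecomp} are identical: the errors come only from the $d\Max$ equation, and adding or subtracting the error-free Bianchi equation leaves them unchanged.

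A secondary issue: for \eqref{E:intrinsicdivualphanulldecomp} and \eqref{E:intrinsicdivalphanulldecomp} you write that one contracts \eqref{E:dMis0nullsection} with $\delta^{AB}e_A^{\mu}e_B^{\nu}$ against $\uL^{\lambda}$ or $L^{\lambda}$. But by the same total antisymmetry, the quantity $\uL^{\lambda}e_A^{\mu}e_B^{\nu}$ contracted with the cyclic sum is antisymmetric under $A\leftrightarrow B$, so the $\delta^{AB}$ trace vanishes identically. The only nontrivial contraction is with $\angepsilon^{AB}$, and it is the Hodge duality relations \eqref{E:Fardualalpha}--\eqref{E:Fardualsigma} (as your parenthetical gestures at) that turn the antisymmetric contraction of the $d\Max$ equation into the displayed $\angdiv$ structure, while the antisymmetric contraction of the $d\Far$ equation gives the $\angcurl$ equations.
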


\begin{remark}
	If we discard the nonlinear terms, then the resulting system is the null decomposition of the Maxwell-Maxwell system.
\end{remark}

\begin{proof}
	Contract the vectors $\uL^{\lambda} L^{\mu} e_A^{\nu}$ against equation \eqref{E:dFis0nullsection}, and then against
	equation \eqref{E:dMis0nullsection}. Adding the resulting equations and using \eqref{E:nablaAFarBuLintermsofintrinsic} - 
	\eqref{E:nablaAFarBCintermsofintrinsic} gives \eqref{E:MBIualphanulldecomp}, while subtracting the resulting equations and 
	using \eqref{E:nablaAFarBuLintermsofintrinsic} - \eqref{E:nablaAFarBCintermsofintrinsic} gives \eqref{E:MBIalphanulldecomp}.
	Equations \eqref{E:intrinsicdivualphanulldecomp} - \eqref{E:intrinsiccurlualphanulldecomp} follow from contracting
	$\uL^{\lambda} e_A^{\mu} e_B^{\nu}$ against \eqref{E:dMis0nullsection} and \eqref{E:dFis0nullsection} respectively, and using
	\eqref{E:nablaAFarBuLintermsofintrinsic} - \eqref{E:nablaAFarBCintermsofintrinsic}. Similarly, 	
	\eqref{E:intrinsicdivalphanulldecomp} - \eqref{E:intrinsiccurlalphanulldecomp} follow from contracting
	$L^{\lambda} e_A^{\mu} e_B^{\nu}$ against \eqref{E:dMis0nullsection} and \eqref{E:dFis0nullsection} respectively, and using
	\eqref{E:nablaAFarBuLintermsofintrinsic} - \eqref{E:nablaAFarBCintermsofintrinsic}.
	
\end{proof}

\subsection{The electromagnetic decompositions of \texorpdfstring{$\Far_{\mu \nu}$}{the Faraday tensor} and
 \texorpdfstring{$\Max_{\mu \nu}$}{the Maxwell tensor}} \label{SS:electromagneticdecomposition}

In this section, we provide the familiar decomposition of the Faraday tensor $\Far$ into the \emph{electric field} $\Electricfield_{\mu},$ and the \emph{magnetic induction field} $\Magneticinduction_{\mu}.$ We also introduce a related decomposition of the Maxwell tensor $\Max$ into the \emph{electric displacement} $\Displacement_{\mu},$ and the \emph{magnetic field} $\Magneticfield_{\mu}.$ For computational purposes, it will also be convenient to introduce two additional one-forms,  $\SFar_{\mu}$ and $\SFardual_{\mu},$ which are related to $E_{\mu}$ and $B_{\mu}.$

The decompositions require the introduction of $i,$ the interior product operator, the action of which on $\Far$
is defined by the requirement that the following relation should hold for all vectors $X,Y:$
\begin{align}
	i_X \Far(Y) = \Far(Y,X) = \Far_{\kappa \lambda}Y^{\kappa} X^{\lambda}. 
\end{align}
In an arbitrary coordinate system, $i_X \Far$ can be expressed as
\begin{align}
	(i_X \Far)_{\mu} \eqdef \Far_{\mu \kappa} X^{\kappa}.
\end{align}

\begin{definition}
In terms of the Faraday tensor $\Far$ and the Maxwell tensor $\Max$ (an expression for $\Max$ in the MBI
system is given in \eqref{E:MaxMBI}), the one-forms $E,B,D,H,\SFar,$ and $\SFardual$ are 

\begin{subequations}
\begin{align} \label{E:EBDHdef}
	\Electricfield & \eqdef i_{T_{(0)}} \Far, & \Magneticinduction & \eqdef - \iota_{T_{(0)}} \Fardual, & \Displacement \eqdef - 
		i_{T_{(0)}} \Maxdual, &  & \Magneticfield & \eqdef - \iota_{T_{(0)}} \Max,
\end{align}

\begin{align} \label{E:PQdef}
	\SFar & \eqdef i_{S} \Far, & \SFardual & \eqdef i_{S} \Fardual,
\end{align}
\end{subequations}
where $T_{(0)}$ and $S$ are the time translation and scaling vectorfields defined in \eqref{E:Translationsdef} and \eqref{E:Sdef}.

\end{definition}

In components relative to the inertial coordinate system $\lbrace x^{\mu} \rbrace_{\mu = 0,1,2,3},$ we have that

\begin{subequations}
\begin{align} \label{E:EBDHinertialcomponents}
	\Electricfield_{\mu} & = \Far_{\mu 0}, & \Magneticinduction_{\mu} & = - \Fardual_{\mu 0}, & \Displacement_{\mu} & = - \Maxdual_{\mu 0},  	& \Magneticfield_{\mu} & = - \Max_{\mu 0},
\end{align}

\begin{align} \label{E:PQinertialcomponents}
	\SFar_{\mu} & = x^{\kappa}\Far_{\mu \kappa}, & \SFardual_{\mu} & = x^{\kappa} \Fardual_{\mu \kappa}.
\end{align}
\end{subequations}
The anti-symmetry of $\Far, \Max$ implies that $\Electricfield, \Magneticinduction, \Displacement,$ and $\Magneticfield$ are tangent to the hyperplane $\Sigma_t.$ In the inertial coordinate system, this is equivalent to $\Electricfield_0 = \Magneticinduction_0 = \Displacement_0 = \Magneticfield_0 = 0.$ We may therefore view these four quantities as one-forms that are intrinsic to $\Sigma_t.$

\begin{remark}
	Our definition of $\Magneticinduction$ coincides with the one commonly found in the physics literature, but it has the 
	opposite sign convention of the definition given in \cite{dCsK1990}.
\end{remark}

The identity 
\begin{align}
	g_{\kappa \lambda} X^{\kappa} X^{\lambda} \Far_{\mu \nu} = (i_X \Far)_{\mu} X_{\nu}  - (i_X \Far)_{\nu} X_{\mu} 
		+ X^{\kappa}(i_X \Fardual)^{\lambda} \epsilon_{\kappa \lambda \mu \nu}
\end{align}
shows that $\Far$ is completely determined by $i_X \Far$ and $i_X \Fardual$ at any spacetime point 
where $g_{\kappa \lambda} X^{\kappa} X^{\lambda} \neq 0.$ In particular, $\Electricfield$ and $\Magneticinduction$ completely determine $\Far,$ and it can be checked that in the inertial coordinate system,

\begin{subequations}
\begin{align} \label{E:FarEBrelations}
	\Far_{j0} = \Electricfield_j, && \Far_{jk} = \uepsilon_{ijk}\Magneticinduction^i, 
		&& \Magneticinduction_j = \frac{1}{2} \uepsilon_j^{\ ab} \Far_{ab}.
\end{align}

\begin{align} \label{E:MaxDHrelations}
	\Max_{j0} = -\Magneticfield_j, && \Max_{jk} = \uepsilon_{ijk}\Displacement^i, 
		&& \Displacement_j = \frac{1}{2} \uepsilon_j^{\ ab} \Max_{ab}.
\end{align}
\end{subequations}

Now in linear Maxwell-Maxwell theory, the relations $\Electricfield = \Displacement,$ $\Magneticinduction = \Magneticfield$ hold. In contrast, in the case of the MBI system, tedious computations lead to the following relations 
(see e.g. \cite{iBB1983}, \cite{mK2004a}):

\begin{subequations}
\begin{align}
	\Displacement & = \frac{\Electricfield + (\Electricfield_a \Magneticinduction^a)\Magneticinduction}{\big(1 + |\Magneticinduction|^2 - 
		|\Electricfield|^2 - (\Electricfield_a \Magneticinduction^a)^2 \big)^{1/2}}, 
		\label{E:DisplacementintermsofElectricfieldMagneticinduction} \\
	\Magneticfield & = \frac{\Magneticinduction - (\Electricfield_a \Magneticinduction^a)\Electricfield}{\big(1 + |\Magneticinduction|^2 
		- |\Electricfield|^2 - (\Electricfield_a \Magneticinduction^a)^2 \big)^{1/2}}.
\end{align}
\end{subequations}

Furthermore, we have that

\begin{subequations}
\begin{align} 
	\Electricfield & = \frac{\Displacement + \Magneticinduction \times (\Displacement \times \Magneticinduction)}
		{(1 + |\Magneticinduction|^2 + |\Displacement|^2 + |\Displacement \times \Magneticinduction|^2)^{1/2}}, 
		\label{E:EintermsofDB} \\
	\Magneticfield & = \frac{\Magneticinduction - \Displacement \times (\Displacement \times \Magneticinduction)}
		{(1 + |\Magneticinduction|^2 + |\Displacement|^2 + |\Displacement \times \Magneticinduction|^2)^{1/2}}
		\label{E:HintermsofDB}.
\end{align}
\end{subequations}
In the above formulas, $\times$ denotes the usual intrinsic cross product on $\Sigma_t;$ see \eqref{E:Crossproduct}.

Working in our inertial coordinate system, we set $\nu = 0$ in equations \eqref{E:BianchiEM} - \eqref{E:Euler-Lagrange} and use \eqref{E:EBDHinertialcomponents} to deduce the \emph{constraint} equations for $\Magneticinduction$ and $\Displacement:$
\begin{subequations}
\begin{align}
	\udiv \Displacement & = 0, \label{E:Dconstraint} \\ 
	\udiv \Magneticinduction & = 0. \label{E:Bconstraint}
\end{align}
\end{subequations}
Setting $\lambda = 0, \mu = a, \nu = b$ in \eqref{E:dFis0nullsection}, then contracting against $\uepsilon_j^{\ a b}$
and using \eqref{E:FarEBrelations}, we deduce that

\begin{subequations}
\begin{align} \label{E:partialtBisminuscurlE}
	\partial_t \Magneticinduction = - \ucurcl \Electricfield.
\end{align}
Similarly, we use an equivalent (modulo equation \eqref{E:dFis0nullsection}) version  of \eqref{E:dMis0nullsection},
namely $\nabla_{\lambda} \Max_{\mu \nu} + \nabla_{\mu} \Max_{\nu \lambda} + \nabla_{\nu} \Max_{\lambda \mu} = 0,$
set $\lambda = 0, \mu = a, \nu = b,$ contract against $\uepsilon_j^{\ a b},$ and use \eqref{E:MaxDHrelations} to deduce that

\begin{align} \label{E:partialtDiscurlH}
	\partial_t \Displacement = \ucurcl \Magneticfield.
\end{align}
\end{subequations}

Now from \eqref{E:EintermsofDB}, \eqref{E:HintermsofDB}, \eqref{E:partialtBisminuscurlE}, and \eqref{E:partialtDiscurlH},
it follows that $\Magneticinduction$ and $\Displacement$ satisfy the following evolution equations:

\begin{subequations}
\begin{align}
	\partial_t \Magneticinduction & = - \ucurcl \bigg\lbrace \frac{\Displacement + \Magneticinduction \times (\Displacement \times \Magneticinduction)}
		{(1 + |\Magneticinduction|^2 + |\Displacement|^2 + |\Displacement \times \Magneticinduction|^2)^{1/2}} \bigg\rbrace,  
		\label{E:partialtBintermsofBandD} \\
	\partial_t \Displacement & = \ucurcl \bigg\lbrace \frac{\Magneticinduction - \Displacement \times (\Displacement \times 
		\Magneticinduction)}{(1 + |\Magneticinduction|^2 + |\Displacement|^2 + |\Displacement \times \Magneticinduction|^2)^{1/2}}
		\bigg\rbrace. \label{E:partialtDintermsofBandD}
\end{align}
\end{subequations}
Equations \eqref{E:Dconstraint} - \eqref{E:Bconstraint} plus \eqref{E:partialtBintermsofBandD} - \eqref{E:partialtDintermsofBandD} are
equivalent to the MBI equations \eqref{E:modifieddFis0summary} - \eqref{E:HmodifieddMis0summary}.

In regards to the terminology ``constraint equations'' used above, we make the following remark: it follows that if
\eqref{E:Dconstraint} - \eqref{E:Bconstraint} are satisfied along $\Sigma_0,$ and if $\Displacement, \Magneticinduction$ are classical solutions to
the evolution equations \eqref{E:partialtBintermsofBandD} - \eqref{E:partialtDintermsofBandD} existing on the slab $[T_-,T_+] \times \mathbb{R}^3$, then \eqref{E:Dconstraint} - \eqref{E:Bconstraint} are satisfied in the same slab; i.e., the well-known
identity $\udiv \circ \ucurcl = 0$ implies that $\nabla_t \udiv \Magneticinduction = \nabla_t \udiv \Displacement = 0.$

\begin{remark} \label{R:BDHyperbolicity}
	Using definition \eqref{E:ldef} and the relation \eqref{E:EintermsofDB}, 
	and performing some tedious calculations, it follows that
	
	\begin{align} \label{E:lMBIintermsofBD}
		\ell_{(MBI)}^2 = \frac{(1 + |\Magneticinduction|^2)^2}{1 + 
		|\Magneticinduction|^2 + |\Displacement|^2 + |\Magneticinduction \times \Displacement|^2}.
	\end{align}
	Consequently, as shown by Proposition \ref{P:LocalExistenceCurrent} and Proposition \ref{P:LocalExistence},
	the regime of hyperbolicity for the MBI system has a 
	particularly nice interpretation in terms of the state-space variables $(B,D):$ the MBI equations are
	well-defined and hyperbolic for all finite values of $(\Magneticinduction,\Displacement).$ 
\end{remark}

\section{Commutation Lemmas} \label{S:Commutation}
\setcounter{equation}{0}

In this section, we prove some basic commutation lemmas that will be used in the following sections.
\\

\noindent \hrulefill
\ \\

\begin{lemma} \label{L:Minkowskiisflat}
	Let $X$ and $Y$ be vectorfields, let $U$ be any tensorfield, and let $\nabla$ denote the Levi-Civita connection corresponding 
	to the Minkowski spacetime metric. Then 
	\begin{align} \label{E:Minkowskiisflat}
		[\nabla_X, \nabla_Y] U & = \nabla_{[X,Y]} U.
	\end{align}
\end{lemma}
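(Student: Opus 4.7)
The plan is to leverage the fact that Minkowski space is flat, i.e., the Riemann curvature tensor vanishes identically, together with the general curvature identity \eqref{E:Curvature} from the preceding subsection. Combining these two facts immediately yields \eqref{E:Minkowskiisflat} in the case that the tensorfield $U$ is itself a vectorfield. The remainder of the argument is a routine extension of this special case to tensorfields of arbitrary type $\binom{n}{m}$ via the Leibniz rule; no deep input is required.

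More precisely, first I would handle the case of a scalar function $f$. Since $\nabla_X f = X f$, we have $\nabla_X \nabla_Y f = X(Y f)$ and $\nabla_Y \nabla_X f = Y(X f)$, so $[\nabla_X, \nabla_Y] f = [X,Y] f = \nabla_{[X,Y]} f$. Next, for a vectorfield $U = Z$, the identity \eqref{E:Curvature} together with $R \equiv 0$ on Minkowski space gives $\nabla_X \nabla_Y Z - \nabla_Y \nabla_X Z = \nabla_{[X,Y]} Z$, which is exactly \eqref{E:Minkowskiisflat}.

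To pass to the case of a one-form $\omega$, I would test against an arbitrary vectorfield $Z$ and use the Leibniz-type identity $\nabla_X \{ \omega(Z) \} = (\nabla_X \omega)(Z) + \omega(\nabla_X Z)$, which is a special case of \eqref{E:CovariantLeibniz}. Applying $\nabla_X$ and then $\nabla_Y$ (and vice versa), subtracting, and invoking both the scalar case (applied to $\omega(Z)$) and the vectorfield case (applied to $Z$), the curvature-type terms involving $\omega(\nabla_X \nabla_Y Z - \nabla_Y \nabla_X Z - \nabla_{[X,Y]} Z)$ cancel, leaving precisely $([\nabla_X, \nabla_Y] \omega)(Z) = (\nabla_{[X,Y]} \omega)(Z)$; since $Z$ is arbitrary, the identity follows for one-forms.

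Finally, for a general tensorfield $U$ of type $\binom{n}{m}$, I would extend the argument by contracting $U$ against $m$ arbitrary vectorfields and $n$ arbitrary one-forms, and then repeatedly applying the Leibniz rule \eqref{E:CovariantLeibniz} together with the already established scalar, vector, and one-form cases. All the ``extra'' contributions produced by applying the commutator $[\nabla_X, \nabla_Y]$ to the test vectors and test one-forms are exactly matched by the corresponding contributions produced by $\nabla_{[X,Y]}$ acting on those same slots, so that the desired identity \eqref{E:Minkowskiisflat} survives. The main (minor) obstacle is simply the bookkeeping required to verify this cancellation in the general type $\binom{n}{m}$ case; there is no conceptual difficulty, since the only genuine input is the flatness of $g$.
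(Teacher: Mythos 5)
Your proposal is correct and follows essentially the same approach as the paper, which simply invokes the curvature identity \eqref{E:Curvature} together with the flatness of Minkowski space. The additional detail you supply — extending from the vectorfield case to general tensorfields via the scalar and one-form cases and the Leibniz rule — is exactly the routine bookkeeping the paper leaves implicit.
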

\begin{proof}
	Simply use \eqref{E:Curvature} and the fact that Minkowski space is flat.
\end{proof}

\begin{lemma} \label{L:Vectorfieldderivativenorms}
	Let $\mathcal{O}$ and $\mathcal{Z}$ be the subsets of Minkowski conformal Killing fields defined in 
	\eqref{E:Rotationsetdef} - \eqref{E:Zsetdef}. Then for any $Z \in \mathcal{Z},$ we have that
	\begin{align} \label{E:Zcovariantderivativebounds}
		|Z| & \lesssim s, & |\nabla Z| & \lesssim 1, & |\nabla_{(2)} Z| & = 0.
	\end{align}
	
	Furthermore, for any $O \in \mathcal{O},$ we have that
	\begin{align} \label{E:Rotationcovariantderivativebounds}
		|O| & \lesssim r, & |\nabla O| & = const, & |\nabla_{(2)} O| & = 0.
	\end{align}
	
\end{lemma}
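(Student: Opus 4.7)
The plan is to prove Lemma \ref{L:Vectorfieldderivativenorms} by direct computation in the fixed inertial coordinate system, exploiting the fact that in inertial coordinates on Minkowski space the Christoffel symbols vanish and the Levi-Civita covariant derivative $\nabla$ reduces to the coordinate partial derivative $\partial$. Consequently, for any vectorfield $X^{\mu}\partial_{\mu}$ whose components $X^{\mu}$ are polynomials of degree at most $k$ in $(t,\ux)$, the tensorfield $\nabla_{(\ell)} X$ has components that are polynomials of degree at most $k-\ell$ in $(t,\ux)$, with $\nabla_{(\ell)} X \equiv 0$ once $\ell > k$. Each $Z \in \mathcal{Z}$ has components of polynomial degree at most $1$, which immediately forces $\nabla_{(2)}Z \equiv 0$ and makes $|\nabla Z|$ bounded by a universal constant.

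Concretely, I would proceed case by case using the explicit formulas \eqref{E:Translationsdef}--\eqref{E:Sdef}:
\begin{enumerate}
\item For translations $T_{(\mu)} = \partial_{\mu}$, one has $T_{(\mu)}^{\nu} = \delta_{\mu}^{\nu}$, so $|T_{(\mu)}| \equiv 1$ and $\nabla T_{(\mu)} \equiv 0$.
\item For rotations $\Omega_{(jk)} = x^{j}\partial_{k} - x^{k}\partial_{j}$, the components $\Omega_{(jk)}^{\nu} = x^{j}\delta_{k}^{\nu} - x^{k}\delta_{j}^{\nu}$ are linear in the spatial variables, giving $|\Omega_{(jk)}|^{2} = (x^{j})^{2} + (x^{k})^{2} \leq r^{2}$ and $\partial_{\lambda}\Omega_{(jk)}^{\nu} = \delta_{\lambda}^{j}\delta_{k}^{\nu} - \delta_{\lambda}^{k}\delta_{j}^{\nu}$, a constant tensor.
\item For Lorentz boosts $\Omega_{(0j)} = -t\partial_{j} - x^{j}\partial_{t}$, one reads off $|\Omega_{(0j)}|^{2} = t^{2} + (x^{j})^{2}$ and $\partial_{\lambda}\Omega_{(0j)}^{\nu}$ is again a constant tensor.
\item For the scaling vectorfield $S = x^{\kappa}\partial_{\kappa}$, one has $S^{\nu} = x^{\nu}$, so $|S|^{2} = t^{2} + r^{2}$ and $\partial_{\lambda} S^{\nu} = \delta_{\lambda}^{\nu}$.
\end{enumerate}
Combining the pointwise bounds, $|Z|^{2} \leq t^{2} + r^{2} \leq (t+r)^{2} = s^{2}$ (recalling that our convention is $t \geq 0$, so $s = r+t \geq 0$), which gives $|Z| \lesssim s$ uniformly for $Z \in \mathcal{Z}$. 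The constancy of $\partial \Omega_{(jk)}$, $\partial \Omega_{(0j)}$, and $\partial S$, together with $\partial T_{(\mu)} \equiv 0$, yields $|\nabla Z| \lesssim 1$, while the polynomial degree argument gives $\nabla_{(2)} Z \equiv 0$ for every $Z \in \mathcal{Z}$.

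The bounds on rotations in \eqref{E:Rotationcovariantderivativebounds} follow immediately from the rotation case above: since $\mathcal{O} \subset \mathcal{Z}$ consists exclusively of spatial rotations $\Omega_{(jk)}$, $(1 \leq j < k \leq 3)$, we already have $|O|^{2} \leq r^{2}$ (thus $|O| \lesssim r$), $|\nabla O|$ equals the constant norm of the tensor $\partial_{\lambda}\Omega_{(jk)}^{\nu}$, and $\nabla_{(2)}O \equiv 0$. There is no real obstacle here; the only mild subtlety is the careful conversion between $g$-indices (used in defining $x_{\mu}$) and $\emet$-indices (used in the norm $|\cdot|$ of Definition \ref{D:Riemanniannorm}), which is trivial in our inertial frame because $\emet_{\mu\nu} = \mathrm{diag}(1,1,1,1)$ while $g_{\mu\nu} = \mathrm{diag}(-1,1,1,1)$ only differ in the $0$-slot by a sign, affecting neither magnitudes nor polynomial degrees.
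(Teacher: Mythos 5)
Your proposal is correct and follows exactly the route the paper takes: the paper's proof is the one-line remark that "the simple computations are easily performed in the inertial coordinate system," and your case-by-case computation (using that $\nabla$ reduces to $\partial$ in inertial coordinates and that all components are polynomials of degree at most one) is precisely the content of that remark. Nothing further is needed.
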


\begin{proof}
	The simple computations are easily performed in the inertial coordinate system.
\end{proof}

\begin{lemma} \label{L:Liecommuteswithcoordinatederivatives}
Let $\nabla$ denote the Levi-Civita connection corresponding to the Minkowski metric, and let $I$ denote a multi-index for the set $\mathcal{Z}$ of Minkowski conformal Killing fields. Let $\Liemod_{\mathcal{Z}}^I$ be the iterated modified Lie derivative from Definitions \ref{D:Liemoddef} and \ref{D:iterated}. Then

\begin{align} \label{E:Liecommuteswithcoordinatederivatives}
	[\nabla, \Lie_{\mathcal{Z}}^I] & = 0, && [\nabla, \Liemod_{\mathcal{Z}}^I] = 0.
\end{align}

In an arbitrary coordinate system, equations \eqref{E:Liecommuteswithcoordinatederivatives} are equivalent to the following relations, which hold for all type $\binom{n}{m}$ tensorfields $U:$

\begin{align}
	\nabla_{\mu}\big\lbrace (\Lie_{\mathcal{Z}}^I U)_{\mu_1 \cdots \mu_m}^{\ \ \ \ \ \ \ \ \nu_1 \cdots \nu_n} \big\rbrace 
	& = \Lie_{\mathcal{Z}}^I \big\lbrace \nabla_{\mu}U_{\mu_1 \cdots \mu_m}^{\ \ \ \ \ \ \ \ \nu_1 \cdots \nu_n} \big\rbrace, \\
	\nabla_{\mu}\big\lbrace (\Liemod_{\mathcal{Z}}^I U)_{\mu_1 \cdots \mu_m}^{\ \ \ \ \ \ \ \ \nu_1 \cdots \nu_n} \big\rbrace 
	& = \Liemod_{\mathcal{Z}}^I \big\lbrace \nabla_{\mu} U_{\mu_1 \cdots \mu_m}^{\ \ \ \ \ \ \ \ \nu_1 \cdots \nu_n} \big\rbrace. \notag
\end{align}

\end{lemma}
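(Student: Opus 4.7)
The plan is to prove the base case $|I| = 1$ and then iterate. For a single $Z \in \mathcal{Z}$, I will use Lemma \ref{L:Liederivativeintermsofnabla} to write $\Lie_Z U = \nabla_Z U + (\text{contractions of } \nabla Z \text{ with } U)$, apply $\nabla_\mu$ to both sides, and subtract from it the analogous expression for $\Lie_Z(\nabla U)$ obtained by viewing $\nabla U$ as a type $\binom{n}{m+1}$ tensor. After cancellation, the commutator $[\nabla_\mu, \Lie_Z]U$ reduces to two types of terms: those of the form $Z^\kappa[\nabla_\mu, \nabla_\kappa]U$, i.e., Riemann curvature terms, and those of the form $(\nabla_\mu \nabla_{(\cdot)} Z^{(\cdot)}) \cdot U$, i.e., terms linear in the second covariant derivatives of $Z$.

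The first type of term vanishes by Lemma \ref{L:Minkowskiisflat}, since Minkowski space is flat. The second type vanishes because Lemma \ref{L:Vectorfieldderivativenorms} establishes that $|\nabla_{(2)} Z| = 0$ for every $Z \in \mathcal{Z}$; this is transparent in inertial coordinates, since each $Z \in \mathcal{Z}$ has components that are at most linear in $(t, \ux)$, so its first covariant derivatives are constant and its second covariant derivatives are identically zero. Combining these two observations gives $[\nabla, \Lie_Z]U = 0$ for every tensorfield $U$ and every $Z \in \mathcal{Z}$.

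The extension to iterated Lie derivatives follows by a straightforward induction on $|I|$: writing $\Lie_{\mathcal{Z}}^I = \Lie_{Z^{\iota_1}} \circ \Lie_{\mathcal{Z}}^{I'}$ with $|I'| = |I|-1$, the base case lets me move $\nabla$ past the outer $\Lie_{Z^{\iota_1}}$, and the inductive hypothesis lets me move it past the remaining $\Lie_{\mathcal{Z}}^{I'}$. For the modified Lie derivative, I note that by Definition \ref{D:Liemoddef} we have $\Liemod_Z = \Lie_Z + 2 c_Z \cdot \mathrm{Id}$ where the coefficient $c_Z$ is a \emph{constant} (from \eqref{E:LieZonmlower}); since $\nabla_\mu$ is a derivation and $c_Z$ is constant, $[\nabla_\mu, 2c_Z] = 0$, so $[\nabla, \Liemod_Z] = [\nabla, \Lie_Z] = 0$, and the iterated version follows by the same induction.

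I do not anticipate a serious obstacle: the key inputs, flatness of Minkowski space and vanishing of $\nabla_{(2)} Z$ for $Z \in \mathcal{Z}$, are already recorded as Lemmas \ref{L:Minkowskiisflat} and \ref{L:Vectorfieldderivativenorms}. The only place to be careful is the bookkeeping of how $\Lie_Z$ acts on the extra lower index of $\nabla U$, and the verification that the index contractions involving $\nabla Z$ coming from the two sides cancel so that only the curvature and $\nabla_{(2)} Z$ remainders survive; this is routine using the explicit formula \eqref{E:Liederivativeintermsofnabla}.
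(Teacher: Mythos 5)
Your proposal is correct and follows the same route as the paper's proof, which likewise proceeds by induction on $|I|$ using the formula \eqref{E:Liederivativeintermsofnabla} together with the fact that $\nabla_{(2)}Z = 0$ for $Z \in \mathcal{Z}$ (the paper leaves the flatness of Minkowski space, and hence the vanishing of the curvature commutator terms, implicit, whereas you state it explicitly via Lemma \ref{L:Minkowskiisflat}). Your handling of the modified Lie derivative via the constancy of $c_Z$ is also the intended argument.
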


\begin{proof}
	The relation \eqref{E:Liecommuteswithcoordinatederivatives} can be shown via induction in $|I|$ using
	\eqref{E:Liederivativeintermsofnabla} and the fact that $\nabla_{(2)} Z = 0$ (i.e. 
	\eqref{E:Zcovariantderivativebounds}).
\end{proof}

\begin{lemma} \label{L:LiemodZLiemodMaxwellCommutator}
	Let $I$ denote a multi-index for the set $\mathcal{Z}$ of Minkowski conformal Killing fields, and let $\Far$ be a
	two-form. Let $\Liemod_{\mathcal{Z}}^I$ be the iterated modified Lie derivative from Definitions 
	\ref{D:Liemoddef} and \ref{D:iterated}. Then
	\begin{align} \label{E:LiemodZLiemodMaxwellCommutator}
		\Liemod_{\mathcal{Z}}^I \Big\lbrace \big[(g^{-1})^{\mu \kappa} (g^{-1})^{\nu \lambda} - (g^{-1})^{\mu \lambda} 
			(g^{-1})^{\nu \kappa} \big] \nabla_{\mu} \Far_{\kappa \lambda} \Big\rbrace
		= \big[(g^{-1})^{\mu \kappa} (g^{-1})^{\nu \lambda} - (g^{-1})^{\mu \lambda} (g^{-1})^{\nu \kappa} \big] \nabla_{\mu} 
		\Lie_{\mathcal{Z}}^I \Far_{\kappa \lambda}.
	\end{align}
\end{lemma}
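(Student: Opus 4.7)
The plan is to prove the identity by induction on $|I|$, with the base case $|I|=1$ doing essentially all the work; the iteration is then essentially free because the right-hand side of the identity is of the same structural form as the left-hand side with $\Far$ replaced by $\Lie_{\mathcal{Z}}^{I'} \Far$.

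For the base case, fix $Z \in \mathcal{Z}$ and set
\begin{align*}
A^{\nu} \eqdef \big[(g^{-1})^{\mu \kappa} (g^{-1})^{\nu \lambda} - (g^{-1})^{\mu \lambda} (g^{-1})^{\nu \kappa}\big] \nabla_{\mu} \Far_{\kappa \lambda}.
\end{align*}
First I would apply the Leibniz rule for $\Lie_Z$ to the three tensor factors $(g^{-1}), (g^{-1}),$ and $\nabla \Far$. Using the key identity \eqref{E:LieZonmupper}, $\Lie_Z (g^{-1})^{\mu \nu} = - c_Z (g^{-1})^{\mu \nu}$, each of the two inverse-metric factors contributes a multiplicative $-c_Z$, for a total of $-2 c_Z$. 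For the differentiated factor, I invoke Lemma \ref{L:Liecommuteswithcoordinatederivatives}, which says $[\nabla, \Lie_Z] = 0$ on tensor fields, to replace $\Lie_Z (\nabla_{\mu} \Far_{\kappa \lambda})$ by $\nabla_{\mu} \Lie_Z \Far_{\kappa \lambda}$. Combining these observations gives
\begin{align*}
\Lie_Z A^{\nu} = - 2 c_Z A^{\nu} + \big[(g^{-1})^{\mu \kappa} (g^{-1})^{\nu \lambda} - (g^{-1})^{\mu \lambda} (g^{-1})^{\nu \kappa}\big] \nabla_{\mu} \Lie_Z \Far_{\kappa \lambda}.
\end{align*}
Adding $2 c_Z A^{\nu}$ to both sides and recalling the definition $\Liemod_Z = \Lie_Z + 2 c_Z$ from \eqref{E:Liemoddef} yields the claim for $|I|=1$. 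The conceptual point is precisely that the correction $+2 c_Z$ in the modified Lie derivative is designed to absorb the contribution of the two inverse metrics under $\Lie_Z$.

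For the inductive step, suppose the identity holds for all multi-indices of length $k$, and let $I = (\iota_1, I')$ with $|I'| = k$. Write $\widetilde{\Far} \eqdef \Lie_{\mathcal{Z}}^{I'} \Far$, which is itself a two-form; then by the inductive hypothesis,
\begin{align*}
\Liemod_{\mathcal{Z}}^{I'} A^{\nu} = \big[(g^{-1})^{\mu \kappa} (g^{-1})^{\nu \lambda} - (g^{-1})^{\mu \lambda} (g^{-1})^{\nu \kappa}\big] \nabla_{\mu} \widetilde{\Far}_{\kappa \lambda}.
\end{align*}
Applying $\Liemod_{Z^{\iota_1}}$ to both sides and invoking the base case to the right-hand side (with $\Far$ replaced by $\widetilde{\Far}$) produces
\begin{align*}
\Liemod_{\mathcal{Z}}^{I} A^{\nu} = \big[(g^{-1})^{\mu \kappa} (g^{-1})^{\nu \lambda} - (g^{-1})^{\mu \lambda} (g^{-1})^{\nu \kappa}\big] \nabla_{\mu} \Lie_{Z^{\iota_1}} \widetilde{\Far}_{\kappa \lambda} = \big[(g^{-1})^{\mu \kappa} (g^{-1})^{\nu \lambda} - (g^{-1})^{\mu \lambda} (g^{-1})^{\nu \kappa}\big] \nabla_{\mu} \Lie_{\mathcal{Z}}^{I} \Far_{\kappa \lambda},
\end{align*}
completing the induction.

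The only potential obstacle is a bookkeeping subtlety: one must verify that $A^{\nu}$ really is a $\binom{1}{0}$ tensor field (which is what entitles us to apply $\Liemod_Z$ and $\Lie_Z$ termwise via Leibniz on the inverse metrics and on the covariant derivative of $\Far$), and that the commutator identity $[\nabla, \Lie_Z] = 0$ of Lemma \ref{L:Liecommuteswithcoordinatederivatives} is being applied to the $\binom{0}{3}$ tensor field $\nabla \Far$ rather than to $\Far$ alone. Once these type-theoretic points are in place, the computation is purely algebraic, driven entirely by the conformal Killing scaling \eqref{E:LieZonmupper} of $g^{-1}$ and the precise choice of the coefficient $2$ in the definition \eqref{E:Liemoddef} of $\Liemod_Z$.
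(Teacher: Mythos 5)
Your proof is correct and follows essentially the same route as the paper's: both establish the $|I|=1$ case by combining the Leibniz rule, the scaling identity \eqref{E:LieZonmupper} for $\Lie_Z(g^{-1})$, and the commutation Lemma \ref{L:Liecommuteswithcoordinatederivatives}, then observe that the general case follows by induction. Your write-up simply makes the inductive step more explicit than the paper does.
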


\begin{proof}
	Let $Z \in \mathcal{Z}.$ By the Leibniz rule, \eqref{E:LieZonmupper}, and Lemma \ref{L:Liecommuteswithcoordinatederivatives}, 
	we have that 
	
	\begin{align}
		\Lie_{Z}\Big\lbrace \big[(g^{-1})^{\mu \kappa} (g^{-1})^{\nu \lambda} - (g^{-1})^{\mu \lambda} (g^{-1})^{\nu \kappa} \big] 
			\nabla_{\mu} \Far_{\kappa \lambda} \Big\rbrace
		& = -2c_Z \big[(g^{-1})^{\mu \kappa} (g^{-1})^{\nu \lambda} - (g^{-1})^{\mu \lambda} (g^{-1})^{\nu \kappa} \big] 
			\nabla_{\mu} \Far_{\kappa \lambda} \\
		& \ \ + \big[(g^{-1})^{\mu \kappa} (g^{-1})^{\nu \lambda} - (g^{-1})^{\mu \lambda} (g^{-1})^{\nu \kappa}\big] \nabla_{\mu} 
			\Lie_Z \Far_{\kappa \lambda}. \notag
	\end{align}
	It thus follows from Definition \ref{D:Liemoddef} that 
	\begin{align}
		\Liemod_{Z}\Big\lbrace\big[(g^{-1})^{\mu \kappa} (g^{-1})^{\nu \lambda} - (g^{-1})^{\mu \lambda} (g^{-1})^{\nu \kappa}\big] 
			\nabla_{\mu} \Far_{\kappa \lambda} \Big\rbrace
		= \big[(g^{-1})^{\mu \kappa} (g^{-1})^{\nu \lambda} - (g^{-1})^{\mu \lambda} (g^{-1})^{\nu \kappa}\big] \nabla_{\mu} \Lie_Z 
			\Far_{\kappa \lambda}.
	\end{align}
	This implies \eqref{E:LiemodZLiemodMaxwellCommutator} in the case $|I| = 1.$ The general case now follows inductively.
\end{proof}

\begin{lemma} \label{L:LLunderlinecommutewithLieO}
	If $O \in \mathcal{O},$ then the vectorfields, $\uL, L, O$ mutually commute:
\begin{align}
	[\uL, L] & = 0, && [\uL,O] = 0, && [L, O] = 0. \label{E:RotationLuLBracketis0} 
\end{align}

Furthermore, with $q \eqdef r - t,$ $s \eqdef r + t$ denoting the null coordinates, and $S \eqdef x^{\kappa} \partial_{\kappa}$ denoting the scaling vectorfield, the following differential operator commutation relations hold when applied to arbitrary tensorfields:

\begin{subequations}
\begin{align}
	[\nabla_L, \nabla_{\uL}] & = 0, & [q \nabla_{\uL}, s \nabla_L] & = 0, \label{E:nablaLnablaunderlineLcommute} \\
	[\nabla_{\uL}, \nabla_O] & = 0, & [\nabla_L, \nabla_O] & = 0, \label{E:nablaLuLnablaOcommute} \\
	[\nabla_{\uL}, \Lie_{O}] & = 0, & [\nabla_L, \Lie_{O}] & = 0, \label{E:nablaLuLLieOcommute} \\
	[q \nabla_{\uL}, \Lie_S] & = 0, & [s \nabla_{L}, \Lie_S] & = 0. \label{E:snablauLLieScommute}
\end{align}
\end{subequations}

\end{lemma}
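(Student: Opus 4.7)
The plan is to handle the various commutators in three groups: (i) the vectorfield brackets $[\uL,L] = [\uL, O] = [L, O] = 0$; (ii) the covariant-derivative commutators, which reduce immediately to (i) via flatness of Minkowski space; and (iii) the weighted commutators involving the scaling field $S$, where the weights $q$ and $s$ are arranged precisely to compensate the scaling weights of $\uL$ and $L$.

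For group (i), the equality $[\uL, L] = 0$ was already recorded as \eqref{E:LuLBracketis0}. For $[\uL, O] = [L, O] = 0$, I write $\uL = \partial_t - \partial_r$ and $L = \partial_t + \partial_r$ via \eqref{E:underlineLradialdef}--\eqref{E:Lradialdef}. Since each $O \in \mathcal{O}$ is $t$-independent, $[\partial_t, O] = 0$ is immediate; for $[\partial_r, O]$ I compute directly in the inertial coordinate system using $\omega^a = x^a/r$, verifying cancellation by noting that $O(r) = 0$ and $\omega^a$ is homogeneous of degree zero.

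For group (ii), Lemma \ref{L:Minkowskiisflat} gives $[\nabla_X, \nabla_Y] = \nabla_{[X,Y]}$ in Minkowski space, so $[\nabla_L, \nabla_{\uL}] = \nabla_{[L,\uL]} = 0$ and $[\nabla_L, \nabla_O] = [\nabla_{\uL}, \nabla_O] = 0$ by group (i). For $[q\nabla_{\uL}, s\nabla_L]$, I expand via the product rule:
\begin{align*}
[q\nabla_{\uL}, s\nabla_L] = q\,\uL(s)\,\nabla_L - s\,L(q)\,\nabla_{\uL} + qs[\nabla_{\uL}, \nabla_L],
\end{align*}
and observe directly that $\uL(s) = \uL(r+t) = 1 - 1 = 0$ and $L(q) = L(r - t) = 1 - 1 = 0$, while the last bracket has just been shown to vanish. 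For $[\nabla_{\uL}, \Lie_O]$ and $[\nabla_L, \Lie_O]$, I apply Lemma \ref{L:Liederivativeintermsofnabla} to express $\Lie_O U$ as $\nabla_O U$ plus algebraic terms involving $\nabla O$. Since $\nabla_{(2)} O = 0$ by \eqref{E:Rotationcovariantderivativebounds}, the components of $\nabla O$ are constant in inertial coordinates, so $\nabla_{\uL}$ passes through them as a multiplication operator. The commutator then collapses to $[\nabla_{\uL}, \nabla_O] U = \nabla_{[\uL, O]} U = 0$, and likewise with $L$.

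For group (iii), the key computation is that $\uL$ and $L$ have scaling weight $-1$ in the sense $[S, \uL] = -\uL$, $[S, L] = -L$; this follows from a direct inertial-coordinate computation using that $\omega^a$ has degree zero. Combined with $\nabla_{(2)} S = 0$ and \eqref{E:Liederivativeintermsofnabla} applied as in group (ii), I obtain
\begin{align*}
[\nabla_{\uL}, \Lie_S] U = \nabla_{[\uL, S]} U = \nabla_{\uL} U, \qquad [\nabla_L, \Lie_S] U = \nabla_L U.
\end{align*}
On the other hand, $S(q) = S(r-t) = r - t = q$ and $S(s) = s$, so expanding as in (ii) gives
\begin{align*}
[q\nabla_{\uL}, \Lie_S] U = q[\nabla_{\uL}, \Lie_S] U - S(q) \nabla_{\uL} U = q\nabla_{\uL} U - q\nabla_{\uL} U = 0,
\end{align*}
and the analogous identity for $s\nabla_L$. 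The main conceptual obstacle is recognizing the scaling-weight cancellation in group (iii): one must identify that $\nabla_{\uL}$ is ``degree $-1$'' under $S$ while $q$ is ``degree $+1$'', so the weighted operator $q\nabla_{\uL}$ is scale-invariant and therefore commutes with $\Lie_S$; the rest is bookkeeping via \eqref{E:Liederivativeintermsofnabla} and \eqref{E:Minkowskiisflat}.
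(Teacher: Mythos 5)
Your proof is correct, and for the first three lines of commutators (\eqref{E:RotationLuLBracketis0}, \eqref{E:nablaLnablaunderlineLcommute}, \eqref{E:nablaLuLnablaOcommute}, \eqref{E:nablaLuLLieOcommute}) it is essentially identical to the paper's: flatness via Lemma \ref{L:Minkowskiisflat}, the vanishing brackets, $\nabla_{\uL} s = \nabla_L q = 0$, and the constancy of $\nabla O$ via \eqref{E:Rotationcovariantderivativebounds} combined with Lemma \ref{L:Liederivativeintermsofnabla}. Where you diverge from the paper is in \eqref{E:snablauLLieScommute}. The paper writes $\Lie_S U = \nabla_S U + (m-n)U$ for a type $\binom{n}{m}$ tensor $U$, invokes the identity $2S = sL - q\uL$ to express $\nabla_S$ as $\tfrac{1}{2}(s\nabla_L - q\nabla_{\uL})$, and then the commutator $[q\nabla_{\uL}, \Lie_S]$ reduces to $\tfrac{1}{2}[q\nabla_{\uL}, s\nabla_L] - \tfrac{1}{2}[q\nabla_{\uL}, q\nabla_{\uL}]$, both of which vanish (the first by \eqref{E:nablaLnablaunderlineLcommute}, the second trivially). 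You instead compute the scaling bracket $[\uL, S] = \uL$ directly, deduce $[\nabla_{\uL}, \Lie_S] = \nabla_{[\uL,S]} = \nabla_{\uL}$ (the $(m-n)$ terms cancelling as in the rotation case), and then cancel this against the $-S(q)\nabla_{\uL} = -q\nabla_{\uL}$ term that the weight $q$ contributes. Both arguments are short and elementary; the paper's route recycles \eqref{E:nablaLnablaunderlineLcommute} and the null-frame expansion of $S$, whereas yours isolates the conceptual point that $q\nabla_{\uL}$ is a scale-invariant operator (weight $+1$ of $q$ compensating weight $-1$ of $\uL$), which is perhaps the more transparent explanation of why the cancellation happens. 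Either is acceptable.
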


\begin{proof}
	\eqref{E:RotationLuLBracketis0} follows from simple computations. \eqref{E:nablaLnablaunderlineLcommute} then follows from 
	\eqref{E:Minkowskiisflat}, \eqref{E:RotationLuLBracketis0}, and the fact that $\nabla_{\uL} s = \nabla_L q = 0.$
	\eqref{E:nablaLuLnablaOcommute} also follows from \eqref{E:Minkowskiisflat} and \eqref{E:RotationLuLBracketis0}. 
	\eqref{E:nablaLuLLieOcommute} follows from \eqref{E:Liederivativeintermsofnabla}, 
	\eqref{E:Rotationcovariantderivativebounds}, and \eqref{E:nablaLuLnablaOcommute}. 
	\eqref{E:snablauLLieScommute} follows from \eqref{E:nablaLnablaunderlineLcommute}, identity $2 S = s L - q \uL,$
	and the fact that by \eqref{E:Liederivativeintermsofnabla}, $\Lie_S U = \nabla_S U + (m-n) U$ for a type
	$\binom{n}{m}$ tensorfield $U.$
\end{proof}

\begin{lemma} \label{L:derivativesofanggandepsilon}
		Let $O \in \mathcal{O},$ and let 
		$\epsilon_{\kappa \lambda \mu \nu},$ $\angg_{\mu \nu},$ and $\angepsilon_{\mu \nu}$
		be as defined in \eqref{E:volumeform}, \eqref{E:anggdef}, and \eqref{E:angvolumeformdef} respectively.
		Then	
	
	\begin{subequations}
	\begin{align}
		\nabla_{\uL} \angg_{\mu \nu}  & = \nabla_L \angg_{\mu \nu} = 0, \label{E:nablaLanggis0} \\
		\nabla_{\uL} \angepsilon_{\mu \nu}  & = \nabla_L \angepsilon_{\mu \nu} = 0, \label{E:nablaL2epsilonis0} \\
		\angn_{\lambda} \angepsilon_{\mu \nu} & = 0, \label{E:angnangepsilonis0} \\
		\Lie_O \epsilon_{\kappa \lambda \mu \nu} & = 0, \label{E:LieO4epsilonis0} \\
		\Lie_O \angg_{\mu \nu} & = 0, \label{E:LieOanggis0} \\
		\Lie_O \angepsilon_{\mu \nu} & = 0. \label{E:LieO2epsilonis0}
	\end{align}
	\end{subequations}

\end{lemma}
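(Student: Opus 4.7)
The proof will proceed by reducing each identity to the definitions of $\angg$ and $\angepsilon$ (given in \eqref{E:anggdef} and \eqref{E:angvolumeformdef}), combined with: (a) the metric and volume-form compatibility $\nabla g = 0$ and $\nabla \epsilon = 0$; (b) the null-frame properties \eqref{E:LanduLaregeodesic}--\eqref{E:nablaLuLis0}, i.e.\ $\nabla_L L = \nabla_{\uL}\uL = 0$ and $\nabla_L \uL = \nabla_{\uL} L = 0$; (c) the bracket relations \eqref{E:RotationLuLBracketis0}, i.e.\ $[O,\uL] = [O,L] = 0$; and (d) the Killing property $^{(O)}\pi_{\mu\nu} = 0$ for $O \in \mathcal{O}$. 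The strategy is to apply the Leibniz rule for $\nabla$ or $\Lie_O$ to the definitions and verify that each resulting term vanishes.

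For \eqref{E:nablaLanggis0}, I apply $\nabla_L$ (resp.\ $\nabla_{\uL}$) to $\angg_{\mu\nu} = g_{\mu\nu} + \tfrac{1}{2}(\uL_\mu L_\nu + L_\mu \uL_\nu)$. Since $\nabla g = 0$ and $\nabla_L \uL_\mu = \nabla_L L_\mu = 0$ (and similarly for $\nabla_{\uL}$) by (b) and metric compatibility applied to the lowered-index versions, the result is zero. The same reasoning with $\epsilon$ in place of $g$ yields \eqref{E:nablaL2epsilonis0}. For \eqref{E:angnangepsilonis0}, I use the projection formula \eqref{E:SphereIntrinsicintermsofExtrinsic}: $\angn_\lambda \angepsilon_{\mu\nu} = \angg_\lambda^{\ \tilde\lambda}\angg_\mu^{\ \tilde\mu}\angg_\nu^{\ \tilde\nu}\nabla_{\tilde\lambda}\angepsilon_{\tilde\mu\tilde\nu}$. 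Expanding $\nabla_{\tilde\lambda}\angepsilon_{\tilde\mu\tilde\nu}$ via Leibniz on the definition of $\angepsilon$ and using $\nabla\epsilon = 0$, every remaining term contains an unprojected factor of $\nabla \uL^\kappa$ or $\nabla L^\lambda$ contracted into $\epsilon_{\cdots \tilde\mu\tilde\nu\cdots}$; after the outer $\angg$ projections, these are absorbed into sphere-tangent vectors, and the $\uL,L$ indices in $\epsilon$ force the result to vanish by antisymmetry (the only surviving frame components would require at least three distinct indices from $\{\uL, L\}$).

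For \eqref{E:LieO4epsilonis0}, I invoke the identity \eqref{E:LieX4volume}, namely $\Lie_X \epsilon_{\kappa\lambda\mu\nu} = \tfrac{1}{2}{^{(X)}\pi^\beta_{\ \beta}}\,\epsilon_{\kappa\lambda\mu\nu}$, and observe that $^{(O)}\pi_{\mu\nu} = 0$ since rotations are Killing. For \eqref{E:LieOanggis0}, I first note that $[O,\uL] = 0$ means $\Lie_O \uL^\kappa = 0$; combined with $\Lie_O g = 0$ (Killing) and the Leibniz rule for $\Lie_O$ on contractions, this gives $\Lie_O \uL_\mu = \Lie_O(g_{\mu\kappa}\uL^\kappa) = 0$, and likewise $\Lie_O L_\mu = 0$. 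Applying $\Lie_O$ to $\angg_{\mu\nu} = g_{\mu\nu} + \tfrac{1}{2}(\uL_\mu L_\nu + L_\mu \uL_\nu)$ then yields $\Lie_O \angg_{\mu\nu} = 0$. For \eqref{E:LieO2epsilonis0}, I apply $\Lie_O$ to the definition of $\angepsilon_{\mu\nu}$ via Leibniz; each of the three resulting pieces vanishes by \eqref{E:LieO4epsilonis0} and the vanishing of $\Lie_O \uL^\kappa$ and $\Lie_O L^\lambda$.

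The steps are all essentially algebraic once the building blocks are identified; the only place requiring mild care is \eqref{E:angnangepsilonis0}, where the interaction between the intrinsic projection $\angg_\lambda^{\ \tilde\lambda}\cdots$ and the derivatives of $\uL,L$ (which are not zero in general — only $\nabla_L$ and $\nabla_{\uL}$ of them vanish) must be tracked carefully. That will be the main obstacle, though a frame-component computation using $\nabla_A \uL^\kappa = -r^{-1}e_A^\kappa$ and $\nabla_A L^\kappa = r^{-1}e_A^\kappa$ from \eqref{E:nablaALnablaAuL} reduces it to a quick check that the extra terms contribute $0$ by the antisymmetry of $\epsilon$.
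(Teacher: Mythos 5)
Your proof is correct and follows essentially the same route as the paper's: Leibniz on the definitions together with $\nabla g = \nabla \epsilon = 0$, the geodesic/parallel properties of $\uL, L$, the Killing property of the rotations, and---for \eqref{E:angnangepsilonis0}---the tangency of the null second fundamental forms to the spheres. (Your parenthetical that the surviving terms ``would require at least three distinct indices from $\lbrace \uL, L\rbrace$'' is misstated; the vanishing comes from three of the four slots of $\epsilon$ being filled by vectors in the two-dimensional tangent space of $S_{r,t}$---but your subsequent frame computation with $\nabla_A \uL = -r^{-1}e_A$ supplies the correct reason.)
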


\begin{proof}
	The relation \eqref{E:nablaLanggis0} follows from definition \eqref{E:CovariantDerivativeofgisZeroIndices},
	\eqref{E:anggdef}, and \eqref{E:LanduLaregeodesic} - \eqref{E:nablaLuLis0}. Using also 
	\eqref{E:volumeformcovariantlyconstant}, \eqref{E:nablaL2epsilonis0} follows similarly. 
	\eqref{E:angnangepsilonis0} follows from \eqref{E:volumeformcovariantlyconstant}, the formula 
	\eqref{E:SphereIntrinsicintermsofExtrinsic}, and the fact that the null second fundamental forms of the spheres $S_{r,t}$ are 
	tangent to $S_{r,t}$ (i.e., Lemma \ref{L:SecondFundamentalFormsSymmetric}).
	\eqref{E:LieO4epsilonis0} follows from \eqref{E:LieX4volume} and the fact that ${^{(O)}\pi_{\ \nu}^{\mu}} = 0$ (i.e., $O$ is 
	a Killing field). \eqref{E:LieOanggis0} and \eqref{E:LieO2epsilonis0} follow from definitions \eqref{E:volumeform} and 
	\eqref{E:anggdef}, \eqref{E:RotationLuLBracketis0}, and \eqref{E:LieO4epsilonis0}. 
\end{proof}

\begin{corollary} \label{C:LieRotationcommuteswithnulldecomp}
	Let $\Far$ be a two-form and let $\ualpha,$ $\alpha,$ $\rho,$ $\sigma$ be its null components.
	Let $O \in \mathcal{O}.$ Then $\Lie_O \ualpha[\Far] = \ualpha[\Lie_O \Far],$
	$\Lie_O \alpha[\Far] = \alpha[\Lie_O \Far],$ $\Lie_O \rho[\Far] = \rho[\Lie_O \Far],$ 
	and $\Lie_O \sigma[\Far] = \sigma[\Lie_O \Far].$ An analogous result
	holds the operators $\nabla_{\uL}$ and $\nabla_L;$ i.e.,
	$\Lie_O, \nabla_{\uL},$ and $\nabla_L$ commute with the null decomposition of $\Far.$
\end{corollary}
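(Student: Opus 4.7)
The plan is to verify each claim by applying the operator in question directly to the defining expressions \eqref{E:ualphadef}--\eqref{E:sigmadef} of the null components and showing that all ``collateral'' terms produced by the Leibniz rule vanish. The required vanishing facts are already collected in Lemmas \ref{L:LLunderlinecommutewithLieO} and \ref{L:derivativesofanggandepsilon}.

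First, consider the case of $\Lie_O$ for $O \in \mathcal{O}$. Apply $\Lie_O$ to the expression $\ualpha_\mu = \angg_\mu^{\ \nu} \Far_{\nu \lambda} \uL^\lambda$ using the Leibniz rule:
\begin{align*}
\Lie_O \ualpha_\mu = (\Lie_O \angg_\mu^{\ \nu}) \Far_{\nu \lambda} \uL^\lambda + \angg_\mu^{\ \nu} (\Lie_O \Far_{\nu \lambda}) \uL^\lambda + \angg_\mu^{\ \nu} \Far_{\nu \lambda} (\Lie_O \uL^\lambda).
\end{align*}
By \eqref{E:RotationLuLBracketis0}, $\Lie_O \uL = [O,\uL] = 0$; by \eqref{E:LieOanggis0} together with the fact that $O$ is Killing (so $\Lie_O (g^{-1}) = 0$), the mixed tensor $\angg_\mu^{\ \nu}$ is annihilated by $\Lie_O$. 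Hence only the middle term survives, giving exactly $\ualpha_\mu[\Lie_O \Far]$. The arguments for $\alpha, \rho$ are identical after substituting $L$ for $\uL$ in the relevant slots and using \eqref{E:RotationLuLBracketis0}. For $\sigma$, one uses instead \eqref{E:LieO2epsilonis0} to conclude $\Lie_O \angepsilon_{\mu\nu} = 0$, so again the only surviving term is $\sigma[\Lie_O \Far]$.

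Next, consider $\nabla_{\uL}$ and $\nabla_L$. Apply $\nabla_{\uL}$ (or $\nabla_L$) to the same defining expressions via the ordinary Leibniz rule for $\nabla$:
\begin{align*}
\nabla_{\uL} \ualpha_\mu = (\nabla_{\uL} \angg_\mu^{\ \nu}) \Far_{\nu \lambda} \uL^\lambda + \angg_\mu^{\ \nu} (\nabla_{\uL} \Far_{\nu \lambda}) \uL^\lambda + \angg_\mu^{\ \nu} \Far_{\nu \lambda} (\nabla_{\uL} \uL^\lambda).
\end{align*}
The first term vanishes by \eqref{E:nablaLanggis0} (together with $\nabla g = 0$, so that the mixed index version is also annihilated), and the third vanishes by \eqref{E:LanduLaregeodesic}--\eqref{E:nablaLuLis0}. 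Only the middle term survives, and it equals $\ualpha_\mu[\nabla_{\uL} \Far]$. The same pattern works for $\nabla_L$ on $\ualpha$ and for both operators on $\alpha, \rho$, using \eqref{E:LanduLaregeodesic}--\eqref{E:nablaLuLis0} to kill the boundary terms. For $\sigma$, the analogous computation uses \eqref{E:nablaL2epsilonis0} in place of \eqref{E:nablaLanggis0}.

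No real obstacle arises: every identity needed has already been established. The only mild subtlety is bookkeeping the mixed-index tensor $\angg_\mu^{\ \nu}$ — one must note explicitly that since both $\Lie_O g^{\pm1} = 0$ (because $O$ is Killing) and $\nabla g = 0$, the vanishing of $\Lie_O \angg_{\mu\nu}$ and $\nabla_{L,\uL}\angg_{\mu\nu}$ lifts to the mixed-index version $\angg_\mu^{\ \nu}$ used in the definitions of $\ualpha, \alpha$. Once this is noted, the proof is a single application of the Leibniz rule in each of the six cases.
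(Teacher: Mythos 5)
Your proof is correct and follows essentially the same route as the paper's, which likewise derives the corollary from Definition \ref{D:null}, the geodesic relations \eqref{E:LanduLaregeodesic}--\eqref{E:nablaLuLis0}, the commutation identities \eqref{E:RotationLuLBracketis0}, and the vanishing derivatives \eqref{E:nablaLanggis0}--\eqref{E:nablaL2epsilonis0} and \eqref{E:LieOanggis0}--\eqref{E:LieO2epsilonis0}; you have simply written out the Leibniz-rule bookkeeping (including the mixed-index point, which applies equally to the raised-index $\angepsilon^{\kappa\lambda}$ in the definition of $\sigma$) that the paper leaves implicit.
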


\begin{proof}
	Corollary \ref{C:LieRotationcommuteswithnulldecomp} follows from Definition \ref{D:null},
	\eqref{E:LanduLaregeodesic} - \eqref{E:nablaLuLis0}, \eqref{E:RotationLuLBracketis0},
	\eqref{E:nablaLanggis0} - \eqref{E:nablaL2epsilonis0},
	and \eqref{E:LieOanggis0} - \eqref{E:LieO2epsilonis0}.
\end{proof}

\begin{lemma} \label{L:Liederivativeandcovariantderivativeinteriorproductcommutator}
	Let $X$ and $Y$ be vectorfields, and let $\Far$ be a two-form. Let $i_X \Far$ be the interior product defined by the 
	requirement that $i_X \Far(Y) = \Far(Y,X)$ holds for all pairs of vectors $X,Y.$ Then
	
	\begin{subequations}
	\begin{align} 
		i_X \Lie_Y \Far - \Lie_Y i_X \Far & = i_{[X,Y]} \Far, \label{E:Liederivativeinteriorproductcommutator} \\
		\nabla_X (i_Y \Far) - i_Y(\nabla_X \Far) & = i_{\nabla_X Y} \Far. \label{E:Covariantderivativeinteriorproductcommutator}
	\end{align}
	\end{subequations}
\end{lemma}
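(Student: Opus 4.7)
The plan is to prove both identities by viewing $i_X \Far$ and $i_Y \Far$ as one-forms defined through their action on test vectorfields, and then applying the Leibniz rule for $\Lie$ (respectively $\nabla$) to the two-slot tensor $\Far$ evaluated at a suitable pair of vectors. The central observation is that both $\Lie$ and $\nabla$ commute with the contraction operation built into the interior product, so the discrepancy between differentiating before and after inserting $X$ or $Y$ reduces entirely to a single correction term involving $\Lie_Y X = [Y,X]$ or $\nabla_X Y$, respectively.

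For \eqref{E:Liederivativeinteriorproductcommutator}, first I would fix an arbitrary test vectorfield $Z$ and write $(i_X \Far)(Z) = \Far(Z,X)$. Expanding $(\Lie_Y(i_X \Far))(Z)$ with the general one-form Leibniz formula \eqref{E:Liederivativebracketexpression} gives
\begin{align*}
(\Lie_Y (i_X \Far))(Z) = Y\{\Far(Z,X)\} - \Far([Y,Z],X),
\end{align*}
while the two-form Leibniz rule applied to $\Far$ gives
\begin{align*}
(\Lie_Y \Far)(Z,X) = Y\{\Far(Z,X)\} - \Far([Y,Z],X) - \Far(Z,[Y,X]).
\end{align*}
Since $(i_X \Lie_Y \Far)(Z) = (\Lie_Y \Far)(Z,X)$, subtracting the two displays yields
\begin{align*}
(i_X \Lie_Y \Far)(Z) - (\Lie_Y i_X \Far)(Z) = -\Far(Z,[Y,X]) = \Far(Z,[X,Y]) = (i_{[X,Y]}\Far)(Z),
\end{align*}
and since $Z$ was arbitrary this proves \eqref{E:Liederivativeinteriorproductcommutator}.

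For \eqref{E:Covariantderivativeinteriorproductcommutator}, the strategy is identical but uses the covariant Leibniz rule \eqref{E:CovariantLeibniz} in place of \eqref{E:Liederivativebracketexpression}. Testing against an arbitrary $Z$, I would expand
\begin{align*}
(\nabla_X (i_Y \Far))(Z) = X\{\Far(Z,Y)\} - \Far(\nabla_X Z, Y)
\end{align*}
and
\begin{align*}
(\nabla_X \Far)(Z,Y) = X\{\Far(Z,Y)\} - \Far(\nabla_X Z, Y) - \Far(Z, \nabla_X Y).
\end{align*}
Since $(i_Y \nabla_X \Far)(Z) = (\nabla_X \Far)(Z,Y)$, subtracting gives $\Far(Z, \nabla_X Y) = (i_{\nabla_X Y}\Far)(Z)$, as desired. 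There is no serious obstacle here at all: both identities are bookkeeping consequences of the Leibniz rule, and the only subtlety worth flagging is to make sure the Lie (respectively covariant) derivative is applied to the full tensor $\Far$ viewed as a bilinear map, rather than to its component functions, so that the correction term lands on the vector slot holding $X$ or $Y$.
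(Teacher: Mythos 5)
Your proof is correct and follows essentially the same route as the paper's, which simply invokes the Leibniz rule together with the sign identity $-\Lie_Y X = [X,Y]$; you have merely written out the contraction-against-a-test-vector bookkeeping that the paper leaves implicit. The signs in both identities check out, including the flip from $-\Far(Z,[Y,X])$ to $\Far(Z,[X,Y])$.
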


\begin{proof}
	The relation \eqref{E:Liederivativeinteriorproductcommutator} follows from the Leibniz rule and the fact that
	$- \Lie_Y X = [X,Y].$ The relation \eqref{E:Covariantderivativeinteriorproductcommutator} follows from the Leibniz rule.
\end{proof}

We leave the computations required to prove the next lemma up to the reader.
\begin{lemma} \label{L:ConformalKillingFieldCommuatators} \cite[pg. 139]{dCsK1990}
	The $15$ generators $\lbrace T_{(\mu)}, \Omega_{(\mu \nu)}, S, K_{(\mu)} 
	\rbrace_{0 \leq \mu < \nu \leq 3}$ of the Minkowski conformal Killing fields, which are defined in Section 	
	\ref{S:ConformalKilling}, satisfy the following commutation relations:
	
	\begin{subequations}
	\begin{align}
		[T_{(\mu)}, T_{(\nu)}] & = 0, && (\mu, \nu = 0,1,2,3), \label{E:translationscommute} \\
		[T_{(\kappa)}, \Omega_{(\mu \nu)}] & = g_{\kappa \mu}T_{(\nu)} - g_{\kappa \nu}T_{(\mu)}, 
			&& (\kappa, \mu, \nu = 0,1,2,3), \\
		[T_{(\mu)}, S] & = T_{(\mu)}, & & (\kappa, \mu, \nu = 0,1,2,3), \\
		[K_{(\mu)}, T_{(\nu)}] & = 2 g_{\mu \nu} S + 2 \Omega_{(\mu \nu)}, && (\mu, \nu = 0,1,2,3), \\
		[\Omega_{(\kappa \lambda)}, \Omega_{(\mu \nu)}] & = g_{\kappa \mu} \Omega_{(\nu \lambda)} 
			- g_{\kappa \nu} \Omega_{(\mu \lambda)} + g_{\lambda \mu} \Omega_{(\kappa \nu)}
			- g_{\lambda \nu} \Omega_{(\kappa \mu)}, && (\kappa, \lambda, \mu, \nu = 0,1,2,3), \\
		[\Omega_{(\mu \nu)}, S] & = 0, && (\mu, \nu = 0,1,2,3), \\
		[K_{(\kappa)}, \Omega_{(\mu \nu)}] & = g_{\kappa \mu}K_{(\nu)} - g_{\mu \nu} K_{(\mu)}, 
			&& (\mu, \nu = 0,1,2,3), \\
		[K_{(\mu)}, S] & = K_{(\mu)}, && (\mu = 0,1,2,3), \\
		[K_{(\mu)}, K_{(\nu)}] & = 0, && (\mu, \nu = 0,1,2,3).
	\end{align}	
	\end{subequations}
\end{lemma}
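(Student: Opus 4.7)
The approach is entirely computational: apply the coordinate formula \eqref{E:bracket}, namely $[X,Y]^{\alpha} = X^{\kappa}\partial_{\kappa}Y^{\alpha} - Y^{\kappa}\partial_{\kappa}X^{\alpha}$, to each pair of generators written in the inertial coordinate system. Since the components of all 15 generators are polynomial in $x^{\mu}$ (linear for $T_{(\mu)}, \Omega_{(\mu\nu)}, S$; quadratic for $K_{(\mu)}$), each bracket reduces to a finite manipulation using the single identity $\partial_{\alpha} x^{\beta} = \delta_{\alpha}^{\beta}$ (equivalently $\partial_{\alpha} x_{\beta} = g_{\alpha\beta}$), together with $\partial_{\alpha}(g_{\kappa\lambda}x^{\kappa}x^{\lambda}) = 2 x_{\alpha}$.

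I would organize the verification by increasing complexity. First, $[T_{(\mu)}, T_{(\nu)}] = 0$ is immediate since the components $\delta_{\mu}^{\alpha}$ are constant. Next, for the brackets built out of linear vectorfields, one uses the model computation
\begin{align*}
	[T_{(\kappa)}, \Omega_{(\mu\nu)}]^{\alpha}
	& = \partial_{\kappa}\bigl(x_{\mu}\delta_{\nu}^{\alpha} - x_{\nu}\delta_{\mu}^{\alpha}\bigr)
	= g_{\kappa\mu}\delta_{\nu}^{\alpha} - g_{\kappa\nu}\delta_{\mu}^{\alpha},
\end{align*}
which gives the second relation. The relations $[T_{(\mu)},S] = T_{(\mu)}$ and $[\Omega_{(\mu\nu)},S] = 0$ follow analogously, and $[\Omega_{(\kappa\lambda)}, \Omega_{(\mu\nu)}]$ reduces, after expanding both terms and using $\partial_{\kappa}x_{\mu} = g_{\kappa\mu}$, to four terms of the form $x_{\alpha}\partial_{\beta} - x_{\beta}\partial_{\alpha}$ that regroup into the advertised sum of rotations.

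Third, the brackets involving $K_{(\mu)} = -2x_{\mu}S + g_{\kappa\lambda}x^{\kappa}x^{\lambda}T_{(\mu)}$ need the product rule on the quadratic factor. For instance,
\begin{align*}
	[K_{(\mu)}, T_{(\nu)}]^{\alpha}
	& = - T_{(\nu)}^{\kappa}\partial_{\kappa}\bigl(-2x_{\mu}x^{\beta}\delta_{\beta}^{\alpha} + g_{\kappa'\lambda'}x^{\kappa'}x^{\lambda'}\delta_{\mu}^{\alpha}\bigr)
	= 2 g_{\mu\nu}x^{\alpha} + 2 x_{\mu}\delta_{\nu}^{\alpha} - 2 x_{\nu}\delta_{\mu}^{\alpha},
\end{align*}
which is precisely $2g_{\mu\nu}S^{\alpha} + 2\Omega_{(\mu\nu)}^{\alpha}$. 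The bracket $[K_{(\mu)}, S]$ is obtained from the same mechanism applied to $S^{\kappa}\partial_{\kappa} = x^{\kappa}\partial_{\kappa}$, giving a homogeneity count: the first term in $K_{(\mu)}$ has degree 2, the second has degree 2, so $\Lie_S K_{(\mu)} = 2 K_{(\mu)} - K_{(\mu)} = K_{(\mu)}$ after accounting for the contribution of $T_{(\mu)}$ having degree 0. The bracket $[K_{(\kappa)}, \Omega_{(\mu\nu)}]$ reduces similarly by noting that $\Omega_{(\mu\nu)}$ preserves the bilinear form $g_{\kappa\lambda}x^{\kappa}x^{\lambda}$, so only the $-2x_{\mu}S$ piece of $K_{(\kappa)}$ contributes nontrivially on the $\Omega$ side.

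The final and most delicate relation is $[K_{(\mu)}, K_{(\nu)}] = 0$, and this will be the main bookkeeping hurdle: one must separately compute the four cross-contributions $(-2x_{\mu}S)$ vs.\ $(-2x_{\nu}S)$, $(-2x_{\mu}S)$ vs.\ $(g_{\kappa\lambda}x^{\kappa}x^{\lambda}T_{(\nu)})$, and its transpose, plus the contribution from the two quadratic pieces, and observe that all surviving terms are manifestly symmetric in $(\mu,\nu)$ and therefore cancel in the antisymmetric bracket. Aside from this one, every step is direct substitution; the only true obstacle throughout is keeping track of signs and indices consistently.
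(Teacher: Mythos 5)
Your approach is exactly what the paper intends: the proof is left to the reader, and the standard verification is precisely the brute-force application of \eqref{E:bracket} to the polynomial components of the generators, organized by degree as you do. The individual computations you display for $[T_{(\kappa)},\Omega_{(\mu\nu)}]$, $[K_{(\mu)},T_{(\nu)}]$, and the rotation--rotation bracket are correct.

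One genuine slip: in the $[K_{(\mu)},S]$ step you compute the homogeneity count for $\Lie_S K_{(\mu)} = [S,K_{(\mu)}]$ and then report it as $[K_{(\mu)},S]$. With the convention \eqref{E:bracket}, a vectorfield $Y$ with components homogeneous of degree $d$ satisfies $[S,Y]=(d-1)Y$, so $[S,K_{(\mu)}]=K_{(\mu)}$ and hence $[K_{(\mu)},S]=-K_{(\mu)}$; your "$2K_{(\mu)}-K_{(\mu)}$" bookkeeping is the computation of $\Lie_S K_{(\mu)}$, not of the bracket in the order written. (Carrying out the computation carefully will in fact surface two typos in the printed statement of the lemma: the $[K_{(\mu)},S]$ relation needs the minus sign just noted, and the $[K_{(\kappa)},\Omega_{(\mu\nu)}]$ relation should read $g_{\kappa\mu}K_{(\nu)}-g_{\kappa\nu}K_{(\mu)}$. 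Neither affects anything downstream, since the paper only uses Corollary \ref{C:CommutatorofZTandS}, which concerns brackets of $T_{(\mu)}$ and $S$ against $\mathbf{Z}$.) Everything else, including the outline for $[K_{(\mu)},K_{(\nu)}]=0$ by symmetry of the surviving cross-terms in $(\mu,\nu)$, is sound.
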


\hfill $\qed$

The following simple corollary follows directly from Lemma \ref{L:ConformalKillingFieldCommuatators}.
\begin{corollary} \label{C:CommutatorofZTandS}
	Let ${\mathbf{T}}$ and $\mathbf{Z}$ denote the Lie algebras of vectorfields
	generated by the sets $\mathscr{T}$ and $\mathcal{Z}$ respectively. Then for $\mu = 0,1,2,3,$
	we have
	
	\begin{subequations}
	\begin{align}
		[T_{(\mu)},\mathbf{Z}] & \subset \mathbf{T}, \label{E:commutatorTZ}\\
		[S,\mathbf{Z}] & \subset \mathbf{T}. \label{E:commutatorSZ} 
 \end{align}
	\end{subequations}
	
\end{corollary}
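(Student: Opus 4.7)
The plan is to reduce the statement to a check on generators by first observing that the Lie algebras $\mathbf{T}$ and $\mathbf{Z}$ admit very simple descriptions, and then reading off the desired bracket relations directly from Lemma \ref{L:ConformalKillingFieldCommuatators}.

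First, I would note that $\mathbf{T} = \mathrm{span}_{\mathbb{R}} \mathscr{T}$: since $[T_{(\mu)}, T_{(\nu)}] = 0$ by \eqref{E:translationscommute}, the linear span of $\mathscr{T}$ is already closed under the Lie bracket, hence coincides with the Lie algebra it generates. Next, I would show that $\mathbf{Z} = \mathrm{span}_{\mathbb{R}} \mathcal{Z}$. For this it suffices to verify that every bracket $[X,Y]$ with $X,Y \in \mathcal{Z}$ lies in $\mathrm{span}_{\mathbb{R}} \mathcal{Z}$, and this is exactly what the full list of commutation relations in Lemma \ref{L:ConformalKillingFieldCommuatators} (restricted to the translations, $\Omega_{(\mu\nu)}$, and $S$) provides: each such bracket is either zero, a translation, a $\Omega$, or equal to $S$ up to a scalar. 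Hence $\mathrm{span}_{\mathbb{R}} \mathcal{Z}$ is a Lie subalgebra containing $\mathcal{Z}$, and therefore equals $\mathbf{Z}$.

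Given these two descriptions, an arbitrary element $Z \in \mathbf{Z}$ can be written as a linear combination $Z = a^\kappa T_{(\kappa)} + b^{\kappa\lambda}\Omega_{(\kappa\lambda)} + c\, S$ with real coefficients. By bilinearity of the bracket,
\begin{align*}
[T_{(\mu)},Z] &= a^\kappa [T_{(\mu)},T_{(\kappa)}] + b^{\kappa\lambda}[T_{(\mu)},\Omega_{(\kappa\lambda)}] + c\,[T_{(\mu)},S] \\
&= 0 + b^{\kappa\lambda}\bigl(g_{\mu\kappa}T_{(\lambda)} - g_{\mu\lambda}T_{(\kappa)}\bigr) + c\,T_{(\mu)},
\end{align*}
which lies in $\mathrm{span}_{\mathbb{R}} \mathscr{T} = \mathbf{T}$, establishing \eqref{E:commutatorTZ}. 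In the same way,
\begin{align*}
[S,Z] &= a^\kappa [S,T_{(\kappa)}] + b^{\kappa\lambda}[S,\Omega_{(\kappa\lambda)}] + c\,[S,S] \\
&= -a^\kappa T_{(\kappa)} + 0 + 0 \;\in\; \mathbf{T},
\end{align*}
which gives \eqref{E:commutatorSZ}.

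There is no real obstacle here; the only subtlety is confirming that $\mathbf{Z}$ is genuinely just the linear span of $\mathcal{Z}$, rather than a larger object generated by iterated brackets, and this is immediate from the tabulated commutation relations. Everything else is bilinearity combined with a direct reading of Lemma \ref{L:ConformalKillingFieldCommuatators}.
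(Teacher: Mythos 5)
Your proof is correct and takes the approach the paper intends: the paper itself provides no written argument for this corollary, simply stating it as a direct consequence of Lemma \ref{L:ConformalKillingFieldCommuatators} and ending with $\qed$. Your write-up makes explicit the two small points the paper leaves implicit — that $\mathbf{T}$ and $\mathbf{Z}$ coincide with the linear spans of $\mathscr{T}$ and $\mathcal{Z}$ respectively (because the tabulated brackets close), and that the conclusion then follows by bilinearity from the individual commutation relations — and both of those points are correctly argued.
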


\hfill $\qed$

\begin{lemma} \label{L:NullSecondFundamantalFormSimpleExpression}
	The null second fundamental forms $\nabla_{\mu} \uL^{\nu}$ and $\nabla_{\mu} L^{\nu}$
	(see Definition \ref{D:SecondFundamental}) can be expressed as
	
	\begin{subequations}
	\begin{align}
		\nabla_{\mu} \uL^{\nu} & = - r^{-1} \angg_{\mu}^{\ \nu}, 
			\label{E:NablauL} \\
		\nabla_{\mu} L^{\nu} & = r^{-1} \angg_{\mu}^{\ \nu}. \label{E:NablaL}
	\end{align}
	\end{subequations}
	
	Furthermore, the intrinsic covariant derivatives of $\nabla \uL$ and $\nabla L$
	vanish:
	
	\begin{subequations}
	\begin{align} \label{E:HigherIntrinsicCovariantDerivativesofuLandLVanish}
		\angn_{(M)} \uL \eqdef \angn_{(M-1)} \nabla \uL & = 0, && (M \geq 2), \\
		\angn_{(M)} L \eqdef \angn_{(M-1)} \nabla L & = 0, && (M \geq 2).
	\end{align}
	\end{subequations}

\end{lemma}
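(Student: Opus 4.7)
The plan is to establish the two identities \eqref{E:NablauL}--\eqref{E:NablaL} by matching the null-frame components of $\nabla\uL$ and $\nabla L$ against those of $\angg_\mu^{\ \nu}$, and then to obtain the vanishing statement \eqref{E:HigherIntrinsicCovariantDerivativesofuLandLVanish} by combining these identities with the intrinsic constancy of $r^{-1}$ and of $\angg$.

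For \eqref{E:NablauL} and \eqref{E:NablaL}, I would first invoke Lemma \ref{L:SecondFundamentalFormsSymmetric} to assert that $\nabla\uL$ and $\nabla L$ are tangent to the spheres $S_{r,t}$ in both indices. Consequently, each is entirely determined by its $(e_A,e_B)$ frame components, since any contraction with $\uL$, $L$, $\uL_\nu$, or $L_\nu$ vanishes identically (the $\uL L$ and $L\uL$ contractions vanish by \eqref{E:LanduLaregeodesic}--\eqref{E:nablaLuLis0}, and the upper-index contractions with $\uL_\nu$, $L_\nu$ follow from differentiating the identities $g(\uL,\uL)=g(L,L)=0$ and $g(\uL,L)=-2$). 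From \eqref{E:nablaALnablaAuL} I would read off $\nabla_A\uL = -r^{-1}e_A$ and $\nabla_A L = r^{-1}e_A$; lowering the vector index via $g$ and using $g(e_A,e_B)=\delta_{AB}$ yields $e_A^{\ \mu} e_B^{\ \kappa}\nabla_\mu \uL_\kappa = -r^{-1}\delta_{AB}$ and the analogous identity for $L$. Since $\angg_\mu^{\ \nu}$ is itself tangent to $S_{r,t}$ with null-frame components $\angg(e_A,e_B)=\delta_{AB}$, matching the two tensors frame-component by frame-component would give \eqref{E:NablauL}--\eqref{E:NablaL}.

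For \eqref{E:HigherIntrinsicCovariantDerivativesofuLandLVanish}, I would note that the radial coordinate $r$ is constant on each sphere $S_{r,t}$ (built into the definition of $S_{r,t}$), so its intrinsic gradient $\angn_\lambda r = \angg_\lambda^{\ \tilde\lambda}\nabla_{\tilde\lambda} r$ vanishes, and hence so does $\angn r^{-1}=-r^{-2}\angn r$. Combined with Lemma \ref{L:BargandAnggareCovariantlyConstant}, which yields $\angn\angg = 0$, the Leibniz rule applied to $\nabla\uL=-r^{-1}\angg$ gives $\angn(\nabla\uL)=0$, and the identical argument handles $\nabla L$. Iterating this observation produces $\angn_{(M-1)}\nabla\uL = 0$ and $\angn_{(M-1)}\nabla L = 0$ for every $M\geq 2$.

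No serious obstacle is anticipated. The only genuinely delicate point is to distinguish carefully between the ambient covariant derivative $\nabla$ and the intrinsic one $\angn$, and to use the tangency of $\nabla\uL$ and $\nabla L$ supplied by Lemma \ref{L:SecondFundamentalFormsSymmetric} in order to reduce the identification of \eqref{E:NablauL}--\eqref{E:NablaL} to a verification on the two sphere-tangent frame vectors $e_1,e_2$.
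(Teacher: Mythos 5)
Your proof is correct and follows essentially the same route as the paper: the paper disposes of \eqref{E:NablauL}--\eqref{E:NablaL} as a ``simple computation'' (which your reduction to the $(e_A,e_B)$ frame components via Lemma \ref{L:SecondFundamentalFormsSymmetric} and \eqref{E:nablaALnablaAuL} organizes cleanly and legitimately), and it proves \eqref{E:HigherIntrinsicCovariantDerivativesofuLandLVanish} exactly as you do, from the Leibniz rule, $\angn \angg = 0$, and $\angn r = 0$.
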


\begin{proof}
	\eqref{E:NablauL} - \eqref{E:NablaL} follow from simple computations.
	\eqref{E:HigherIntrinsicCovariantDerivativesofuLandLVanish} follows then 
	follows from the Leibniz rule, \eqref{E:AnggIntrinsicDerivativeis0}, and the fact that $\angn r$ = 0.
\end{proof}

\begin{lemma} \label{L:LunderlineLderivativesofSrttangenttensorfield}
	Let $U$ be a type $\binom{n}{m}$ tensorfield tangent to the spheres $S_{r,t}.$ Then 
	for all integers $k,l \geq 0,$ $\nabla_{\uL}^k \nabla_{L}^l U$ is a tensorfield tangent to the 
	spheres $S_{r,t},$ and
	
	\begin{align} \label{E:LunderlineLderivativesofSrttangenttensorfield}
		\angn_{\lambda}(\nabla_{L}^l \nabla_{\uL}^k U_{\mu_1 \cdots \mu_m}^{\ \ \ \ \ \ \ \nu_1 \cdots \nu_n})
		& = \nabla_{\uL}^k  \nabla_{L}^l \angn_{\lambda}U_{\mu_1 \cdots \mu_m}^{\ \ \ \ \ \ \ \nu_1 \cdots \nu_n} \\
		& \ \ - k r^{-1} \nabla_L^l \nabla_{\uL}^{k-1} \angn_{\lambda}
			U_{\mu_1 \cdots \mu_m}^{\ \ \ \ \ \ \ \nu_1 \cdots \nu_n} \notag \\
		& \ \ + l r^{-1} \nabla_L^{l-1} \nabla_{\uL}^k \angn_{\lambda}
			U_{\mu_1 \cdots \mu_m}^{\ \ \ \ \ \ \ \nu_1 \cdots \nu_n}. \notag 
	\end{align}
\end{lemma}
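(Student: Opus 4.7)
The plan is to proceed in three stages: first establish tangency, then prove the two base-case commutation identities for $k+l=1$, and finally induct on $k+l$ using the commutativity $[\nabla_L,\nabla_{\uL}]=0$ from \eqref{E:nablaLnablaunderlineLcommute}.

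For tangency, I would first verify the one-step claim: if $V$ is a sphere-tangent $\binom{n}{m}$ tensor, then $\nabla_L V$ and $\nabla_{\uL} V$ are also sphere-tangent. This follows from the Leibniz rule together with $\nabla_L \uL = \nabla_L L = \nabla_{\uL} L = \nabla_{\uL} \uL = 0$ (equations \eqref{E:LanduLaregeodesic} -- \eqref{E:nablaLuLis0}): contracting $\nabla_L V$ against $\uL$ (or $L$) on any free index produces $\nabla_L(V(\uL,\ldots)) - V(\nabla_L \uL,\ldots)-\cdots = 0$, since $V(\uL,\ldots)=0$ by assumption and $\nabla_L \uL=0$. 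Iterating yields tangency of $\nabla_{\uL}^k\nabla_L^l U$ for all $k,l\geq 0$.

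For the base commutation identities, applied to a sphere-tangent tensor $V$, I would compute $\angn_\lambda \nabla_L V$ by unfolding the definition $\angn_\lambda(\cdot) = \angg_\lambda^{\widetilde\lambda}(\text{sphere-project all other indices})\nabla_{\widetilde\lambda}(\cdot)$. Writing $\nabla_{\widetilde\lambda}\nabla_L V = (\nabla_{\widetilde\lambda} L^\kappa)\nabla_\kappa V + L^\kappa \nabla_{\widetilde\lambda}\nabla_\kappa V$ and using Lemma \ref{L:NullSecondFundamantalFormSimpleExpression} in the form $\nabla_{\widetilde\lambda}L^\kappa = r^{-1}\angg_{\widetilde\lambda}^\kappa$, the first summand becomes $r^{-1}\angn_\lambda V$ after applying the outer projections. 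For the second summand, I use flatness of Minkowski space (Lemma \ref{L:Minkowskiisflat}) to commute $\nabla_{\widetilde\lambda}$ past $L^\kappa\nabla_\kappa$, then pull $\nabla_L$ outside the projection using $\nabla_L \angg = 0$ from \eqref{E:nablaLanggis0}, obtaining $\nabla_L \angn_\lambda V$. Thus
\begin{align*}
\angn_\lambda \nabla_L V = \nabla_L \angn_\lambda V + r^{-1}\angn_\lambda V,
\qquad
\angn_\lambda \nabla_{\uL} V = \nabla_{\uL}\angn_\lambda V - r^{-1}\angn_\lambda V,
\end{align*}
the second identity following by the same argument with $\nabla_\mu \uL^\nu = -r^{-1}\angg_\mu^\nu$.

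Finally, I would induct on $k+l$. Because $[\nabla_L,\nabla_{\uL}]=0$ and these operators commute with $\angn_\lambda U$ (a sphere-tangent tensor), I can assume the formula for $\nabla_L^l\nabla_{\uL}^k$ and apply either $\nabla_L$ or $\nabla_{\uL}$. Using $\nabla_L r = 1$ and $\nabla_{\uL} r = -1$ gives $\nabla_L(r^{-1}) = -r^{-2}$ and $\nabla_{\uL}(r^{-1}) = r^{-2}$; these "naively disruptive" $r^{-2}$ contributions exactly cancel the $r^{-1}\cdot r^{-1}$ cross-term produced when the base identity is applied to $V = \nabla_L^l\nabla_{\uL}^k U$, yielding the clean coefficients $+(l+1)r^{-1}$ and $-(k+1)r^{-1}$ at the next step. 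The main obstacle is precisely this bookkeeping — making sure the $\nabla r^{-1}$ terms combine with the commutator outputs so that no higher powers of $r^{-1}$ or lower-order $\nabla$-derivatives remain beyond the two linear correction terms in the stated formula. Once that cancellation is verified in the induction step, the formula \eqref{E:LunderlineLderivativesofSrttangenttensorfield} follows.
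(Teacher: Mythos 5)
Your proof is correct and follows essentially the same approach as the paper's own (Leibniz rule, the extrinsic expression \eqref{E:SphereIntrinsicintermsofExtrinsic} for $\angn$, and the null second fundamental form identities of Lemma \ref{L:NullSecondFundamantalFormSimpleExpression}), merely organizing it as a base case plus induction on $k+l$. The bookkeeping you flag as the main obstacle — the $r^{-2}$ contributions from $\nabla_L r^{-1}$ and $\nabla_{\uL} r^{-1}$ cancelling against the $r^{-1}\cdot r^{-1}$ cross-terms — does indeed work out exactly as you claim, yielding the clean coefficients $-(k)r^{-1}$ and $+(l)r^{-1}$.
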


\begin{proof}
	The fact that $\nabla_{L}^l \nabla_{\uL}^k U$ is tangent to the spheres follows from contracting
	any of its indices with either $\uL$ or $L,$ and using \eqref{E:LanduLaregeodesic} - \eqref{E:nablaLuLis0} to
	commute the contractions through the derivatives; the result is $0.$
	The relation \eqref{E:LunderlineLderivativesofSrttangenttensorfield} follows from the Leibniz rule,
	\eqref{E:SphereIntrinsicintermsofExtrinsic}, and Lemma \ref{L:NullSecondFundamantalFormSimpleExpression}.	
\end{proof}

\begin{corollary} \label{C:CommuteIntrinsicCovraiantandLderivatives}
	Let $k,l,m \geq 0$ be integers, and let $U$ be a tensorfield tangent to the spheres $S_{r,t}.$
	Then
	
	\begin{align} \label{E:CommuteIntrinsicCovraiantandLderivatives}
		|\nabla_{\uL}^k \nabla_L^l \angn_{(m)} U - \angn_{(m)} \nabla_{\uL}^k \nabla_L^l U | 
			\lesssim	\mathop{\sum_{k' + l' < k + l}}_{l' \leq l, k' \leq k, m' < m} r^{-\lbrace(k - k') + (l - l')\rbrace}| 
			 \nabla_{\uL}^{k'} \nabla_L^{l'} \angn_{(m')} U|. 
	\end{align}

\end{corollary}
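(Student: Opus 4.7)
The plan is to argue by induction on $m$, using Lemma~\ref{L:LunderlineLderivativesofSrttangenttensorfield} as the single-step commutation rule, together with the commutativity $[\nabla_{\uL}, \nabla_L]=0$ from \eqref{E:nablaLnablaunderlineLcommute} and the observation that $r$ is constant on each sphere $S_{r,t}$, so $\angn r^{-1}=0$.

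The base case $m=0$ is trivial since both sides of the estimate are identically zero. For the inductive step with $m \geq 1$, I would decompose $\angn_{(m)} = \angn \circ \angn_{(m-1)}$ and first apply Lemma~\ref{L:LunderlineLderivativesofSrttangenttensorfield} to commute the innermost $\angn$ past $\nabla_L^l \nabla_{\uL}^k$:
\begin{align*}
\angn_\lambda(\nabla_L^l \nabla_{\uL}^k U) = \nabla_{\uL}^k \nabla_L^l \angn_\lambda U - kr^{-1}\nabla_L^l \nabla_{\uL}^{k-1}\angn_\lambda U + lr^{-1}\nabla_L^{l-1}\nabla_{\uL}^k \angn_\lambda U.
\end{align*}
Applying the remaining $\angn_{(m-1)}$ to both sides, and using $\angn r^{-1}=0$ to pull the $r^{-1}$ factors freely through the intrinsic derivatives, yields a main contribution $\angn_{(m-1)}\nabla_{\uL}^k \nabla_L^l \angn U$ together with $r^{-1}$-weighted error terms of the form $r^{-1}\angn_{(m-1)}\nabla_{\uL}^{k-1}\nabla_L^l\angn U$ and $r^{-1}\angn_{(m-1)}\nabla_{\uL}^k \nabla_L^{l-1}\angn U$.

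Next, I would apply the inductive hypothesis (with parameters $m-1$ and tensor $\angn U$) to each of these $\angn_{(m-1)}$-expressions. The reindexing $\angn_{(m')}(\angn U) = \angn_{(m'+1)} U$ converts the indices to the form required by the sum, and the key point is that the main term of each such application returns $\nabla_{\uL}^{k'}\nabla_L^{l'}\angn_{(m)} U$ for the appropriate $(k',l')$, while the error from the IH produces additional terms already fitting the required form. Throughout, \eqref{E:nablaLnablaunderlineLcommute} is used to freely reorder the $\nabla_{\uL}$ and $\nabla_L$ factors to match the order displayed in the claimed estimate.

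The main obstacle is the careful bookkeeping required to verify that every error term produced at each stage of the recursion fits within the sum on the right-hand side, with the correct accounting of the accumulating powers of $r^{-1}$ matching the deficit $(k-k')+(l-l')$ and with the decreasing indices $(k',l',m')$ landing in the specified range. A secondary subtlety is verifying at each step that the tensorfields to which we apply Lemma~\ref{L:LunderlineLderivativesofSrttangenttensorfield} remain tangent to the spheres $S_{r,t}$; this follows inductively from the first assertion of that lemma, together with the fact that $\angn$ preserves sphere-tangency.
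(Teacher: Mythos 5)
Your argument is exactly the paper's proof, which is stated in one line as ``repeated use of the Leibniz rule, Lemma \ref{L:LunderlineLderivativesofSrttangenttensorfield}, and the fact that $\angn r = 0$''; your induction on $m$ is a correct and complete fleshing-out of that recursion, including the points about reordering via \eqref{E:nablaLnablaunderlineLcommute} and preservation of sphere-tangency.

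One caveat on your final claim that every error term ``lands in the specified range'': the terms generated by the single-step lemma, such as $k r^{-1}\nabla_L^l\nabla_{\uL}^{k-1}\angn_{\lambda}U$, produce after the recursion contributions of the form $r^{-1}|\nabla_{\uL}^{k-1}\nabla_L^l\angn_{(m)}U|$, i.e.\ with $m'=m$. (The case $k=1$, $l=0$, $m=1$ makes this explicit: the commutator is exactly $-r^{-1}\angn U$, which is not controlled by $r^{-1}|U|$.) So the constraint $m'<m$ in \eqref{E:CommuteIntrinsicCovraiantandLderivatives} should read $m'\leq m$; this is a defect of the statement rather than of your method, but your bookkeeping should acknowledge it rather than assert that the terms fit the range as written.
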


\begin{proof}
	Inequality \eqref{E:CommuteIntrinsicCovraiantandLderivatives} follows from repeated use of the Leibniz rule,
	Lemma \ref{L:LunderlineLderivativesofSrttangenttensorfield}, and the fact that
	$\angn r = 0.$
\end{proof}

	The following simple lemma gives pointwise bounds for the tensorfields $L,\uL,\angg,$
	and their full \emph{spacetime} covariant derivatives.
	
	\begin{lemma} \label{L:SphereProjectionCovariantDerivativesNorms}
		Let $M \geq 0$ be any integer. Let $\uL, L$ be the null vectorfields defined in \eqref{E:uLdef} - \eqref{E:Ldef}, 
		and let $\angg$ be the first fundamental form of the $S_{r,t}$ defined in \eqref{E:anggdef}. Then
		
		\begin{subequations}
		\begin{align}
				|\nabla_{(M)} \uL| & \lesssim r^{-M}, \label{E:underlineLcovariantderivativetensorbound} \\
			|\nabla_{(M)} L| & \lesssim r^{-M}, \label{E:Lcovariantderivativetensorbound} 
		\end{align}
		\end{subequations}
		and
		
		\begin{align}
			|\nabla_{(M)} \angg| & \lesssim r^{-M}. \label{E:Srtmetriccovariantderivativetensorbound} 
		\end{align}
		
	\end{lemma}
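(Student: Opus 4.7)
The plan is to work throughout in the fixed inertial coordinate system $(t,\ux)$. In this frame the Christoffel symbols of $g$ vanish, so the spacetime covariant derivative $\nabla$ of any tensor reduces to its componentwise partial derivatives $\partial_\mu$, and the Riemannian norm $|\cdot|$ of Definition \ref{D:Riemanniannorm} is just the standard Euclidean norm on the component array. The argument then splits into three pieces: (i) the base case $M=0$, which is immediate from the inertial expressions for $\uL, L, \angg$; (ii) a purely analytic bound on iterated spacetime partials of $\omega^j \eqdef x^j/r$; and (iii) an inductive step for $\angg$ based on its defining formula.

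For the base case $M=0$, the coordinate expressions $\uL^\mu = (1,-\omega^1,-\omega^2,-\omega^3)$ and $L^\mu = (1,\omega^1,\omega^2,\omega^3)$ together with the identity $\sum_j (\omega^j)^2 = 1$ yield $|\uL|^2 = |L|^2 = 2$ uniformly, and the defining relation $\angg_{\mu\nu} = g_{\mu\nu} + \frac{1}{2}(\uL_\mu L_\nu + L_\mu \uL_\nu)$ then forces $|\angg|\lesssim 1$.

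The heart of the argument is the pointwise estimate $|\partial^I \omega^j(x)| \lesssim r^{-|I|}$, valid for every spacetime multi-index $I$ and every $r>0$. I would establish this by induction on $|I|$, using the elementary identity $\partial_\mu r = \omega_\mu$ (with $\omega_0 \eqdef 0$, $\omega_k \eqdef \delta_{k\ell}\omega^\ell$) and the quotient rule $\partial_\mu \omega^j = r^{-1}\delta_\mu^j - r^{-1}\omega^j \omega_\mu$; each further differentiation, expanded by Leibniz, remains a finite sum of products of bounded components of $\omega$ times negative powers of $r$, and one concurrently gets $|\partial^I r^{-1}| \lesssim r^{-1-|I|}$. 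Since the inertial components of $\uL^\mu, L^\mu$ are constants or $\pm\omega^j$, the bounds \eqref{E:underlineLcovariantderivativetensorbound}--\eqref{E:Lcovariantderivativetensorbound} follow at once.

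Finally, for $\angg$ I would apply $\nabla_{(M)}$ to the defining formula above. Since $\nabla g = 0$ by \eqref{E:CovariantDerivativeofgisZeroIndices}, the Leibniz rule writes $\nabla_{(M)}\angg$ as a finite sum of terms $(\nabla_{(k)} \uL)\otimes(\nabla_{(\ell)} L)$ with $k+\ell=M$, and the bounds just established give $|\nabla_{(M)}\angg| \lesssim \sum_{k+\ell=M} r^{-k}r^{-\ell}\lesssim r^{-M}$. The only step requiring real care rather than pure insight is the combinatorial induction producing $|\partial^I \omega^j|\lesssim r^{-|I|}$; this is the main, though routine, obstacle, and it is purely an exercise in bookkeeping once the quotient-rule identity is in hand.
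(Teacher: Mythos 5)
Your proof is correct and follows essentially the same route as the paper's: the paper likewise verifies \eqref{E:underlineLcovariantderivativetensorbound}--\eqref{E:Lcovariantderivativetensorbound} by direct computation on the inertial components $\pm\omega^j$ (the induction on $|\partial^I \omega^j|\lesssim r^{-|I|}$ being the "easy to check" step it leaves implicit), and then obtains \eqref{E:Srtmetriccovariantderivativetensorbound} from $\nabla g = 0$, the definition \eqref{E:anggdef}, and the Leibniz rule. No gaps.
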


\begin{proof}
	Since in the inertial coordinate system, $\uL^{\mu} = (-1,\omega^1,\omega^2,\omega^3),$ $L^{\mu} = 
	(1,\omega^1,\omega^2,\omega^3),$ and $\omega^j \eqdef x^j/r,$ it is easy to check directly that
	\eqref{E:underlineLcovariantderivativetensorbound} - \eqref{E:Lcovariantderivativetensorbound} hold.
	Inequality \eqref{E:Srtmetriccovariantderivativetensorbound} then follows from \eqref{E:CovariantDerivativeofgisZeroIndices},
	definition \eqref{E:anggdef},\eqref{E:underlineLcovariantderivativetensorbound}, \eqref{E:Lcovariantderivativetensorbound}, 
	and the Leibniz rule.
\end{proof}

\begin{lemma} \label{L:rotationvectorfieldsintrinsiccovariantderivative}
	Let $O \in \mathcal{O}.$ Then
	
	\begin{align} \label{E:rotationvectorfieldsintrinsiccovariantderivative}
		|\angn_{(M)} O| \lesssim r^{1-M}.
	\end{align}
\end{lemma}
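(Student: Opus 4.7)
The plan is to proceed by induction on $M$, expressing the intrinsic covariant derivative $\angn_{(M)} O$ in terms of spacetime covariant derivatives $\nabla_{(M')} O$ and derivatives of the projection tensor $\angg$, and then combining the vanishing of $\nabla_{(M')} O$ for $M' \geq 2$ (from Lemma \ref{L:Vectorfieldderivativenorms}) with the bound $|\nabla_{(M_i)} \angg| \lesssim r^{-M_i}$ (from Lemma \ref{L:SphereProjectionCovariantDerivativesNorms}).

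First I would verify the tangency of $O$ to $S_{r,t}$: a direct computation in the inertial coordinate system shows $\Omega_{(jk)} t = 0$ and $\Omega_{(jk)} r^2 = 2x_j x_k - 2x_k x_j = 0$, so $O$ is tangent to the spheres, which makes $\angn_{(M)} O$ meaningful via formula \eqref{E:SphereIntrinsicintermsofExtrinsic}. The base cases $M=0$ and $M=1$ are then immediate from Lemma \ref{L:Vectorfieldderivativenorms}: $|O| \lesssim r$ and $|\angn O| \lesssim |\angg|^2 |\nabla O| \lesssim 1$, since the Euclidean norm of $\angg_{\mu}^{\ \nu}$ is bounded.

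For the inductive step, I would iteratively apply \eqref{E:SphereIntrinsicintermsofExtrinsic} and the Leibniz rule to obtain a schematic expansion
\begin{align*}
    \angn_{(M)} O = \sum_{M_0 + M_1 + \cdots + M_k + M' = M} C_{M_0,\ldots,M_k,M'} \, (\nabla_{(M_0)} \angg) \cdots (\nabla_{(M_k)} \angg)(\nabla_{(M')} O),
\end{align*}
where the sum is over finitely many nonnegative integer partitions. Since Lemma \ref{L:Vectorfieldderivativenorms} gives $\nabla_{(M')} O \equiv 0$ for $M' \geq 2$, only the terms with $M' \in \{0,1\}$ survive. Applying \eqref{E:Srtmetriccovariantderivativetensorbound} to each $\angg$-factor, the total $r$-weight of a surviving term is
\begin{align*}
    r^{-(M_0 + \cdots + M_k)} \cdot r^{1 - M'} = r^{1 - M},
\end{align*}
since $|O| \lesssim r$ when $M' = 0$ and $|\nabla O| \lesssim 1$ when $M' = 1$.

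The only mildly delicate point is verifying that the schematic expansion above is correct, i.e., that applying $\angn$ to a tensorfield built out of $\angg$ and $O$ preserves this form with appropriate bookkeeping of derivative counts. This follows by a straightforward induction: at each step, $\angn_\lambda$ equals $\angg_\lambda^{\ \tilde\lambda}$ contracted with $\nabla_{\tilde\lambda}$, and $\nabla$ falls on either an existing $\angg$-factor (increasing its derivative count by one) or on the $O$-factor (increasing its derivative count by one), while the outer $\angg$-projection simply adds a new $\angg$-factor with zero derivatives. Summing powers of $r$ then yields $r^{1-M}$, which is the desired bound.
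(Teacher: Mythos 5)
Your proof is correct and follows essentially the same route as the paper's (which cites the same three ingredients: the intrinsic-to-extrinsic relation \eqref{E:SphereIntrinsicintermsofExtrinsic}, the vanishing of $\nabla_{(M')} O$ for $M' \geq 2$ from \eqref{E:Rotationcovariantderivativebounds}, and the bounds $|\nabla_{(M_i)} \angg| \lesssim r^{-M_i}$ from Lemma \ref{L:SphereProjectionCovariantDerivativesNorms}), combined by repeated Leibniz expansion. Your explicit verification that $O$ is tangent to $S_{r,t}$ and your power-counting bookkeeping $r^{-(M_0+\cdots+M_k)}\cdot r^{1-M'} = r^{1-M}$ are exactly what the paper leaves implicit.
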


\begin{proof}
	follows from repeated use of \eqref{E:SphereIntrinsicintermsofExtrinsic}, \eqref{E:Rotationcovariantderivativebounds},
	and Lemma \ref{L:SphereProjectionCovariantDerivativesNorms}.
\end{proof}

\begin{lemma}  \label{L:rotationsareabasis}
		Let $U_{\mu_1 \cdots \mu_m}$ be a type $\binom{0}{m}$ tensorfield tangent to the spheres 
		$S_{r,t}.$ Let $O_{\iota}, \iota =1,2,3$ be an enumeration of the three rotational vectorfields belonging
		to the set $\mathcal{O}.$ For any rotational multi-index $I= (\iota_1, \cdots, \iota_m)$ of length $m,$ where
		$\iota_i \in \lbrace 1,2,3 \rbrace,$ we define 
		$|\mathcal{O}^I U| \eqdef |O_{\iota_1}^{\kappa_1} \cdots O_{\iota_m}^{\kappa_m} U_{\kappa_1 \cdots \kappa_m}|.$
		Then the following pointwise estimate holds:
	
	\begin{align} \label{E:rotationsareabasis}
		\sum_{|I| = m} |\mathcal{O}^I U| \approx r^m |U|.
	\end{align}
\end{lemma}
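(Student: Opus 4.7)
The plan is to reduce the estimate to a single key algebraic identity for the rotation vectorfields, and then iterate over the $m$ slots of $U.$ Working in our fixed inertial coordinate system, the three rotations $\Omega_{(12)},\Omega_{(13)},\Omega_{(23)}$ have purely spatial components of the form $\Omega_{(jk)}^a = x_j\delta_k^a - x_k \delta_j^a = r(\omega_j \delta_k^a - \omega_k \delta_j^a),$ where $\omega^a = x^a/r.$ The first step is to establish the pointwise identity
\begin{align} \label{E:rotationpointwise}
   \sum_{\iota=1}^{3} O_\iota^{\kappa} O_\iota^{\lambda} = r^2 \angg^{\kappa \lambda},
\end{align}
which is obtained by a direct expansion: for spatial indices one computes
$\sum_{j<k}(\omega_j\delta_k^a - \omega_k\delta_j^a)(\omega_j\delta_k^b - \omega_k\delta_j^b) = \delta^{ab} - \omega^a\omega^b,$ and for any index equal to $0$ the left-hand side vanishes because $O_\iota^0 = 0;$ using the expression \eqref{E:anggdef} this matches $\angg^{\kappa\lambda}$ in the inertial frame.

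Next I would iterate \eqref{E:rotationpointwise} across the $m$ tensor slots of $U.$ Squaring and summing gives
\begin{align}
   \sum_{|I|=m} |\mathcal{O}^I U|^2
   = \sum_{\iota_1,\dots,\iota_m} O_{\iota_1}^{\kappa_1}O_{\iota_1}^{\lambda_1} \cdots O_{\iota_m}^{\kappa_m}O_{\iota_m}^{\lambda_m}  U_{\kappa_1 \cdots \kappa_m} U_{\lambda_1 \cdots \lambda_m}
   = r^{2m} \angg^{\kappa_1 \lambda_1} \cdots \angg^{\kappa_m \lambda_m} U_{\kappa_1 \cdots \kappa_m} U_{\lambda_1 \cdots \lambda_m}.
\end{align}
Since $U$ is by hypothesis tangent to $S_{r,t},$ the contractions $\angg^{\kappa_i \lambda_i}$ act as identities on each slot, so the right-hand side equals $r^{2m} |U|_{\angg}^2,$ where $|U|_{\angg}$ denotes the norm induced by the Riemannian sphere metric $\angg$ on $S_{r,t}.$

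Finally, I would observe that on tensors tangent to the spheres $S_{r,t},$ the norm $|U|_{\angg}$ is equivalent to the norm $|U|$ defined in \eqref{E:Riemanniannorm}. Indeed, in our inertial frame the nontrivial components of $\angg$ agree with those of the Euclidean metric $\emet$ restricted to the tangent space of $S_{r,t},$ and the orthogonal projections to $S_{r,t}$ using $g$ and $\emet$ coincide on such tensors. Thus $\sum_{|I|=m}|\mathcal{O}^I U|^2 \approx r^{2m}|U|^2.$ Because $\sum_i |x_i|$ and $\big(\sum_i x_i^2\big)^{1/2}$ are equivalent up to a combinatorial constant depending only on $m,$ taking square roots yields the desired \eqref{E:rotationsareabasis}. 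There is no significant obstacle here; the only subtle point is making sure that the passage from $|U|_{\angg}$ back to $|U|$ is justified, which is automatic since $U$ is tangent to $S_{r,t}.$
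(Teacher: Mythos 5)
Your proof is correct, and it takes a genuinely different route from the paper's. The paper's argument is geometric: at each point of $S_{r,t}$ it identifies a pair of rotations, say $O_{\iota_1},O_{\iota_2},$ satisfying $|O_{\iota_1}|^2, |O_{\iota_2}|^2 \geq r^2/3$ and $|\angg(O_{\iota_1},O_{\iota_2})|^2 \leq \tfrac14 |O_{\iota_1}|^2 |O_{\iota_2}|^2,$ concludes that they span $T_p S_{r,t}$ with an angle of at least $60$ degrees, and then deduces the $m=1$ case of \eqref{E:rotationsareabasis} before extending to general $m$ by the same reasoning applied to a $\angg$-orthonormal basis of the $m$-fold tensor product. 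Your approach instead rests on the exact algebraic identity $\sum_{\iota=1}^{3} O_\iota^{\kappa} O_\iota^{\lambda} = r^2 \angg^{\kappa\lambda},$ which in the inertial frame reduces to $\sum_{j<k}(x_j\delta_k^a - x_k\delta_j^a)(x_j\delta_k^b - x_k\delta_j^b) = r^2(\delta^{ab} - \omega^a\omega^b).$ This yields the stronger statement $\sum_{|I|=m}|\mathcal{O}^I U|^2 = r^{2m}|U|_\angg^2$ as an exact equality, for tensorfields $U$ tangent to $S_{r,t},$ after which the two-sided estimate follows from the $\ell^1$-$\ell^2$ equivalence on the fixed-dimensional ($3^m$) index set. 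The only point requiring care, as you note, is matching $|U|_\angg$ to the norm $|U|$ of Definition \ref{D:Riemanniannorm} built from $\emet$; this holds because $\angg^{\mu\nu}$ and $(\emet^{-1})^{\mu\nu}$ agree as quadratic forms on covectors annihilating both $\partial_t$ and $\partial_r,$ which is automatic when $U$ is tangent to $S_{r,t}.$ The payoff of your approach is that it avoids the case analysis about which pair of rotations is uniformly large, replacing it with a single identity that does all the work and is arguably cleaner and easier to verify.
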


\begin{proof}
	It is easy to check that at any point in $p \in S_{r,t},$ a pair of the rotations, say $O_{\iota_1}$ and $O_{\iota_2},$ has 
	the following two properties: \textbf{i)} $|O_{\iota_1}|^2, |O_{\iota_2}|^2 \geq r^2/3;$ \textbf{ii)} 
	$|\angg(O_{\iota_1}, O_{\iota_2})|^2 \leq \frac{1}{4} |O_{\iota_1}|^2 |O_{\iota_2}|^2$ (where
	$\angg(X,Y) \eqdef \angg_{\kappa \lambda} X^{\kappa}Y^{\lambda}$). This latter property implies that 
	the (smallest) angle between $O_{\iota_1}$ and $O_{\iota_2},$ viewed as vectors in the two-dimensional plane $T_p S_{r,t},$
	is at least $60$ degrees. Therefore, any covector $\xi \in T_p^* S_{r,t},$ has a corresponding $\angg-$dual vector $X \in T_p 
	S_{r,t}$ that makes an angle $\leq 60$ degrees with one of $^+ _- O_{\iota_1}, \ ^+_- O_{\iota_2},$ and it follows that 
	either $|\angg(O_{\iota_1}, X)|^2 \geq \frac{r^2}{12} |X|^2 = \frac{r^2}{12} |\xi|^2$ or 
	$|\angg(O_{\iota_2},X)|^2 \geq \frac{r^2}{12} |X|^2 = \frac{r^2}{12} |\xi|^2.$ Thus, $\sum_{|I| = 1} |\mathcal{O}^I \xi|^2 
	\gtrsim r^{2} |\xi|^2.$
	
	On the other hand, the reverse inequality $\sum_{|I| = 1} |\mathcal{O}^I \xi|^2 \lesssim r^{2} |\xi|^2$ trivially follows 
	from \eqref{E:Rotationcovariantderivativebounds}. We have thus shown \eqref{E:rotationsareabasis} in the case $m=1.$ The 
	general case \eqref{E:rotationsareabasis} follows from applying similar reasoning to the elements 
	$\lbrace e_{i_1}^* \otimes \cdots \otimes e_{i_m}^* \rbrace$ of a  $\angg-$orthonormal basis for 
	$\underbrace{T_p^* S_{r,t} \otimes \cdots \otimes T_p^* S_{r,t}}_{\mbox{$m$ copies}}.$

\end{proof}

\begin{lemma} \label{L:LieDerivativeIntrinsicCovariantDerivativeComparison}
	Let $N \geq 0$ be an integer. Then for any tensorfield $U,$ we have that
	\begin{align} \label{E:rotatiaonalLiederivativesequivalenttocovariantrotationalderivatives}
		|U|_{\Lie_{\mathcal{O}};N} \approx | U |_{\nabla_{\mathcal{O}};N}.
	\end{align}
	
	Furthermore, if $U$ is tangent to the spheres $S_{r,t},$ we have the following estimates:
	
	\begin{subequations}
	\begin{align}
		|U|_{\angn_{\mathcal{O}};N} & \approx \sum_{n=0}^N r^{n} |\angn_{(n)} U|, 
			\label{E:rotationalintrinsiccovarintderivativesequivalentrweightedintrinsiccovariantderitves} \\
		|U|_{\angn_{\mathcal{O}};N} & \approx |U|_{\Lie_{\mathcal{O}};N},
			\label{E:rotatiaonalLiederivativesequivalenttointrinsiccovariantrotationalderivatives} \\
		|U|_{\Lie_{\mathcal{O}};N} & \approx \sum_{n=0}^N r^{n} |\angn_{(n)} U|. 
			\label{E:LieDerivativeIntrinsicCovariantDerivativeComparison}
	\end{align}
	\end{subequations}
	The norms in the above inequalities are defined in Definition \ref{D:UnweightedPointwiseNorms}.
\end{lemma}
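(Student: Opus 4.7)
The plan is to establish the four equivalences in succession, with each reduced to an induction on $N$ that expands iterated derivatives into a principal term plus strictly lower-order corrections that are absorbed by the inductive hypothesis.

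First I would treat $|U|_{\Lie_{\mathcal{O}};N} \approx |U|_{\nabla_{\mathcal{O}};N}$ for an arbitrary tensorfield. Lemma \ref{L:Liederivativeintermsofnabla} writes $\Lie_O U$ as $\nabla_O U$ plus a sum of pointwise-linear terms in $U$ whose coefficients are components of $\nabla O$; by the second and third bounds in \eqref{E:Rotationcovariantderivativebounds}, $|\nabla O|$ is a constant and $\nabla_{(2)} O = 0$, so iterating this identity expresses $\Lie_{\mathcal{O}}^I U$ as $\nabla_{\mathcal{O}}^I U$ plus a linear combination, with uniformly bounded scalar coefficients, of $\nabla_{\mathcal{O}}^J U$ with $|J| < |I|$. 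Both directions of the equivalence then follow by straightforward induction on $N$.

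Next I would assume $U$ is tangent to the spheres and establish $|U|_{\angn_{\mathcal{O}};N} \approx \sum_{n=0}^N r^n |\angn_{(n)} U|$. Expanding $\angn_{\mathcal{O}}^I U = \angn_{O_{\iota_1}} \circ \cdots \circ \angn_{O_{\iota_n}} U$ via the Leibniz rule for $\angn$ produces the principal term $O_{\iota_1}^{\kappa_1} \cdots O_{\iota_n}^{\kappa_n} (\angn_{(n)} U)_{\kappa_1 \cdots \kappa_n}$ plus a sum of terms in which at least one $\angn_{(k)} O_{\iota_j}$ with $k \geq 1$ appears contracted against $\angn_{(m)} U$ for some $m < n$. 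By Lemma \ref{L:rotationvectorfieldsintrinsiccovariantderivative}, $|\angn_{(k)} O| \lesssim r^{1-k}$, and by Lemma \ref{L:Vectorfieldderivativenorms} the rotations themselves satisfy $|O| \lesssim r$. Pairing indices one obtains the bound $|\angn_{\mathcal{O}}^I U| \lesssim \sum_{m=0}^n r^m |\angn_{(m)} U|$, yielding the ``$\lesssim$'' direction. For the reverse direction I would induct on $n$: the key input is Lemma \ref{L:rotationsareabasis}, which supplies
\begin{align*}
\sum_{|I|=n} |O_{\iota_1}^{\kappa_1} \cdots O_{\iota_n}^{\kappa_n} (\angn_{(n)}U)_{\kappa_1 \cdots \kappa_n}| \approx r^n |\angn_{(n)}U|,
\end{align*}
so that summing the Leibniz expansion over $|I|=n$ lets one isolate $r^n|\angn_{(n)}U|$ up to errors of the form $r^m |\angn_{(m)}U|$ with $m < n$, which are controlled by the inductive hypothesis applied at level $N-1$.

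The third equivalence, $|U|_{\angn_{\mathcal{O}};N} \approx |U|_{\Lie_{\mathcal{O}};N}$ for $S_{r,t}$-tangent $U$, is obtained by comparing $\Lie_O$ and $\angn_O$ directly on tangent tensors. Since $O$ is tangent to $S_{r,t}$, $\Lie_O$ preserves tangency, and combining \eqref{E:SphereIntrinsicintermsofExtrinsic} with Lemma \ref{L:Liederivativeintermsofnabla} expresses $\Lie_O U - \angn_O U$ as a bounded linear combination, with coefficients built from $\nabla O$ and the null second fundamental forms $\nabla L, \nabla \uL$, of $U$ itself (the null-form contributions coming from the $\frac{1}{2}(L \uL + \uL L)$ piece of $\angg$ in \eqref{E:anggdef}); by Lemma \ref{L:NullSecondFundamantalFormSimpleExpression} and Lemma \ref{L:Vectorfieldderivativenorms} these coefficients are uniformly bounded, so an induction on $N$ gives both inequalities. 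The final equivalence \eqref{E:LieDerivativeIntrinsicCovariantDerivativeComparison} follows by chaining the two previous ones.

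The main obstacle I anticipate is purely bookkeeping: keeping the induction on $N$ clean while verifying, in the second step, that every correction term produced by expanding $\angn_{\mathcal{O}}^I$ with the Leibniz rule indeed has \emph{strictly} fewer total derivatives falling on $U$, so that the inductive hypothesis applies. Lemma \ref{L:rotationvectorfieldsintrinsiccovariantderivative} provides exactly the right $r^{1-k}$ weights to pair with the factors of $r$ absorbed into moving from $\angn_{(m)}U$ to the $O^I$-contracted quantities, which is what makes the weighted equivalence work.
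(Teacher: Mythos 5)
Your proposal is correct and follows essentially the same route as the paper: each equivalence is proved by induction on $N$, with the Lie-to-covariant expansion \eqref{E:Liederivativeintermsofnabla}, the derivative bounds of Lemmas \ref{L:Vectorfieldderivativenorms} and \ref{L:rotationvectorfieldsintrinsiccovariantderivative}, and the basis comparison of Lemma \ref{L:rotationsareabasis} doing the same work they do in the paper. The one place you diverge is \eqref{E:rotatiaonalLiederivativesequivalenttointrinsiccovariantrotationalderivatives}: the paper applies the identity \eqref{E:Liederivativeintermsofnabla} \emph{intrinsically} on $S_{r,t}$ (i.e.\ with $\angn$ in place of $\nabla$), so that $\Lie_O U - \angn_O U$ is a contraction of $U$ against $\angn O$, and the whole iteration then runs exactly as in the proof of \eqref{E:rotationalintrinsiccovarintderivativesequivalentrweightedintrinsiccovariantderitves}; you instead pass through the ambient $\nabla_O U$ and project, picking up second-fundamental-form corrections $\nabla L,\nabla\uL$. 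That route works — the apparent $r^{-1}$ decay of $\nabla L,\nabla\uL$ is cancelled against the $|O|\lesssim r$ factor, and $\Lie_O$ of the resulting coefficients stays bounded because $\Lie_O\angg=0$ and $\nabla_{(2)}O=0$ — but it forces you to track more bookkeeping in the inductive step than the paper's intrinsic formulation does. You would strengthen the write-up by stating explicitly that \eqref{E:Liederivativeintermsofnabla} holds with $\angn$ replacing $\nabla$ for $S_{r,t}$-tangent tensors, which collapses your third step to a verbatim repetition of the second.
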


\begin{proof}
	We use the notation defined in Lemma \ref{L:rotationsareabasis}. Inequality 
	\eqref{E:rotatiaonalLiederivativesequivalenttocovariantrotationalderivatives} can be proved inductively
	using the relation \eqref{E:Liederivativeintermsofnabla}, together with the inequalities 
	\eqref{E:Rotationcovariantderivativebounds}.
	
	To prove \eqref{E:rotationalintrinsiccovarintderivativesequivalentrweightedintrinsiccovariantderitves}, we first note that 
	Lemma \ref{L:rotationsareabasis} implies that for each integer $N \geq 0,$ the inequalities in 	
	\eqref{E:rotationalintrinsiccovarintderivativesequivalentrweightedintrinsiccovariantderitves}
	are equivalent the following inequalities:
	
	\begin{align} \label{E:covariantrotationalderivativesequivalentrotationalcontractioncovariantderivative}
		|U|_{\angn_{\mathcal{O}};N} \eqdef \sum_{|I| \leq N} |\angn_{\mathcal{O}}^I U| 
		\approx \sum_{|I| \leq N} |\mathcal{O}^I \angn_{(|I|)}U|.
	\end{align}
	To prove \eqref{E:covariantrotationalderivativesequivalentrotationalcontractioncovariantderivative}, we argue by induction,
	the base case $N=0$ being trivial. For the induction, we assume that the inequality is true in the case $N.$ Let
	$I=(\iota_1,\cdots, \iota_k)$ be a rotational multi-index with $|I| = k \leq N+1.$
	Then by the Leibniz rule and \eqref{E:rotationvectorfieldsintrinsiccovariantderivative}, we have that
	
	\begin{align}
		|\underbrace{\angn_{O_{\iota_1}} \cdots \angn_{O_{\iota_{k}}} U}_{\angn_{\mathcal{O}}^I U} 
			- \underbrace{O_{\iota_1}^{\alpha_1} \cdots O_{\iota_{k}}^{\alpha_{k}} 
			\angn_{\alpha_1} \cdots \angn_{\alpha_{k}} U}_{\mathcal{O}^I \angn_{|I|}U}|
			& \lesssim \sum_{p=1}^{k} \mathop{\sum_{a_1 + \cdots + a_{k} = p}}_{0 \leq a_i \leq i - 1} 
				\Big(\prod_{i=1}^{k} |\angn_{(a_i)} O_{\iota_i}|\Big) |\angn_{({k}-p)} U| \\ 
				& \lesssim \sum_{p'=0}^{k-1} r^{p'}|\angn_{(p')} U| \lesssim |U|_{\angn_{\mathcal{O}};k-1}, \notag 
	\end{align}
	where in the last step, we have used 
	\eqref{E:rotationalintrinsiccovarintderivativesequivalentrweightedintrinsiccovariantderitves}
	under the induction hypothesis. Summing over all $|I| = k \leq N + 1,$ we conclude that 

	\begin{align} \label{E:IntrinsiccovariantrotationminusintrinsiccovariantrotationcontractionError}
		\Big| |U|_{\angn_{\mathcal{O}};N+1} - \sum_{|I| \leq N + 1} |\mathcal{O}^I \angn_{(|I|)}U| \Big| 
		& \lesssim |U|_{\angn_{\mathcal{O}};N} 
		\lesssim \sum_{|I| \leq N} |\mathcal{O}^I \angn_{(|I|)}U|,
		\notag
	\end{align}
	where in the last step, we have used 
	\eqref{E:covariantrotationalderivativesequivalentrotationalcontractioncovariantderivative} under the
	induction hypothesis. From \eqref{E:IntrinsiccovariantrotationminusintrinsiccovariantrotationcontractionError}, 
	the induction step for \eqref{E:covariantrotationalderivativesequivalentrotationalcontractioncovariantderivative}
	easily follows.

	To prove \eqref{E:rotatiaonalLiederivativesequivalenttointrinsiccovariantrotationalderivatives}, we
	also argue by induction using the identity \eqref{E:Liederivativeintermsofnabla} (which is valid if $\nabla$ is replaced with 
	$\angn$), the base case being trivial. For the induction, we assume that inequalities hold in the case $N.$
  Let $I$ be any rotational multi-index with $|I| = k \leq N + 1.$ Repeatedly applying the Leibniz rule and identity 
  \eqref{E:Liederivativeintermsofnabla},  
	and using \eqref{E:rotationvectorfieldsintrinsiccovariantderivative}
	plus \eqref{E:rotationalintrinsiccovarintderivativesequivalentrweightedintrinsiccovariantderitves}, 
	we deduce the following inequalities:
	
	\begin{align}
		\Big|\Lie_{\mathcal{O}}^I U - \angn_{\mathcal{O}}^I U \Big|
		& \lesssim \sum_{p=1}^{k} \mathop{\sum_{a_1 + \cdots + a_{k} = p}}_{0 \leq a_i} 
			\Big(\prod_{i=1}^{k} |\angn_{(a_i)} O_{\iota_i}|\Big) |\angn_{({k}-p)} U| \\ 
			& \lesssim \sum_{p'=0}^{k-1} r^{p'}|\angn_{(p')} U|
				\lesssim |U|_{\angn_{\mathcal{O}};k-1}.  \notag 
	\end{align}
	Summing over all $|I| = k \leq N + 1,$ we conclude that 
	
	\begin{align} \label{E:RotaionalLiederivativesminusintrinsicrotationalcovariantderivativeserror}
		\Big||U|_{\Lie_{\mathcal{O}};N+1} - |U|_{\angn_{\mathcal{O}};N+1} \Big|
		& \lesssim |U|_{\angn_{\mathcal{O}};N} \lesssim |U|_{\Lie_{\mathcal{O}};N},
	\end{align}
	where in the last step, we have used the induction hypothesis. The induction step easily follows from 
	\eqref{E:RotaionalLiederivativesminusintrinsicrotationalcovariantderivativeserror}, which completes
	the proof of \eqref{E:rotatiaonalLiederivativesequivalenttointrinsiccovariantrotationalderivatives}.
	
	The estimate \eqref{E:LieDerivativeIntrinsicCovariantDerivativeComparison} then follows trivially from
	\eqref{E:rotationalintrinsiccovarintderivativesequivalentrweightedintrinsiccovariantderitves} and 	
	\eqref{E:rotatiaonalLiederivativesequivalenttointrinsiccovariantrotationalderivatives}.
\end{proof}

\begin{corollary} \label{C:nablaLnablaunderlineLLieDerivativeIntrinsicCovariantDerivativeComparison}
	If $k,l,M \geq 0$ are integers,
	and $U$ is any tensorfield tangent to the spheres $S_{r,t},$ then 
	we have the following estimate: 
	\begin{align} \label{E:nablaLnablaunderlineLLieDerivativeIntrinsicCovariantDerivativeComparison}
		|\nabla_{\uL}^k \nabla_L^l U|_{\Lie_{\mathcal{O}};M} & \approx \sum_{m=0}^M r^m 
			|\nabla_{\uL}^k \nabla_L^l \angn_{(m)} U|. 
	\end{align}
	The norm on the left-hand side of 
	\eqref{E:nablaLnablaunderlineLLieDerivativeIntrinsicCovariantDerivativeComparison} is defined in Definition 
	\ref{D:UnweightedPointwiseNorms}.
\end{corollary}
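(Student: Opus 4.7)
The plan is to reduce the claim to a combination of Lemma \ref{L:LieDerivativeIntrinsicCovariantDerivativeComparison} and Corollary \ref{C:CommuteIntrinsicCovraiantandLderivatives}. First I would observe that, by Lemma \ref{L:LunderlineLderivativesofSrttangenttensorfield}, the tensorfield $V \eqdef \nabla_{\uL}^k \nabla_L^l U$ is itself tangent to the spheres $S_{r,t}$. Therefore, the equivalence \eqref{E:LieDerivativeIntrinsicCovariantDerivativeComparison} applied to $V$ yields
\begin{align*}
|\nabla_{\uL}^k \nabla_L^l U|_{\Lie_{\mathcal{O}};M} \approx \sum_{m=0}^M r^m |\angn_{(m)} \nabla_{\uL}^k \nabla_L^l U|,
\end{align*}
and it remains only to show that the right-hand side is equivalent to $\sum_{m=0}^M r^m |\nabla_{\uL}^k \nabla_L^l \angn_{(m)} U|$.

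To establish this second equivalence, I would appeal to the commutator estimate of Corollary \ref{C:CommuteIntrinsicCovraiantandLderivatives}. Multiplying the pointwise bound there by $r^m$ and summing over $0 \leq m \leq M$ shows that the difference between the two expressions is controlled by a sum of lower-order terms of the schematic form $r^{m-(k-k')-(l-l')} |\nabla_{\uL}^{k'} \nabla_L^{l'} \angn_{(m')} U|$ with $k' + l' < k + l$. The crucial gain here is that each such term enters with a strictly negative extra power of $r$ (at least $r^{-1}$) and strictly fewer $\uL, L$ covariant derivatives than the top-order expression on either side of the desired equivalence.

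I would close by induction on $k + l$, with $M$ held fixed. The base case $k + l = 0$ is precisely the identity $|U|_{\Lie_{\mathcal{O}};M} \approx \sum_{m=0}^M r^m |\angn_{(m)} U|$ already supplied by \eqref{E:LieDerivativeIntrinsicCovariantDerivativeComparison}. For the inductive step, the commutator error terms involve exactly the same pairings $\sum_m r^m |\nabla_{\uL}^{k'} \nabla_L^{l'} \angn_{(m)} U|$ at strictly smaller values of $k' + l'$, which by the inductive hypothesis are in turn comparable to the analogous Lie-derivative norms $|\nabla_{\uL}^{k'} \nabla_L^{l'} U|_{\Lie_{\mathcal{O}};M}$. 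The main obstacle is bookkeeping: one must verify that every lower-order error term, weighted by its favorable power of $r^{-1}$, really can be absorbed into both sides of the claimed equivalence. This bookkeeping is routine but somewhat intricate, and is the only place where genuine work is required; once it is carried out, the $\lesssim$ and $\gtrsim$ directions both follow and the corollary is established.
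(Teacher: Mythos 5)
Your proof is correct and follows essentially the same route as the paper: the paper's own proof is a one-line citation of Corollary \ref{C:CommuteIntrinsicCovraiantandLderivatives} and Lemma \ref{L:LieDerivativeIntrinsicCovariantDerivativeComparison}, which are exactly the two ingredients you combine. The only difference is that you make explicit the induction on $k+l$, the absorption bookkeeping, and the tangency of $\nabla_{\uL}^k \nabla_L^l U$ to the spheres via Lemma \ref{L:LunderlineLderivativesofSrttangenttensorfield} --- details the paper leaves implicit.
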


\begin{proof}
	Corollary \ref{C:nablaLnablaunderlineLLieDerivativeIntrinsicCovariantDerivativeComparison} follows from
	Corollary \ref{C:CommuteIntrinsicCovraiantandLderivatives} and Lemma
	\ref{L:LieDerivativeIntrinsicCovariantDerivativeComparison}.
\end{proof}

\section{The Energy-Momentum Tensor and the Canonical Stress} \label{S:CanonicalStress}
In this section, we discuss the building blocks of our energies. We will begin by defining the MBI system's energy-momentum tensor, which we denote by $\EMT_{(MBI)}^{\mu \nu},$ and recalling its key properties. We remark that $\EMT_{(MBI)}^{\mu \nu}$ is the usual tensor associated with energy estimates. However, in order to derive energy estimates for the derivatives
of a solution, we will need a different tensor, namely the canonical stress $\Stress_{\ \nu}^{\mu}.$ The bulk of this section is therefore devoted to an analysis of the properties of $\Stress_{\ \nu}^{\mu}.$
\\

\noindent \hrulefill
\ \\

\subsection{The energy-momentum tensor \texorpdfstring{$\EMT^{\mu \nu}$}{}}

The energy-momentum tensor corresponding to an electromagnetic Lagrangian $\Ldual$ is defined as follows:
\begin{align} \label{E:electromagnetictensorupper}
	\EMT^{\mu \nu} & \eqdef 2 \frac{\partial \Ldual}{ \partial g_{\mu \nu}} + (g^{-1})^{\mu \nu} \Ldual, & & (\mu, \nu = 0,1,2,3).
\end{align}	
It follows trivially from the definition that $\EMT^{\mu \nu}$ is symmetric:

\begin{align} \label{E:Tissymmetric}
	\EMT^{\mu \nu} & = \EMT^{\nu \mu}.
\end{align}
We recall the fundamental divergence-free property: for energy-momentum tensors $\EMT^{\mu \nu}$ 
constructed out of a solution of the equations of motion \eqref{E:BianchiEM} - \eqref{E:Euler-Lagrange} corresponding to
$\Ldual,$ we have that 

\begin{align} \label{E:Tdivfree}
	\nabla_{\mu} \EMT^{\mu \nu} & = 0, && (\nu = 0,1,2,3).
\end{align}

In the particular case of the MBI model, \eqref{E:LMBIbetaequals1} and Lemma \ref{L:electromagneticidentities} imply that

\begin{align} \label{E:MBItensorupper}
	\EMT_{(MBI)}^{\mu \nu} & = \ell_{(MBI)}^{-1}\big(g_{\kappa \lambda} \Far^{\mu \kappa} \Far^{\nu \lambda} 
		- \Farinvariant_{(2)}^2[\Far] (g^{-1})^{\mu \nu}\big) + (g^{-1})^{\mu \nu} (1 - \ell_{(MBI)}). 
\end{align}
The next lemma is not needed for any of the main results presented in this article, but it is of interest in itself. 
It shows that $\EMT_{\mu \nu}^{(MBI)}$ satisfies the \emph{dominant energy condition}.

\begin{lemma} \label{L:dominantenergycondition}
	Let $\Far$ be any two-form for which the quantity $\ell_{(MBI)},$ which is defined in \eqref{E:ldef} is real; 
	i.e., any $\Far$ for which $1 + \Farinvariant_{(1)}[\Far] - \Farinvariant_{(2)}^2[\Far] \geq 0.$ Then the MBI energy-momentum 
	tensor of $\Far$ satisfies the dominant energy condition; that is, for any pair of future-directed causal vectors $X,Y$ we 
	have that $\EMT_{(MBI)}(X,Y) \geq 0.$ 
\end{lemma}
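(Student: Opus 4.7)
The plan is to decompose the MBI energy-momentum tensor into a non-negative scalar multiple of the Maxwell energy-momentum tensor plus a pure-trace correction carrying a non-positive scalar coefficient, and then to invoke the dominant energy condition for $\EMT^{(Maxwell)}$ already recalled in Section~\ref{SSS:NonlinearAnalysis}. This avoids any reduction of $\Far$ to a canonical normal form and treats the ``generic'' and the ``null'' (i.e., $\Farinvariant_{(1)}=\Farinvariant_{(2)}=0$) cases uniformly.

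The first step is purely algebraic. By the very definition of $\EMT^{(Maxwell)}$ together with $\Far_{\kappa\lambda}\Far^{\kappa\lambda}=2\Farinvariant_{(1)}$, one has $g_{\kappa\lambda}\Far^{\mu\kappa}\Far^{\nu\lambda}=\EMT_{(Maxwell)}^{\mu\nu}+\tfrac{1}{2}\Farinvariant_{(1)}(g^{-1})^{\mu\nu}$; substituting this into \eqref{E:MBItensorupper} rewrites the MBI tensor as
\begin{align*}
	\EMT_{(MBI)}^{\mu\nu}
	= \ell_{(MBI)}^{-1}\EMT_{(Maxwell)}^{\mu\nu} + \Psi\,(g^{-1})^{\mu\nu},
	\qquad
	\Psi \eqdef \ell_{(MBI)}^{-1}\big(\tfrac{1}{2}\Farinvariant_{(1)}-\Farinvariant_{(2)}^2\big)+1-\ell_{(MBI)}.
\end{align*}
Eliminating $\Farinvariant_{(1)}$ via the defining relation $\ell_{(MBI)}^2=1+\Farinvariant_{(1)}-\Farinvariant_{(2)}^2$ and collecting the terms of $\Psi$ over the common denominator $2\ell_{(MBI)}$, a short computation collapses $\Psi$ into the manifestly non-positive form
\begin{align*}
	\Psi = -\,\frac{(\ell_{(MBI)}-1)^2 + \Farinvariant_{(2)}^2}{2\,\ell_{(MBI)}},
\end{align*}
which is $\leq 0$ wherever $\ell_{(MBI)}>0$, i.e., throughout the open regime in which the MBI Lagrangian is real-valued and $\EMT_{(MBI)}^{\mu\nu}$ is well-defined.

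The conclusion is then immediate. Let $X,Y$ be future-directed causal vectors. Then $g(X,Y)\leq 0$; $\EMT_{(Maxwell)}(X,Y)\geq 0$ by the Maxwell dominant energy condition; and $\ell_{(MBI)}^{-1}>0$ while $\Psi\leq 0$. Contracting the decomposition with $X_{\mu}Y_{\nu}$, using $(g^{-1})^{\mu\nu}X_{\mu}Y_{\nu}=g(X,Y)$, and noting that $\Psi\,g(X,Y)$ is a product of two non-positive numbers, both summands on the right are non-negative and hence $\EMT_{(MBI)}(X,Y)\geq 0$, as desired. The only step with any real content is the algebraic collapse of $\Psi$ into $-[(\ell_{(MBI)}-1)^2+\Farinvariant_{(2)}^2]/(2\ell_{(MBI)})$; this is where the specific structure of the Born--Infeld Lagrangian does all the work, and it is the one place in the argument where any obstacle could conceivably arise.
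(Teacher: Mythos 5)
Your proposal is correct; the algebra does collapse as claimed. Starting from \eqref{E:MBItensorupper} and $\Far_{\kappa\lambda}\Far^{\kappa\lambda}=2\Farinvariant_{(1)},$ one indeed gets $\EMT_{(MBI)}^{\mu\nu}=\ell_{(MBI)}^{-1}\EMT_{(Maxwell)}^{\mu\nu}+\Psi(g^{-1})^{\mu\nu}$ with $\Psi=\ell_{(MBI)}^{-1}\big(\tfrac{1}{2}\Farinvariant_{(1)}-\Farinvariant_{(2)}^2\big)+1-\ell_{(MBI)},$ and substituting $\Farinvariant_{(1)}=\ell_{(MBI)}^2-1+\Farinvariant_{(2)}^2$ from \eqref{E:ldef} gives $\Psi=-\big[(\ell_{(MBI)}-1)^2+\Farinvariant_{(2)}^2\big]/(2\ell_{(MBI)})\leq 0.$ Combined with the Maxwell dominant energy condition and $g(X,Y)\leq 0$ for future-directed causal $X,Y,$ the conclusion follows, with the same implicit restriction to $\ell_{(MBI)}>0$ that the paper's own argument also makes (the tensor itself contains $\ell_{(MBI)}^{-1}$).

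Your route is genuinely different from the paper's. The paper fixes a null frame adapted to the plane spanned by $X$ and $Y,$ reduces by bilinearity to the three contractions $\EMT_{(MBI)}(\uL',\uL'),$ $\EMT_{(MBI)}(L',L'),$ $\EMT_{(MBI)}(\uL',L'),$ computes the first two explicitly as $\ell_{(MBI)}^{-1}|\ualpha'|^2$ and $\ell_{(MBI)}^{-1}|\alpha'|^2,$ and handles the cross term by a convexity estimate for $f(v)=2(1+v)-2(1+v)^{1/2}$ evaluated at $v=\Farinvariant_{(1)}-\Farinvariant_{(2)}^2.$ Your decomposition trades that case analysis and convexity argument for a single frame-independent identity: the MBI tensor is a positive rescaling of the Maxwell tensor minus a non-negative ``isotropic pressure'' term proportional to the inverse metric. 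This is cleaner and more conceptual, and it isolates exactly where the Born--Infeld structure enters (the collapse of $\Psi$ into a perfect square plus $\Farinvariant_{(2)}^2$). What the paper's computation buys in exchange is the explicit null-component formulas for $\EMT_{(MBI)}$ contracted against $\uL'$ and $L',$ which are of independent interest even though the lemma is not used elsewhere in the article. One cosmetic caveat: the Maxwell dominant energy condition is stated in the paper only in the introductory discussion of the linear theory, without proof; since it reduces by the same bilinearity argument to $\EMT_{(Maxwell)}(\uL,\uL)=|\ualpha|^2,$ $\EMT_{(Maxwell)}(L,L)=|\alpha|^2,$ $\EMT_{(Maxwell)}(\uL,L)=\rho^2+\sigma^2,$ a self-contained write-up should include that one line.
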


\begin{proof}
	Let $p \in M,$ and let $X,Y$ be any future-directed causal 
	vectors belonging to $T_p M.$ Then in the plane spanned by $X,Y,$ 
	there exist a pair of null vectors $L', \uL'$ such that $g(\uL',L') = -2,$ and such that ${X= a L' + b \uL',}$ 
	${Y= cL' + d \uL',}$ with $a,b,c,d \geq 0.$ Let us complement $\uL', L'$ with a pair of orthonormal vectors $e_1', e_2'$ 
	belonging to the 
	$g-$orthogonal complement of
	$\mbox{span} \lbrace \uL', L' \rbrace.$ The set $\lbrace \uL', L', e_1', e_2'\rbrace$ is therefore a null frame at $p.$
	Let $\alpha',\ualpha', \rho', \sigma'$ denote the null decomposition of $\Far$ with respect to this frame. By bilinearity, it 
	suffices to check that $\EMT_{(MBI)}(\uL', \uL') \geq 0, \ \EMT_{(MBI)}(L',L') \geq 0,$ and $\EMT_{(MBI)}(\uL',L') \geq 0.$ 
	We leave the following two simple calculations to the reader:
	
	\begin{align}
		\EMT_{(MBI)}(\uL', \uL') & = \ell_{(MBI)}^{-1} |\ualpha'|^2, 
		&& \EMT_{(MBI)}(L',L') = \ell_{(MBI)}^{-1} |\alpha'|^2. 
	\end{align}
	
	For the remaining term, we first calculate that
	\begin{align}
		\ell_{(MBI)} \EMT_{(MBI)}(\uL',L') & = \ualpha'_A \alpha'^A + 2 \rho'^2 + 2 \Farinvariant_{(2)}^2 + 
		2\ell_{(MBI)}(\ell_{(MBI)} - 1).
	\end{align}
	We now express $2\ell_{(MBI)}(\ell_{(MBI)} - 1) = f(\Farinvariant_{(1)} - \Farinvariant_{(2)}^2),$ 
	where $f(v) = 2(1 + v) - 2(1 + v)^{1/2},$ and $f(0) 
	= 0,$ $f'(0) = 1,$ $f''(0) = 1/2.$ Setting $v = \Farinvariant_{(1)} - \Farinvariant_{(2)}^2,$ using the convexity of $f,$ and 
	using the identity $\Farinvariant_{(1)} = - \ualpha'_A \alpha'^A - \rho'^2 + \sigma'^2,$ it thus follows that 
	
	\begin{align}
		\ell_{(MBI)} \EMT_{(MBI)}(\uL',L') \geq \ualpha'_A \alpha'^A + 2 \rho'^2 + 2 \Farinvariant_{(2)}^2 
			+ (\Farinvariant_{(1)} - \Farinvariant_{(2)}^2) \geq \rho'^2 + \sigma'^2 + \Farinvariant_{(2)}^2.
	\end{align}
	This concludes our proof of Lemma \ref{L:dominantenergycondition}.

\end{proof}

\subsection{Equations of variation}

The definition \eqref{E:mathcalENdef} of our energy involves integrals of weighted squares of the components of 
$\Lie_{\mathcal{Z}}^I \Far$ over the spacelike hypersurfaces $\Sigma_t.$ In order to prove our global existence theorem, we need to understand the evolution the energy, which in turn requires that we investigate the evolution of the $\Lie_{\mathcal{Z}}^I \Far.$ These quantities are solutions to the \emph{equations of variation}, which are the linearization of the MBI system \eqref{E:modifieddFis0summary} - \eqref{E:HmodifieddMis0summary} around the background $\Far.$ More specifically, the equations of variation in the unknowns $\dot{\Far}_{\mu \nu}$ are defined to be

\begin{subequations} 
\begin{align}
	\nabla_{\lambda} \dot{\Far}_{\mu \nu} + \nabla_{\mu} \dot{\Far}_{\nu \lambda} + \nabla_{\nu} \dot{\Far}_{\lambda \mu}
		& = \mathfrak{J}_{\lambda \mu \nu}, & & (\lambda, \mu, \nu = 0,1,2,3), \label{E:EOVBianchi} \\
	H^{\mu \nu \kappa \lambda} \nabla_{\mu} \dot{\Far}_{\kappa \lambda} & = \mathfrak{I}^{\nu}, & & (\nu = 0,1,2,3),
		\label{E:EOVMBI} 
\end{align}
\end{subequations}
where the tensorfield $H^{\mu \nu \kappa \lambda}$ is defined in \eqref{E:Hdef}, and
$\mathfrak{J}_{\lambda \mu \nu}$ and $\mathfrak{I}^{\nu}$ are inhomogeneous terms that need to be specified. In our applications below, the $\Lie_{\mathcal{Z}}^I \Far$ will play the role of $\dot{\Far}.$ In order to understand the evolution of the 
$\Lie_{\mathcal{Z}}^I \Far,$ we of course need to understand the nature of the inhomogeneous terms appearing in equations of variation that they satisfy. The next proposition provides detailed expressions for these inhomogeneities.

\begin{proposition} \label{P:Inhomogeneousterms}
	If $\Far_{\mu \nu}$ is a solution to the MBI system \eqref{E:modifieddFis0summary} - \eqref{E:HmodifieddMis0summary}, then
	$\dot{\Far}_{\mu \nu} \eqdef \Lie_{\mathcal{Z}}^I \Far_{\mu \nu}$ is a solution to the equations of variation 
	\eqref{E:EOVBianchi} - \eqref{E:EOVMBI} with inhomogeneous terms given by
	
	\begin{subequations}
	\begin{align}
		\mathfrak{J}_{\lambda \mu \nu}^{(I)} & = 0, && (\lambda, \mu, \nu = 0,1,2,3), 
			\label{E:MBIInhomogeneoustermsJvanish} \\
		\mathfrak{I}_{(I)}^{\nu} & = H_{\triangle}^{\mu \nu \kappa \lambda} \nabla_{\mu} \Big(\Lie_{\mathcal{Z}}^I \Far_{\kappa 	
			\lambda}\Big)
			-	\Liemod_{\mathcal{Z}}^I \Big(H_{\triangle}^{\mu \nu \kappa \lambda} \nabla_{\mu} \Far_{\kappa \lambda} \Big), && 
			(\nu = 0,1,2,3). \label{E:MBIInhomogenoustermsI}
	\end{align}
	\end{subequations}
	In the above formula, the tensorfield $H_{\triangle}^{\mu \nu \kappa \lambda},$ which depends quadratically on $\Far,$ is 
	defined in \eqref{E:Htriangledef}, while the iterated modified Lie derivatives $\Liemod_{\mathcal{Z}}^I$ are defined in 
	Section \ref{S:ConformalKilling}.

\end{proposition}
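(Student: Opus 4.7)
My plan is to prove the two assertions separately, with the first being essentially immediate and the second reducing to Lemma \ref{L:LiemodZLiemodMaxwellCommutator} after an algebraic splitting of $H^{\mu\nu\kappa\lambda}$.

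For the vanishing of $\mathfrak{J}^{(I)}_{\lambda\mu\nu}$, I would argue by induction on $|I|$. The base case $|I|=0$ is exactly equation \eqref{E:modifieddFis0summary}. For the inductive step, the key observation is that \eqref{E:modifieddFis0summary} is just the coordinate expression of $d\Far = 0$, and the exterior derivative $d$ commutes with the Lie derivative $\Lie_Z$ along any vectorfield $Z$ (Cartan's magic formula, $\Lie_Z = d \circ i_Z + i_Z \circ d$, yields $[\Lie_Z, d] = 0$ on differential forms). Hence if $d(\Lie_{\mathcal{Z}}^{I'} \Far) = 0$, then $d(\Lie_Z \Lie_{\mathcal{Z}}^{I'} \Far) = \Lie_Z d(\Lie_{\mathcal{Z}}^{I'} \Far) = 0$. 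Written in coordinates using \eqref{E:Torsionfree} and Lemma \ref{L:Liecommuteswithcoordinatederivatives} to pass between $\partial$'s and $\nabla$'s, this is exactly \eqref{E:MBIInhomogeneoustermsJvanish}.

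For the second assertion, I would use the decomposition \eqref{E:Hdef} to write
\begin{align*}
H^{\mu\nu\kappa\lambda}\nabla_{\mu}\Far_{\kappa\lambda}
= \tfrac{1}{2}\big[(g^{-1})^{\mu\kappa}(g^{-1})^{\nu\lambda} - (g^{-1})^{\mu\lambda}(g^{-1})^{\nu\kappa}\big]\nabla_{\mu}\Far_{\kappa\lambda}
+ H_{\triangle}^{\mu\nu\kappa\lambda}\nabla_{\mu}\Far_{\kappa\lambda},
\end{align*}
which is equal to $0$ because $\Far$ solves \eqref{E:HmodifieddMis0summary}. Applying $\Liemod_{\mathcal{Z}}^I$ to this identity and invoking Lemma \ref{L:LiemodZLiemodMaxwellCommutator} (which precisely says that $\Liemod_{\mathcal{Z}}^I$ commutes with the Maxwell principal operator, producing $\nabla_\mu \Lie_{\mathcal{Z}}^I \Far_{\kappa\lambda}$ in place of $\Lie_{\mathcal{Z}}^I \nabla_\mu \Far_{\kappa\lambda}$, with the scalar shift $2c_Z$ exactly cancelling the $-c_Z$'s produced by the two factors of $g^{-1}$), I obtain
\begin{align*}
\tfrac{1}{2}\big[(g^{-1})^{\mu\kappa}(g^{-1})^{\nu\lambda} - (g^{-1})^{\mu\lambda}(g^{-1})^{\nu\kappa}\big]\nabla_{\mu}\Lie_{\mathcal{Z}}^I\Far_{\kappa\lambda}
= - \Liemod_{\mathcal{Z}}^I\big(H_{\triangle}^{\mu\nu\kappa\lambda}\nabla_{\mu}\Far_{\kappa\lambda}\big).
\end{align*}
Adding and subtracting $H_{\triangle}^{\mu\nu\kappa\lambda}\nabla_{\mu}\Lie_{\mathcal{Z}}^I\Far_{\kappa\lambda}$ on the left and re-assembling the full $H^{\mu\nu\kappa\lambda}$ using \eqref{E:Hdef} yields the desired identity $H^{\mu\nu\kappa\lambda}\nabla_{\mu}\dot{\Far}_{\kappa\lambda} = \mathfrak{I}^{\nu}_{(I)}$ with $\mathfrak{I}^\nu_{(I)}$ as in \eqref{E:MBIInhomogenoustermsI}.

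The only conceptual subtlety — and the reason \emph{modified} rather than ordinary Lie derivatives must appear — lies in the commutation lemma already proved (Lemma \ref{L:LiemodZLiemodMaxwellCommutator}); with the bookkeeping afforded by that lemma, the proof is purely algebraic and requires no further estimates. I would remark that the result of applying $\Liemod_{\mathcal{Z}}^I$ (as opposed to $\Lie_{\mathcal{Z}}^I$) to the original equation is what allows the full quasilinear principal operator $H^{\mu\nu\kappa\lambda}\nabla_\mu$ — not merely its Maxwell-Maxwell part — to reappear on the left-hand side when the two error contributions are recombined; this clean form of $\mathfrak{I}^{\nu}_{(I)}$ as the commutator $[H_{\triangle}^{\mu\nu\kappa\lambda}\nabla_\mu, \Liemod_{\mathcal{Z}}^I]\Far_{\kappa\lambda}$ is what will later permit the sharp null-condition estimates in Section \ref{S:NullFormEstimates}.
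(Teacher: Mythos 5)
Your proposal is correct and follows essentially the same argument as the paper: the first identity comes from commuting iterated Lie derivatives through the Bianchi equation (the paper does this directly via Lemma \ref{L:Liecommuteswithcoordinatederivatives} rather than through Cartan's formula, but the two are equivalent one-line arguments), and the second identity is obtained exactly as you describe, by applying $\Liemod_{\mathcal{Z}}^I$ to $H^{\mu\nu\kappa\lambda}\nabla_\mu\Far_{\kappa\lambda}=0$, invoking Lemma \ref{L:LiemodZLiemodMaxwellCommutator} on the Maxwell part, and reassembling $H^{\mu\nu\kappa\lambda}$ via \eqref{E:Hdef}. Your remark about the role of the $2c_Z$ shift in cancelling the two $-c_Z$ contributions from the factors of $g^{-1}$ is the correct explanation for why the \emph{modified} Lie derivative is the right operator here.
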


\begin{proof}
	Recall equation \eqref{E:modifieddFis0summary}, which states 
	that $\Far$ is a solution to $\nabla_{[\kappa} \Far_{\mu \nu]} = 0,$ 
	where $[\cdots]$ denotes antisymmetrization. Using
	property \eqref{E:Liecommuteswithcoordinatederivatives}, it therefore follows that
	\begin{align}
		0 = \Lie_{\mathcal{Z}}^I \nabla_{[\kappa} \Far_{\mu \nu]} = \nabla_{[\kappa} \Lie_{\mathcal{Z}}^I \Far_{\mu \nu]}.
	\end{align}
	This proves \eqref{E:MBIInhomogeneoustermsJvanish}.
	
	To prove \eqref{E:MBIInhomogenoustermsI}, we first recall equation \eqref{E:HmodifieddMis0summary}, 
	which states that $\Far$ is a solution to
	
	\begin{align} \label{E:MBIequationagain}
				H^{\mu \nu \kappa \lambda} \nabla_{\mu} \Far_{\kappa \lambda} \eqdef
				\bigg\lbrace \frac{1}{2}\big[(g^{-1})^{\mu \kappa} (g^{-1})^{\nu \lambda} 
					- (g^{-1})^{\mu \lambda} (g^{-1})^{\nu \kappa}\big] 
				+ H_{\triangle}^{\mu \nu \kappa \lambda} \bigg\rbrace \nabla_{\mu} \Far_{\kappa \lambda} = 0.
	\end{align}
	Applying $\Liemod_{\mathcal{Z}}^I$ to \eqref{E:MBIequationagain}
	and using \eqref{E:LiemodZLiemodMaxwellCommutator}, we conclude that
	
	\begin{align}
		\frac{1}{2} \big[(g^{-1})^{\mu \kappa} (g^{-1})^{\nu \lambda} - (g^{-1})^{\mu \lambda} (g^{-1})^{\nu \kappa}\big] 
			\nabla_{\mu} \Lie_{\mathcal{Z}}^I + \Liemod_{\mathcal{Z}}^I (H_{\triangle}^{\mu \nu \kappa \lambda} \nabla_{\mu} 
			\Far_{\kappa \lambda}) = 0.
	\end{align}
	It therefore follows that
	
	\begin{align}
			H^{\mu \nu \kappa \lambda} \nabla_{\mu} \Lie_{\mathcal{Z}}^I \Far_{\kappa \lambda}
				= \bigg\lbrace \frac{1}{2}\big[(g^{-1})^{\mu \kappa} (g^{-1})^{\nu \lambda}
					- (g^{-1})^{\mu \lambda} (g^{-1})^{\nu \kappa}\big]
				+ H_{\triangle}^{\mu \nu \kappa \lambda} \bigg\rbrace \nabla_{\mu} \Lie_{\mathcal{Z}}^I \Far_{\kappa \lambda}
				= H_{\triangle}^{\mu \nu \kappa \lambda} \nabla_{\mu} \Big(\Lie_{\mathcal{Z}}^I \Far_{\kappa \lambda}\Big)
				-	\Liemod_{\mathcal{Z}}^I \Big(H_{\triangle}^{\mu \nu \kappa \lambda} \nabla_{\mu} \Far_{\kappa \lambda} \Big),
	\end{align}
	which proves \eqref{E:MBIInhomogenoustermsI}.
\end{proof}

\subsection{The canonical stress $\Stress_{\ \nu}^{\mu}$} \label{SS:Stress}

Although the energy momentum tensor \eqref{E:electromagnetictensorupper} is useful for estimating
$\Far_{\mu \nu},$ it is not quite the right object for estimating its derivatives $\Lie_{\mathcal{Z}}^I \Far.$
As explained in detail in Section \ref{SSS:NonlinearAnalysis}, the reason is that the $\Lie_{\mathcal{Z}}^I \Far$ are solutions 
\emph{not to the MBI system itself}, but rather to the to the equations of variation \eqref{E:EOVBianchi} - \eqref{E:EOVMBI}, whose linearized Lagrangian, which is defined in \eqref{E:LinearizedLagrangian} below, depends on the background $\Far.$ Nonetheless, we will be able to construct the \emph{canonical stress}, which encodes the approximate conservation laws satisfied by solutions to the equations of variation, and which we denote by $\Stress_{\ \nu}^{\mu}.$ We remark that a general framework concerning the properties of the canonical stress was developed by Christodoulou in \cite{dC2000}; here, we only recall its basic features. As we will see, $\Stress_{\ \nu}^{\mu}$ has the following two key properties:

\begin{itemize}
	\item Its divergence is lower-order (in terms of the number of derivatives). 
	\item It has positivity properties related to, but in general distinct from those of 
	the energy-momentum tensor $\EMT_{\ \nu}^{\mu}.$
\end{itemize}
The first property is explained in detail at the end of this section, while the second is discussed in limited fashion
in Section \ref{SS:Positivity}.

In order to understand that origin of the canonical stress, let us first define the \emph{linearized Lagrangian} $\dot{\mathscr{L}},$ which is, despite its name, a quadratic form in the variations $\dot{\Far}$ with coefficients depending on the background $\Far.$

\begin{definition}
	The linearized Lagrangian $\dot{\mathscr{L}}[\dot{\Far};\Far]$ corresponding to the Lagrangian $\mathscr{L}[\cdot]$ 
	and the background $\Far$ is defined as follows:

\begin{align} \label{E:LinearizedLagrangian}
	\dot{\mathscr{L}} = \dot{\mathscr{L}}[\dot{\Far};\Far] 
		\eqdef \frac{1}{2} \frac{\partial^2 \Ldual[\Far]}{\partial \Far_{\zeta \eta} \partial \Far_{\kappa 
		\lambda}} \dot{\Far}_{\zeta \eta} \dot{\Far}_{\kappa \lambda} 
		= - \frac{1}{4}h^{\zeta \eta \kappa \lambda} \dot{\Far}_{\zeta \eta} \dot{\Far}_{\kappa \lambda},
\end{align}
where the $\Far-$dependent tensorfield $h^{\zeta \eta \kappa \lambda}$ is defined in \eqref{E:littlehdef}.
\end{definition}

The significance of $\dot{\mathscr{L}}[\dot{\Far};\Far]$ is that its corresponding equations of motion are the equations of variation. More specifically, if we consider $\dot{\Far}$ to be the unknowns, then the 
\emph{principal part} of the Euler-Lagrange equations (assuming the stationarity of the linearized action
$\mathcal{A}_{\mathfrak{C}}[\dot{\Far}] \eqdef \int_{\mathfrak{C} \Subset M} \dot{\mathscr{L}}[\dot{\Far};\Far] \, d \mu_g,$
under closed variations of $\dot{\Far}$) corresponding to $\dot{\mathscr{L}}[\dot{\Far};\Far]$ are the linearization (around $\Far$) of the Euler-Lagrange equations \eqref{E:Euler-LagrangeRewritten} corresponding to $\mathscr{L}.$ 

\begin{definition}
Given a linearized Lagrangian $\dot{\mathscr{L}} = \dot{\mathscr{L}}[\dot{\Far};\Far],$ we define the 
corresponding canonical stress $\Stress_{\ \nu}^{\mu}$ as follows:

\begin{align}
	\Stress_{\ \nu}^{\mu} \eqdef - 2\frac{\partial \dot{\mathscr{L}}}{\partial \dot{\Far}_{\mu \zeta}}\dot{\Far}_{\nu \zeta}
		+ \delta_{\nu}^{\mu} \dot{\mathscr{L}} 
		=	h^{\mu \zeta \kappa \lambda} \dot{\Far}_{\kappa \lambda} \dot{\Far}_{\nu \zeta}
			- \frac{1}{4} \delta_{\nu}^{\mu} h^{\zeta \eta \kappa \lambda} \dot{\Far}_{\zeta \eta} \dot{\Far}_{\kappa \lambda},
\end{align}
where the tensorfield $h$ is defined in \eqref{E:littlehdef}.
\end{definition}

In Section \ref{SS:MBISummary}, we modified the tensorfield $h$ corresponding to the MBI system, obtaining a new tensorfield $H.$ Therefore, for the purposes in this article, it is convenient to construct the MBI canonical stress using $H:$

\begin{align} \label{E:widetildeTHdef}
	\Stress_{\ \nu}^{\mu} & \eqdef H^{\mu \zeta \kappa \lambda} \dot{\Far}_{\kappa \lambda} \dot{\Far}_{\nu \zeta}
			- \frac{1}{4} \delta_{\nu}^{\mu} H^{\zeta \eta \kappa \lambda} \dot{\Far}_{\zeta \eta} \dot{\Far}_{\kappa \lambda},
\end{align}
where $H$ is defined in \eqref{E:Hdef}. \textbf{For the remainder of the article, the term ``canonical stress'' is used to refer to the tensor defined in \eqref{E:widetildeTHdef}}. Note that $\Stress_{\ \nu}^{\mu}$ depends quadratically on $\dot{\Far},$ and it also depends on the background $\Far.$

In general, $\Stress_{\mu \nu}$ is not symmetric, nor is $\Stress_{\ \nu}^{\mu}$ traceless. However, in the case of the MBI system, it can be checked that $\Stress_{\ \nu}^{\mu}$ is in fact traceless. More specifically, using \eqref{E:Hdef}, we compute that

\begin{align} \label{E:Hstress}
	\Stress_{\mu \nu} & = \overbrace{\dot{\Far}_{\mu}^{\ \zeta} \dot{\Far}_{\nu \zeta} - \frac{1}{4} g_{\mu \nu}
		\dot{\Far}_{\zeta \eta}\dot{\Far}^{\zeta \eta}}^{\mbox{linear Maxwell-Maxwell terms}} 
		+ \frac{1}{2}\ell_{(MBI)}^{-2} \Big\lbrace - \Far_{\mu}^{\ \zeta}\dot{\Far}_{\nu \zeta} \Far^{\kappa 
			\lambda}\dot{\Far}_{\kappa 
			\lambda} + \frac{1}{4} g_{\mu \nu} (\Far^{\kappa \lambda}\dot{\Far}_{\kappa \lambda})^2 \Big\rbrace   \\
		& \ \ \ + \frac{1}{2} (1 + \Farinvariant_{(2)}^2 \ell_{(MBI)}^{-2})
			\Big\lbrace - \Fardual_{\mu}^{\ \zeta}\dot{\Far}_{\nu \zeta} \Fardual^{\kappa \lambda}\dot{\Far}_{\kappa \lambda} 
			+ \frac{1}{4} g_{\mu \nu} (\Fardual^{\kappa \lambda}\dot{\Far}_{\kappa \lambda})^2 \Big\rbrace \notag \\
		& \ \ \ + \frac{1}{2} \Farinvariant_{(2)} \ell_{(MBI)}^{-2} \Big\lbrace \Far_{\mu}^{\ \zeta}\dot{\Far}_{\nu \zeta} 
			\Fardual^{\kappa \lambda}\dot{\Far}_{\kappa \lambda} - \frac{1}{4} g_{\mu \nu} \Far^{\zeta \eta}\dot{\Far}_{\zeta \eta} 
			\Fardual^{\kappa \lambda}\dot{\Far}_{\kappa \lambda} \Big\rbrace \notag \\
		& \ \ \ + \frac{1}{2} \Farinvariant_{(2)} \ell_{(MBI)}^{-2} \Big\lbrace \Fardual_{\mu}^{\ \zeta}\dot{\Far}_{\nu \zeta} 
			\Far^{\kappa \lambda}\dot{\Far}_{\kappa \lambda} - \frac{1}{4} g_{\mu \nu} \Far^{\zeta \eta}\dot{\Far}_{\zeta \eta} 
			\Fardual^{\kappa \lambda}\dot{\Far}_{\kappa \lambda}\Big\rbrace, \notag
\end{align}
from which it also follows that

\begin{align} \label{E:widetildeTtracefree}
	\Stress_{\ \kappa}^{\kappa} = 0.
\end{align}
We also note that the first two terms on the right-hand side of \eqref{E:Hstress} are the components of the energy-momentum tensor for the linear Maxwell-Maxwell equations (in the unknown $\dot{\Far}$).

The failure of symmetry of $\Stress_{\mu \nu}$ is captured by the following expression:

\begin{align}  \label{E:widetildeTantisymmetricpart}
	\Stress_{\mu \nu} - \Stress_{\nu \mu} 
		& = \frac{1}{2} \ell_{(MBI)}^{-2} \Far^{\kappa \lambda}\dot{\Far}_{\kappa \lambda} 
		\Big\lbrace \Far_{\nu}^{\ \zeta}\dot{\Far}_{\mu \zeta} 
			- \Far_{\mu}^{\ \zeta}\dot{\Far}_{\nu \zeta} \Big\rbrace 
			+ \frac{1}{2}(1 + \Farinvariant_{(2)}^2 \ell_{(MBI)}^{-2})
			\Fardual^{\kappa \lambda}\dot{\Far}_{\kappa \lambda} \Big\lbrace 
			\Fardual_{\nu}^{\ \zeta}\dot{\Far}_{\mu \zeta} - \Fardual_{\mu}^{\ \zeta}\dot{\Far}_{\nu \zeta} \Big\rbrace 
			\\
		& \ \ \ + \frac{1}{2}\Farinvariant_{(2)} \ell_{(MBI)}^{-2} \Fardual^{\kappa \lambda}\dot{\Far}_{\kappa \lambda}
			\Big\lbrace \Far_{\mu}^{\ \zeta}\dot{\Far}_{\nu \zeta} 
			- \Far_{\nu}^{\ \zeta}\dot{\Far}_{\mu \zeta} \Big\rbrace 
			+ \frac{1}{2} \Farinvariant_{(2)} \ell_{(MBI)}^{-2} \Far^{\kappa \lambda}\dot{\Far}_{\kappa \lambda}
			\Big\lbrace \Fardual_{\mu}^{\ \zeta}\dot{\Far}_{\nu \zeta}  
				- \Fardual_{\nu}^{\ \zeta}\dot{\Far}_{\mu \zeta} \Big\rbrace. \notag
\end{align}

We conclude this section by proving a lemma that illustrates the first key property of $\Stress_{\ \nu}^{\mu},$ namely that its
divergence is lower order. 

\begin{lemma} \label{L:divergenceofwidetildeT}
Let $\dot{\Far}$ be a solution to the equations of variation \eqref{E:EOVBianchi} - \eqref{E:EOVMBI} 
corresponding to the background $\Far,$ and let $\Stress_{\ \nu}^{\mu}$ be the tensor 
defined in \eqref{E:Hstress}. Let $\mathfrak{J}_{\lambda \mu \nu},$ $\mathfrak{I}^{\nu}$ be the inhomogeneous terms on the right-hand sides of \eqref{E:EOVBianchi} - \eqref{E:EOVMBI}. Then

\begin{align} \label{E:divergenceofwidetildeT}
	\nabla_{\mu} \Stress_{\ \nu}^{\mu} & = - \frac{1}{2} H^{\zeta \eta \kappa \lambda} \dot{\Far}_{\zeta \eta} 
		\mathfrak{J}_{\nu \kappa \lambda}
		+ \dot{\Far}_{\nu \eta} \mathfrak{I}^{\eta}
		+ (\nabla_{\mu}H^{\mu \zeta \kappa \lambda}) \dot{\Far}_{\kappa \lambda} \dot{\Far}_{\nu \zeta}
		- \frac{1}{4} (\nabla_{\nu}H^{\zeta \eta \kappa \lambda}) \dot{\Far}_{\zeta \eta} \dot{\Far}_{\kappa \lambda}.
\end{align}

\end{lemma}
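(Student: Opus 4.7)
The plan is to expand $\nabla_\mu \Stress_{\ \nu}^\mu$ by the Leibniz rule and then repeatedly exploit the symmetries \eqref{E:hminussignproperty1}--\eqref{E:hsymmetryproperty} of $H$, together with the equations of variation \eqref{E:EOVBianchi}--\eqref{E:EOVMBI}, to reduce everything to the four terms on the right-hand side. Differentiating the defining formula \eqref{E:widetildeTHdef} produces six terms: two in which $\nabla$ hits $H$ (which are precisely the last two terms of the claimed identity), and four in which $\nabla$ hits one of the two $\dot{\Far}$ factors.

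Among the four ``derivative-on-$\dot{\Far}$'' terms, one is $H^{\mu\zeta\kappa\lambda}(\nabla_\mu \dot{\Far}_{\kappa\lambda})\dot{\Far}_{\nu\zeta}$. By \eqref{E:EOVMBI} this equals $\mathfrak{I}^\zeta \dot{\Far}_{\nu\zeta}$, yielding the $\dot{\Far}_{\nu\eta}\mathfrak{I}^\eta$ contribution. Next, the two terms coming from $\nabla_\mu$ hitting a $\dot{\Far}$ inside the trace piece $-\tfrac14 \delta_\nu^\mu H^{\zeta\eta\kappa\lambda}\dot{\Far}_{\zeta\eta}\dot{\Far}_{\kappa\lambda}$ can be combined using the symmetry $H^{\zeta\eta\kappa\lambda}=H^{\kappa\lambda\zeta\eta}$ from \eqref{E:hsymmetryproperty}, giving a single term of the form
\[
-\tfrac{1}{2}\, H^{\zeta\eta\kappa\lambda}\,\dot{\Far}_{\kappa\lambda}\,\nabla_\nu \dot{\Far}_{\zeta\eta}.
\]
The only remaining derivative-on-$\dot{\Far}$ term is $H^{\mu\zeta\kappa\lambda}\dot{\Far}_{\kappa\lambda}\nabla_\mu\dot{\Far}_{\nu\zeta}$; this is the term that must be reshaped into a Bianchi-type cyclic sum.

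The key manipulation is as follows. Relabeling the dummy pair $\mu\leftrightarrow\zeta$ and using both \eqref{E:hminussignproperty1} and the antisymmetry of $\dot{\Far}$, one obtains the identity $H^{\mu\zeta\kappa\lambda}\dot{\Far}_{\kappa\lambda}\nabla_\mu\dot{\Far}_{\nu\zeta}=H^{\mu\zeta\kappa\lambda}\dot{\Far}_{\kappa\lambda}\nabla_\zeta\dot{\Far}_{\mu\nu}$. Averaging these two expressions and invoking \eqref{E:EOVBianchi} in the form
\[
\nabla_\mu \dot{\Far}_{\nu\zeta}+\nabla_\zeta \dot{\Far}_{\mu\nu}=\nabla_\nu\dot{\Far}_{\mu\zeta}-\mathfrak{J}_{\mu\zeta\nu}
\]
(derived from the cyclic identity by using the antisymmetry of $\dot{\Far}$) converts this term into
\[
\tfrac{1}{2}H^{\mu\zeta\kappa\lambda}\dot{\Far}_{\kappa\lambda}\nabla_\nu\dot{\Far}_{\mu\zeta}-\tfrac{1}{2}H^{\mu\zeta\kappa\lambda}\dot{\Far}_{\kappa\lambda}\mathfrak{J}_{\mu\zeta\nu}.
\]
The first piece is (up to renaming dummy indices) precisely $+\tfrac{1}{2}H^{\zeta\eta\kappa\lambda}\dot{\Far}_{\kappa\lambda}\nabla_\nu\dot{\Far}_{\zeta\eta}$, which exactly cancels the leftover $\nabla_\nu\dot{\Far}$ term from the previous paragraph. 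The second piece, after applying $H^{\mu\zeta\kappa\lambda}=H^{\kappa\lambda\mu\zeta}$ and renaming indices, becomes $-\tfrac{1}{2}H^{\zeta\eta\kappa\lambda}\dot{\Far}_{\zeta\eta}\mathfrak{J}_{\kappa\lambda\nu}$; finally, since the cyclic expression $\nabla_{[\lambda}\dot{\Far}_{\mu\nu]}$ is totally antisymmetric in $(\lambda,\mu,\nu)$, so is $\mathfrak{J}_{\lambda\mu\nu}$, and cyclic permutation gives $\mathfrak{J}_{\kappa\lambda\nu}=\mathfrak{J}_{\nu\kappa\lambda}$, producing the first term on the right-hand side of \eqref{E:divergenceofwidetildeT}.

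The only step demanding real care is this last one: correctly tracking signs through the antisymmetrizations in $H$, in $\dot{\Far}$, and in $\mathfrak{J}$ simultaneously, and making sure that the Bianchi trade $\nabla_\mu\dot{\Far}_{\nu\zeta}+\nabla_\zeta\dot{\Far}_{\mu\nu}\mapsto\nabla_\nu\dot{\Far}_{\mu\zeta}-\mathfrak{J}_{\mu\zeta\nu}$ is applied with the correct orientation so that the surviving $\nabla_\nu\dot{\Far}$ contribution cancels rather than reinforces the one coming from the trace piece. Everything else is essentially bookkeeping with the Leibniz rule and the symmetries of $H$.
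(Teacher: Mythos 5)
Your proof is correct, and it is exactly the kind of computation the paper's one-sentence sketch is gesturing at: Leibniz rule on \eqref{E:widetildeTHdef}, then systematic use of the $H$-symmetries \eqref{E:hminussignproperty1}--\eqref{E:hsymmetryproperty}, the equations of variation, and the total antisymmetry of $\mathfrak{J}$ to collapse the derivative-on-$\dot{\Far}$ terms. You have merely filled in the bookkeeping that the paper leaves implicit, and all the index gymnastics (the $\mu\leftrightarrow\zeta$ relabel, the Bianchi trade $\nabla_\mu\dot{\Far}_{\nu\zeta}+\nabla_\zeta\dot{\Far}_{\mu\nu}=\nabla_\nu\dot{\Far}_{\mu\zeta}-\mathfrak{J}_{\mu\zeta\nu}$, and the cyclic permutation $\mathfrak{J}_{\kappa\lambda\nu}=\mathfrak{J}_{\nu\kappa\lambda}$) check out.
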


\begin{proof}
	Lemma \ref{L:divergenceofwidetildeT} follows from using the properties 
	\eqref{E:hminussignproperty1} - \eqref{E:hsymmetryproperty}, 
	which are satisfied by $H^{\mu \nu \kappa \lambda},$ and simple computations.
\end{proof}

\subsection{Positivity properties of $\Stress_{\ \nu}^{\mu}$} \label{SS:Positivity}

The canonical stress has positivity properties that are analogous to, but distinct from those of the energy-momentum tensor.
For a complete discussion of these properties, which are related to the geometry of the equations\footnote{In general, the geometry of the electromagnetic equations (i.e., the characteristic cones) is distinct from the geometry of spacetime; however, in the case of Maxwell-Maxwell theory, the two geometries coincide.}, see \cite{dC2000}. Here, we only discuss the positivity properties that are relevant to our small-data global existence proof and our sketch of a large-data local existence proof. 
For purposes of constructing an energy suitable for the global existence proof (see Section \ref{S:NormsandEnergies}), the main quantity of interest will be $\Stress(\xi^{(0)}, \overline{K}) \eqdef \Stress_{\ \lambda}^{\kappa}\xi_{\kappa}^{(0)} \overline{K}^{\lambda},$ where $\xi^{(0)}$ is the $g-$dual of the timelike translation Killing vectorfield $T_{(0)}$ defined in \eqref{E:Translationsetdef}, and $\overline{K} = \frac{1}{2}\big\lbrace(1+s^2) L + (1+q^2) \uL \big\rbrace$ is the vectorfield defined in \eqref{E:Kdef}. As we will see in the next lemma, in the small-field regime, the resulting expression is positive definite in the null components of the variation $\dot{\Far}.$ However, different components carry different weights. Ultimately, the different weights will translate into the fact that the various null components of a solution $\Far$ and its derivatives have distinct rates of decay; see the global Sobolev inequality (Proposition \ref{P:GlobalSobolev}). For large-data
local existence (see Proposition \ref{P:LocalExistence}), we will use a vectorfield $\Vmult$ (see Proposition \ref{P:LocalExistenceCurrent}) in place of $\overline{K}.$ The reason for this modification is that the positivity of
$\Stress(\xi^{(0)}, \overline{K}) \eqdef \Stress_{\ \lambda}^{\kappa}\xi_{\kappa}^{(0)} \overline{K}^{\lambda}$ may break down in the large-data regime. On the other hand, the vectorfield $\Vmult,$ which is constructed with the aid of the 
\emph{reciprocal Born-Infeld metric} $(b^{-1})^{\mu \nu},$ maintains its positivity in all regimes in which the MBI equations are well-defined. Although $\Vmult$ does not provide good $q,s$ weights for the null components of $\dot{\Far},$ 
a bound of the form $\Stress(\xi^{(0)}, \Vmult) \geq C |\dot{\Far}|^2$ for some positive constant $C$ is sufficient to prove local existence.

We begin with a lemma that addresses the positivity properties of $\Stress(\xi^{(0)},\overline{K}).$

\begin{lemma} \label{L:CanonicalStressErrorTermExpansion}
Let $\Far, \dot{\Far}$ be arbitrary two-forms, and let $\dot{\ualpha}, \dot{\alpha}, \dot{\rho}, \dot{\sigma},$
be the null components of $\dot{\Far}.$ Let $\Stress$ be the canonical stress \eqref{E:Hstress}
associated to $\Far, \dot{\Far},$ and let $T_{(0)},$ $\overline{K} = \frac{1}{2}\big\lbrace(1+s^2) L + (1+q^2) \uL \big\rbrace$ be the conformal Killing fields defined in \eqref{E:Translationsdef} and \eqref{E:Kdef} respectively. Let 
$\xi^{(0)}$ be the $g-$dual of $T_{(0)}.$ There exists an $\epsilon > 0$ such that if $|\Far| < \epsilon,$ then

\begin{align} \label{E:CanonicalStressErrorTermExpansion}
	\Stress(\xi^{(0)},\overline{K}) & = \Stress_{\mu \nu}T_{(0)}^{\mu}\overline{K}^{\nu}
		= \frac{1}{2}\Big\lbrace(1 + q^2)|\dot{\ualpha}|^2 + 
		(1+s^2)|\dot{\alpha}|^2 + (2 + q^2 + s^2)(\dot{\rho}^2 + \dot{\sigma}^2) \Big\rbrace \\
	& \ \ + |\dot{\ualpha}|^2 O\Big((1 + s)^2|\Far|_{\mathcal{L}\mathcal{U}}^2 \Big)
		+ (1 + s)^2(|\dot{\alpha}|^2 + \dot{\rho}^2 + \dot{\sigma}^2) O(|\Far|^2) \notag \\
	& \ \ + (1 + |q|)^2(|\dot{\ualpha}|^2 + |\dot{\alpha}|^2 + \dot{\rho}^2 + \dot{\sigma}^2) O(|\Far|^2) \notag. 
\end{align}

In the above expression, $O(U)$ denotes a quantity which is $\leq C U$ in magnitude for some positive constant $C.$ 

\end{lemma}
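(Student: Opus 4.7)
The plan is to start from the explicit expression \eqref{E:Hstress} for $\Stress_{\mu \nu}$, which displays the canonical stress as a Maxwell-Maxwell principal part plus four nonlinear correction terms, each weighted by $\ell_{(MBI)}^{-2}$ or $\Farinvariant_{(2)} \ell_{(MBI)}^{-2}$ with at most quadratic dependence on $\Far$. Choosing $\epsilon > 0$ small enough that $1 + \Farinvariant_{(1)} - \Farinvariant_{(2)}^2 \geq 1/2$, say, keeps these coefficients uniformly bounded. I would then compute $\Stress_{\mu \nu} T_{(0)}^\mu \overline{K}^\nu$ by expanding the contraction in the null frame, using $T_{(0)} = \tfrac{1}{2}(\uL + L)$ and, via \eqref{E:MorawetznulldecompL}--\eqref{E:MorawetznulldecompA}, $\overline{K} = \tfrac{1}{2}\{(1+s^2)L + (1+q^2)\uL\}$.

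First I would handle the Maxwell-Maxwell principal piece $\dot{\Far}_\mu{}^\zeta \dot{\Far}_{\nu\zeta} - \tfrac{1}{4} g_{\mu \nu} \dot{\Far}_{\kappa \lambda} \dot{\Far}^{\kappa \lambda}$. A direct calculation from Definition \ref{D:null} (together with the formula \eqref{E:nullform1} for $\dot{\Far}_{\kappa \lambda} \dot{\Far}^{\kappa \lambda}$) gives the standard identities for the Maxwell energy-momentum components: contraction with $\uL \otimes \uL$ produces $|\dot{\ualpha}|^2$, contraction with $L \otimes L$ produces $|\dot{\alpha}|^2$, and the mixed $\uL \otimes L$ contraction produces $\dot{\rho}^2 + \dot{\sigma}^2$ (using $g(L,\uL)=-2$ to cancel the trace term). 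Assembling these with the null expansions of $T_{(0)}$ and $\overline{K}$ produces precisely the main positive-definite expression $\tfrac{1}{2}\{(1+q^2)|\dot{\ualpha}|^2 + (1+s^2)|\dot{\alpha}|^2 + (2+q^2+s^2)(\dot{\rho}^2+\dot{\sigma}^2)\}$.

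Next I would bound the four nonlinear correction terms. The key tool is Lemma \ref{L:NullForms}, which bounds both $|\Far^{\kappa \lambda} \dot{\Far}_{\kappa \lambda}|$ and $|\Fardual^{\kappa \lambda} \dot{\Far}_{\kappa \lambda}|$ by $|\Far| |\dot{\Far}|_{\mathcal{L U}} + |\Far|_{\mathcal{L U}} |\dot{\Far}| + |\Far|_{\mathcal{T T}} |\dot{\Far}|_{\mathcal{T T}}$. For the quartic expressions of the form $\Far_\mu{}^\zeta \dot{\Far}_{\nu \zeta} \cdot \Far^{\kappa \lambda} \dot{\Far}_{\kappa \lambda}$ (and their $\Fardual$ analogues), I would distribute $\mu, \nu$ over the null basis: whenever $\mu$ or $\nu$ is set to $L$, only the $\mathcal{L U}$-type contractions of $\Far$ appear in the first factor. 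The weights from $\overline{K}$ are then distributed accordingly. The contributions weighted by $(1+q^2)$ are majorized by $(1+|q|)^2 |\Far|^2$ times $|\dot{\Far}|^2$, a generic bound that is already in the desired form. The same is true of the terms where a $(1+s^2)$ weight is paired with $|\dot{\alpha}|^2, \dot{\rho}^2, \dot{\sigma}^2$. The subtle piece is the $(1+s^2)|\dot{\ualpha}|^2$ contribution, which arises only when the $(1+s^2)L^\nu$ factor of $\overline{K}$ meets an $L$-contraction on a $\Far$ factor in the quartic expression, so the surviving $\Far$ components belong to $\mathcal{L U}$; combined with the $|\Far|_{\mathcal{L U}} |\dot{\Far}|$ branch of the Lemma \ref{L:NullForms} estimate, this yields the improved error $|\dot{\ualpha}|^2 \, O\big((1+s)^2 |\Far|_{\mathcal{L U}}^2\big)$.

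The main obstacle is this last piece of bookkeeping. Getting $|\Far|_{\mathcal{L U}}^2$, rather than the generic $|\Far|^2$, multiplying the $(1+s)^2|\dot{\ualpha}|^2$ error is essential (since $|\dot{\ualpha}|^2$ is the worst-decaying null component) and requires tracking carefully which null-frame contractions of $\Far$ can actually appear alongside every factor of $L^\nu$ from $\overline{K}$, and then using the $|\Far|_{\mathcal{L U}} |\dot{\Far}|$ branch of Lemma \ref{L:NullForms} rather than the isotropic branch. The remaining estimates reduce to Cauchy-Schwarz and the bound $\ell_{(MBI)}^{-2} \leq 2$ guaranteed by smallness of $|\Far|$.
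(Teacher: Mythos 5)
Your proposal follows essentially the same route as the paper: split $\Stress$ into its Maxwell part (computed via the null-frame identities for the $\uL\uL$, $LL$, and $\uL L$ contractions) plus the eight quadratic-in-$\Far$ corrections, and control the latter with Lemma \ref{L:NullForms} together with the null decomposition of $\overline{K}$. One small correction to your bookkeeping: in the non-$g_{\mu \nu}$ error terms the $(1+s^2)L^{\nu}$ piece of $\overline{K}$ contracts with $\dot{\Far}_{\nu \zeta}$ (producing $|\dot{\Far}|_{\mathcal{L}\mathcal{U}}$), not with a factor of $\Far$; the improved coefficient $|\Far|_{\mathcal{L}\mathcal{U}}^2$ on the $(1+s)^2|\dot{\ualpha}|^2$ error comes entirely from the $|\Far|_{\mathcal{L}\mathcal{U}}|\dot{\Far}|$ branch of the null-form bound (which you do invoke), applied to $\mathcal{Q}_{(i)}(\Far,\dot{\Far})$ and followed by Cauchy--Schwarz.
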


\begin{remark}
	As we shall see, in the small-solution regime, the dominant term on the right-hand side of 
	\eqref{E:CanonicalStressErrorTermExpansion} is $\frac{1}{2}\Big\lbrace(1 + q^2)|\dot{\ualpha}|^2 + 
	(1+s^2)|\dot{\alpha}|^2 + (2 + q^2 + s^2)(\dot{\rho}^2 + \dot{\sigma}^2) \Big\rbrace.$ This motivates definition 
	\eqref{E:weightedpointwisenorm} below. Furthermore, at first sight, one might worry that the $(1 + s^2)$ factor could cause 
	the ``error terms'' $O(\cdots)$ to become large. This worry is alleviated by Corollary \ref{C:GlobalSobolev}.
\end{remark}

\begin{proof}
	Using \eqref{E:Hstress}, we decompose $\Stress$ into its linear ``Maxwell'' part 
	$\Stress^{(Maxwell)},$ and the remaining ``error terms:''
	\begin{align}
		\Stress_{\mu \nu} & = \Stress_{\mu \nu}^{(Maxwell)} 
			+ \frac{1}{2}\ell_{(MBI)}^{-2} \Big\lbrace - \Far_{\mu}^{\ \zeta}\dot{\Far}_{\nu \zeta} \Far^{\kappa 
			\lambda}\dot{\Far}_{\kappa \lambda} + \frac{1}{4} g_{\mu \nu} (\Far^{\kappa \lambda}\dot{\Far}_{\kappa \lambda})^2 
			\Big\rbrace  \label{E:StressDecomposition} \\
		& \ \ \ + \frac{1}{2}(1 + \Farinvariant_{(2)}^2 \ell_{(MBI)}^{-2})
			\Big\lbrace - \Fardual_{\mu}^{\ \zeta}\dot{\Far}_{\nu \zeta} \Fardual^{\kappa \lambda}\dot{\Far}_{\kappa \lambda} 
			+ \frac{1}{4} g_{\mu \nu} (\Fardual^{\zeta \eta}\dot{\Far}_{\zeta \eta})^2 \Big\rbrace \notag \\
		& \ \ \ + \frac{1}{2}\Farinvariant_{(2)} \ell_{(MBI)}^{-2} \Big\lbrace \Far_{\mu}^{\ \zeta}\dot{\Far}_{\nu \zeta} \Fardual^{\kappa 
			\lambda}\dot{\Far}_{\kappa \lambda} - \frac{1}{4} g_{\mu \nu} \Far^{\zeta \eta}\dot{\Far}_{\zeta \eta} 
			\Fardual^{\kappa \lambda}\dot{\Far}_{\kappa \lambda} \Big\rbrace \notag \\
		& \ \ \ + \frac{1}{2}\Farinvariant_{(2)} \ell_{(MBI)}^{-2} \Big\lbrace \Fardual_{\mu}^{\ \zeta}\dot{\Far}_{\nu \zeta} \Far^{\kappa 
			\lambda}\dot{\Far}_{\kappa \lambda} - \frac{1}{4} g_{\mu \nu} \Far^{\zeta \eta}\dot{\Far}_{\zeta \eta} 
			\Fardual^{\kappa \lambda}\dot{\Far}_{\kappa \lambda} \Big\rbrace, \notag \\
		\Stress_{\mu \nu}^{(Maxwell)} & \eqdef \dot{\Far}_{\mu}^{\ \zeta} \dot{\Far}_{\nu \zeta} - \frac{1}{4} g_{\mu \nu}
			\dot{\Far}_{\zeta \eta}\dot{\Far}^{\zeta \eta}. \label{E:Tmaxwelldef}
	\end{align}
	
	As in the proof of Lemma \ref{L:dominantenergycondition}, leave it as a simple exercise for the reader to check that

\begin{align}
	\Stress^{(Maxwell)}(\uL, \uL) & = |\dot{\ualpha}|^2, 
	&& \Stress^{(Maxwell)}(L,L) = 	|\dot{\alpha}|^2, 
	&& \Stress^{(Maxwell)}(\uL,L) = (\dot{\rho}^2 + \dot{\sigma}^2). \label{E:widetildeTMaxwelluLLcalculations}
\end{align}
Since $\Stress^{(Maxwell)}(T_{(0)},\overline{K}) = \Stress^{(Maxwell)}\big(\frac{1}{2}(L + \uL),(1 + s^2)L + (1 + q^2) \uL \big),$ it easily follows from \eqref{E:widetildeTMaxwelluLLcalculations} that

\begin{align} 
	\Stress^{(Maxwell)}(T_{(0)},\overline{K}) = \frac{1}{2}\Big\lbrace(1 + q^2)|\dot{\ualpha}|^2 + 
		(1+s^2)|\dot{\alpha}|^2 + (2 + q^2 + s^2)(\dot{\rho}^2 + \dot{\sigma}^2) \Big\rbrace,
\end{align}
which gives the principal term (i.e., the term in braces) on the right-hand side of \eqref{E:CanonicalStressErrorTermExpansion}.
Note that in the above formulas, we have abused notation by identifying covectors with their $g-$dual vectors.

For the error terms, we first note that $\ell_{(MBI)}^{-2}$ is an order $1$ factor when $\Far$ is sufficiently small, so that
we may ignore it. With this fact in mind, we then evenly divide the $8$ error terms on the right-hand side of \eqref{E:StressDecomposition} into two classes: those that contain the $g_{\mu \nu}$ factor, and those that do not. For the first class, we note that $|g_{\mu \nu} T_{(0)}^{\mu} \overline{K}^{\nu}| =\frac{1}{2}|2 + q^2 + s^2| \lesssim (1 + s^2).$ Therefore, using Lemma \ref{L:NullForms}, it follows that
\begin{align}
	|g_{\mu \nu} T_{(0)}^{\mu} \overline{K}^{\nu} (\Far^{\kappa \lambda}\dot{\Far}_{\kappa \lambda})^2| 
	& \lesssim (1 + s^2) \Big\lbrace |\Far|_{\mathcal{L}\mathcal{U}}^2 |\dot{\ualpha}|^2
		+ |\Far|^2 (|\dot{\alpha}|^2 + \dot{\rho}^2 + \dot{\sigma}^2) \Big\rbrace,
\end{align}	
and similarly for the other 3 terms of this type.

For the first term belonging to the second class, we use the null decomposition \eqref{E:MorawetznulldecompL} - \eqref{E:MorawetznulldecompA}
of $\overline{K}$ to conclude that
\begin{align}
	|T_{(0)}^{\mu} \overline{K}^{\nu} \Far_{\mu}^{\ \zeta}\dot{\Far}_{\nu \zeta}| \lesssim 
	(1 + q^2) |\Far| |\dot{\Far}| + (1 + s^2)|\Far| |\dot{\Far}|_{\mathcal{L}\mathcal{U}}.
\end{align}
Therefore, by a second application of Lemma \ref{L:NullForms} and the inequality $|ab| \lesssim a^2 + b^2,$ it follows that
\begin{align}
	|T_{(0)}^{\mu}\overline{K}^{\nu}\Far_{\mu}^{\ \zeta}\dot{\Far}_{\nu \zeta} \Far^{\kappa \lambda}\dot{\Far}_{\kappa \lambda}|
	& \lesssim (1 + q^2)|\Far|^2 |\dot{\Far}|^2 + (1 + s^2)|\Far|_{\mathcal{L}\mathcal{U}}^2 |\dot{\Far}|^2
		+ (1 + s^2)|\Far|^2|\dot{\Far}|_{\mathcal{L}\mathcal{U}}^2 + (1 + s^2)|\Far|^2|\dot{\Far}|_{\mathcal{T}\mathcal{T}}^2 \\
	& \lesssim (1 + q^2)|\Far|^2 (|\dot{\ualpha}|^2 + |\dot{\alpha}|^2 + \dot{\rho}^2 + \dot{\sigma}^2) 
		+ (1 + s^2)|\Far|_{\mathcal{L}\mathcal{U}}^2 |\dot{\ualpha}|^2
		+ (1 + s^2)|\Far|^2(|\dot{\alpha}|^2 + \dot{\rho}^2 + \dot{\sigma}^2). \notag
\end{align}
The remaining $3$ terms of this type can be estimated similarly. Inequality \eqref{E:CanonicalStressErrorTermExpansion} thus follows.

\end{proof}

The next proposition provides the fundamental estimate that is needed to deduce the ``large-data'' local existence result 
of Proposition \ref{P:LocalExistence}.

\begin{proposition} \label{P:LocalExistenceCurrent}
	Let $\mathscr{H} \eqdef \lbrace \Far \mid \ell_{(MBI)}[\Far] \eqdef 
	\big(1 + \Farinvariant_{(1)}[\Far] - \Farinvariant_{(2)}^2[\Far] \big)^{1/2} > 0 \rbrace$ 
	denote the interior of the subset of state-space for which the Maxwell-Born-Infeld Lagrangian is well-defined,
	and let $\mathfrak{K}$ be a compact subset of $\mathscr{H}.$ Let $\Far \in \mathfrak{K},$ let $\dot{\Far}$ be any two-form,
	and let $\Stress_{\ \nu}^{\mu}$ be the canonical stress \eqref{E:Hstress} for the MBI equations of variation 
	corresponding to the background $\Far$ and the variation $\dot{\Far}.$ Let $\Vmult$ be the
	vectorfield defined below in \eqref{E:LocalExistenceMultiplier}. Then there exists a constant $C > 0,$ depending only 
	on $\mathfrak{K},$ such that the following inequality holds:
	
	\begin{align} \label{E:LocalExistenceCurrent}
		\Stress(\xi^{(0)},\Vmult) \eqdef \Stress_{\ \nu}^{\mu}\xi_{\mu}^{(0)}\Vmult^{\nu}
		\geq C_{\mathfrak{K}} |\dot{\Far}|^2.
	\end{align}
	In the above expression, the covector $\xi^{(0)}$ is the $g-$dual of the
	the time translation Killing field $T_{(0)},$ which is defined in \eqref{E:Translationsdef}.
\end{proposition}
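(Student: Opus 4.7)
My plan is to exploit the fact that the MBI system, in the hyperbolicity regime $\mathscr{H}$, admits a genuine Lorentzian structure on $T_p^* M$ coming from the absence of birefringence: the characteristic quadratic form $\chi^{\mu \nu}(\xi) \eqdef H^{\mu \kappa \nu \lambda}\xi_{\kappa}\xi_{\lambda}$ factors (up to a non-degenerate scalar factor) through $(b^{-1})^{\kappa \lambda}\xi_{\kappa}\xi_{\lambda}$, where $(b^{-1})^{\mu \nu}$ is the \emph{reciprocal Born--Infeld metric}. I would therefore define the multiplier as
\begin{align*}
	\Vmult^{\mu} \eqdef (b^{-1})^{\mu \nu} \xi^{(0)}_{\nu},
\end{align*}
so that $\Vmult$ is the metric $b$-dual of $T_{(0)}$. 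By the dominant energy condition proved in Lemma \ref{L:dominantenergycondition}, the MBI characteristic cone $C_p$ lies inside the Minkowskian null cone in $T_pM$; dualizing, the $b^{-1}$-null cone in $T_p^*M$ lies outside the $g^{-1}$-null cone, so the $g^{-1}$-timelike covector $\xi^{(0)}$ is automatically $b^{-1}$-timelike on all of $\mathscr{H}$. Consequently, $\Vmult$ is $b$-timelike and future-directed in the same regime.

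The heart of the argument is the algebraic claim that whenever $\xi$ lies in the interior of the $b^{-1}$-null cone and $X$ lies in the interior of the $b$-null cone, the quadratic form $\dot{\Far} \mapsto \Stress_{\ \nu}^{\mu}\xi_{\mu} X^{\nu}$ is positive definite in $\dot{\Far}$. This is the MBI instance of the general positivity principle for the canonical stress developed in \cite[Chapter 4]{dC2000}. I would verify it here by constructing at each spacetime point a \emph{$b^{-1}$-adapted} null frame $\lbrace \uL_{*}, L_{*}, e_1^{*}, e_2^{*} \rbrace$, in which $\xi^{(0)}$ and $\Vmult$ play the role that $\partial_t$ and $\partial_t$ play in Lemma \ref{L:CanonicalStressErrorTermExpansion}, and then decomposing $\dot{\Far}$ into the corresponding null components $\dot{\ualpha}_*, \dot{\alpha}_*, \dot{\rho}_*, \dot{\sigma}_*$. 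The cross-terms in \eqref{E:Hstress} involving $\Far^{\kappa \lambda}\dot{\Far}_{\kappa \lambda}$, $\Fardual^{\kappa \lambda}\dot{\Far}_{\kappa \lambda}$ are precisely those that a null frame tied to $g$ fails to diagonalize, but which the $b^{-1}$-adapted frame does, because the factors $\ell_{(MBI)}^{-2}$ and $(1 + \Farinvariant_{(2)}^2 \ell_{(MBI)}^{-2})$ in \eqref{E:Hstress} combine with the $g$ versus $b^{-1}$ discrepancy in exactly the way dictated by the structure of $H^{\mu \nu \kappa \lambda}$. The output of this computation should be an identity of the schematic form
\begin{align*}
	\Stress(\xi^{(0)},\Vmult) = \mathcal{A}(\Far)\,|\dot{\ualpha}_*|^2 + \mathcal{B}(\Far)\,|\dot{\alpha}_*|^2 + \mathcal{C}(\Far)\, (\dot{\rho}_*^2 + \dot{\sigma}_*^2),
\end{align*}
with $\mathcal{A},\mathcal{B},\mathcal{C}$ strictly positive continuous functions on all of $\mathscr{H}$.

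Once the pointwise positivity is established, the uniform constant $C_{\mathfrak{K}}$ is obtained by compactness: each coefficient $\mathcal{A}(\Far), \mathcal{B}(\Far), \mathcal{C}(\Far)$ is continuous and strictly positive on $\mathscr{H}$, hence bounded away from zero on the compact subset $\mathfrak{K}$; moreover the change-of-basis matrix relating the $g$-Euclidean norm $|\dot{\Far}|$ of \eqref{E:Riemanniannorm} to the $b^{-1}$-adapted null components is likewise continuous and invertible on $\mathfrak{K}$, so that the two quadratic forms are uniformly equivalent.

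The main obstacle is the algebraic verification in the second paragraph: one must show that the positivity of $\Stress(\xi^{(0)},\Vmult)$ survives throughout $\mathscr{H}$, not merely in the small-field regime already handled by Lemma \ref{L:CanonicalStressErrorTermExpansion}. The tensorfield $H_{\triangle}^{\mu \nu \kappa \lambda}$ defined in \eqref{E:Htriangledef} contains factors of $\ell_{(MBI)}^{-2}$ which blow up on $\partial \mathscr{H}$, so any naive bound comparing $\Stress$ to its Maxwell--Maxwell part \eqref{E:Tmaxwelldef} will fail; only the precise cancellations encoded in the relationship $H \leftrightarrow (b^{-1})$ (equivalently, the no-birefringence property) keep the quadratic form positive definite. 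Choosing the normalization of $L_*, \uL_*$ so that these cancellations become manifest---and tracking through \eqref{E:Hstress} in the $b^{-1}$-adapted frame without losing track of signs---is the bulk of the work.
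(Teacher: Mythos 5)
Your high-level strategy mirrors the paper's: both introduce the reciprocal Born--Infeld metric $(b^{-1})^{\mu\nu}$, set $\Vmult$ to a positive multiple of the $b^{-1}$-dual of $\xi^{(0)}$, establish pointwise positivity of $\Stress(\xi^{(0)},\Vmult)$, and invoke compactness at the end. (The paper uses the normalization $\Vmult^{\mu} = 2\ell_{(MBI)}^2 (1+\Farinvariant_{(1)})(b^{-1})^{\mu\nu}\xi_{\nu}^{(0)}$, but since $1 + \Farinvariant_{(1)} > \Farinvariant_{(2)}^2 \geq 0$ on $\mathscr{H}$, this differs from your choice only by a positive scalar.) However, there is a genuine gap at the heart of the argument.

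You claim that in a $b^{-1}$-adapted null frame $\{\uL_*, L_*, e_1^*, e_2^*\}$, the quadratic form $\Stress(\xi^{(0)},\Vmult)$ reduces to a sum $\mathcal{A}|\dot{\ualpha}_*|^2 + \mathcal{B}|\dot{\alpha}_*|^2 + \mathcal{C}(\dot{\rho}_*^2 + \dot{\sigma}_*^2)$ with strictly positive coefficients, in analogy with the Maxwell--Maxwell diagonalization. This is asserted, not proved, and it is not obviously true. The characteristic cone $\chi^{\mu\nu}(\xi) = H^{\mu\kappa\nu\lambda}\xi_{\kappa}\xi_{\lambda}$ is a contraction of $H$ that knows about $(b^{-1})$, but the quadratic form you need to control is a different contraction of $H$ (with $\xi,X$ and two copies of $\dot{\Far}$), and that contraction carries more structure than the principal symbol alone. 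Indeed, when the paper works out the analogous decomposition in an orthonormal frame adapted to $\Electricfield, \Magneticinduction$, the resulting $4\times 4$ matrix $\{A_{ij}\}$ is badly non-diagonal, with all off-diagonal entries nonzero, and positive definiteness has to be extracted by computing Sylvester minors (some with computer assistance). Your diagonalization, if correct, would be a sharper result; but precisely because it would be sharper, it cannot be taken for granted, and you have not supplied the calculation. Until the identity $\Stress(\xi^{(0)},\Vmult) = \mathcal{A}|\dot{\ualpha}_*|^2 + \cdots$ is verified, or replaced by a direct positive-definiteness argument (such as the paper's Sylvester computation), the proof is incomplete.

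A secondary issue: your derivation of the $b^{-1}$-timelikeness of $\xi^{(0)}$ from the dominant energy condition (Lemma \ref{L:dominantenergycondition}) is heuristic as written. The dominant energy condition concerns the energy-momentum tensor $\EMT_{(MBI)}$, and the step from there to the containment of the characteristic cone inside the Minkowski light cone, and then the dualization, each require justification. The paper sidesteps this by directly checking that $(g^{-1})^{\kappa\lambda}\xi_{\kappa}\xi_{\lambda} < 0$ implies $(b^{-1})^{\kappa\lambda}\xi_{\kappa}\xi_{\lambda} < 0$; that check is short and uses only the explicit form \eqref{E:MBImetric} of $b^{-1}$. If you want to route through the dominant energy condition, you need to fill in the intermediate steps.
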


\begin{remark}
	Let us view $\Stress_{\ \nu}^{\mu} \xi_{\mu} \Vmult^{\nu}$ 
	as a function of the spacetime point $p$ and of the covector $\xi \in T_p^* M,$
	i.e., as a function on $T^* M.$ Then since inequality \eqref{E:LocalExistenceCurrent}
	is strict, there is an open neighborhood $\mathcal{N} \subset T^* M$ of $(p,\xi^{(0)})$ and a uniform constant 
	$C_{\mathcal{N}} > 0$ such that $\Stress_{\ \nu}^{\mu} \xi_{\mu} \Vmult^{\nu}
	\geq C_{\mathcal{N}} |\dot{\Far}|^2$ holds for all $(p,\xi) \in \mathcal{N}.$ Using the methods described in 
	\cite[Chapter 5]{dC2000}, such an inequality implies the hyperbolicity of the MBI equations of variation and finite speed of 
	propagation for their solutions.
\end{remark}

\begin{remark}
	Since \eqref{E:LocalExistenceCurrent} is a pointwise inequality, an analogous inequality can be shown to hold
	on a neighborhood of any point $p$ belonging to any sufficiently smooth Lorentzian manifold (simply use a local coordinate 
	system in which $g_{\mu \nu}|_p = \mbox{diag}(-1,1,1,1),$ define $\xi^{(0)}$ to be the $g-$dual of 
	$\frac{\partial}{\partial x^0},$ and define $\Vmult^{\nu}$ as in \eqref{E:LocalExistenceMultiplier} below). Thus, inequality 
	\eqref{E:LocalExistenceCurrent} implies the hyperbolicity of the MBI equations of variation for any 
	sufficiently smooth Lorentzian manifold.
\end{remark}

\begin{proof}
	
	Let $(\Electricfield, \Magneticinduction)$ and $(\dot{\Electricfield}, \dot{\Magneticinduction})$
	be the electromagnetic decompositions of $\Far$ and $\dot{\Far}$ described in Section \ref{SS:electromagneticdecomposition}.
	Then simple calculations imply that the following identities hold in the inertial coordinate system:
	
	\begin{align} 
		\Far_{0}^{\ \kappa} \dot{\Far}_{0 \kappa} & = \Electricfield_a \dot{\Electricfield}^a, 
			\label{E:Far0dotFar0contraction} \\ 
		\Fardual_{0}^{\ \kappa} \dot{\Far}_{0 \kappa} & = - \Magneticinduction_a \dot{\Electricfield}^a, \\
		\Far^{\kappa \lambda} \dot{\Far}_{\kappa \lambda} & = 2 \Magneticinduction_a \dot{\Magneticinduction}^a
			- 2 \Electricfield_a \dot{\Electricfield}^a,  \\
		\Fardual^{\kappa \lambda} \dot{\Far}_{\kappa \lambda} & = 2 \Magneticinduction_a \dot{\Electricfield}^a
			+ 2 \Electricfield_a \dot{\Magneticinduction}^a. \label{E:FardualdotFarcontraction}
	\end{align}
	
	We now introduce the \emph{reciprocal Maxwell-Born-Infeld metric} $(b^{-1})^{\mu \nu},$ which has the following components
	relative to an arbitrary coordinate system:
	
	\begin{align} \label{E:MBImetric}
		(b^{-1})^{\mu \nu} \eqdef (g^{-1})^{\mu \nu} -(1 + \Farinvariant_{(1)}[\Far])^{-1} \Far^{\mu \kappa} \Far_{\ \kappa}^{\nu}.
	\end{align}
	In a future article, we will discuss the significance of $(b^{-1})^{\mu \nu}$ in detail. For now, we simply make the
	following two claims.
	
	\begin{enumerate}
		\item If $(g^{-1})^{\kappa \lambda} \xi_{\kappa \lambda} < 0,$ then 
			$(b^{-1})^{\kappa \lambda} \xi_{\kappa \lambda} < 0.$
		\item If the covector $\xi_{\mu}$ belongs to the characteristic subset\footnote{The characteristic subset 
		of $T_p^* M$ is defined in Section \ref{S:Introduction}.} of $T_p^* M,$ then the vector
		$X^{\mu} \eqdef (b^{-1})^{\mu \kappa}\xi_{\kappa}$ belongs to the characteristic subset of $T_p M.$
	\end{enumerate}
	The claim $(1),$ which is easy to check in the inertial coordinate system, shows that any covector $\xi$
	that is timelike relative to $(g^{-1})^{\mu \nu}$ is also timelike relative to $(b^{-1})^{\mu \nu}.$ This is related 
	to the fact that the energy-momentum tensor satisfies the dominant energy condition. As is illustrated
	below in \eqref{E:LocalExistenceMultiplier} the claim $(2)$ allows us to construct a suitable 
	(for deducing local existence) multiplier 
	vectorfield $X^{\mu}$ from a $g-$timelike covector $\xi.$ For, as is discussed in \cite{dC2000}, in order
	for $\int_{\mathbb{R}^3} \Stress_{\ \nu}^{\mu}\xi_{\mu}X^{\nu} \, d^3 \ux$ to be bounded from below
	by a multiple of $\int_{\mathbb{R}^3} |\dot{\Far}|^2 \, d^3 \ux,$ 
	it is sufficient\footnote{As discussed in \cite{dC2000}, in general, we must choose $\xi$ to be an element of the \emph{inner 
	core} of the  characteristic subset of $T_p^* M,$ and $X$ to be an of the \emph{inner core} of the characteristic subset of 
	$T_p M.$ However, in the case of the MBI system, the inner core of the characteristic subset of $T_p^* M$ coincides with
	the interior of the characteristic subset of $T_p^* M,$ and similarly for the inner core of the characteristic subset of 	
	$T_p M.$ This is because the characteristic subset of $T_p^* M$ in the case of the MBI system is particularly simple:
	it is the cone $\lbrace \xi | \ (b^{-1})^{\kappa \lambda} \xi_{\kappa} \xi_{\lambda} = 0 \rbrace.$} to choose $\xi$ to be a
	covector lying \emph{strictly inside} of the characteristic subset of $T_p^* M,$ and $X$ to be a vector lying 
	\emph{strictly inside} of the characteristic subset of $T_p M$ satisfying\footnote{Many of our
	definitions differ from those in \cite{dC2000} by a minus sign.} $\xi(X) \eqdef \xi_{\kappa}X^{\kappa} < 0.$ By claims $(1)$ 
	and $(2),$ we can choose $\xi$ to be any $g-$timelike covector, and define $X^{\mu} \eqdef (b^{-1})^{\mu \kappa} 
	\xi_{\kappa}.$ 
	
	Rather than relying on the above abstract framework, we will instead directly show the \emph{positive definiteness} of the 
	quadratic form (in $\dot{\Far}$) $\Stress_{\ \nu}^{\mu}\xi_{\mu}X^{\nu}$ for a particular choice of $\xi$ and $X;$ this 
	pointwise positivity is stronger than the integrated positivity that follows from the general framework. We choose $\xi$ to 
	be equal to $\xi^{(0)},$ the covector that is $g-$dual to the timelike Killing field $T_{(0)}.$ Furthermore, motivated by the 
	above discussion, we define the multiplier vectorfield $\Vmult$ by
	
	\begin{align} 
		\Vmult^{\mu} \eqdef 2 \ell_{(MBI)}^2 (1 + \Farinvariant_{(1)}[\Far]) (b^{-1})^{\mu \nu} \xi_{\nu}^{(0)} 
			& =	2 \ell_{(MBI)}^2 (1 + \Farinvariant_{(1)}[\Far]) T_{(0)}^{\mu} 
			- 2 \ell_{(MBI)}^2 T_{(0)}^{\nu} \Far^{\mu \kappa} \Far_{\nu \kappa}
			\label{E:LocalExistenceMultiplier} \\
			& = 2 \ell_{(MBI)}^2 (1 + \Farinvariant_{(1)}[\Far]) \delta_{0}^{\mu} - 2 \ell_{(MBI)}^2 
			\Far^{\mu \kappa} \Far_{0 \kappa}, \notag
	\end{align}
	where the last equality is valid in the inertial coordinate system. We remark that the 
	$2 \ell_{(MBI)}^2 (1 + \Farinvariant_{(1)}[\Far])$ factor on the left-hand side of \eqref{E:LocalExistenceMultiplier}
	is a normalization factor that was chosen out of computational convenience. Using the identities 
	\eqref{E:Far0dotFar0contraction} - \eqref{E:FardualdotFarcontraction}, it can be checked that
	
	\begin{align} \label{E:LocalExistenceEnergyDensity}
		\Stress_{\ \nu}^{\mu} \xi_{\mu}^{(0)} \Vmult^{\nu} =
		2(1 + |\Magneticinduction|^2) \ell_{(MBI)}^2 \Stress_{00} - 2\ell_{(MBI)}^2 \Stress_{0 a} \Far^{a \rho}\Far_{0 \rho}.
	\end{align}
	
	Our goal is to show that the right-hand side of \eqref{E:LocalExistenceEnergyDensity} is a uniformly positive definite
	quadratic form in $\dot{\Far}$ whenever $\Far \in \mathfrak{K}.$ Using \eqref{E:Hstress} and \eqref{E:Far0dotFar0contraction} 
	- \eqref{E:FardualdotFarcontraction}, and defining (for notational convenience)
	
	\begin{align}
		\langle U,V \rangle \eqdef U_a V^a
	\end{align}
	for any pair of vectors $U,V \in \Sigma_t,$ we compute that
	
	\begin{align} \label{E:widetildeT00expresssion}
		2 \ell_{(MBI)}^2 \Stress_{00} 
		& = \ell_{(MBI)}^2 \Big\lbrace |\dot{\Electricfield}|^2 + |\dot{\Magneticinduction}|^2 \Big\rbrace
			+ \Big\lbrace \langle \Electricfield, \dot{\Electricfield} \rangle^2
			+ (1 + |\Magneticinduction|^2 - |\Electricfield|^2)\langle \Magneticinduction, \dot{\Electricfield} \rangle^2 
			+ 2 \langle \Electricfield, \Magneticinduction \rangle 
			\langle \Electricfield, \dot{\Electricfield} \rangle 
			\langle \Magneticinduction, \dot{\Electricfield} \rangle \Big\rbrace \\
		& \ \ \ - \Big\lbrace \langle \Magneticinduction, \dot{\Magneticinduction} \rangle^2
			+ (1 + |\Magneticinduction|^2 - |\Electricfield|^2) \langle \Electricfield, \dot{\Magneticinduction} \rangle^2 
			- 2\langle \Electricfield, \Magneticinduction \rangle 
			\langle \Electricfield, \dot{\Magneticinduction}\rangle 
			\langle \Magneticinduction, \dot{\Magneticinduction} \rangle \Big\rbrace, \notag 
	\end{align}
	and
	
	\begin{align}
		\ell_{(MBI)}^2 \Stress_{0 a} \Far^{a \rho}\Far_{0 \rho} 
		& = - \langle \Magneticinduction, \dot{\Magneticinduction} \rangle^2 \Big\lbrace
			|\Electricfield|^2 + \langle \Electricfield, \Magneticinduction \rangle^2 \Big\rbrace 
		+  \langle \Magneticinduction, \dot{\Magneticinduction} \rangle
			\langle \Magneticinduction, \dot{\Electricfield} \rangle 
			(1 + |\Magneticinduction|^2)\langle \Electricfield, \Magneticinduction \rangle  \\
		& \ \ + 2 \langle \Magneticinduction, \dot{\Magneticinduction} \rangle
				\langle \Electricfield, \dot{\Magneticinduction} \rangle
				(1 + |\Magneticinduction|^2) \langle \Electricfield, \Magneticinduction \rangle 
		+ \langle \Magneticinduction, \dot{\Magneticinduction} \rangle 
			\langle \Electricfield, \dot{\Electricfield} \rangle
			(1 + |\Magneticinduction|^2)  \notag \\
		& \ \ - \langle \Electricfield, \dot{\Magneticinduction} \rangle^2
			\Big\lbrace (1 + |\Magneticinduction|^2 - |\Electricfield|^2)|\Magneticinduction|^2
			+ \langle \Electricfield, \Magneticinduction \rangle^2 \Big\rbrace
		- \langle \Electricfield, \dot{\Magneticinduction} \rangle
			\langle \Magneticinduction, \dot{\Electricfield} \rangle 
			(1 + |\Magneticinduction|^2 - |\Electricfield|^2)(1 + |\Magneticinduction|^2) \notag \\
		& \ \ - \langle \Electricfield, \dot{\Magneticinduction} \rangle
			\langle \Electricfield, \dot{\Electricfield} \rangle
			\langle \Electricfield, \Magneticinduction \rangle (1 + |\Magneticinduction|^2). \notag
	\end{align}

	To simplify the subsequent analysis, we choose a frame of $\Sigmafirstfund-$orthonormal vectors 
	$\lbrace e_{\parallel}, e_{\perp}, e_{\times} \rbrace \subset \Sigma_t$ 
	such that $\Electricfield \in \mbox{span}\lbrace e_{\parallel} 
	\rbrace,$ and such that $\Magneticinduction \in \mbox{span}\lbrace 
	e_{\parallel}, e_{\perp} \rbrace.$ We can therefore express (in a slight abuse of notation) the electric field as 
	$\Electricfield e_{\parallel}$ and the magnetic induction as $\Magneticinduction_{\parallel} e_{\parallel} + 
	\Magneticinduction_{\perp} e_{\perp}.$ Furthermore, we can decompose $\dot{\Magneticinduction}$ and
	$\dot{\Electricfield}$ relative to this frame as
	
	\begin{align}
		\dot{\Magneticinduction} & = \dot{\Magneticinduction}_{\parallel} e_{\parallel} 
			+ \dot{\Magneticinduction}_{\perp} e_{\perp}
			+ \dot{\Magneticinduction}_{\times} e_{\times}, \\
		\dot{\Electricfield} & = \dot{\Electricfield}_{\parallel} e_{\parallel} 
			+ \dot{\Electricfield}_{\perp} e_{\perp}
			+ \dot{\Electricfield}_{\times} e_{\times},
	\end{align} 
	where
	
	\begin{align}
		\langle \dot{\Magneticinduction}, e_{\parallel} \rangle & \eqdef \dot{\Magneticinduction}_{\parallel}, \\
		\langle \dot{\Magneticinduction}, e_{\perp} \rangle & \eqdef \dot{\Magneticinduction}_{\perp}, \\
		\langle \dot{\Electricfield}, e_{\parallel} \rangle & \eqdef \dot{\Electricfield}_{\parallel}, \\
		\langle \dot{\Electricfield}, e_{\perp} \rangle & \eqdef \dot{\Electricfield}_{\perp},
	\end{align}
	and
	
	\begin{align}
		|\dot{\Magneticinduction}|^2 & = \dot{\Magneticinduction}_{\parallel}^2
			+ \dot{\Magneticinduction}_{\perp}^2 + \dot{\Magneticinduction}_{\times}^2, \\
		|\dot{\Electricfield}|^2 & = \dot{\Electricfield}_{\parallel}^2
			+ \dot{\Electricfield}_{\perp}^2 + \dot{\Electricfield}_{\times}^2.
	\end{align}
	
	Let us first address the positivity of the quadratic form $\Stress_{\ \nu}^{\mu} \xi_{\mu}^{(0)} \Vmult^{\nu}$
	in $(\dot{\Magneticinduction}_{\times}, \dot{\Electricfield}_{\times}),$
	since these components are very simple to analyze, as they are completely decoupled from
	the remaining components. The only term that contributes to these components is the 
	$\ell_{(MBI)}^2 \Big\lbrace |\dot{\Electricfield}|^2 + |\dot{\Magneticinduction}|^2) \Big\rbrace$ term 
	on the right-hand side of \eqref{E:widetildeT00expresssion}, which leads to the trivial inequality
	
	\begin{align} \label{E:OrthogonalComponentsExpression}
		\Stress_{\ \nu}^{\mu} \xi_{\mu}^{(0)} \Vmult^{\nu} \geq \ell_{(MBI)}^2 (1 + |\Magneticinduction|^2) 
		(\dot{\Electricfield}_{\times}^2 + \dot{\Magneticinduction}_{\times}^2).
	\end{align}
	We note that the uniform positivity of the right-hand side of \eqref{E:OrthogonalComponentsExpression} 
	for $\Far \in \mathfrak{K}$ is manifest, since $\ell_{(MBI)}^2[\Far]$ is uniformly positive whenever $\Far \in 
	\mathfrak{K}.$ We therefore conclude from \eqref{E:OrthogonalComponentsExpression} that
	
	\begin{align} \label{E:OrthogonalComponentsBoundedFromBelow}
		\Stress_{\ \nu}^{\mu} \xi_{\mu}^{(0)} \Vmult^{\nu} \geq C_{\mathfrak{K}} (\dot{\Electricfield}_{\times}^2 + 
		\dot{\Magneticinduction}_{\times}^2).
	\end{align}
	
	We now address the components $(\dot{\Magneticinduction}_{\parallel}, \dot{\Magneticinduction}_{\perp}, 
	\dot{\Electricfield}_{\parallel}, \dot{\Electricfield}_{\parallel}).$ After many tedious but simple calculations, 
	and ignoring the components $(\dot{\Magneticinduction}_{\times}, \dot{\Electricfield}_{\times})$ analyzed above,
	it follows that the right-hand side of \eqref{E:LocalExistenceEnergyDensity} 
	can be viewed as a quadratic form in the components $(\dot{\Magneticinduction}_{\parallel}, \dot{\Magneticinduction}_{\perp}, 
	\dot{\Electricfield}_{\parallel}, \dot{\Electricfield}_{\parallel}),$ whose corresponding symmetric matrix $\lbrace A_{ij} 
	\rbrace_{1 \leq i,j \leq 4}$ has the following entries:
	
	\begin{align}
		A_{11} & =  (\ell_{(MBI)}^2 + \Electricfield^2 \Magneticinduction_{\perp}^2 )
			(1 + \Magneticinduction_{\perp}^2 - \Electricfield^2), \\
		A_{12} & = - (\ell_{(MBI)}^2 + \Electricfield^2 \Magneticinduction_{\perp}^2)
			\Magneticinduction_{\parallel} \Magneticinduction_{\perp}, \\
		A_{13} & = (1 + |\Magneticinduction|^2)\Electricfield \Magneticinduction_{\parallel} \Magneticinduction_{\perp}^2, \\
		A_{14} & = (1 + |\Magneticinduction|^2)
			(1 + \Magneticinduction_{\perp}^2 - \Electricfield^2)\Electricfield\Magneticinduction_{\perp}, \\
		A_{22} & = (\ell_{(MBI)}^2 + \Electricfield^2 \Magneticinduction_{\perp}^2)(1 + \Magneticinduction_{\parallel}^2), \\
		A_{23} & = -(1 + |\Magneticinduction|^2)(1 + \Magneticinduction_{\parallel}^2)\Electricfield\Magneticinduction_{\perp}, \\
		A_{24} & = -(1 + |\Magneticinduction|^2) \Electricfield \Magneticinduction_{\parallel} \Magneticinduction_{\perp}^2, \\
		A_{33} & = (1 + |\Magneticinduction|^2)^2 (1 + \Magneticinduction_{\parallel}^2), \\
		A_{34} & = (1 + |\Magneticinduction|^2)^2 \Magneticinduction_{\parallel} \Magneticinduction_{\perp}, \\
		A_{44} & = (1 + |\Magneticinduction|^2)^2 (1 + \Magneticinduction_{\perp}^2 - \Electricfield^2). 
	\end{align}
	
	By Sylvester's criterion, the positive definiteness of $\lbrace A_{ij} \rbrace_{1 \leq i,j \leq 4}$
	is equivalent to the positivity of the following four quantities, which are the determinants of an increasing sequence of
	sub-blocks along the diagonal:
	
	\begin{align}
		A_{11} & =  (\ell_{(MBI)}^2 + \Electricfield^2 \Magneticinduction_{\perp}^2 )
			(1 + \Magneticinduction_{\perp}^2 - \Electricfield^2), \label{E:FirstBlockDet} \\
		\mbox{det} \Big( \lbrace A_{ij} \rbrace_{1 \leq i,j \leq 2} \Big) & = 
			(\ell_{(MBI)}^2 + \Electricfield^2 \Magneticinduction_{\perp}^2)^2 \ell_{(MBI)}^2, 
			\label{E:SecondBlockDet} \\
		\mbox{det} \Big( \lbrace A_{ij} \rbrace_{1 \leq i,j \leq 3} \Big) & 
			= (1 + \Magneticinduction_{\parallel}^2)(1 + |\Magneticinduction|^2)^2 
			(\ell_{(MBI)}^2 + \Electricfield^2 \Magneticinduction_{\perp}^2) \ell_{(MBI)}^4, \label{E:ThirdBlockDet} \\
		\mbox{det} \Big( \lbrace A_{ij} \rbrace_{1 \leq i,j \leq 4} \Big) & = 
			(1 + |\Magneticinduction|^2)^4 \ell_{(MBI)}^8. \label{E:ADet}
	\end{align}
	We remark that we calculated the right-hand sides of \eqref{E:FirstBlockDet} and \eqref{E:SecondBlockDet} by hand, while we
	used version $11$ of Maple to compute the right-hand sides of \eqref{E:ThirdBlockDet} and \eqref{E:ADet}.
	
	As before, the uniform positivity of the right-hand sides of \eqref{E:SecondBlockDet} - \eqref{E:ADet}
	for $\Far \in \mathfrak{K}$ is manifest. The uniform positivity of the right-hand side of \eqref{E:FirstBlockDet} 
	follows from the fact that
	
	\begin{align}
		\ell_{(MBI)}^2[\Far] = \delta \iff
		E^2 = 1 + \frac{\Magneticinduction_{\perp}^2}{1+ \Magneticinduction_{\parallel}^2} 
			- \frac{\delta}{1+ \Magneticinduction_{\parallel}^2}.
	\end{align}
	We have thus shown that $\Stress_{\ \nu}^{\mu} \xi_{\mu}^{(0)} \Vmult^{\nu}$ 
	$\geq C_{\mathfrak{K}}(\dot{\Magneticinduction}_{\parallel}^2 + \dot{\Magneticinduction}_{\perp}^2 
	+ \dot{\Electricfield}_{\parallel}^2 + \dot{\Electricfield}_{\perp}^2).$ Combining this bound with 
	\eqref{E:OrthogonalComponentsBoundedFromBelow}, and using the relation \eqref{E:FaradaynormsquaredintermsofEB},
	we have that $\Stress_{\ \nu}^{\mu} \xi_{\mu}^{(0)} \Vmult^{\nu}$
	$\geq C_{\mathfrak{K}}(\dot{\Magneticinduction}_{\parallel}^2 + \dot{\Magneticinduction}_{\perp}^2
	+ \dot{\Magneticinduction}_{\times}^2 + \dot{\Electricfield}_{\parallel}^2 + \dot{\Electricfield}_{\perp}^2 + 
	\dot{\Electricfield}_{\times}^2)$ $= C_{\mathfrak{K}}(|\dot{\Electricfield}|^2 + |\dot{\Magneticinduction}|^2)$
	$\geq C_{\mathfrak{K}}|\dot{\Far}|^2.$ This completes our proof of \eqref{E:LocalExistenceCurrent}.
\end{proof}

\section{Norms, Seminorms, Energies, and Comparison Lemmas} \label{S:NormsandEnergies}

In this section, we introduce a collection of norms and seminorms that will be used in the remaining sections to estimate 
solutions to the MBI system. We also introduce the energy $\mathcal{E}_N = \mathcal{E}_N[\Far(t)],$ a related positive 
integral quantity that is constructed via the canonical stress and the ``multiplier'' vectorfield $\overline{K}.$ In Section \ref{S:EnergyEstimates}, we will study the time derivative of $\mathcal{E}_N,$ and in order to close the estimates, we need to prove inequalities that relate the norms to the energy; we provide these inequalities in Proposition \ref{P:Equivalences}. We also introduce another norm $\| \cdot \|_{H_1^N}$ on the electromagnetic initial data $(\mathring{\Displacement}, \mathring{\Magneticinduction}),$ and for small-data, we prove that this norm is equivalent to $\mathcal{E}_N[\Far(0)].$ This allows us to express the global existence smallness condition of 
Theorem \ref{T:GlobalExistence} in terms of $(\mathring{\Displacement}, \mathring{\Magneticinduction}),$ which are inherent to the Cauchy hypersurface $\Sigma_0.$ In particular, the norm on $(\mathring{\Displacement}, \mathring{\Magneticinduction})$ does not involve normal (i.e. time) derivatives. Finally, in \eqref{E:divJdot}, we provide a preliminary expression that will be needed in our estimates of $\frac{d}{dt} \big( \mathcal{E}_N^2[\Far(t)] \big).$ This computation motivates Section \ref{S:NullFormEstimates}, in which we provide algebraic estimates of the terms appearing in \eqref{E:divJdot}.
\\

\noindent \hrulefill
\ \\
	
\setcounter{equation}{0} 	
	
	\begin{definition} \label{D:UnweightedPointwiseNorms}
	Let $N \geq 0$ be an integer. If $\mathcal{A}$ is one of the sets $\mathscr{T},$ $\mathcal{O},$ or $\mathcal{Z}$ defined in
	\eqref{E:Translationsetdef} - \eqref{E:Zsetdef}, and $U$ is any tensorfield, 
	then we define the following pointwise norms of $U:$ 
	
	\begin{subequations}
	\begin{align} 
		|U|_{\Lie_{\mathcal{A};N}}^2 \eqdef \sum_{|I| \leq N} |\Lie_{\mathcal{A}}^I U|^2, 
			\label{E:UnweightedPointwiseNormLieA} \\
		|U|_{\nabla_{\mathcal{A}};N}^2 \eqdef \sum_{|I| \leq N} |\nabla_{\mathcal{A}}^I U|^2.
			\label{E:UnweightedPointwiseNormNablaA}
	\end{align}
	\end{subequations}
	
	Furthermore, if $U$ is tangent to the spheres $S_{r,t},$ and $\angn$ denotes the Levi-Civita connection corresponding to
	$\angg,$ then we define
	
	\begin{align}
		|U|_{\angn_{\mathcal{O}};N}^2 \eqdef \sum_{|I| \leq N} |\angn_{\mathcal{O}}^I U|^2.
	\end{align}
	In the above formulas, the pointwise norm $|\cdot|$ is defined in \eqref{E:Riemanniannorm},
	while the iterated derivatives $\nabla_{\mathcal{A}}^I,$ etc., are defined in Definition \ref{D:iterated}.
	
\end{definition}

\begin{definition} \label{D:weightedpointwisenorm}
	Let $\Stress^{(Maxwell)}$ be the Maxwellian canonical stress corresponding to the two-form $\dot{\Far}$ 
	defined in \eqref{E:Tmaxwelldef}, and let $\dot{\ualpha}, \dot{\alpha}, \dot{\rho}, \dot{\sigma}$ be the null components of 
	$\dot{\Far}.$ Let $\mathcal{A}$ be one of the sets $\mathscr{T}, \mathcal{O},$ or $\mathcal{Z}$ defined in   
	\eqref{E:Translationsetdef} - \eqref{E:Zsetdef}, and let $\overline{K}$ be the conformal Killing field defined in
	\eqref{E:Kfirstdef}. Then for each integer $N \geq 0,$ we define the following pointwise norms of 
	$\dot{\Far}:$ 
	
	\begin{align} 
		\Knorm \dot{\Far} \Knorm^2 & \eqdef \Stress^{(Maxwell)}(\xi^{(0)},\overline{K}) 
			= (1 + q)^2 |\dot{\ualpha}|^2 + (1 + s^2) |\dot{\alpha}|^2 + (2 + q^2 + s^2)(\dot{\rho}^2 + \dot{\sigma}^2), 
			\label{E:weightedpointwisenorm} \\
		\Knorm \dot{\Far} \Knorm_{\Lie_{\mathcal{A}};N}^2 & \eqdef \sum_{|I| \leq N}	\Knorm \Lie_{\mathcal{A}}^I \dot{\Far} 	
		\Knorm^2,
	\end{align}
	where $\xi^{(0)}$ is the $g-$dual of the time translation vectorfield $T_{(0)}.$

\end{definition}
Notice that the different components of $\dot{\Far}$ in \eqref{E:weightedpointwisenorm} carry different weights. Additionally,
we have that ${(1 + |q|)^2|\dot{\Far}|^2 \leq \Knorm \dot{\Far} \Knorm^2}.$

\begin{definition} \label{D:EnergyCurrent}
	Let $\Stress$ be the MBI canonical stress \eqref{E:Hstress} corresponding to the ``background'' $\Far$ and the variation 
	$\dot{\Far},$ and let $\overline{K}$ be the conformal Killing field defined in \eqref{E:Kfirstdef}. 
	We define the \emph{energy current} $\dot{J}_{\Far}^{\mu}[\dot{\Far}]$ corresponding to $\dot{\Far}$ to be the following 
	vectorfield:
	
	\begin{align} \label{E:Jdotdef}
		\dot{J}_{\Far}^{\mu}[\dot{\Far}] \eqdef - \Stress_{\ \nu}^{\mu} \overline{K}^{\nu}.
	\end{align}
	We note that $\dot{J}_{\Far}^{\mu}[\dot{\Far}]$ depends on the background $\Far$ through $\Stress_{\ \nu}^{\mu},$
	and that $\dot{J}_{\Far}^0[\dot{\Far}] = \Stress(\xi^{(0)}, \overline{K}).$
\end{definition}

\begin{definition} \label{D:MorawetzWeightedLieDerivativeIntegralNormN}
	Let $N \geq 0$ be an integer, and let $\Far$ and $\dot{\Far}$ be a pair of two-forms. 
	We define the weighted integral norm $\Kintnorm \dot{\Far}(t) \Kintnorm_{\Lie_{\mathcal{Z}};N}$ of $\dot{\Far}$ as follows: 
	
	\begin{align} \label{E:MorawetzWeightedLieDerivativeIntegralNormN}
		\Kintnorm \dot{\Far}(t) \Kintnorm_{\Lie_{\mathcal{Z}};N} & \eqdef 
		\left( \int_{\mathbb{R}^3} \Knorm \dot{\Far}(t,\ux) 
			\Knorm_{\Lie_{\mathcal{Z}};N}^2 \, d^3 \ux \right)^{1/2}.
	\end{align}	

	Furthermore, we define the energy $\mathcal{E}_N[\dot{\Far}(t)]$ of $\dot{\Far}$ as follows:
	
	\begin{align} \label{E:mathcalENdef}
		\mathcal{E}_N[\dot{\Far}(t)] & \eqdef \left(\sum_{|I| \leq N}
			\int_{\mathbb{R}^3} \dot{J}_{\Far}^0[\Lie_{\mathcal{Z}}^I \dot{\Far}(t,\ux)] \, d^3 \ux \right)^{1/2},
	\end{align}
	where the component $\dot{J}_{\Far}^0[\dot{\Far}(t,\ux)]$ is defined in \eqref{E:Jdotdef}. 

\end{definition}

\begin{remark}
	The $q,s$ weights under the integral in definition \eqref{E:MorawetzWeightedLieDerivativeIntegralNormN} are exactly the ones 
	needed in the global Sobolev inequality (Proposition \ref{P:GlobalSobolev}).
\end{remark}

In the next proposition, we collect together a large number of comparison estimates that will be used throughout the remainder of the article.

\begin{proposition} \label{P:Equivalences}
	Let $N \geq 0$ be an integer, and let $U$ be any tensorfield. Then the following pointwise estimates hold:
	
	\begin{subequations}
	\begin{align} 
		|U|_{\Lie_{\mathcal{Z}};N} & \approx |U|_{\nabla_{\mathcal{Z}};N}, \label{E:LieZequivalenttonablaZpointwise} \\
		|\nabla_{(N)} U| & \approx \sum_{|I| = N} |\nabla_{\mathscr{T}}^I U|, 
			\label{E:spacetimecovariantderivatvenormequivalenttotranslationalcovariantderivativenorm} \\
		\sum_{n=0}^N (1 + |q|)^{n} |\nabla_{(n)} U| & \lesssim  |U|_{\nabla_{\mathcal{Z}};N}, 
			\label{E:spacetimenablaboundedbyqpowertimenablaZ} \\
		|U|_{\nabla_{\mathcal{Z}};N}| & \lesssim \sum_{n=0}^N (1 + s)^{n} |\nabla_{(n)} U|,
			\label{E:Zcovariantderivativeslessthansweightedcovarianttensor} \\
		|U|_{\nabla_{\mathcal{Z}};N}^2|_{\Sigma_0} & \approx \sum_{n=0}^N (1 + r^2)^n |\nabla_{(n)}U|^2|_{\Sigma_0}.
			\label{E:ZcovariantderivativeslessthanrweightedcovarianttensorAlongSimga0}
	\end{align}
	\end{subequations}
	Although $|_{\Sigma_0}$ denotes restriction to $\Sigma_0$ in the above formulas, we emphasize that $\nabla_{(n)}$
	denotes the \textbf{full spacetime} covariant derivative operator of order $n.$

	Let $\Far$ be an arbitrary two-form, let $\ualpha,$ $\alpha,$ $\rho,$ $\sigma$ be its null components
	as defined in Section \ref{SS:NullComponents},
	and let $\Electricfield, \Magneticinduction, \SFar, \SFardual$ 
	be its electromagnetic decompositions as defined in Section \ref{SS:electromagneticdecomposition}.
	Let $\mathcal{A}$ be one of the three sets $\mathcal{T}, \mathcal{O}, \mathcal{Z}$ defined in 
	\eqref{E:Translationsetdef} - \eqref{E:Zsetdef}. Let $q \eqdef r - t, s \eqdef r + t$ denote the null
	coordinates. Then the following pointwise estimates hold:
	
	\begin{subequations}
	\begin{align}
		|\Far|^2 & = 2(|\Electricfield|^2 + |\Magneticinduction|^2),   
			\label{E:FaradaynormsquaredintermsofEB} \\
		|\Far|^2 & = |\ualpha|^2 + |\alpha|^2 + 2(\rho^2 + \sigma^2), \label{E:Faradaynormsquaredintermsofnullcomponents} \\
		\Knorm \Far \Knorm^2 & = |\Electricfield|^2 + |\Magneticinduction|^2 + |\SFar|^2 + |\SFardual|^2
			\label{E:FaradayKNormsquaredintermsofEBPQ} \\
			& = (1 + q^2)|\ualpha|^2 + (1+s^2)|\alpha|^2 + (2 + q^2 + s^2)(\rho^2 + \sigma^2), \notag \\
		\Knorm \Far \Knorm_{\Lie_{\mathcal{A}};N}^2 
			& = \sum_{|I| \leq N} \Big( |i_{T_{(0)}} \Lie_{\mathcal{A}}^I \Far|^2 
			+ |i_{T_{(0)}} \Lie_{\mathcal{A}}^I \Fardual|^2 
			+ |i_{S} \Lie_{\mathcal{A}}^I \Far|^2
			+ |i_{S} \Lie_{\mathcal{A}}^I \Fardual|^2 \Big), \label{E:FaradayLieAKnormsquaredintermsofinteriorproducts} \\
		\Knorm \Fardual \Knorm_{\Lie_{\mathcal{A}};N}^2 
			& = \sum_{|I| \leq N} \Big( |i_{T_{(0)}} \Lie_{\mathcal{A}}^I \Far|^2 
			+ |i_{T_{(0)}} \Lie_{\mathcal{A}}^I \Fardual|^2 
			+ |i_{S} \Lie_{\mathcal{A}}^I \Far|^2
			+ |i_{S} \Lie_{\mathcal{A}}^I \Fardual|^2 \Big), \label{E:FaradaydualLieAKnormsquaredintermsofinteriorproducts} \\
		|\nabla_{(n)} \Far|^2 & \approx |\nabla_{(n)} \Electricfield|^2 + |\nabla_{(n)} \Magneticinduction|^2, 
			\label{E:CovariantderivativesFaradaynormsquaredapproxCovariantderivatiesEB} \\
		\Knorm \Far \Knorm_{\Lie_{\mathcal{Z}};N}^2 & \approx 
			| \Electricfield |_{\Lie_{\mathcal{Z}};N}^2 
			+ | \Magneticinduction |_{\Lie_{\mathcal{Z}};N}^2
			+ | \SFar |_{\Lie_{\mathcal{Z}};N}^2
			+ | \SFardual |_{\Lie_{\mathcal{Z}};N}^2, \label{E:PointwiseweightedFarLieZnormapproxLieZnormEBPQ} \\
		|\Electricfield |_{\nabla_{\mathcal{Z}};N}^2 
				+ | \Magneticinduction |_{\nabla_{\mathcal{Z}};N}^2
				+ | \SFar |_{\nabla_{\mathcal{Z}};N}^2
				+ | \SFardual |_{\nabla_{\mathcal{Z}};N}^2
			& \lesssim \sum_{n=0}^{N} (1 + s)^{2(n+1)} (|\nabla_{(n)} \Electricfield|^2 + |\nabla_{(n)} \Magneticinduction|^2),
				\label{E:PointwiseEBPQnablaZnormlessthanHigherspowerweightednormcovariantderivativetensorEB} \\
		\sum_{|I| \leq N} \Big\lbrace (1 + |q|)^{|I|} \Knorm \nabla_{\mathscr{T}}^I \Far \Knorm \Big\rbrace
			& \lesssim \Knorm \Far \Knorm_{\Lie_{\mathcal{Z}};N},  \label{E:PointwiseqweightedqnormtranslationderivatieslessthanKnormLieZ} \\
		\sum_{n=0}^N \Big\lbrace (1 + |q|)^{n+1} | \nabla_{(n)} \Far | \Big\rbrace
			& \lesssim \Knorm \Far \Knorm_{\Lie_{\mathcal{Z}};N}, 
			\label{E:PointwiseHigherpowerqweightedcovariantderivativesofFarlessthanKnormLieZderivatives} \\
		\Knorm \Far \Knorm_{\Lie_{\mathcal{Z}};N}^2|_{\Sigma_0} 
			&\approx \sum_{n = 0}^N (1 + r^2)^{n + 1} \Big(|\nabla_{(n)} \Electricfield|^2|_{\Sigma_0} 
		+ |\nabla_{(n)} \Magneticinduction|^2|_{\Sigma_0} \Big). 
			\label{E:FarLieZintegralnormintermsofweightedrintegralnormofEBalongSigma0}
	\end{align}
	\end{subequations}
	Although $|_{\Sigma_0}$ denotes restriction to $\Sigma_0$ in the above formulas, we emphasize that $\nabla_{(n)}$
	denotes the \textbf{full spacetime} covariant derivative operator of order $n.$ Furthermore,
	in \eqref{E:PointwiseqweightedqnormtranslationderivatieslessthanKnormLieZ}, $I$ is a translational mutli-index.
	
	Let $\Far$ be any two-form. Then for any $r > 0,$ $r \eqdef |\ux|,$ we have that
	\begin{align} \label{E:nablauLnablaLFarLierotaionsnormLieZnormcomparison}
		\sum_{k + l = 0}^N (1 + |q|)^k s^l \Knorm \nabla_{\uL}^k \nabla_L^l \Far \Knorm_{\Lie_{\mathcal{O}};N-k-l} 
		\lesssim \Knorm \Far \Knorm_{\Lie_{\mathcal{Z}};N}.
	\end{align}
	
	Let $\dot{\Far}$ be another arbitrary two-form, and let $\dot{J}_{\Far}^{\mu}[\dot{\Far}]$ 
	be the energy current vectorfield defined in \eqref{E:Jdotdef}. There exists an $\epsilon > 0$ such that if 
	$\Knorm \Far \Knorm < \epsilon,$ then	the following estimates hold:
	
	\begin{subequations}
	\begin{align} 
		\dot{J}_{\Far}^{0}[\dot{\Far}] & \approx \Knorm \dot{\Far} \Knorm^2, \label{E:dotJ0Knormequivalence} \\
		\mathcal{E}_N[\dot{\Far}] & \approx \Kintnorm \dot{\Far} \Kintnorm_{\Lie_{\mathcal{Z}};N}. \label{E:EnergyNormEquivalence}
	\end{align}	
	\end{subequations}
	
\end{proposition}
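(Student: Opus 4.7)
The proposition is a large collection of comparison estimates, which I would organize into four clusters and attack in the order below.

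\textbf{Cluster 1: Equivalences between iterated Lie, covariant, and translational derivatives.} The starting point is Lemma \ref{L:Liederivativeintermsofnabla}, which expresses $\Lie_X U$ as $\nabla_X U$ plus terms linear in $\nabla X$ contracted against $U$, together with Lemma \ref{L:Vectorfieldderivativenorms}, which bounds $|Z| \lesssim s$, $|\nabla Z| \lesssim 1$, $\nabla_{(2)} Z = 0$ for $Z \in \mathcal{Z}$. A straightforward induction on $N$ using the Leibniz rule and Lemma \ref{L:Liecommuteswithcoordinatederivatives} (which says $\nabla$ commutes with $\Lie_{\mathcal{Z}}^I$) gives \eqref{E:LieZequivalenttonablaZpointwise}. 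For \eqref{E:spacetimecovariantderivatvenormequivalenttotranslationalcovariantderivativenorm} I would work in the fixed inertial frame, in which $\nabla_{T_{(\mu)}} = \partial_{\mu}$, so both sides compute the same coordinate derivatives. For \eqref{E:Zcovariantderivativeslessthansweightedcovarianttensor} I would just apply $|Z| \lesssim s$ iteratively. The restriction-to-$\Sigma_0$ statement \eqref{E:ZcovariantderivativeslessthanrweightedcovarianttensorAlongSimga0} follows since $q = s = r$ on $\Sigma_0$, combined with the next inequality.

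\textbf{Cluster 2: The weighted $q$-estimate \eqref{E:spacetimenablaboundedbyqpowertimenablaZ}.} This is the \emph{main technical obstacle} of the cluster of derivative estimates, because it requires one to trade a power of $(1+|q|)$ against a covariant derivative. The plan is to use the algebraic identity $2S = sL - q\uL$ and the analogous identity $2\omega^a \Omega_{(0a)} = -sL - q\uL$ (from \eqref{E:heuristic}) to solve for $q\nabla_{\uL}$ and $s\nabla_L$ as linear combinations of $\nabla_S$ and $\nabla_{\Omega_{(0a)}}$; a further argument using the rotations $\Omega_{(jk)}$ (whose span together with $\partial_r$ fills out $T\Sigma_t$ away from the origin) allows one to express every $\partial_\mu$ in the region $\lbrace r \geq 1 \rbrace$ as a combination $\sum_{Z \in \mathcal{Z}} c_Z Z$ with $(1+|q|)|c_Z| \lesssim 1$. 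One then iterates this to cover higher $n$. In the region $\lbrace r \leq 1 \rbrace$ one simply uses the translations, which contribute $(1+|q|)^0$. Induction on $n$ combined with \eqref{E:LieZequivalenttonablaZpointwise} and the commutators in Lemma \ref{L:ConformalKillingFieldCommuatators} closes the argument and simultaneously yields \eqref{E:PointwiseqweightedqnormtranslationderivatieslessthanKnormLieZ} and \eqref{E:PointwiseHigherpowerqweightedcovariantderivativesofFarlessthanKnormLieZderivatives}.

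\textbf{Cluster 3: Algebraic null/electromagnetic identities.} The identities \eqref{E:FaradaynormsquaredintermsofEB}--\eqref{E:FaradayKNormsquaredintermsofEBPQ} are purely algebraic: expand $|\Far|^2$ in the inertial frame using \eqref{E:FarEBrelations} and then in the null frame using Definition \ref{D:null}, and compute $|\SFar|^2, |\SFardual|^2$ directly from \eqref{E:PQinertialcomponents}, observing that $2S = sL - q\uL$ reproduces the weights $(1+q^2), (1+s^2), (2+q^2+s^2)$. The interior-product formulas \eqref{E:FaradayLieAKnormsquaredintermsofinteriorproducts}--\eqref{E:FaradaydualLieAKnormsquaredintermsofinteriorproducts} follow by applying \eqref{E:FaradayKNormsquaredintermsofEBPQ} to $\Lie_{\mathcal{A}}^I \Far$ instead of $\Far$ and using the definitions \eqref{E:EBDHdef}, \eqref{E:PQdef}. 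The equivalence \eqref{E:PointwiseweightedFarLieZnormapproxLieZnormEBPQ} is more delicate: starting from \eqref{E:FaradayLieAKnormsquaredintermsofinteriorproducts} I would commute $\Lie_Z^I$ through the interior products using Lemma \ref{L:Liederivativeandcovariantderivativeinteriorproductcommutator}; the commutators produce interior products with Lie brackets $[Z, T_{(0)}]$ and $[Z, S]$, and by Corollary \ref{C:CommutatorofZTandS} these remain inside $\mathbf{T}$, so the resulting terms can be reabsorbed into $|E|_{\Lie_{\mathcal{Z}};N}^2 + |B|_{\Lie_{\mathcal{Z}};N}^2 + |\SFar|_{\Lie_{\mathcal{Z}};N}^2 + |\SFardual|_{\Lie_{\mathcal{Z}};N}^2$ by induction. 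Inequality \eqref{E:PointwiseEBPQnablaZnormlessthanHigherspowerweightednormcovariantderivativetensorEB} combines this with \eqref{E:Zcovariantderivativeslessthansweightedcovarianttensor} applied to $E, B, \SFar, \SFardual$, noting that $\SFar, \SFardual$ already carry an $s$-weight through $S$. The restricted version \eqref{E:FarLieZintegralnormintermsofweightedrintegralnormofEBalongSigma0} follows since on $\Sigma_0$ one has $q=s=r$, $L = \partial_t + \partial_r$, $\uL = \partial_t - \partial_r$, so all weights collapse to powers of $(1+r^2)$, and the time derivatives on $\Sigma_0$ can be eliminated in favor of spatial ones via \eqref{E:CovariantderivativesFaradaynormsquaredapproxCovariantderivatiesEB} and the evolution equations \eqref{E:partialtBintermsofBandD}--\eqref{E:partialtDintermsofBandD}.

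\textbf{Cluster 4: The null-frame weighted estimate \eqref{E:nablauLnablaLFarLierotaionsnormLieZnormcomparison} and the energy-current equivalences.} For \eqref{E:nablauLnablaLFarLierotaionsnormLieZnormcomparison}, the idea is to use Lemma \ref{L:LLunderlinecommutewithLieO} (specifically \eqref{E:snablauLLieScommute}) together with the identity $2S = sL - q\uL$ to rewrite $q\nabla_{\uL}$ and $s\nabla_L$ in terms of Lie derivatives with respect to $S$ and the boosts; then Corollary \ref{C:LieRotationcommuteswithnulldecomp} and Lemma \ref{L:LieDerivativeIntrinsicCovariantDerivativeComparison} convert the rotational Lie derivatives into intrinsic derivatives and subsume the $r^{-1}$ errors from Corollary \ref{C:CommuteIntrinsicCovraiantandLderivatives} inside $\Knorm \cdot \Knorm_{\Lie_{\mathcal{Z}};N}$. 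Finally \eqref{E:dotJ0Knormequivalence} is immediate from Lemma \ref{L:CanonicalStressErrorTermExpansion}, whose principal term is exactly \eqref{E:weightedpointwisenorm}: for $\Knorm \Far \Knorm$ sufficiently small one has $|\Far|_{\mathcal{L}\mathcal{U}}^2 (1+s)^2 \lesssim \Knorm\Far\Knorm^2$, so the $O(\cdots)$ error terms are dominated by the principal term and can be absorbed. Integrating \eqref{E:dotJ0Knormequivalence} over $\mathbb{R}^3$ with $\dot\Far$ replaced by $\Lie_{\mathcal{Z}}^I \Far$ gives \eqref{E:EnergyNormEquivalence}.

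The \emph{main obstacle} is Cluster 2, namely obtaining sharp $(1+|q|)^n$ weights in \eqref{E:spacetimenablaboundedbyqpowertimenablaZ}, since this requires not just $|Z| \lesssim s$ but an algebraic inversion producing the stronger factor $(1+|q|)^{-1}$ associated to the transversal direction $\uL$. Everything else is either routine induction or a direct algebraic computation once the identities of Sections \ref{S:Decompositions}--\ref{S:Commutation} are in hand.
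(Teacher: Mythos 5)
Your plan is broadly aligned with the paper's, and the algebraic/null-form clusters are handled the same way. But there is a genuine gap in your Cluster~2, which you yourself flagged as the main technical obstacle, so it is worth pinning down.

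You propose to prove \eqref{E:spacetimenablaboundedbyqpowertimenablaZ} by inverting the identities $2S = sL - q\uL$ and $2\omega^a\Omega_{(0a)} = -(sL + q\uL)$ to control $q\nabla_{\uL}$ and $s\nabla_L$, and then to "use the rotations $\Omega_{(jk)}$ (whose span together with $\partial_r$ fills out $T\Sigma_t$ away from the origin)" for the remaining angular directions, claiming coefficients with $(1+|q|)|c_Z| \lesssim 1$. This last claim fails in the interior region $\lbrace t > 2r \rbrace$: the rotations satisfy $|\Omega_{(jk)}| \approx r$, so inverting them produces a coefficient of size $r^{-1}$, and $(1+|q|)\,r^{-1} = (1 + t - r)/r$ is \emph{unbounded} when $r$ is fixed and $t$ grows. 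To control the angular directions in the interior you must use the \emph{boosts}, not the rotations; their coefficient is $t^{-1}$, and $(1+|q|)\,t^{-1}$ is uniformly bounded there. The paper bypasses this case analysis entirely with the single algebraic identity
\begin{align*}
	T_{(\mu)} = \frac{x^{\kappa}\Omega_{(\kappa\mu)} + x_{\mu} S}{qs}, \qquad (\mu = 0,1,2,3),
\end{align*}
which is valid wherever $qs = r^2 - t^2 \neq 0$ and in which the sum over $\kappa$ automatically mixes rotations \emph{and} boosts with weights $|x^{\kappa}| \lesssim s$; dividing by $|qs|$ gives $|q|^{-1}$, and combining with the trivial bound $|\nabla_{T_{(\mu)}}U| \leq |U|_{\nabla_{\mathcal{Z}};1}$ covers the light cone. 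If you want to salvage the null-frame route, you would have to supplement the rotations with the boosts in the interior, at which point you would essentially be re-deriving the paper's identity component by component.

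Two smaller remarks. First, \eqref{E:PointwiseqweightedqnormtranslationderivatieslessthanKnormLieZ} is \emph{not} a by-product of the same induction: it lives at the level of the weighted norm $\Knorm\cdot\Knorm$ rather than $|\cdot|$, and the paper proves it by a separate induction that commutes the interior products $i_{T_{(0)}}$, $i_S$ through $\nabla_{\mathscr{T}}^I$ using $\nabla_{T_{(\mu)}}S = T_{(\mu)}$, $\nabla_{T_{(\mu)}}T_{(0)} = 0$, and the key observation $|i_{T_{(0)}}\dot{\Far}|^2 + |i_{T_{(0)}}{^{\star}\dot{\Far}}|^2 \lesssim (1+|q|)^{-2}\Knorm\dot{\Far}\Knorm^2$. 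Second, in your discussion of \eqref{E:FarLieZintegralnormintermsofweightedrintegralnormofEBalongSigma0} you mention eliminating time derivatives via the evolution equations; that step is not needed for this particular estimate (both sides explicitly retain the full spacetime $\nabla_{(n)}$), and it belongs instead to the subsequent Lemma \ref{L:EBFarintegralnormequivalence}. The remainder of your clusters---the Lie/covariant equivalence via Lemma~\ref{L:Liederivativeintermsofnabla}, the algebraic identities from $|i_X\Far|^2 + |i_X\Fardual|^2$, the commutation of interior products via Lemma~\ref{L:Liederivativeandcovariantderivativeinteriorproductcommutator} and Corollary~\ref{C:CommutatorofZTandS}, the iteration of $s\nabla_L\Far = 2\Lie_S\Far + q\nabla_{\uL}\Far - 4\Far$ in Cluster~4, and the absorption of the $O(\cdot)$ errors in Lemma~\ref{L:CanonicalStressErrorTermExpansion}---all match the paper's argument.
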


\begin{remark}
	We remark that the estimates \eqref{E:LieZequivalenttonablaZpointwise} - 	
	\eqref{E:nablauLnablaLFarLierotaionsnormLieZnormcomparison} either were proven directly by Christodoulou and 
	Klainerman in \cite{dCsK1990}, or follow easily from the estimates of \cite{dCsK1990}; we nevertheless
	provide proofs here for convenience.
\end{remark}

\noindent \textbf{Proof of Proposition \ref{P:Equivalences}} 
\\

\noindent \emph{Proof of \eqref{E:LieZequivalenttonablaZpointwise}}:
The estimate \eqref{E:LieZequivalenttonablaZpointwise} can be proven inductively using 
\eqref{E:Liederivativeintermsofnabla} and Lemma \ref{L:Vectorfieldderivativenorms}.  
\\

\noindent \emph{Proof of \eqref{E:spacetimecovariantderivatvenormequivalenttotranslationalcovariantderivativenorm}
- \eqref{E:ZcovariantderivativeslessthanrweightedcovarianttensorAlongSimga0}}:

The estimate \eqref{E:spacetimecovariantderivatvenormequivalenttotranslationalcovariantderivativenorm} is easily
deduced in the inertial coordinate system $\lbrace x^{\mu} \rbrace_{\mu = 0,1,2,3}.$

To prove \eqref{E:spacetimenablaboundedbyqpowertimenablaZ}, we first note that in 
the inertial coordinate system, the following identity holds for $\mu = 0,1,2,3:$

\begin{align} \label{E:Translationidentityfavorabledenominator}
	T_{(\mu)} = \frac{x^{\kappa} \Omega_{(\kappa \mu)} + x_{\mu}S}{qs}.
\end{align}
Therefore, we have that

\begin{align}
	|\nabla_{T_{(\mu)}} U| = |(qs)^{-1}||x^{\kappa} \nabla_{\Omega_{(\kappa \mu)}} U + x_{\mu} \nabla_S U|
	\lesssim |q|^{-1} |U|_{\nabla_{\mathcal{Z}};1}.
\end{align}
Since we also have the trivial identity $|\nabla_{T_{(\mu)}} U| \leq |U|_{\nabla_{\mathcal{Z}};1},$
it follows that

\begin{align} \label{E:basecasespacetimenablaboundedbyqpowertimenablaZ}
	|\nabla_{T_{(\mu)}} U| & \lesssim (1 + |q|)^{-1} |U|_{\nabla_{\mathcal{Z}};1}.
\end{align}

Similarly, we use Lemma \ref{L:Minkowskiisflat}, Corollary \ref{C:CommutatorofZTandS}, and \eqref{E:Translationidentityfavorabledenominator} to inductively derive the following inequality:

\begin{align} \label{E:TranslationaldervativeslessthanqweightedLieZKnorm}
	|\nabla_{T_{(\mu_1)}} \cdots \nabla_{T_{(\mu_n)}} U| \lesssim (1 + |q|)^{-n} | U |_{\nabla_{\mathcal{Z}};n}.
\end{align}
Combining \eqref{E:TranslationaldervativeslessthanqweightedLieZKnorm} and \eqref{E:spacetimecovariantderivatvenormequivalenttotranslationalcovariantderivativenorm},
we deduce \eqref{E:spacetimenablaboundedbyqpowertimenablaZ}.

Inequality \eqref{E:Zcovariantderivativeslessthansweightedcovarianttensor} follows 
trivially from \eqref{E:Zcovariantderivativebounds} and the Leibniz rule. Inequality \eqref{E:ZcovariantderivativeslessthanrweightedcovarianttensorAlongSimga0} then follows from
\eqref{E:Zcovariantderivativeslessthansweightedcovarianttensor} and the fact that $s = r$ along $\Sigma_0.$ \\

\emph{Proof of \eqref{E:dotJ0Knormequivalence} and \eqref{E:EnergyNormEquivalence}:}
Since $\dot{J}_{\Far}^0 = \Stress(\xi^{(0)},\overline{K}),$ \eqref{E:dotJ0Knormequivalence} is a simple consequence of Lemma 
\ref{L:CanonicalStressErrorTermExpansion}. Inequality \eqref{E:EnergyNormEquivalence} then follows 
from integrating inequality \eqref{E:dotJ0Knormequivalence} over $\Sigma_t.$ \\

\noindent \emph{Proof of \eqref{E:FaradaynormsquaredintermsofEB} - \eqref{E:CovariantderivativesFaradaynormsquaredapproxCovariantderivatiesEB}}:

If $X$ is any vectorfield in the plane spanned by $\uL$ and $L,$ then it can be checked that
	
	\begin{align}
		|i_X \Far|^2 & = \frac{1}{4} (X_{\uL})^2 |\ualpha|^2 + \frac{1}{4} (X_{L})^2 |\alpha|^2 
			+ \frac{1}{2} X_L X_{\uL}\angg(\ualpha,\alpha) 
			+ \frac{1}{2} \big\lbrace(X_{L})^2 + (X_{\uL})^2 \big\rbrace \rho^2,  \\
		|i_X \Fardual|^2 & = \frac{1}{4} (X_{\uL})^2 |\ualpha|^2 + \frac{1}{4} (X_{L})^2 |\alpha|^2  - \frac{1}{2} X_L 
			X_{\uL}\angg(\ualpha,\alpha) + \frac{1}{2} \big\lbrace(X_L)^2 + (X_{\uL})^2\big\rbrace\sigma^2,
	\end{align}
	where $\angg(\ualpha,\alpha) = \angg^{\kappa \lambda}\ualpha_{\kappa} \alpha_{\lambda} = (g^{-1})^{\kappa \lambda}\ualpha_{\kappa} 
	\alpha_{\lambda}.$ Therefore, 
	
	\begin{align}
			|i_X \Far|^2 + |i_X \Fardual|^2 = \frac{1}{2} \Big\lbrace (X_L)^2 |\ualpha|^2 + (X_{\uL})^2 |\alpha|^2 
				+ \big[(X_L)^2 + (X_{\uL})^2\big](\rho^2 + \sigma^2) \Big\rbrace. \label{E:FarcontractXnormidentity}
	\end{align}
	
	Now taking first $X=S,$ and then $X= T_{(0)},$ using $S_L = q,$ $S_{\uL} = -s,$ $T_{(0)L} = T_{(0)\uL} = -1,$ and 
	recalling that $\SFar = i_S \Far, \SFardual = i_S \Fardual,$ $\Electricfield = i_{T_{(0)}} \Far, \Magneticinduction = 
	- i_{T_{(0)}} \Fardual,$ we find that
	
	\begin{align} 
		|\SFar|^2 + |\SFardual|^2 & = \frac{1}{2} \Big\lbrace q^2 |\ualpha|^2 + s^2 |\alpha|^2 
				+ (q^2 + s^2)(\rho^2 + \sigma^2) \Big\rbrace, \label{E:PQsumofsquares} \\
		|\Electricfield|^2 + |\Magneticinduction|^2 & = \frac{1}{2} \Big\lbrace |\ualpha|^2 + |\alpha|^2 
			+ 2(\rho^2 + \sigma^2) \Big\rbrace. \label{E:EBsumofsquares}
	\end{align}
	Adding \eqref{E:PQsumofsquares} and \eqref{E:EBsumofsquares}, and comparing with \eqref{E:weightedpointwisenorm}, 
	we conclude the following:
	\begin{align}
		\Knorm \Far \Knorm^2 = |\Electricfield|^2 + |\Magneticinduction|^2 + |\SFar|^2 + |\SFardual|^2.
	\end{align}
	We have thus proved \eqref{E:FaradaynormsquaredintermsofEB} - \eqref{E:FaradayLieAKnormsquaredintermsofinteriorproducts}. 
	With the help of Corollary \ref{C:ConformalKillingLieXhodgedualcommutation}, 
	\eqref{E:FaradaydualLieAKnormsquaredintermsofinteriorproducts} 
	follows similarly.

	Inequality \eqref{E:CovariantderivativesFaradaynormsquaredapproxCovariantderivatiesEB} follows from
	\eqref{E:spacetimecovariantderivatvenormequivalenttotranslationalcovariantderivativenorm},
	\eqref{E:FaradaynormsquaredintermsofEB}, and the fact that the electric field and magnetic induction one-forms
	associated to $\nabla_{\mathscr{T}}^I \Far$ are respectively
	$\nabla_{\mathscr{T}}^I \Electricfield$ and $\nabla_{\mathscr{T}}^I \Magneticinduction.$ \\

\noindent \emph{Proof of \eqref{E:PointwiseweightedFarLieZnormapproxLieZnormEBPQ}}:

Using the commutation identities \eqref{E:Liederivativeinteriorproductcommutator}, \eqref{E:translationscommute}, 
\eqref{E:commutatorTZ}, and \eqref{E:commutatorSZ}, it follows that for any $\mathcal{Z}-$multi-index $I,$ we have

\begin{align}
	|i_{T_{(0)}} \Lie_{\mathcal{Z}}^I \Far - \Lie_{\mathcal{Z}}^I i_{(T_{(0)})} \Far|
		& \lesssim  \sum_{\mu = 0}^3 \sum_{|J| \leq |I| - 1} |i_{T_{(\mu)}} \Lie_{\mathcal{Z}}^J \Far|, 
		\label{E:T0contractionLieZIFarcommutationidentity} \\
	|i_{T_{(0)}} \Lie_{\mathcal{Z}}^I \Fardual - \Lie_{\mathcal{Z}}^I i_{(T_{(0)})} \Fardual|
		& \lesssim  \sum_{\mu = 0}^3 \sum_{|J| \leq |I| - 1} |i_{T_{(\mu)}} \Lie_{\mathcal{Z}}^J \Fardual|, \\	
	|i_S \Lie_{\mathcal{Z}}^I \Far - \Lie_{\mathcal{Z}}^I i_S \Far|
		& \lesssim  \sum_{\mu = 0}^3 \sum_{|J| \leq |I| - 1} |i_{T_{(\mu)}} \Lie_{\mathcal{Z}}^J \Far|, \\
	|i_S \Lie_{\mathcal{Z}}^I \Fardual - \Lie_{\mathcal{Z}}^I i_S \Fardual|
		& \lesssim  \sum_{\mu = 0}^3 \sum_{|J| \leq |I| - 1} |i_{T_{(\mu)}} \Lie_{\mathcal{Z}}^J \Fardual|.
		\label{E:T0contractionLieZIFardualcommutationidentity}
\end{align}
Furthermore, it follows from \eqref{E:Faradaynormsquaredintermsofnullcomponents} and \eqref{E:FarcontractXnormidentity} 
that for any two-form $\Far$ and each translational Killing field $T_{(\mu)},$ $(\mu = 0,1,2,3),$ we have that

\begin{align} \label{E:TranslationcontractwithFarpointwiseNormbound}
	|i_{T_{(\mu)}} \Far|^2 + |i_{T_{(\mu)}} \Fardual|^2 
	\lesssim |\Far|^2 \leq \Knorm \Far \Knorm^2.
\end{align}
Combining \eqref{E:T0contractionLieZIFarcommutationidentity} - \eqref{E:T0contractionLieZIFardualcommutationidentity}
with \eqref{E:TranslationcontractwithFarpointwiseNormbound}, we deduce that

\begin{align}
	|i_{T_{(0)}} \Lie_{\mathcal{Z}}^I \Far - \Lie_{\mathcal{Z}}^I i_{T_{(0)}} \Far| & \lesssim \Knorm \Far 
		\Knorm_{\Lie_{\mathcal{Z}};|I|-1}, \label{E:T0contractionLieZIFarcommutationbound} \\
	|i_S \Lie_{\mathcal{Z}}^I \Far - \Lie_{\mathcal{Z}}^I i_S \Far| & \lesssim \Knorm \Far \Knorm_{\Lie_{\mathcal{Z}};|I|-1}.
		\label{E:ScontractionLieZIFarcommutationbound}
\end{align}

We will now prove \eqref{E:PointwiseweightedFarLieZnormapproxLieZnormEBPQ} by induction. The base case was
established in \eqref{E:FaradayKNormsquaredintermsofEBPQ}. We thus inductively assume
that \eqref{E:PointwiseweightedFarLieZnormapproxLieZnormEBPQ} holds in the case $N-1.$
Using \eqref{E:T0contractionLieZIFarcommutationbound} - \eqref{E:ScontractionLieZIFarcommutationbound}, it follows that

\begin{align}
	\Knorm \Far \Knorm_{\Lie_{\mathcal{Z}};N}^2 & = \Knorm \Far \Knorm_{\Lie_{\mathcal{Z}};N - 1}^2
		+ \sum_{|I| = N} \Big( |i_{T_{(0)}} \Lie_{\mathcal{Z}}^I \Far|^2 
		+ |i_{T_{(0)}} \Lie_{\mathcal{Z}}^I \Fardual|^2 
		+ |i_{S} \Lie_{\mathcal{Z}}^I \Far|^2
		+ |i_{S} \Lie_{\mathcal{Z}}^I \Fardual|^2 \Big) \\
	& \lesssim \Knorm \Far \Knorm_{\Lie_{\mathcal{Z}};N - 1}^2
		+ \sum_{|I| = N} \Big( |\Lie_{\mathcal{Z}}^I i_{T_{(0)}} \Far|^2 
			+ |\Lie_{\mathcal{Z}}^I i_{T_{(0)}} \Fardual|^2 
			+ |\Lie_{\mathcal{Z}}^I i_{S} \Far|^2
			+ |\Lie_{\mathcal{Z}}^I i_{S} \Fardual|^2  \Big) \notag \\
	& = \Knorm \Far \Knorm_{\Lie_{\mathcal{Z}};N - 1}^2 
		+ \sum_{|I| = N} \Big( |\Lie_{\mathcal{Z}}^I \Electricfield |^2 
		+ |\Lie_{\mathcal{Z}}^I \Magneticinduction|^2 
		+ |\Lie_{\mathcal{Z}}^I \SFar|^2
		+ |\Lie_{\mathcal{Z}}^I \SFardual|^2 \Big) \notag \\
	& \lesssim \sum_{|I| \leq N} \Big( |\Lie_{\mathcal{Z}}^I \Electricfield |^2 
		+ |\Lie_{\mathcal{Z}}^I \Magneticinduction|^2 + |\Lie_{\mathcal{Z}}^I \SFar|^2 + |\Lie_{\mathcal{Z}}^I\SFardual|^2 \Big) 
			\notag \\
	& = | \Electricfield |_{\Lie_{\mathcal{Z}};N}^2 
			+ | \Magneticinduction |_{\Lie_{\mathcal{Z}};N}^2
			+ | \SFar |_{\Lie_{\mathcal{Z}};N}^2
			+ | \SFardual |_{\Lie_{\mathcal{Z}};N}^2, \notag
\end{align}
where the induction hypothesis was used to deduce the next to last line.

For the opposite inequality, we again use \eqref{E:T0contractionLieZIFarcommutationbound} - \eqref{E:ScontractionLieZIFarcommutationbound} and the induction hypothesis to conclude 
that
\begin{align}
	& |\Electricfield |_{\Lie_{\mathcal{Z}};N}^2 
			+ | \Magneticinduction |_{\Lie_{\mathcal{Z}};N}^2
			+ | \SFar |_{\Lie_{\mathcal{Z}};N}^2
			+ | \SFardual |_{\Lie_{\mathcal{Z}};N}^2 \\
	& = \sum_{|I| = N} \Big(| \Lie_{\mathcal{Z}}^I i_{T_{(0)}} \Far|^2 + | \Lie_{\mathcal{Z}}^I i_{T_{(0)}} \Fardual|^2
		+ | \Lie_{\mathcal{Z}}^I i_{S} \Far|^2 + | \Lie_{\mathcal{Z}}^I i_{S} \Fardual|^2 \Big)
			+ | \Electricfield |_{\Lie_{\mathcal{Z}};N-1}^2 
			+ | \Magneticinduction |_{\Lie_{\mathcal{Z}};N-1}^2
			+ | \SFar |_{\Lie_{\mathcal{Z}};N-1}^2
			+ | \SFardual |_{\Lie_{\mathcal{Z}};N-1}^2 \notag \\
	& \lesssim \sum_{|I| = N} \Big(|i_{T_{(0)}} \Lie_{\mathcal{Z}}^I \Far|^2 + |i_{T_{(0)}} \Lie_{\mathcal{Z}}^I \Fardual|^2
		+ |i_{S} \Lie_{\mathcal{Z}}^I \Far|^2 + |i_{S} \Lie_{\mathcal{Z}}^I \Fardual|^2 \Big)
		+ \Knorm \Far \Knorm_{\Lie_{\mathcal{Z}};N-1}^2 \notag \\
	& = \Knorm \Far \Knorm_{\Lie_{\mathcal{Z}};N}^2. \notag 
\end{align}
\\

\noindent \emph{Proof of \eqref{E:PointwiseqweightedqnormtranslationderivatieslessthanKnormLieZ}}:

Using the facts that $\nabla_{T_{(\mu)}} S = T_{(\mu)}$ and $\nabla_{T_{(\mu)}} T_{(0)} = 0$ for $\mu = 0,1,2,3,$
together with the commutation identity \eqref{E:Covariantderivativeinteriorproductcommutator}, it follows that

\begin{align}
	|\nabla_{\mathscr{T}}^I i_{T_{(0)}} \Far - i_{T_{(0)}}  \nabla_{\mathscr{T}}^I \Far| & = 0, 
		\label{E:nablaTcontractionT0Farcommutatoris0} \\
	|\nabla_{\mathscr{T}}^I i_{T_{(0)}} \Fardual - i_{T_{(0)}}  \nabla_{\mathscr{T}}^I \Fardual| & = 0, 
		\label{E:nablaTcontractionT0Fardualcommutatoris0} \\
	|\nabla_{\mathscr{T}}^I i_S \Far - i_S \nabla_{\mathscr{T}}^I \Far| & \lesssim 
	 \sum_{\mu = 0}^3 \sum_{|J| \leq |I| - 1} |i_{T_{(\mu)}} \nabla_{\mathscr{T}}^J \Far|, 
	 \label{E:nablaT0contractionSFarcommutatorestimate} \\
	|\nabla_{\mathscr{T}}^I i_S \Fardual - i_S \nabla_{\mathscr{T}}^I \Fardual| & \lesssim 
	 \sum_{\mu = 0}^3 \sum_{|J| \leq |I| - 1} |i_{T_{(\mu)}} \nabla_{\mathscr{T}}^J \Fardual|.
	 \label{E:nablaT0contractionSFardualcommutatorestimate}
\end{align}
Furthermore, from definition \eqref{E:weightedpointwisenorm},
it follows that any two-form $\dot{\Far}$ satisfies the following inequality:

\begin{align} \label{E:T0contractionKnormweightedqcomparison}
	|i_{T_{(0)}} \dot{\Far}|^2 + |i_{T_{(0)}} {^{\star \hspace{-.03in}}\dot{\Far}}|^2 & \lesssim (1 + |q|)^{-2} \Knorm \dot{\Far} 
	\Knorm^2.
\end{align}
It thus follows from \eqref{E:nablaTcontractionT0Farcommutatoris0} - \eqref{E:nablaT0contractionSFardualcommutatorestimate}
and \eqref{E:T0contractionKnormweightedqcomparison} that

\begin{align} 
	|\nabla_{\mathscr{T}}^I i_S \Far - i_S \nabla_{\mathscr{T}}^I \Far|^2 
	 & \lesssim (1 + |q|)^{-2} \Knorm \Far \Knorm_{\nabla_{\mathscr{T}};|I|-1}^2, 
	 \label{E:nablaTIcontractionSFarcommutatorestimate}\\
	|\nabla_{\mathscr{T}}^I i_S \Fardual - i_S \nabla_{\mathscr{T}}^I \Fardual|^2 
	 & \lesssim (1 + |q|)^{-2} \Knorm \Far \Knorm_{\nabla_{\mathscr{T}};|I|-1}^2. 
	 	\label{E:nablaTIcontractionSFardualcommutatorestimate}
\end{align}

We now prove inequality \eqref{E:PointwiseqweightedqnormtranslationderivatieslessthanKnormLieZ} by induction, the base case being the trivial inequality $\Knorm \Far \Knorm^2 \lesssim \Knorm \Far \Knorm^2.$ We assume that the inequality holds in the 
case $N-1,$ and we let $I$ be any  multi-index $I$ with $|I| = N.$
Using \eqref{E:nablaTcontractionT0Farcommutatoris0} - \eqref{E:nablaTcontractionT0Fardualcommutatoris0} and
\eqref{E:nablaTIcontractionSFarcommutatorestimate} - \eqref{E:nablaTIcontractionSFardualcommutatorestimate},
it follows that 

\begin{align} \label{E:nablaTIFarKnromcomparedtocontractionT0SnablaTIFarandFardual}
	\Knorm \nabla_{\mathscr{T}}^I \Far \Knorm^2 
	& =|i_{T_{(0)}} \nabla_{\mathscr{T}}^I \Far |^2 + |i_{T_{(0)}} \nabla_{\mathscr{T}}^I \Fardual |^2 
		+ |i_{S} \nabla_{\mathscr{T}}^I \Far |^2 + |i_{S} \nabla_{\mathscr{T}}^I \Fardual |^2 \\
	& \lesssim |\nabla_{\mathscr{T}}^I i_{T_{(0)}} \Far |^2 + |\nabla_{\mathscr{T}}^I i_{T_{(0)}} \Fardual |^2 
		+ |\nabla_{\mathscr{T}}^I i_{S} \Far |^2 + |\nabla_{\mathscr{T}}^I i_{S} \Fardual |^2 
		+ (1 + |q|)^{-2} \Knorm \Far \Knorm_{\nabla_{\mathscr{T}};N-1}^2 \notag \\
	& \lesssim |\nabla_{\mathscr{T}}^I \Electricfield |^2 + |\nabla_{\mathscr{T}}^I \Magneticinduction |^2
		+ |\nabla_{\mathscr{T}}^I \SFar|^2 + |\nabla_{\mathscr{T}}^I \SFardual |^2
		+ (1 + |q|)^{-2N} \Knorm \Far \Knorm_{\Lie_{\mathcal{Z}};N-1}^2. \notag 
\end{align}
We remark that we have used the induction hypothesis to arrive at the last inequality. 

To estimate the $|\nabla_{\mathscr{T}}^I \Electricfield |^2 + |\nabla_{\mathscr{T}}^I \Magneticinduction |^2 + |\nabla_{\mathscr{T}}^I \SFar|^2 + |\nabla_{\mathscr{T}}^I \SFardual |^2$ term on the right-hand side of \eqref{E:nablaTIFarKnromcomparedtocontractionT0SnablaTIFarandFardual}, we use
\eqref{E:LieZequivalenttonablaZpointwise}, \eqref{E:spacetimecovariantderivatvenormequivalenttotranslationalcovariantderivativenorm}, 
\eqref{E:spacetimenablaboundedbyqpowertimenablaZ}, and \eqref{E:PointwiseweightedFarLieZnormapproxLieZnormEBPQ}
to deduce that

\begin{align} \label{E:mathcalTIEBPQcomparedtoqweightedKnormFar}
	|\nabla_{\mathscr{T}}^I \Electricfield|^2 + |\nabla_{\mathscr{T}}^I \Magneticinduction|^2 
		+ |\nabla_{\mathscr{T}}^I \SFar|^2 + |\nabla_{\mathscr{T}}^I \SFardual|^2
	& \lesssim (1 + |q|)^{-2N} \Big(|\Electricfield|_{\Lie_{\mathcal{Z}};N}^2
		+ |\Magneticinduction|_{\Lie_{\mathcal{Z}};N}^2 + |\SFar|_{\Lie_{\mathcal{Z}};N}^2
		+ |\SFardual|_{\Lie_{\mathcal{Z}};N}^2 \Big) \\
	& \approx  (1 + |q|)^{-2N} \Knorm \Far \Knorm_{\Lie_{\mathcal{Z}};N}^2. \notag
\end{align}
Combining \eqref{E:nablaTIFarKnromcomparedtocontractionT0SnablaTIFarandFardual} and \eqref{E:mathcalTIEBPQcomparedtoqweightedKnormFar}, we deduce the induction step. This completes the proof of 
\eqref{E:PointwiseqweightedqnormtranslationderivatieslessthanKnormLieZ}. 
\\

\noindent \emph{Proof of \eqref{E:PointwiseEBPQnablaZnormlessthanHigherspowerweightednormcovariantderivativetensorEB}
and \eqref{E:PointwiseHigherpowerqweightedcovariantderivativesofFarlessthanKnormLieZderivatives}
- \eqref{E:FarLieZintegralnormintermsofweightedrintegralnormofEBalongSigma0}}:

In the inertial coordinate system $\lbrace x^{\mu} \rbrace_{\mu=0,1,2,3},$ inequality \eqref{E:PointwiseEBPQnablaZnormlessthanHigherspowerweightednormcovariantderivativetensorEB} follows from the definition \eqref{E:PQinertialcomponents} of $\SFar$ and $\SFardual,$
\eqref{E:Zcovariantderivativeslessthansweightedcovarianttensor}, \eqref{E:CovariantderivativesFaradaynormsquaredapproxCovariantderivatiesEB}, 
the Leibniz rule, and the fact that the coordinate functions $x^{\mu}$ satisfy $|x^{\mu}| \leq s,$ $|\nabla x^{\mu}| \leq 1,$ and $|\nabla_{(2)} x^{\mu}| = 0.$

Inequality \eqref{E:PointwiseHigherpowerqweightedcovariantderivativesofFarlessthanKnormLieZderivatives} 
follows from \eqref{E:spacetimecovariantderivatvenormequivalenttotranslationalcovariantderivativenorm},
\eqref{E:PointwiseqweightedqnormtranslationderivatieslessthanKnormLieZ},
and the fact that  $(1 + |q|) |\dot{\Far}| \lesssim \Knorm \dot{\Far} \Knorm$ holds for any two-form $\dot{\Far}.$

Inequality \eqref{E:FarLieZintegralnormintermsofweightedrintegralnormofEBalongSigma0} then follows
from \eqref{E:LieZequivalenttonablaZpointwise}, 
\eqref{E:CovariantderivativesFaradaynormsquaredapproxCovariantderivatiesEB},
\eqref{E:PointwiseweightedFarLieZnormapproxLieZnormEBPQ},
\eqref{E:PointwiseEBPQnablaZnormlessthanHigherspowerweightednormcovariantderivativetensorEB},
\eqref{E:PointwiseHigherpowerqweightedcovariantderivativesofFarlessthanKnormLieZderivatives},
and that fact that $q= s = r$ along $\Sigma_0.$ 
\\

\noindent \emph{Proof of \eqref{E:nablauLnablaLFarLierotaionsnormLieZnormcomparison}}

Let $k \geq 0$ be an integer. Then since $\uL = T_{(0)} - \omega^a T_{(a)},$ 
$\omega^j \eqdef x^j/r,$ we use \eqref{E:PointwiseqweightedqnormtranslationderivatieslessthanKnormLieZ} and the fact that 
$T_{(0)} \omega^j = \omega^a T_{(a)} \omega^j = 0,$ $(j=1,2,3),$ to conclude that the following inequality holds for $r > 0:$

\begin{align} \label{E:nablauLFarKnormboundedbyqweightedLieZFarKnorm}
	\Knorm \nabla_{\uL}^k \Far \Knorm \lesssim (1 + |q|)^{-k} \Knorm \Far \Knorm_{\mathcal{Z};k}.
\end{align}

Next, we observe the following identity, which holds for any two-form $\Far:$
\begin{align} \label{E:snablaLFarintermsofSqnablauLFar}
	s \nabla_L \Far = 2 \Lie_S \Far + q \nabla_{\uL} \Far - 4 \Far.
\end{align}
Now using the commutation properties of Lemma \ref{L:LLunderlinecommutewithLieO}, we may iterate \eqref{E:snablaLFarintermsofSqnablauLFar} to conclude that there exist constants $C_{abl}$ such that

\begin{align} \label{E:snablaLFarintermsofqnablauLLieSFariterated}
	(s \nabla_L)^l \Far = \sum_{b=0}^l \sum_{a=0}^b C_{abl} (q \nabla_{\uL})^a \Lie_S^{b-a} \Far.
\end{align}
Using the facts that $- \nabla_{\uL} q = \nabla_{L} s = 2,$ it follows from \eqref{E:snablaLFarintermsofqnablauLLieSFariterated} that there exist constants $\widetilde{C}_{abl}$ such that

\begin{align} \label{E:snablaLFarintermsofqnablauLLieSFaralternate}
	s^l \nabla_L^l \Far = \sum_{b=0}^l \sum_{a=0}^b \widetilde{C}_{abl} q^a\nabla_{\uL}^a \Lie_S^{b-a} \Far.
\end{align}
Furthermore, if $I$ is any rotational multi-index, then using Lemma \ref{L:LLunderlinecommutewithLieO}, the facts
that $\nabla_{\uL} s = \nabla_L q = \Lie_{\mathcal{O}}^I s = \Lie_{\mathcal{O}}^I q = 0,$ that $\nabla_{\uL} q = - 2,$ and \eqref{E:nablauLFarKnormboundedbyqweightedLieZFarKnorm}, it follows that

\begin{align} \label{E:notyetIsummednablauLnablaLFarLierotaionsnormLieZnormcomparison}
	s^l \Knorm \Lie_{\mathcal{O}}^I \nabla_L^l \nabla_{\uL}^k \Far \Knorm 
	& \lesssim \sum_{b=0}^l \sum_{a=0}^b |q|^a \Knorm \nabla_{\uL}^{a + k} \Lie_{\mathcal{O}}^I \Lie_S^{b-a} \Far \Knorm \\
	& \lesssim \sum_{b=0}^l \sum_{a=0}^b |q|^a (1 + |q|)^{-(a + k)} \Knorm \Far \Knorm_{\Lie_{\mathcal{Z}};k + b + |I|} 
		\notag \\
	& \lesssim (1 + |q|)^{-k} \Knorm \Far \Knorm_{\Lie_{\mathcal{Z}};k + l + |I|}. \notag
\end{align}
Multiplying each side of \eqref{E:notyetIsummednablauLnablaLFarLierotaionsnormLieZnormcomparison}
by $(1 + |q|)^k$ and summing over all rotational multi-indices $|I| \leq N - k - l,$ we arrive at \eqref{E:nablauLnablaLFarLierotaionsnormLieZnormcomparison}. \hfill $\qed$

\subsection{Norms for \texorpdfstring{$\mathring{\Displacement}$ and $\mathring{\Magneticinduction}$}{the data}} \label{SS:DataNorms}
In this section, we introduce a weighted Sobolev norm that will be used to express our global existence smallness condition in terms of the restriction of $(\Displacement, \Magneticinduction)$ to $\Sigma_0,$ which we denote by $(\mathring{\Displacement}, \mathring{\Magneticinduction}).$ This norm is computed using only quantities that are inherent to $\Sigma_0.$
However, during the course of our global existence argument, we analyze the norm 
$\Kintnorm \Far(t) \Kintnorm_{\Lie_{\mathcal{Z}};N},$ and in particular that we need the smallness of $\Kintnorm \Far(0) \Kintnorm_{\Lie_{\mathcal{Z}};N}$ to close our estimates. Since this latter quantity involves derivatives that are normal to $\Sigma_0,$ we need to use the MBI equations to relate the size of $\Kintnorm \Far(0) \Kintnorm_{\Lie_{\mathcal{Z}};N}$ to the size of $(\mathring{\Displacement}, \mathring{\Magneticinduction})$ as measured by the $\Sigma_0-$inherent norm. This is accomplished in Lemma \ref{E:LiederivateFarnormEBnorminitialequivalence}.
 
We now define the aforementioned weighted Sobolev norm.

\begin{definition} \label{D:HNdeltanorm}
	Let $U$ be a tensorfield tangent to the Cauchy hypersurface $\Sigma_0.$ 
	Let $d(\ux)$ denote the Riemannian distance from the origin in $\Sigma_0$ to the point $\ux \in \Sigma_0;$
	i.e., in a Euclidean coordinate system $\lbrace x^j \rbrace_{j=1,2,3}$ on $\Sigma,$ $d^2(\ux) = |\ux|^2 = (x^1)^2 + 
	(x^2)^2 + (x^3)^2.$ Then for any integer $N \geq 0,$ and any real number $\delta,$ we define the $H_{\delta}^N$ norm of $U$ 
	by 
	\begin{align} \label{E:HNdeltanorm}
		\| U \|_{H_{\delta}^N}^2 \eqdef \sum_{n=0}^N \int_{\Sigma_0} \big(1 + d^2(\ux)\big)^{(\delta + n)} 
		|\SigmafirstfundNabla_{(n)} U(\ux)|^2 \, d^3 \ux.
	\end{align}
\end{definition}

As discussed in detail in \cite{yCBdC1981}, the following norm, which is used in our proof of
Lemma \ref{L:EBFarintegralnormequivalence}, can be bounded by a suitable choice of one of the above
weighted Sobolev norms; see Lemma \ref{L:SobolevEmbeddingHNdeltaCNprimedeltamprime}.

\begin{definition} \label{D:CNdeltanorm}
	Let $U$ be a tensorfield tangent to the Cauchy hypersurface $\Sigma_0.$ 
	Let $d(\ux)$ denote the Riemannian distance from the origin in $\Sigma_0$ to the point $\ux \in \Sigma_0;$
	i.e., in a Euclidean coordinate system on $\Sigma_0,$ $d^2(\ux) = |\ux|^2 = (x^1)^2 + (x^2)^2 + (x^3)^2.$ Then for 
	any integer $N \geq 0,$ and any real number $\delta,$ we define the $C_{\delta}^N$ norm of $U$ by
	
	\begin{align} \label{E:CNdeltanorm}
		\| U \|_{C_{\delta}^N}^2 \eqdef \sum_{n=0}^N \sup_{\ux \in \Sigma_0} \big(1 + d^2(\ux)\big)^{(\delta + n)} 
		|\SigmafirstfundNabla_{(n)} U(\ux)|^2.
	\end{align}
\end{definition}

The next lemma is used in the proof of the subsequent lemma, which is the main result of this section.

\begin{lemma} 
		Let $N \geq 0$ be an integer. Let $\Far(t,\ux)$ be a two-form, and let 
		$\big(\Electricfield(t,\ux),\Magneticinduction(t,\ux)\big)$ 
		be its electromagnetic decomposition as defined in Section \ref{SS:electromagneticdecomposition}. Then
	 
	 \begin{align} \label{E:LiederivateFarnormEBnorminitialequivalence}
		\Kintnorm \Far(0) \Kintnorm_{\Lie_{\mathcal{Z}};N}^2 
		\approx \sum_{n=0}^N \int_{\mathbb{R}^3} (1 + |\uy|^2)^{n+1} (|\nabla_{(n)} \Electricfield (0,\uy)|^2 
			+ |\nabla_{(n)} \Magneticinduction (0,\uy)|^2) \, d^3 \uy.
	\end{align}
\end{lemma}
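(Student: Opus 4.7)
The plan is to obtain this as a direct integrated consequence of the pointwise equivalence \eqref{E:FarLieZintegralnormintermsofweightedrintegralnormofEBalongSigma0} already established in Proposition \ref{P:Equivalences}. That pointwise estimate asserts
\begin{align*}
	\Knorm \Far \Knorm_{\Lie_{\mathcal{Z}};N}^2 \big|_{\Sigma_0}
	\approx \sum_{n=0}^N (1 + r^2)^{n+1}\Big( |\nabla_{(n)} \Electricfield|^2\big|_{\Sigma_0} + |\nabla_{(n)}\Magneticinduction|^2\big|_{\Sigma_0}\Big),
\end{align*}
where $r = |\ux|$ and $\nabla_{(n)}$ denotes the full spacetime covariant derivative of order $n$. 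Along $\Sigma_0$ we have $q = r = s = |\uy|$, so the weight $1 + r^2$ appearing above is exactly $1 + |\uy|^2$.

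The main step is then to integrate both sides of this pointwise equivalence over $\Sigma_0 \cong \mathbb{R}^3$ with respect to the Euclidean volume form $d^3 \uy$. By definition \eqref{E:MorawetzWeightedLieDerivativeIntegralNormN}, the left-hand side integrates precisely to $\Kintnorm \Far(0) \Kintnorm_{\Lie_{\mathcal{Z}};N}^2$, while the right-hand side integrates to
\begin{align*}
	\sum_{n=0}^N \int_{\mathbb{R}^3}(1 + |\uy|^2)^{n+1}\big(|\nabla_{(n)} \Electricfield(0,\uy)|^2 + |\nabla_{(n)}\Magneticinduction(0,\uy)|^2\big)\, d^3\uy,
\end{align*}
which is exactly the right-hand side of \eqref{E:LiederivateFarnormEBnorminitialequivalence}. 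The two-sided equivalence is preserved under integration because the implicit constants in $\approx$ are uniform (in particular, independent of the spacetime point), so integrating each of the two pointwise inequalities separately yields the corresponding integrated inequalities.

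There is essentially no obstacle here beyond citing the right ingredient: the entire content of the lemma is already packaged inside \eqref{E:FarLieZintegralnormintermsofweightedrintegralnormofEBalongSigma0}, whose proof in turn relied on chaining together \eqref{E:LieZequivalenttonablaZpointwise}, \eqref{E:CovariantderivativesFaradaynormsquaredapproxCovariantderivatiesEB}, \eqref{E:PointwiseweightedFarLieZnormapproxLieZnormEBPQ}, \eqref{E:PointwiseEBPQnablaZnormlessthanHigherspowerweightednormcovariantderivativetensorEB}, and \eqref{E:PointwiseHigherpowerqweightedcovariantderivativesofFarlessthanKnormLieZderivatives} together with the identity $s = q = r$ on $\Sigma_0$. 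Thus the proof is simply one line of integration after invoking Proposition \ref{P:Equivalences}.
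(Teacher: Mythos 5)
Your proposal is correct and takes exactly the same route as the paper: the paper's own proof is the one-line observation that \eqref{E:LiederivateFarnormEBnorminitialequivalence} follows directly from integrating the pointwise equivalence \eqref{E:FarLieZintegralnormintermsofweightedrintegralnormofEBalongSigma0} over $\Sigma_0$. Your additional remarks about uniformity of the implicit constants and the identity $q = s = r$ on $\Sigma_0$ are accurate and, if anything, make the argument slightly more explicit than the paper's.
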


\begin{remark}
	Note that on the right-hand side of \eqref{E:LiederivateFarnormEBnorminitialequivalence},$\nabla_{(n)}$
	denotes the \textbf{full spacetime} covariant derivative operator of order $n.$ In particular, 
	$\nabla_{(n)} \Electricfield$ and $\nabla_{(n)} \Magneticinduction$ involve both tangential and normal derivatives
	of $\Electricfield$ and $\Magneticinduction.$
\end{remark}

\begin{proof}
	Inequality \eqref{E:LiederivateFarnormEBnorminitialequivalence} follows directly from integrating 
	\eqref{E:FarLieZintegralnormintermsofweightedrintegralnormofEBalongSigma0} over $\Sigma_0.$
\end{proof}

We now state and prove the main result of this section.

\begin{lemma} \label{L:EBFarintegralnormequivalence}
	Let $N \geq 2$ be an integer. Let $\Far$ be a solution to the MBI system, and let $(\mathring{\Displacement}, 
	\mathring{\Magneticinduction})$ be its electromagnetic decomposition along the Cauchy hypersurface $\Sigma_0$ as defined in 
	Section \ref{SS:electromagneticdecomposition}. Let $\Kintnorm \cdot \Kintnorm_{\Lie_{\mathcal{Z}};N}$ and 
	$\| \cdot \|_{H_1^N}$ be the weighted integral norms defined in \eqref{E:MorawetzWeightedLieDerivativeIntegralNormN} and 
	\eqref{E:HNdeltanorm} respectively. There exists a constant $\epsilon > 0$ such that if 
	$\| (\mathring{\Displacement}, \mathring{\Magneticinduction}) \|_{H_1^N} < \epsilon,$
	then
	
	\begin{align}	\label{E:EBFarintegralnormequivalence}
		\Kintnorm \Far(0) \Kintnorm_{\Lie_{\mathcal{Z}};N} 
			\approx \| (\mathring{\Displacement}, \mathring{\Magneticinduction}) \|_{H_1^N}.
	\end{align}
\end{lemma}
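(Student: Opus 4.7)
The plan is to reduce \eqref{E:EBFarintegralnormequivalence} to the equivalence already furnished by the preceding identity \eqref{E:LiederivateFarnormEBnorminitialequivalence}, which yields
\[
\Kintnorm \Far(0) \Kintnorm_{\Lie_{\mathcal{Z}};N}^2 \approx \sum_{n=0}^N \int_{\mathbb{R}^3} (1 + |\uy|^2)^{n+1} \Big(|\nabla_{(n)} \Electricfield(0,\uy)|^2 + |\nabla_{(n)} \Magneticinduction(0,\uy)|^2\Big) \, d^3 \uy.
\]
The weight $(1+|\ux|^2)^{n+1}$ in the definition \eqref{E:HNdeltanorm} of $\|\cdot\|_{H_1^N}$ matches exactly the weight furnished above, so no weight mismatch arises. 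What remains are two interchanges: swap $(\Electricfield,\Magneticinduction)$ for $(\Displacement,\Magneticinduction)$, and swap full spacetime $\nabla$-derivatives for intrinsic $\SigmafirstfundNabla$-derivatives along $\Sigma_0$.

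For the first interchange, formula \eqref{E:EintermsofDB} expresses $\Electricfield$ as a real-analytic function of $(\Magneticinduction,\Displacement)$ on a neighborhood of the origin that vanishes to first order there, and \eqref{E:DisplacementintermsofElectricfieldMagneticinduction} provides the corresponding inverse. The small-data hypothesis $\|(\mathring{\Displacement},\mathring{\Magneticinduction})\|_{H_1^N}<\epsilon$ with $N \ge 2$, combined with a weighted Sobolev embedding of the form $H_\delta^N \hookrightarrow C_\delta^{N-2}$, supplies a uniform $L^\infty$-smallness bound on $(\mathring{\Displacement},\mathring{\Magneticinduction})$. Hence the denominators $\ell_{(MBI)}$ remain uniformly bounded away from $0$ along $\Sigma_0$, and standard Moser composition estimates in weighted Sobolev spaces give the equivalence between the $H_1^N$-type norms of $(\mathring{\Electricfield},\mathring{\Magneticinduction})$ and $(\mathring{\Displacement},\mathring{\Magneticinduction})$.

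For the second interchange, the evolution equations \eqref{E:partialtBintermsofBandD}--\eqref{E:partialtDintermsofBandD} express $\partial_t(\Magneticinduction,\Displacement)$ as an $\ucurcl$ applied to a smooth function of $(\Magneticinduction,\Displacement)$ vanishing at $0$. Iterating and repeatedly invoking the Leibniz and chain rules, one rewrites any mixed derivative $\partial_t^k \SigmafirstfundNabla_{(l)}(\Magneticinduction,\Displacement)|_{\Sigma_0}$ with $k+l=n$ as a polynomial, with coefficients that are smooth (rational-with-square-root) functions of $(\mathring{\Magneticinduction},\mathring{\Displacement})$, in the tangential derivatives $\SigmafirstfundNabla_{(j)}(\mathring{\Magneticinduction},\mathring{\Displacement})$ for $j \le n$. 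Crucially, each replacement of a $\partial_t$ by $\ucurcl$ trades one time derivative for one spatial derivative, preserving total order and thus respecting the weight $(1+|\ux|^2)^{n+1}$. Another application of the weighted Moser estimate, again under the $N \ge 2$ smallness hypothesis, closes the equivalence.

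The principal obstacle is the bookkeeping in these iterated Moser estimates: each substitution of an evolution equation nests a new nonlinear composition whose derivatives must be controlled in weighted Sobolev norms, and one must verify at every stage that the derivatives falling on the lowest-regularity factor are absorbed by the weighted $C^0$ Sobolev embedding — which is exactly what the condition $N \ge 2$ makes possible. Once this is carried out, the two interchanges above combine with \eqref{E:LiederivateFarnormEBnorminitialequivalence} to yield \eqref{E:EBFarintegralnormequivalence}.
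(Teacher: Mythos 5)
Your proposal is correct and follows essentially the same route as the paper: reduce to \eqref{E:LiederivateFarnormEBnorminitialequivalence}, swap $\Electricfield$ for $\Displacement$ via \eqref{E:EintermsofDB} together with the weighted composition estimates (Corollary \ref{C:CompositionProductHNdelta}), and trade normal for tangential derivatives by iterating the evolution equations \eqref{E:partialtBintermsofBandD}--\eqref{E:partialtDintermsofBandD}, with the embedding $H_1^N \hookrightarrow C_1^{N-2}$ and Corollary \ref{C:WeightedSobolevSpaceMultiplicationProperties} controlling the nested nonlinear compositions exactly as you describe.
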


\begin{proof}
	We first recall that by \eqref{E:EintermsofDB}, we have that
	
	\begin{align} \label{E:mathringEintermsofmathringDmathringB}
		\mathring{\Electricfield} & = \frac{\mathring{\Displacement} + \mathring{\Magneticinduction} \times 
		(\mathring{\Magneticinduction} 
			\times \mathring{\Displacement})}{(1 + |\mathring{\Magneticinduction}|^2 + |\mathring{\Displacement}|^2 
			+ |\mathring{\Displacement} \times \mathring{\Magneticinduction}|^2)^{1/2}}.
	\end{align}
	Using Corollary \ref{C:CompositionProductHNdelta}, it follows from \eqref{E:mathringEintermsofmathringDmathringB} 
	that if $\epsilon$ is sufficiently small, then
	
	\begin{align} \label{E:DBEHsmalldataHNdeltanormequivalence}
		\| (\mathring{\Displacement}, \mathring{\Magneticinduction}) \|_{H_1^N} 
		\approx \| (\mathring{\Electricfield}, \mathring{\Magneticinduction}) \|_{H_1^N}.
	\end{align}
	
	We now introduce the abbreviation $V \eqdef (\Magneticinduction,\Displacement)$
	and re-write equations \eqref{E:partialtBintermsofBandD} and \eqref{E:partialtDintermsofBandD} as
	
	\begin{align} \label{E:partialtV}
		\nabla_{T_{(0)}} V = \mathfrak{L}(\SigmafirstfundNabla V) + \mathfrak{F}(V, \SigmafirstfundNabla V),
	\end{align}
	where $\mathfrak{L}(\cdot)$ is a constant linear map (i.e., a constant-coefficient matrix),
	and $\mathfrak{F}(V,\SigmafirstfundNabla V),$ which consists of ``cubic-order'' error terms,
	is a smooth function $V, \SigmafirstfundNabla V.$ We will make use of the following
	consequence of Lemma \ref{L:SobolevEmbeddingHNdeltaCNprimedeltamprime}:
	
	\begin{align} \label{E:SobolevEmbeddingHNdeltaCNprimedeltamprimeagain}
		\| V \|_{C_1^{N - 2}} \lesssim \| V \|_{H_1^N}.
	\end{align}
	
	We now repeatedly differentiate \eqref{E:partialtV} with $\nabla_{T_{(0)}}$ and 
	$\SigmafirstfundNabla,$ using the resulting equations to replace $\nabla_{T_{(0)}}$
	derivatives with $\Sigma_0-$tangential derivatives, and using
	\eqref{E:SobolevEmbeddingHNdeltaCNprimedeltamprimeagain} to inductively 
	conclude that if $\epsilon$ is sufficiently small and $0 \leq n \leq N,$ then
	
	\begin{align} \label{E:partialtaVfirstinequality}
		|\nabla_{(n)} V||_{t=0} \lesssim |\SigmafirstfundNabla_{(n)} \mathring{V}|
			+ \sum \prod_{i=1}^{n} |\SigmafirstfundNabla_{(m_i)} \mathring{V}|,
	\end{align}
	where the sum is taken over all non-negative integers $m_1,m_2,\cdots,m_n$ such that $\sum_{i=1}^{n} m_i = n.$ We remark 
	that the implicit constant in \eqref{E:partialtaVfirstinequality} depends on $\mathfrak{F}$ and its first $n$ derivatives 
	with respect to $V, \SigmafirstfundNabla V,$ and that we have used \eqref{E:SobolevEmbeddingHNdeltaCNprimedeltamprimeagain} to bound 
	non-differentiated factors of $\mathring{V}$ in $L^{\infty}$ from above by $C \epsilon.$ 
	Multiplying each side of \eqref{E:partialtaVfirstinequality} by 
	$(1 + |\ux|^2)^{(n+1)/2} = (1 + |\ux|^2)^{(1 + \sum_{i=1}^n m_i)/2},$
	squaring, integrating, and using Corollary \ref{C:WeightedSobolevSpaceMultiplicationProperties},
	it follows that
	
	\begin{align} \label{E:partialtaVthirdinequality}
		\int_{\mathbb{R}^3} (1 + |\uy|^2)^{n+1} |\nabla_{(n)} V(0,\uy)|^2 \, d^3 \uy 
		& \lesssim \sum_{a=0}^N \int_{\mathbb{R}^3} (1 + |\uy|^2)^{a+1} |\SigmafirstfundNabla_{(a)} \mathring{V}(\uy)|^2 \, d^3 \uy \\
		& \lesssim \| (\mathring{\Displacement}, \mathring{\Magneticinduction}) \|_{H_{1}^N}^2. \notag
	\end{align}
	We have thus shown that for sufficiently small $\epsilon,$ we have
	
	\begin{align} \label{E:DBinitialfullspacetimeweightednormcomparedtointrinsicweightedSobolevnorm}
		\sum_{n=0}^N \int_{\mathbb{R}^3} (1 + |\uy|^2)^{n+1} (|\nabla_{(n)} \Displacement (0,\uy)|^2 
			+ |\nabla_{(n)} \Magneticinduction (0,\uy)|^2)\, d^3 \uy 
		\approx \| (\mathring{\Displacement}, \mathring{\Magneticinduction}) \|_{H_{1}^N}^2,
	\end{align}
	the $\gtrsim$ direction inequality being trivial. We remark that the left-hand side of 
	\eqref{E:DBinitialfullspacetimeweightednormcomparedtointrinsicweightedSobolevnorm} involves normal derivatives, while
	the right-hand side involves only $\Sigma_0-$tangential derivatives.
	
	 Applying similar reasoning to equation \eqref{E:EintermsofDB}, we derive the following inequality, valid for all 
	 sufficiently small $\epsilon:$
	 
	 \begin{align} \label{E:EBDBinitialfullspacetimeweightedSobolevnormrelation}
			\int_{\mathbb{R}^3} (1 + |\uy|^2)^{n+1} (|\nabla_{(n)} \Electricfield (0,\uy)|^2 + |\nabla_{(n)} \Magneticinduction 
			(0,\uy)|^2) \, d^3 \uy \lesssim
			\int_{\mathbb{R}^3} (1 + |\uy|^2)^{n+1} (|\nabla_{(n)} \Displacement (0,\uy)|^2 + |\nabla_{(n)} \Magneticinduction 
			(0,\uy)|^2) \, d^3 \uy.
		\end{align}
	From \eqref{E:DBEHsmalldataHNdeltanormequivalence}, 
	\eqref{E:DBinitialfullspacetimeweightednormcomparedtointrinsicweightedSobolevnorm}
	and \eqref{E:EBDBinitialfullspacetimeweightedSobolevnormrelation}, it follows that
	
	\begin{align} \label{E:EBallderivativesH1Nnormintermsofdata}
		\sum_{n=0}^N \int_{\mathbb{R}^3} (1 + |\uy|^2)^{n+1} (|\nabla_{(n)} \Electricfield (0,\uy)|^2 
			+ |\nabla_{(n)} \Magneticinduction (0,\uy)|^2) \, d^3 \uy
		\approx \| (\mathring{\Displacement}, \mathring{\Magneticinduction}) \|_{H_{1}^N}^2.
	\end{align}
	The desired inequalities \eqref{E:EBFarintegralnormequivalence} now follow from 
	\eqref{E:LiederivateFarnormEBnorminitialequivalence} and \eqref{E:EBallderivativesH1Nnormintermsofdata}.
	
\end{proof}

\subsection{An expression for \texorpdfstring{$\nabla_{\mu}(\dot{J}_{\Far}^{\mu}[\dot{\Far}])$}{the divergence of
the energy current}}

In Section \ref{S:EnergyEstimates}, we will bound (from above) the time derivative of the energy $\mathcal{E}_N[\Far]$ defined in \eqref{E:mathcalENdef}. By the divergence theorem, the analysis amounts to estimating the $L^1$ norms of the quantities 
$\nabla_{\mu}(\dot{J}_{\Far}^{\mu}[\dot{\Far}]),$ where $\dot{\Far} \eqdef \Lie_{\mathcal{Z}}^I \Far.$ In this section, we take a preliminary step in this direction by providing an expression for $\nabla_{\mu}(\dot{J}_{\Far}^{\mu}[\dot{\Far}]),$ where $\dot{\Far}$ is a solution to the equations of variation \eqref{E:EOVBianchi} - \eqref{E:EOVMBI}. 

To begin, we compute that for any vectorfield $X,$

\begin{align} \label{E:divergenceofTX}
	\nabla_{\mu} (\Stress_{\ \nu}^{\mu} X^{\nu}) = (\nabla_{\mu} \Stress_{\ \nu}^{\mu})X^{\nu}
		+ \frac{1}{2} \Stress^{\mu \nu} {^{(X)}\pi_{\mu \nu}}
		+ \frac{1}{2} (\Stress^{\mu \nu} - \Stress^{\nu \mu})\nabla_{\mu} X_{\nu},
\end{align}
where ${^{(X)}\pi_{\mu \nu}}$ is defined in \eqref{E:deformationdef}. Now according to \eqref{E:overlineKdeformationtensor}, 
we have that ${^{(\overline{K})}\pi_{\mu \nu}} = 4tg_{\mu \nu}.$ Therefore, by \eqref{E:widetildeTtracefree}, 
the $\Stress^{\mu \nu} {^{(\overline{K})}\pi_{\mu \nu}}$ term on the right-hand side of 
\eqref{E:divergenceofTX} vanishes. If we also make use of equations \eqref{E:widetildeTantisymmetricpart} and
\eqref{E:divergenceofwidetildeT}, then we easily arrive at the following lemma.

\begin{lemma} \label{L:divJdot}
Let the two-form $\dot{\Far}$ be a solution to the equations of variation \eqref{E:EOVBianchi} - \eqref{E:EOVMBI} associated to the background two-form $\Far.$ Let $\dot{J}_{\Far}^{\mu}[\dot{\Far}] = 
- \Stress_{\ \nu}^{\mu} \overline{K}^{\nu}$ be the energy current vectorfield defined in \eqref{E:Jdotdef}. Then 

\begin{align} \label{E:divJdot}
	\nabla_{\mu}(\dot{J}_{\Far}^{\mu}[\dot{\Far}]) 
	& = \frac{1}{2} H^{\zeta \eta \kappa \lambda} \dot{\Far}_{\zeta \eta} \mathfrak{J}_{\nu \kappa \lambda} \overline{K}^{\nu} 
		- \overline{K}^{\nu} \dot{\Far}_{\nu \eta} \mathfrak{I}^{\eta}
		- (\nabla_{\mu}H^{\mu \zeta \kappa \lambda}) \dot{\Far}_{\kappa \lambda} \dot{\Far}_{\nu \zeta} \overline{K}^{\nu}
		+ \frac{1}{4} (\overline{K}^{\nu} \nabla_{\nu}H^{\zeta \eta \kappa \lambda}) \dot{\Far}_{\zeta \eta} 
		\dot{\Far}_{\kappa \lambda} \\
	& \ \ \ - \frac{1}{4} \bigg\lbrace
		\ell_{(MBI)}^{-2} \Far^{\kappa \lambda}\dot{\Far}_{\kappa \lambda} 
			\big( \Far^{\nu \zeta}\dot{\Far}_{\ \zeta}^{\mu} - \Far^{\mu \zeta} \dot{\Far}_{\ \zeta}^{\nu} \big) 
		+ \big(1 + \Farinvariant_{(2)}^2 \ell_{(MBI)}^{-2}\big)
			\Fardual^{\kappa \lambda} \dot{\Far}_{\kappa \lambda} \big( 
			\Fardual^{\nu \zeta}\dot{\Far}_{\ \zeta}^{\mu} - \Fardual^{\mu \zeta}\dot{\Far}_{\ \zeta}^{\nu} \big) 
			\notag \\
		& \ \ \ \ \ \ + \Farinvariant_{(2)} \ell_{(MBI)}^{-2} \Fardual^{\kappa \lambda}\dot{\Far}_{\kappa \lambda}
				\big( \Far^{\mu \zeta}\dot{\Far}_{\ \zeta}^{\nu} - \Far^{\nu \zeta}\dot{\Far}_{\ \zeta}^{\mu} \big) 
			+ \Farinvariant_{(2)} \ell_{(MBI)}^{-2} \Far^{\kappa \lambda}\dot{\Far}_{\kappa \lambda}
			\big( \Fardual^{\mu \zeta}\dot{\Far}_{\ \zeta}^{\nu}  
			- \Fardual^{\nu \zeta}\dot{\Far}_{\ \zeta}^{\mu} \big) \bigg\rbrace 
			\nabla_{\mu} \overline{K}_{\nu}. \notag
\end{align}

\end{lemma}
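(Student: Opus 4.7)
The plan is to expand $\nabla_\mu \dot{J}_{\Far}^{\mu}[\dot{\Far}] = - \nabla_\mu(\Stress_{\ \nu}^{\mu}\overline{K}^\nu)$ using the general product-rule identity \eqref{E:divergenceofTX} applied to $X = \overline{K}$, and then to simplify each of the three resulting pieces using the structural properties of $\Stress$ that were established earlier in Section \ref{S:CanonicalStress}. Explicitly, I would begin from
\begin{align*}
	\nabla_\mu \dot{J}_{\Far}^\mu[\dot{\Far}]
	& = - (\nabla_\mu \Stress_{\ \nu}^\mu) \overline{K}^\nu
	    - \tfrac{1}{2} \Stress^{\mu \nu} {^{(\overline{K})}\pi_{\mu \nu}}
	    - \tfrac{1}{2} (\Stress^{\mu \nu} - \Stress^{\nu \mu}) \nabla_\mu \overline{K}_\nu
\end{align*}
and dispose of the terms one at a time.

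The middle term drops out immediately: by \eqref{E:overlineKdeformationtensor} the deformation tensor of $\overline{K}$ is $4t g_{\mu\nu}$, so $\Stress^{\mu\nu}{^{(\overline{K})}\pi_{\mu\nu}} = 4t\, \Stress^{\ \kappa}_{\kappa}$, which vanishes by the MBI tracelessness identity \eqref{E:widetildeTtracefree}. Note that this step relies crucially on the fact that $\overline{K}$ is a \emph{conformal} Killing field; it would not work for a general multiplier.

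For the first term, I would substitute the divergence formula \eqref{E:divergenceofwidetildeT}, which was derived using the symmetry properties \eqref{E:hminussignproperty1}--\eqref{E:hsymmetryproperty} of $H^{\mu\nu\kappa\lambda}$ and the equations of variation \eqref{E:EOVBianchi}--\eqref{E:EOVMBI} satisfied by $\dot{\Far}$. Contracting against $-\overline{K}^\nu$ and redistributing signs produces exactly the first four terms in the claimed expression.

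For the antisymmetric piece, I would raise the free indices on the explicit expression \eqref{E:widetildeTantisymmetricpart} for $\Stress_{\mu\nu} - \Stress_{\nu\mu}$ (using $g^{-1}$ twice) and contract with $-\tfrac{1}{2}\nabla_\mu \overline{K}_\nu$. Each of the four quadratic-in-$\Far$ pieces of \eqref{E:widetildeTantisymmetricpart} is already manifestly antisymmetric in $(\mu,\nu)$, so raising indices just turns expressions like $\Far_\mu^{\ \zeta}\dot{\Far}_{\nu\zeta} - \Far_\nu^{\ \zeta}\dot{\Far}_{\mu\zeta}$ into the combinations $\Far^{\nu\zeta}\dot{\Far}^{\mu}_{\ \zeta} - \Far^{\mu\zeta}\dot{\Far}^{\nu}_{\ \zeta}$ shown in the statement. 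The overall prefactor $-\tfrac{1}{4}$ arises from combining the $-\tfrac{1}{2}$ of \eqref{E:divergenceofTX} with the $\tfrac{1}{2}$ present in \eqref{E:widetildeTantisymmetricpart}.

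The whole argument is algebraic with no analytic content; the only potential difficulty is bookkeeping. Concretely, the main obstacle I anticipate is keeping the sign conventions and index placements straight when raising indices in \eqref{E:widetildeTantisymmetricpart} and checking that the $\Farinvariant_{(2)}$-weighted cross-terms land with the signs shown in the statement. Once those sign checks are carried out, assembling the four contributions (the divergence of $\Stress_{\ \nu}^{\mu}$, the vanishing conformal-Killing piece, and the antisymmetric-stress contraction) yields \eqref{E:divJdot} directly.
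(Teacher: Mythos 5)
Your proposal is correct and follows exactly the same route as the paper's own (brief) argument given just before the lemma: expand via \eqref{E:divergenceofTX} with $X = \overline{K}$, kill the middle term using the conformal Killing property \eqref{E:overlineKdeformationtensor} together with the tracelessness \eqref{E:widetildeTtracefree}, and substitute \eqref{E:divergenceofwidetildeT} and \eqref{E:widetildeTantisymmetricpart} for the remaining two pieces. Your accounting of the $-\tfrac{1}{4}$ prefactor and the index-raising in the antisymmetric-stress contraction is also accurate.
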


\begin{remark}
	Because of \eqref{E:MBIInhomogeneoustermsJvanish}, the first term on the right-hand side of \eqref{E:divJdot} 
	is $0$ for all of the variations of interest in this article.
\end{remark}

\section{The Null Condition and Geometric/Algebraic Estimates of the Nonlinearities} \label{S:NullFormEstimates}
\setcounter{equation}{0}

In this section, we provide a partial\footnote{We can prove our desired estimates without the use of a fully detailed null decomposition.} null decomposition of the terms appearing on the right-hand side of 
equation \eqref{E:divJdot}, where $\dot{\Far}$ is equal to one of the iterated Lie derivatives 
$\Lie_{\mathcal{Z}}^I \Far$ of a solution $\Far$ to the MBI system. These geometric/algebraic 
estimates form the backbone of the proof of  Proposition \ref{P:Energydifferentialinequality}, which is our main energy estimate. It is in this section that the null condition is revealed; as we will see, and as is described at the end of Section \ref{SS:Commentsonanalysis}, the worst possible combinations of terms are absent from the right-hand side of \eqref{E:divJdot}.
\\

\noindent \hrulefill
\ \\

We begin with the following simple lemma, which show that covariant and Lie derivatives of null form expressions can also be expressed in terms of null forms.

\begin{lemma} \label{L:Liederivativeofnullformsaresumsofnullforms}
	Let $m \geq 0$ be an integer, and let $\mathcal{Q}_{(i)}(\cdot,\cdot),$ $i = 1,2,$ denote the null forms defined in 
	\eqref{E:nullform1} - \eqref {E:nullform2}. Then for any vectorfield $X$ and any pair of two-forms $\Far, \Gar,$ and $i=1,2,$ 
	we have that
	
	\begin{align} \label{E:Liederivativeofnullformsaresumsofnullforms}
		\nabla_X^m \Big(\mathcal{Q}_{(i)}(\Far,\Gar)\Big) = \sum_{a + b = m} \binom{m}{a}
		\mathcal{Q}_{(i)}(\nabla_X^a \Far, \nabla_X^b \Gar).
	\end{align}

	Furthermore, under the convention of Remark \ref{R:LieDerivatives},
	for any $\mathcal{Z}-$multi-index $I,$ there exist constants 
	$C_{I_1,I_2}$ such that
	
	\begin{align} \label{E:Liederivativeofnullformsaresumsofnullforms}
		\Lie_{\mathcal{Z}}^I \Big(\mathcal{Q}_{(i)}(\Far,\Gar)\Big) = \sum_{I_1 + I_2 \leq I} 
		C_{I_1,I_2} \mathcal{Q}_{(i)}(\Lie_{\mathcal{Z}}^{I_1}\Far,\Lie_{\mathcal{Z}}^{I_2}\Gar).
	\end{align}

\end{lemma}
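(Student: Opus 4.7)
The plan is to handle the two identities separately, with the covariant-derivative statement serving as an algebraically simpler warm-up for the conformal-Killing Lie-derivative one. In both cases the strategy is to establish a one-step Leibniz-type identity and then iterate.

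For the covariant-derivative version, I first rewrite the null forms by raising indices explicitly:
\begin{align*}
\mathcal{Q}_{(1)}(\Far,\Gar) &= (g^{-1})^{\kappa\mu}(g^{-1})^{\lambda\nu}\Far_{\mu\nu}\Gar_{\kappa\lambda}, &
\mathcal{Q}_{(2)}(\Far,\Gar) &= \tfrac{1}{2}\epsilon^{\mu\nu\kappa\lambda}\Far_{\mu\nu}\Gar_{\kappa\lambda}.
\end{align*}
By \eqref{E:CovariantDerivativeofgisZeroIndices} and \eqref{E:volumeformcovariantlyconstant}, both $(g^{-1})^{\mu\nu}$ and $\epsilon^{\mu\nu\kappa\lambda}$ are annihilated by $\nabla_X$, so the Leibniz rule immediately delivers
\begin{align*}
\nabla_X \mathcal{Q}_{(i)}(\Far,\Gar) = \mathcal{Q}_{(i)}(\nabla_X\Far,\Gar) + \mathcal{Q}_{(i)}(\Far,\nabla_X\Gar), \qquad i=1,2.
\end{align*}
A routine induction on $m$, combined with the Pascal identity $\binom{m}{a}=\binom{m-1}{a}+\binom{m-1}{a-1}$, then produces the binomial formula.

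For the Lie-derivative version, the one-step calculation picks up extra lower-order terms. Using \eqref{E:LieZonmupper} (which gives $\Lie_Z(g^{-1})^{\mu\nu}=-c_Z(g^{-1})^{\mu\nu}$ for every $Z\in\mathcal{Z}$) and the convention of Remark \ref{R:LieDerivatives}, the Leibniz rule applied to the raised-index expression above yields
\begin{align*}
\Lie_Z \mathcal{Q}_{(1)}(\Far,\Gar) = -2c_Z\,\mathcal{Q}_{(1)}(\Far,\Gar) + \mathcal{Q}_{(1)}(\Lie_Z\Far,\Gar) + \mathcal{Q}_{(1)}(\Far,\Lie_Z\Gar).
\end{align*}
For $\mathcal{Q}_{(2)}$, Corollary \ref{C:ConformalKillingLieXhodgedualcommutation} provides ${^{\star\hspace{-.03in}}(\Lie_Z\Far)}=\Lie_Z\Fardual$, which together with the same inverse-metric computation produces the analogous identity
\begin{align*}
\Lie_Z \mathcal{Q}_{(2)}(\Far,\Gar) = -2c_Z\,\mathcal{Q}_{(2)}(\Far,\Gar) + \mathcal{Q}_{(2)}(\Lie_Z\Far,\Gar) + \mathcal{Q}_{(2)}(\Far,\Lie_Z\Gar).
\end{align*}
In both cases the three right-hand side terms are themselves null forms of the type $\mathcal{Q}_{(i)}(\Lie_{\mathcal{Z}}^{I_1}\Far,\Lie_{\mathcal{Z}}^{I_2}\Gar)$ with $|I_1|+|I_2|\leq 1$.

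I conclude by induction on $|I|$. Given the one-step identity above, applying an additional $\Lie_{Z'}$ with $Z'\in\mathcal{Z}$ to a summand $C\cdot\mathcal{Q}_{(i)}(\Lie_{\mathcal{Z}}^{I_1}\Far,\Lie_{\mathcal{Z}}^{I_2}\Gar)$ yields three terms whose associated multi-indices still satisfy $|I_1'|+|I_2'|\leq|I_1|+|I_2|+1$; the $-2c_Z$ contribution keeps the multi-indices unchanged. Collecting like terms after $|I|$ steps expresses $\Lie_{\mathcal{Z}}^I\mathcal{Q}_{(i)}(\Far,\Gar)$ as a finite linear combination $\sum_{I_1+I_2\leq I} C_{I_1,I_2}\,\mathcal{Q}_{(i)}(\Lie_{\mathcal{Z}}^{I_1}\Far,\Lie_{\mathcal{Z}}^{I_2}\Gar)$, with universal combinatorial constants. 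The only real obstacle is bookkeeping the raised-versus-lowered index conventions from Remark \ref{R:LieDerivatives}; the crucial structural input is the closure of $\mathcal{Z}$ under the conformal Killing condition, which guarantees that the one-step identity never introduces terms outside the class of null forms.
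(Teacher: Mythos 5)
Your proof is correct and follows essentially the same route as the paper's (much terser) argument: the Leibniz rule combined with $\nabla_X g^{-1}=0$ and the $\nabla_X$--Hodge-dual commutation for the covariant-derivative identity, and $\Lie_Z (g^{-1})^{\mu\nu}=-c_Z(g^{-1})^{\mu\nu}$ together with Corollary \ref{C:ConformalKillingLieXhodgedualcommutation} for the Lie-derivative identity, followed by induction. Your version merely makes explicit the one-step identities and the combinatorial bookkeeping that the paper leaves to the reader.
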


\begin{proof}
	To prove \eqref{E:Liederivativeofnullformsaresumsofnullforms} in the case $i=1,$ we use the Leibniz rule and the fact 
	that $\nabla_X g = 0.$ In the case $i=2,$ we use similar reasoning, plus Lemma \ref{L:nablaXhodgedualcommutation}.
	
	To prove \eqref{E:Liederivativeofnullformsaresumsofnullforms}
	in the case $i=1,$ we use the Leibniz rule
	and the fact that \eqref{E:LieZonmupper} holds for any $Z \in \mathcal{Z}.$ In 
	the case $i=2,$ we use similar reasoning, plus Corollary \ref{C:ConformalKillingLieXhodgedualcommutation}.
	
\end{proof}

We now state a lemma concerning the null structure of some of the factors appearing in the terms in braces on
the right-hand side of \eqref{E:divJdot}. This lemma is in the spirit of Lemma \ref{L:NullForms}.

\begin{lemma} \label{L:twoformsMorawetzderivativecontraction}
	If $\Far$ and $\Gar$ are two-forms, then
	\begin{align} \label{E:twoformsMorawetzderivativecontraction}
		|\Far_{\nu}^{\ \zeta} \Gar_{\mu \zeta} \nabla^{\mu} \overline{K}^{\nu}| 
		\lesssim s \big(|\Far|_{\mathcal{L}\mathcal{U}} |\Gar| + |\Far| |\Gar|_{\mathcal{L}\mathcal{U}}
		+ |\Far|_{\mathcal{T} \mathcal{T}} |\Gar|_{\mathcal{T} \mathcal{T}} \big).
	\end{align}
\end{lemma}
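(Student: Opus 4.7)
The plan is to expand $\nabla^\mu \overline{K}^\nu$ in the null frame and then read off from the resulting identity that the worst null-component combinations are absent from the contraction. The starting point is the explicit null decomposition already carried out in \eqref{E:nablaoverlineKLunderlineL}--\eqref{E:nablaoverlineKAL}, which shows that the only nonzero null frame components of $\nabla_\alpha \overline{K}_\beta$ are $(\nabla\overline{K})_{L\uL} = -4s$, $(\nabla\overline{K})_{\uL L} = 4q$, and $(\nabla\overline{K})_{AB} = 2t\delta_{AB}$. Raising both indices using the null-frame expansion $(g^{-1})^{\mu\alpha} = -\tfrac{1}{2}(L^\mu \uL^\alpha + \uL^\mu L^\alpha) + \delta^{CD} e_C^\mu e_D^\alpha$ produces the clean formula
\[
\nabla^\mu \overline{K}^\nu = -s\,\uL^\mu L^\nu + q\, L^\mu \uL^\nu + 2t\sum_A e_A^\mu e_A^\nu.
\]

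Substituting this into $\Far_\nu^{\ \zeta}\Gar_{\mu\zeta}\nabla^\mu \overline{K}^\nu$ reduces matters to evaluating the three bilinear contractions $\Gar_{\uL\zeta}\Far_L^{\ \zeta}$, $\Gar_{L\zeta}\Far_{\uL}^{\ \zeta}$, and $\sum_A \Gar_{A\zeta}\Far_A^{\ \zeta}$. Each is evaluated by a second application of the null-frame form of $g^{-1}$ to the internal $\zeta$ contraction, followed by use of Definition \ref{D:null} and the antisymmetry of two-forms. After collecting terms and using $s - q = 2t$, the outcome is an identity of the schematic form
\[
\Far_\nu^{\ \zeta}\Gar_{\mu\zeta}\nabla^\mu\overline{K}^\nu = c_1\,\rho[\Far]\rho[\Gar] + c_2\,\delta^{AB}\ualpha_A[\Gar]\alpha_B[\Far] + c_3\,\delta^{AB}\alpha_A[\Gar]\ualpha_B[\Far] + c_4\,\sigma[\Far]\sigma[\Gar],
\]
where each coefficient $c_i$ is a linear combination of $r$ and $t$ and hence satisfies $|c_i| \lesssim s$.

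The crucial structural feature of this identity is the absence of $|\ualpha|^2$-type or $|\alpha|^2$-type self-couplings: every quadratic product on the right-hand side pairs at least one component that is controlled by $|\cdot|_{\mathcal{L}\mathcal{U}}$ or $|\cdot|_{\mathcal{T}\mathcal{T}}$ with another factor. The lemma then follows from componentwise Cauchy-Schwarz after noting that $\rho = \tfrac{1}{2}\Far_{\uL L}$ and $\alpha_A = \Far_{AL}$ are controlled by $|\cdot|_{\mathcal{L}\mathcal{U}}$ (because they involve $L \in \mathcal{L}$), while $\sigma = \Far_{12}$ is controlled by $|\cdot|_{\mathcal{T}\mathcal{T}}$ (because $e_1, e_2 \in \mathcal{T}$).

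The main obstacle is purely algebraic: one must keep careful track of signs and of the interplay between raised and lowered indices when resolving the internal $\zeta$ contraction via the null-frame form of $g^{-1}$, and must verify that no $\ualpha \otimes \ualpha$ or $\alpha \otimes \alpha$ self-couplings survive the bookkeeping. Once the explicit identity above is in hand, the passage to \eqref{E:twoformsMorawetzderivativecontraction} is immediate.
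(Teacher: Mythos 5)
Your proof is correct and follows exactly the route the paper indicates: decompose $\nabla\overline{K}$ in the null frame via \eqref{E:nablaoverlineKLunderlineL}--\eqref{E:nablaoverlineKAL}, resolve the internal $\zeta$-contraction with the null-frame form of $g^{-1}$ (which is \eqref{E:XYnullframeinnerproduct}), and observe that no $\ualpha\otimes\ualpha$ pairing appears. One small clarification: only the absence of $\ualpha\otimes\ualpha$ couplings is essential here---an $\alpha\otimes\alpha$ term would already be absorbed by $|\Far|_{\mathcal{L}\mathcal{U}}|\Gar|$---so the bookkeeping in your final paragraph is slightly less delicate than you suggest.
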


\begin{proof}
	Lemma \ref{L:twoformsMorawetzderivativecontraction} follows from \eqref{E:XYnullframeinnerproduct} together with
	the null decomposition of $\nabla \overline{K}$ given in \eqref{E:nablaoverlineKLunderlineL} - \eqref{E:nablaoverlineKAL}.
\end{proof}

The next lemma is a technical precursor to the subsequent one. 

\begin{lemma} \label{L:Htrianglecommutatorsingleterm}
	Let $N \geq 0$ be an integer. Let $\Far$ be a two-form, and let $H_{\triangle}^{\mu \zeta \kappa \lambda}$ be the 
	($\Far-$dependent) tensorfield defined in \eqref{E:Htriangledef}. Suppose that $J,J'$ are $\mathcal{Z}-$multi-indices, and 
	that $|J| \leq N.$ There exists an $\epsilon > 0$ such that if $|\Far|_{\Lie_{\mathcal{Z}}; \lfloor N/2 \rfloor} 
	\leq \epsilon,$ then the following pointwise estimate holds for $\zeta = 0,1,2,3:$
	
	\begin{align} \label{E:Htrianglecommutatorsingleterm}
		\Big| \big(\Lie_{\mathcal{Z}}^{J} (H_{\triangle}^{\mu \zeta \kappa \lambda}) \big) 
		\nabla_{\mu} \Lie_{\mathcal{Z}}^{J'} \Far_{\kappa \lambda} \Big| 
		\lesssim	\sum_{|J_1| + |J_2| \leq |J|}
			|\Lie_{\mathcal{Z}}^{J_1} \Far|
			\Big\lbrace |\Lie_{\mathcal{Z}}^{J_2} \Far||\nabla(\Lie_{\mathcal{Z}}^{J'} \Far)|_{\mathcal{L}\mathcal{U}}  
			+ |\Lie_{\mathcal{Z}}^{J_2} \Far|_{\mathcal{L}\mathcal{U}} |\nabla(\Lie_{\mathcal{Z}}^{J'} \Far)|
			+ |\Lie_{\mathcal{Z}}^{J_2} \Far|_{\mathcal{T}\mathcal{T}}|\nabla(\Lie_{\mathcal{Z}}^{J'} 
			\Far)|_{\mathcal{T}\mathcal{T}}  \Big\rbrace.
	\end{align}
\end{lemma}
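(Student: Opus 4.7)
My plan is to exploit the fact that, by \eqref{E:Htriangledef}, $H_{\triangle}^{\mu\zeta\kappa\lambda}$ is a finite linear combination of terms of the schematic form $f(\Far)\,T^{\mu\zeta}\,T'{}^{\kappa\lambda}$, where each of $T,T'$ is either $\Far$ or $\Fardual$, and $f$ is a smooth scalar function of the invariants $\Farinvariant_{(1)},\Farinvariant_{(2)}$ (which is indeed smooth in a neighborhood of $\Far=0$, since the smallness assumption places us well inside the regime where $\ell_{(MBI)}$ is real and bounded away from zero). The decisive algebraic observation is that when such a term is contracted with $\nabla_{\mu}\Lie_{\mathcal{Z}}^{J'}\Far_{\kappa\lambda}$, the $\kappa\lambda$ contraction
\[
T'{}^{\kappa\lambda}\,\nabla_{\mu}\Lie_{\mathcal{Z}}^{J'}\Far_{\kappa\lambda}
\]
is, for each fixed $\mu$, precisely the null form $\mathcal{Q}_{(1)}$ (when $T'=\Far$) or $\mathcal{Q}_{(2)}$ (when $T'=\Fardual$) from \eqref{E:nullform1}--\eqref{E:nullform2}, applied to the two-forms $\Far$ (or $\Fardual$) and $\nabla_{\mu}\Lie_{\mathcal{Z}}^{J'}\Far$. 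This is the sole source of the three null-structured terms on the right-hand side of \eqref{E:Htrianglecommutatorsingleterm}.

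First I would apply $\Lie_{\mathcal{Z}}^{J}$ to $H_{\triangle}^{\mu\zeta\kappa\lambda}$ and use the Leibniz rule to expand
\[
\Lie_{\mathcal{Z}}^{J}\!\bigl(f(\Far)\,T^{\mu\zeta}\,T'{}^{\kappa\lambda}\bigr)
= \sum_{I_1 + I_2 + I_3 = J} C_{I_1,I_2,I_3}\,\bigl(\Lie_{\mathcal{Z}}^{I_1} f(\Far)\bigr)\,\bigl(\Lie_{\mathcal{Z}}^{I_2} T\bigr)^{\mu\zeta}\,\bigl(\Lie_{\mathcal{Z}}^{I_3} T'\bigr)^{\kappa\lambda}.
\]
Because the elements of $\mathcal{Z}$ are conformal Killing fields, Corollary \ref{C:ConformalKillingLieXhodgedualcommutation} gives $\Lie_{\mathcal{Z}}^{K}\Fardual = {}^{\star}(\Lie_{\mathcal{Z}}^{K}\Far)$, so the Hodge dual and iterated Lie differentiation commute; in particular $|\Lie_{\mathcal{Z}}^{K}\Fardual| = |\Lie_{\mathcal{Z}}^{K}\Far|$, and by \eqref{E:Fardualalpha}--\eqref{E:Fardualsigma} the null subseminorms of $\Lie_{\mathcal{Z}}^{K}\Fardual$ are controlled by $|\Lie_{\mathcal{Z}}^{K}\Far|$. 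Substituting and then applying the null-form bound \eqref{E:NullFormBound} of Lemma \ref{L:NullForms} to the $(\kappa\lambda)$-contraction, followed by summing over $\mu$ (bounding $\sum_{\mu}|(\Lie_{\mathcal{Z}}^{I_2}T)^{\mu\zeta}|\,|\nabla_{\mu}\Lie_{\mathcal{Z}}^{J'}\Far|_{\mathcal{L}\mathcal{U}} \leq |\Lie_{\mathcal{Z}}^{I_2}\Far|\,|\nabla\Lie_{\mathcal{Z}}^{J'}\Far|_{\mathcal{L}\mathcal{U}}$, and similarly for the other two terms), produces
\[
\bigl|\Lie_{\mathcal{Z}}^{J} H_{\triangle}^{\mu\zeta\kappa\lambda}\,\nabla_{\mu}\Lie_{\mathcal{Z}}^{J'}\Far_{\kappa\lambda}\bigr| \lesssim \!\!\sum_{I_1+I_2+I_3=J}\!\! |\Lie_{\mathcal{Z}}^{I_1}f(\Far)|\,|\Lie_{\mathcal{Z}}^{I_2}\Far|\,\Bigl\{|\Lie_{\mathcal{Z}}^{I_3}\Far|\,|\nabla\Lie_{\mathcal{Z}}^{J'}\Far|_{\mathcal{L}\mathcal{U}} + |\Lie_{\mathcal{Z}}^{I_3}\Far|_{\mathcal{L}\mathcal{U}}|\nabla\Lie_{\mathcal{Z}}^{J'}\Far| + |\Lie_{\mathcal{Z}}^{I_3}\Far|_{\mathcal{T}\mathcal{T}}|\nabla\Lie_{\mathcal{Z}}^{J'}\Far|_{\mathcal{T}\mathcal{T}}\Bigr\}.
\]

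It remains to absorb the extra factor $|\Lie_{\mathcal{Z}}^{I_1}f(\Far)|$ so that only two Lie derivatives of $\Far$ survive. By the Faà di Bruno chain rule (together with smoothness of $f$ in the small-data regime), $\Lie_{\mathcal{Z}}^{I_1}f(\Far)$ is a finite sum of expressions of the form $(\text{bounded coefficient})\cdot\prod_j \Lie_{\mathcal{Z}}^{M_j}\Far$, where $\sum_j|M_j|=|I_1|$ and each $|M_j|\geq 1$ (the case $|I_1|=0$ giving just a bounded coefficient). Since $|I_1|\leq|J|\leq N$ and each $|M_j|\geq 1$, a pigeonhole argument shows that at most one index $M_{j^*}$ can exceed $\lfloor N/2\rfloor$; the smallness hypothesis $|\Far|_{\Lie_{\mathcal{Z}};\lfloor N/2\rfloor}\leq\epsilon\leq 1$ then bounds $\prod_{j\neq j^*}|\Lie_{\mathcal{Z}}^{M_j}\Far|\leq 1$, yielding $|\Lie_{\mathcal{Z}}^{I_1}f(\Far)|\lesssim |\Lie_{\mathcal{Z}}^{M_{j^*}}\Far|$, or $\lesssim 1$ when all factors are small. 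A symmetric observation applied to $|\Lie_{\mathcal{Z}}^{I_2}\Far|$ (collapsing it by smallness when $|I_2|\leq\lfloor N/2\rfloor$) lets us reach the desired form with $J_1:=M_{j^*}$ (or $I_2$) and $J_2:=I_3$, and $|J_1|+|J_2|\leq|J|$.

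The main technical obstacle will be the combinatorial bookkeeping in the last step: one has to verify uniformly across all terms in the Leibniz/chain-rule expansion that at least one factor is of sufficiently low order to be controlled by the hypothesis $|\Far|_{\Lie_{\mathcal{Z}};\lfloor N/2\rfloor}\leq\epsilon$. The pigeonhole observation above, combined with the freedom in the inequality $|J_1|+|J_2|\leq|J|$ (not an equality), is what makes all cases fit into the single compact bound \eqref{E:Htrianglecommutatorsingleterm}.
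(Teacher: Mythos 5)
Your proposal is correct and follows essentially the same route as the paper: Leibniz-expand $\Lie_{\mathcal{Z}}^{J}H_{\triangle}$, recognize that the $\kappa\lambda$ contraction with $\nabla_{\mu}\Lie_{\mathcal{Z}}^{J'}\Far_{\kappa\lambda}$ is a null form $\mathcal{Q}_{(i)}$ to be estimated via Lemma \ref{L:NullForms}, and absorb the remaining scalar factors by observing that at most one factor of $\Far$ carries more than $N/2$ derivatives, the rest being bounded by the smallness hypothesis. The paper organizes this as an intermediate bound \eqref{E:Htrianglecommutatorsingletermfirstbound} followed by the null-form lemma, but the substance is identical to your argument.
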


\begin{remark}
	In the above inequality, $ \lfloor N/2 \rfloor$ denotes the largest integer less than or equal to $N/2,$
	and the seminorms $|\cdot|_{\mathcal{L}\mathcal{U}}, |\nabla \cdot|_{\mathcal{L}\mathcal{U}},$ etc. are 
	defined in Definition \ref{D:contractionnomrs}.
\end{remark}

\begin{proof}
	
	We begin by recalling that
	\begin{align} \label{E:Htrianglerecall}
		H_{\triangle}^{\mu \zeta \kappa \lambda} & = \frac{1}{2} (g^{-1})^{\mu \widetilde{\mu}} (g^{-1})^{\zeta \widetilde{\zeta}}
			(g^{-1})^{\kappa \widetilde{\kappa}} (g^{-1})^{\lambda \widetilde{\lambda}} \bigg\lbrace 
		- \ell_{(MBI)}^{-2} \Far_{\widetilde{\mu} \widetilde{\zeta}} \Far_{\widetilde{\kappa} \widetilde{\lambda}} 
		+ \Farinvariant_{(2)} \ell_{(MBI)}^{-2} \Big(\Far_{\widetilde{\mu} \widetilde{\zeta}} 
			\Fardual_{\widetilde{\kappa} \widetilde{\lambda}} + \Fardual_{\widetilde{\mu} \widetilde{\zeta}} 
			\Far_{\widetilde{\kappa} \widetilde{\lambda}} \Big) - \Big(1 + \Farinvariant_{(2)}^2 
			\ell_{(MBI)}^{-2} \Big)\Fardual_{\widetilde{\mu} \widetilde{\zeta}} \Fardual_{\widetilde{\kappa} \widetilde{\lambda}}
		\bigg\rbrace.
	\end{align}
	Note that to avoid the possible confusion described in Remark \ref{R:LieDerivatives}, we have lowered all of the indices on 
	$\Far$ in preparation for Lie differentiation. We now claim that
	
	\begin{align} \label{E:Htrianglecommutatorsingletermfirstbound}
		\Big| \big(\Lie_{\mathcal{Z}}^{J} (H_{\triangle}^{\mu \zeta \kappa \lambda}) \big) 
		\nabla_{\mu} \Lie_{\mathcal{Z}}^{J'} \Far_{\kappa \lambda} \Big| 
		\lesssim \sum_{i=1}^2 \sum_{|J_1| + |J_2| \leq |J|}
			 |\Lie_{\mathcal{Z}}^{J_1} \Far|
			\big |\mathcal{Q}_{(i)}\big(\Lie_{\mathcal{Z}}^{J_2} \Far, \nabla(\Lie_{\mathcal{Z}}^{J'} \Far) \big) \big|.
	\end{align}
	Inequality \eqref{E:Htrianglecommutatorsingleterm} then follows from 
	\eqref{E:Htrianglecommutatorsingletermfirstbound} and Lemma \ref{L:NullForms}. 
	
	To obtain \eqref{E:Htrianglecommutatorsingletermfirstbound}, we first split $H_{\triangle}^{\mu \zeta \kappa \lambda}$
	into the following four pieces:
	
	\begin{align}
		(i)^{\mu \zeta \kappa \lambda} & = - \frac{1}{2} (g^{-1})^{\mu \widetilde{\mu}} (g^{-1})^{\zeta \widetilde{\zeta}}
			(g^{-1})^{\kappa \widetilde{\kappa}} (g^{-1})^{\lambda \widetilde{\lambda}} 
			\ell_{(MBI)}^{-2} \Far_{\widetilde{\mu} \widetilde{\zeta}} \Far_{\widetilde{\kappa} \widetilde{\lambda}}, 
			\label{E:Hdecompositioni} \\
		(ii)^{\mu \zeta \kappa \lambda} & = \frac{1}{2} (g^{-1})^{\mu \widetilde{\mu}} (g^{-1})^{\zeta \widetilde{\zeta}}
			(g^{-1})^{\kappa \widetilde{\kappa}} (g^{-1})^{\lambda \widetilde{\lambda}} \Farinvariant_{(2)} \ell_{(MBI)}^{-2} \Far_{\widetilde{\mu} 
			\widetilde{\zeta}} \Fardual_{\widetilde{\kappa} \widetilde{\lambda}}, \\
		(iii)^{\mu \zeta \kappa \lambda} & =\frac{1}{2} (g^{-1})^{\mu \widetilde{\mu}} (g^{-1})^{\zeta \widetilde{\zeta}}
			(g^{-1})^{\kappa \widetilde{\kappa}} (g^{-1})^{\lambda \widetilde{\lambda}}
			\Farinvariant_{(2)} \ell_{(MBI)}^{-2} \Fardual_{\widetilde{\mu} \widetilde{\zeta}} 
			\Far_{\widetilde{\kappa} \widetilde{\lambda}}, \\
		(iv)^{\mu \zeta \kappa \lambda} & = - \frac{1}{2} (g^{-1})^{\mu \widetilde{\mu}} (g^{-1})^{\zeta \widetilde{\zeta}}
			(g^{-1})^{\kappa \widetilde{\kappa}} (g^{-1})^{\lambda \widetilde{\lambda}} \Big(1 + \Farinvariant_{(2)}^2 
			\ell_{(MBI)}^{-2} \Big)\Fardual_{\widetilde{\mu} \widetilde{\zeta}} \Fardual_{\widetilde{\kappa} \widetilde{\lambda}}.
			\label{E:Hdecompositioniv}
	\end{align}
	Since the analysis is the roughly the same for each piece, we will focus only on the term $(iv).$
	
	Differentiating the term $(iv)$ with $\Lie_{\mathcal{Z}}^{J},$ contracting with
	$\nabla_{\mu} \Lie_{\mathcal{Z}}^{J'} \Far_{\kappa \lambda},$ and 
	using \eqref{E:LieZonmupper} plus Corollary \ref{C:ConformalKillingLieXhodgedualcommutation},
	we see that a typical term that arises after expanding via the Leibniz rule is of the form 
	
	\begin{align} \label{E:typicalterm}
		(g^{-1})^{\mu \widetilde{\mu}} (g^{-1})^{\zeta \widetilde{\zeta}}
			(g^{-1})^{\kappa \widetilde{\kappa}} (g^{-1})^{\lambda \widetilde{\lambda}} 
			\Big\lbrace \Lie_{\mathcal{Z}}^{J_1'} \Big[ \big(1 + \Farinvariant_{(2)}^2 \ell_{(MBI)}^{-2} \big) 
			\Fardual_{\widetilde{\mu} \widetilde{\zeta}} \Big] \Big\rbrace
			{^{\star \hspace{-.03in}}(\Lie_{\mathcal{Z}}^{J_2} \Far_{\widetilde{\kappa} \widetilde{\lambda}})}
			\nabla_{\mu} \Lie_{\mathcal{Z}}^{J'} \Far_{\kappa \lambda},
	\end{align}
	where $|J_1'| + |J_2| = |J|.$ Note that the factor
	$(g^{-1})^{\kappa \widetilde{\kappa}} (g^{-1})^{\lambda \widetilde{\lambda}} {^{\star \hspace{-.03in}}(\Lie_{\mathcal{Z}}^{J_2} 
	\Far_{\widetilde{\kappa} \widetilde{\lambda}})}
	\nabla_{\mu} \Lie_{\mathcal{Z}}^{J'} \Far_{\kappa \lambda}$ is equal to 
	$\mathcal{Q}_{(2)}(\Lie_{\mathcal{Z}}^{J_2}\Far,\nabla_{\mu} \Lie_{\mathcal{Z}}^{J'} \Far).$ Now among the remaining 	
	factors $f \eqdef (g^{-1})^{\mu \widetilde{\mu}} (g^{-1})^{\zeta \widetilde{\zeta}} \Big\lbrace \Lie_{\mathcal{Z}}^{J_1'} \Big[ \big(1 + 
	\Farinvariant_{(2)}^2 \ell_{(MBI)}^{-2} \big) (\ell_{(MBI)}^{-2} \Fardual_{\widetilde{\mu} \widetilde{\zeta}}) \Big] 
	\Big\rbrace,$ after fully expanding via the Leibniz and chain rules, there is at most one factor of $\Far$ with more than 
	$N/2$ derivatives falling on it. Thus, by Corollary \ref{C:ConformalKillingLieXhodgedualcommutation} and the assumptions,
	we have that $|f| \lesssim \sum_{|J_1| \leq |J_1'|} |\Lie_{\mathcal{Z}}^{J_1} \Far|.$ 
	Thus, in total, we have shown that such a 
	typical term \eqref{E:typicalterm} is bounded in magnitude from above by the right-hand side of 
	\eqref{E:Htrianglecommutatorsingletermfirstbound}. This completes the proof for the term $(iv).$
	Terms $(i) - (iii)$ can be handled similarly.
	
\end{proof}

The next lemma (more precisely, its corollary) will be used to control the $\overline{K}^{\nu} \dot{\Far}_{\nu \eta} \mathfrak{I}^{\eta}$ term on the right-hand side of \eqref{E:divJdot}, where 
$\dot{\Far} \eqdef \Lie_{\mathcal{Z}} \Far,$ and $\mathfrak{I}^{\nu}$ is the inhomogeneous term from Proposition \ref{P:Inhomogeneousterms}.

\begin{lemma} \label{L:HtriangleCommutator}
	Let $N \geq 0$ be an integer, let $\Far$ be a two-form, and let $H_{\triangle}^{\mu \zeta \kappa \lambda}$ be the
	($\Far-$dependent) tensorfield defined in \eqref{E:Htriangledef}. 
	There exists an $\epsilon > 0$ such that if $|\Far|_{\Lie_{\mathcal{Z}}; \lfloor N/2 \rfloor} \leq \epsilon,$ 
	and $I$ is any $\mathcal{Z}-$multi-index satisfying $|I| \leq N,$ then  
	the following pointwise estimate holds:
	
	\begin{align} \label{E:HtriangleCommutator}
		\Big| \Big\lbrace H_{\triangle}^{\mu \zeta \kappa \lambda} & \nabla_{\mu} \Big(\Lie_{\mathcal{Z}}^I \Far_{\kappa 	
			\lambda}\Big) -	\Liemod_{\mathcal{Z}}^I \Big(H_{\triangle}^{\mu \zeta \kappa \lambda} \nabla_{\mu} \Far_{\kappa 
			\lambda} \Big) \Big\rbrace (\Lie_{\mathcal{Z}}^I \Far_{\nu \zeta}) \overline{K}^{\nu} \Big| \\
		& \lesssim \Big\lbrace (1 + s)^2  |\Lie_{\mathcal{Z}}^I \Far|_{\mathcal{L}\mathcal{U}} + (1 + |q|)^2  
				|\Lie_{\mathcal{Z}}^I \Far| \Big\rbrace \notag \\
		& \ \ \ \times \mathop{\sum_{|J_3| \leq |I| - 1}}_{|J_1| + |J_2| + |J_3| \leq |I|}  
			|\Lie_{\mathcal{Z}}^{J_1} \Far|
			\Big\lbrace |\Lie_{\mathcal{Z}}^{J_2} \Far||\nabla(\Lie_{\mathcal{Z}}^{J_3} \Far)|_{\mathcal{L}\mathcal{U}}  
			+ |\Lie_{\mathcal{Z}}^{J_2} \Far|_{\mathcal{L}\mathcal{U}} |\nabla(\Lie_{\mathcal{Z}}^{J_3} \Far)|
			+ |\Lie_{\mathcal{Z}}^{J_2} \Far|_{\mathcal{T}\mathcal{T}}|\nabla(\Lie_{\mathcal{Z}}^{J_3} 
			\Far)|_{\mathcal{T}\mathcal{T}}  \Big\rbrace. \notag 
	\end{align}

\end{lemma}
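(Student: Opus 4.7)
The right-hand side of \eqref{E:HtriangleCommutator} is a product of two factors: a ``$\overline{K}$-contraction'' factor depending on $\Lie_{\mathcal{Z}}^I\Far$, and a ``$H_{\triangle}$-commutator'' factor in which the highest-order factor of $\Far$ carries at most $|I|-1$ derivatives. Accordingly the plan has three steps: first, derive a commutator identity that exposes the cancellation of the top-order term and reveals the promised $|J_3|\leq|I|-1$ structure; second, invoke Lemma \ref{L:Htrianglecommutatorsingleterm} termwise to pick up the null-form factor; third, split the $\overline{K}$-contraction using the null decomposition of $\overline{K}$.

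For Step 1, I view $V^{\zeta}\eqdef H_{\triangle}^{\mu\zeta\kappa\lambda}\nabla_{\mu}\Far_{\kappa\lambda}$ as a vectorfield and compute its ordinary Lie derivative by the Leibniz rule, using Lemma \ref{L:Liecommuteswithcoordinatederivatives} to commute $\Lie_Z$ past $\nabla$ on the $(0,3)$-tensor $\nabla_{\mu}\Far_{\kappa\lambda}$. The contributions of $\nabla Z$ to the dummy indices $\mu,\kappa,\lambda$ on $H_{\triangle}$ cancel the corresponding terms produced by $\Lie_Z(\nabla_{\mu}\Far_{\kappa\lambda})$, leaving
\begin{align*}
\Lie_Z\bigl(H_{\triangle}^{\mu\zeta\kappa\lambda}\nabla_{\mu}\Far_{\kappa\lambda}\bigr)=\bigl(\Lie_Z H_{\triangle}^{\mu\zeta\kappa\lambda}\bigr)\nabla_{\mu}\Far_{\kappa\lambda}+H_{\triangle}^{\mu\zeta\kappa\lambda}\nabla_{\mu}\Lie_Z\Far_{\kappa\lambda}.
\end{align*}
Adding $2c_Z V^{\zeta}$ and invoking Definition \ref{D:Liemoddef} absorbs the $2c_Z$ correction into the first term to yield $\Liemod_Z H_{\triangle}$ in place of $\Lie_Z H_{\triangle}$. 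Iterating this identity, the ``bad'' piece $H_{\triangle}^{\mu\zeta\kappa\lambda}\nabla_{\mu}\Lie_{\mathcal{Z}}^I\Far_{\kappa\lambda}$ appears precisely once on the right and cancels the first term of the commutator being estimated, producing
\begin{align*}
H_{\triangle}^{\mu\zeta\kappa\lambda}\nabla_{\mu}\Lie_{\mathcal{Z}}^I\Far_{\kappa\lambda}-\Liemod_{\mathcal{Z}}^I\bigl(H_{\triangle}^{\mu\zeta\kappa\lambda}\nabla_{\mu}\Far_{\kappa\lambda}\bigr)=-\sum_{\substack{I_1+I_2=I\\|I_1|\geq 1}}C_{I_1,I_2}\bigl(\Liemod_{\mathcal{Z}}^{I_1}H_{\triangle}^{\mu\zeta\kappa\lambda}\bigr)\nabla_{\mu}\Lie_{\mathcal{Z}}^{I_2}\Far_{\kappa\lambda},
\end{align*}
in which $|I_2|\leq|I|-1$ in every summand.

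For Step 2, I expand each $\Liemod_{\mathcal{Z}}^{I_1}$ as a finite linear combination of ordinary $\Lie_{\mathcal{Z}}^{J}$ with $|J|\leq|I_1|$ (by Definition \ref{D:Liemoddef}) and apply Lemma \ref{L:Htrianglecommutatorsingleterm} to each resulting term with $J'=I_2$; the smallness hypothesis on $|\Far|_{\Lie_{\mathcal{Z}};\lfloor N/2\rfloor}$ matches precisely what that lemma requires. Summing over all decompositions $I_1+I_2=I$ with $|I_1|\geq 1$ and relabeling the resulting triple of multi-indices as $(J_1,J_2,J_3)$ with $J_3\eqdef I_2$ produces exactly the triple sum on the right-hand side of \eqref{E:HtriangleCommutator} (with the ``$\overline{K}$-contraction'' factor removed). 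For Step 3, the null decomposition $\overline{K}=\tfrac{1}{2}\{(1+s^2)L+(1+q^2)\uL\}$ from \eqref{E:Kdef} together with Definition \ref{D:contractionnomrs} gives
\begin{align*}
\bigl|\Lie_{\mathcal{Z}}^I\Far_{\nu\zeta}\,\overline{K}^{\nu}\bigr|\lesssim (1+s)^2\bigl|\Lie_{\mathcal{Z}}^I\Far\bigr|_{\mathcal{L}\mathcal{U}}+(1+|q|)^2\bigl|\Lie_{\mathcal{Z}}^I\Far\bigr|,
\end{align*}
since a $L$-contraction feeds into the $|\cdot|_{\mathcal{L}\mathcal{U}}$ seminorm (as $L\in\mathcal{L}$) while a $\uL$-contraction is only bounded by the full pointwise norm. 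Multiplying this bound by the one from Step 2 gives \eqref{E:HtriangleCommutator}.

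The main obstacle is Step 1: one must set up the iteration so that the top-order term $H_{\triangle}^{\mu\zeta\kappa\lambda}\nabla_{\mu}\Lie_{\mathcal{Z}}^I\Far_{\kappa\lambda}$ cancels \emph{exactly}, since it has no null-form structure of its own and the energy estimate has no room to absorb it at top order. This cancellation is the precise reason the modified Lie derivatives of Definition \ref{D:Liemoddef} carry the $2c_Z$ correction, and its verification requires the full Leibniz-plus-commutation calculation above rather than a naive application of the product rule. Once Step 1 is secured, Steps 2 and 3 reduce to combining previously established ingredients, namely Lemma \ref{L:Htrianglecommutatorsingleterm} and the standard null decomposition of $\overline{K}$.
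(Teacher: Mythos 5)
Your proposal is correct and follows essentially the same route as the paper: the paper likewise uses the Leibniz rule together with Lemma \ref{L:Liecommuteswithcoordinatederivatives} and Definition \ref{D:Liemoddef} to reduce the commutator to a sum of terms $\big(\Lie_{\mathcal{Z}}^{J}H_{\triangle}^{\mu\zeta\kappa\lambda}\big)\nabla_{\mu}\Lie_{\mathcal{Z}}^{J'}\Far_{\kappa\lambda}$ with $|J'|\leq|I|-1$, splits the $\overline{K}$-contraction via \eqref{E:MorawetznulldecompL}--\eqref{E:MorawetznulldecompA} into the $(1+s)^2|\cdot|_{\mathcal{L}\mathcal{U}}$ and $(1+|q|)^2|\cdot|$ pieces, and then invokes Lemma \ref{L:Htrianglecommutatorsingleterm}. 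The only (immaterial) differences are the order of the last two steps and that you spell out the top-order cancellation more explicitly than the paper does.
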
	

\begin{proof}
	Using \eqref{E:Htrianglerecall}, the definition \eqref{E:Liemoddef} of $\Liemod_{\mathcal{Z}}^I,$
	the null decomposition \eqref{E:MorawetznulldecompL} - \eqref{E:MorawetznulldecompA} for $\overline{K},$ 
	and Lemma \ref{L:Liecommuteswithcoordinatederivatives}, it follows that
	
	\begin{align} \label{E:HtriangleCommutatorMorawetz}
		\Big| \Big\lbrace H_{\triangle}^{\mu \zeta \kappa \lambda} & \nabla_{\mu} \Big(\Lie_{\mathcal{Z}}^I \Far_{\kappa 	
			\lambda}\Big) -	\Liemod_{\mathcal{Z}}^I \Big(H_{\triangle}^{\mu \zeta \kappa \lambda} \nabla_{\mu} \Far_{\kappa 
			\lambda} \Big) \Big\rbrace (\Lie_{\mathcal{Z}}^I \Far_{\zeta \nu}) \overline{K}^{\nu} \Big| \\
		& \lesssim (1 + s)^2 |\Lie_{\mathcal{Z}}^I \Far|_{\mathcal{L} \mathcal{U}}  
				\mathop{\sum_{|J'| \leq |I| - 1}}_{|J| + |J'| \leq |I|} 
			\sum_{\zeta =0}^3 \Big| \Big\lbrace \big( \Lie_{\mathcal{Z}}^{J} (H_{\triangle}^{\mu \zeta \kappa \lambda}) \big) 
			\nabla_{\mu} \Lie_{\mathcal{Z}}^{J'} \Far_{\kappa \lambda} \Big\rbrace \Big| \notag \\
		& \ \ \ + (1 + |q|)^2 |\Lie_{\mathcal{Z}}^I \Far| 
			\mathop{\sum_{|J'| \leq |I| - 1}}_{|J| + |J'| \leq |I|} \sum_{\zeta =0}^3
			\Big|	\Big\lbrace \big( \Lie_{\mathcal{Z}}^{J} (H_{\triangle}^{\mu \zeta \kappa \lambda}) \big) \nabla_{\mu} 
			\Lie_{\mathcal{Z}}^{J'} \Far_{\kappa \lambda} \Big\rbrace \Big|. 
			\notag
	\end{align}
	Inequality \eqref{E:HtriangleCommutator} now follows from applying Lemma \ref{L:Htrianglecommutatorsingleterm} to
	\eqref{E:HtriangleCommutatorMorawetz}. 
	
\end{proof}

\begin{corollary} \label{C:HtriangleCommutator}
	Under the assumptions of Lemma \ref{L:HtriangleCommutator}, we have that
	
	\begin{align} \label{E:algebraicCauchySchwarzHtriangleCommutator}
		\sum_{|I| \leq N}
			\Big| \Big\lbrace H_{\triangle}^{\mu \zeta \kappa \lambda} & \nabla_{\mu} \Big(\Lie_{\mathcal{Z}}^I \Far_{\kappa 	
			\lambda}\Big) -	\Liemod_{\mathcal{Z}}^I \Big(H_{\triangle}^{\mu \zeta \kappa \lambda} \nabla_{\mu} \Far_{\kappa 
			\lambda} \Big) \Big\rbrace (\Lie_{\mathcal{Z}}^I \Far_{\nu \zeta}) \overline{K}^{\nu} \Big| \\
		& \lesssim \bigg\lbrace \sum_{|J| \leq \lfloor N/2 \rfloor} (1 + s)^2|\Lie_{\mathcal{Z}}^J 
			\Far|_{\mathcal{L}\mathcal{U}}^2 + (1 + s)^2 |\Lie_{\mathcal{Z}}^J \Far|_{\mathcal{T}\mathcal{T}}^2 
			+ |\Lie_{\mathcal{Z}}^J \Far|^2 \bigg\rbrace \notag \\
		& \ \ \times \bigg\lbrace \sum_{|I| \leq N} (1 + s)^2|\Lie_{\mathcal{Z}}^I \Far|_{\mathcal{L}\mathcal{U}}^2 
			+ (1 + s)^2 |\Lie_{\mathcal{Z}}^I \Far|_{\mathcal{T}\mathcal{T}}^2 
			+ (1 + |q|)^2|\Lie_{\mathcal{Z}}^I \Far|^2 \bigg\rbrace. \notag 
		\end{align}
\end{corollary}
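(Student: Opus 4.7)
My plan is to derive the corollary from the pointwise estimate of Lemma \ref{L:HtriangleCommutator} by a Moser-type splitting argument based on the observation that among the four tensorial factors appearing in the pointwise bound — the ``outer'' factor of order $|I| \leq N$ and the three ``inner'' factors of orders $|J_1|, |J_2|, |J_3|$ with $|J_1|+|J_2|+|J_3| \leq |I| \leq N$ — at most two can simultaneously have derivative order exceeding $\lfloor N/2 \rfloor.$ Indeed, the three inner indices sum to at most $N,$ so at most one of them can exceed $\lfloor N/2 \rfloor,$ giving at most $1 + 1 = 2$ high-order factors out of four.

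After applying Lemma \ref{L:HtriangleCommutator} pointwise and summing over $|I|\leq N,$ the resulting right-hand side is a sum of quartic expressions in the Lie derivatives of $\Far,$ each carrying a total weight of $(1+s)^2$ (paired with a factor of $|\Lie_{\mathcal{Z}}^I\Far|_{\mathcal{L}\mathcal{U}}$) or $(1+|q|)^2$ (paired with a factor of $|\Lie_{\mathcal{Z}}^I\Far|$). I will distribute this weight as $(1+s)\cdot(1+s)$ or $(1+|q|)\cdot(1+|q|)$ across the factors, attaching one weight to each of the two high-order factors. Writing each quartic as $abcd$ with $a,c$ denoting the (weighted) high-order factors and $b,d$ the low-order factors, I apply AM-GM in the form
\[
	abcd = (ab)(cd) \leq \tfrac{1}{2}\bigl((ab)^2+(cd)^2\bigr)
		= \tfrac{1}{2}\bigl(a^2b^2 + c^2d^2\bigr),
\]
so that each resulting term is a product of (weighted high-order squared) $\times$ (low-order squared). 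Summing over all indices and factoring the resulting double sum then produces the desired product of a sum over $|J| \leq \lfloor N/2\rfloor$ times a sum over $|I| \leq N.$

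Case analysis based on the null structure of the inner bracket of Lemma \ref{L:HtriangleCommutator} determines which of the three weighted combinations in each bracket of the target estimate arises: the pairing of a $|\cdot|_{\mathcal{L}\mathcal{U}}$ (resp.\ $|\cdot|_{\mathcal{T}\mathcal{T}}$) factor with its $(1+s)$ weight yields the $(1+s)^2|\cdot|_{\mathcal{L}\mathcal{U}}^2$ (resp.\ $(1+s)^2|\cdot|_{\mathcal{T}\mathcal{T}}^2$) contributions in both brackets of the conclusion, while the pairing of an unweighted generic factor with $(1+|q|)$ on the high side produces the $(1+|q|)^2|\Lie_{\mathcal{Z}}^I \Far|^2$ piece. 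On the low side the corresponding generic contribution remains unweighted, which matches the $|\Lie_{\mathcal{Z}}^J \Far|^2$ term in the first bracket. To close the loop, I need to convert the factor $|\nabla(\Lie_{\mathcal{Z}}^{J_3}\Far)|$ (in its various null-contracted forms) into a pure Lie-derivative expression of order at most $|J_3|+1 \leq |I| \leq N.$ This follows from the identity $\nabla_\mu \Far = \Lie_{T_{(\mu)}} \Far$ for the translations $T_{(\mu)} \in \mathscr{T} \subset \mathcal{Z},$ together with \eqref{E:LieZequivalenttonablaZpointwise}, which permits crude upper bounds of contracted norms by full norms (e.g.\ $|\nabla U|_{\mathcal{L}\mathcal{U}} \leq |\nabla U|$) without any loss in the intended target.

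The main obstacle is not a single deep estimate but rather the combinatorial bookkeeping: I must track which of the four factors is high vs.\ low in each case (the outer factor alone is high, $J_3$ alone is high, both are high, or all are low) and in each such case assign the $(1+s)$ or $(1+|q|)$ weights from the outer bracket of Lemma \ref{L:HtriangleCommutator} so that no weighted high factor is left paired with another weighted high factor — otherwise one would obtain an unacceptable $(1+s)^4$ or $(1+|q|)^4$ weight on the RHS. A consistent rule — always pair a weighted high factor with an unweighted low factor and split $(1+s)^2 = (1+s)\cdot(1+s)$ only when two high factors of ``LU-type'' are available — circumvents this difficulty, and in the remaining degenerate cases the simple inequality $(1+|q|)^2 \leq (1+s)^2$ allows any mismatch in the weights on the low-order bracket to be absorbed without loss.
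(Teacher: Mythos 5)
Your argument is the same as the paper's, which disposes of the corollary in one sentence: apply Lemma \ref{L:HtriangleCommutator}, convert the covariant-derivative factors via $|\nabla \Gar|_{\mathcal{V}\mathcal{W}} \lesssim \sum_{Z} |\Lie_Z \Gar|_{\mathcal{V}\mathcal{W}}$ with $Z$ ranging over the translations (which lie in $\mathcal{Z}$), and close with estimates of the form $ab \lesssim a^2 + b^2.$ Your write-up supplies the weight-distribution and high/low bookkeeping that the paper leaves implicit, and your pairing rule is the intended one.

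One caveat, which applies to the paper's one-line proof as much as to yours: your counting is performed on the indices $|J_1|, |J_2|, |J_3|,$ but the factor that must ultimately be placed into one of the two brackets is a Lie derivative of order $|J_3|+1.$ For even $N$ the conclusion that at most one converted inner factor exceeds $\lfloor N/2 \rfloor$ survives this shift, but for odd $N$ the boundary configuration $|J_1| = (N+1)/2,$ $|J_2| = 0,$ $|J_3| = (N-1)/2$ (with $|I| = N$) is admissible in Lemma \ref{L:HtriangleCommutator} and produces two generic inner factors of order $(N+1)/2 > \lfloor N/2 \rfloor.$ Together with the outer factor this gives three high-order factors against a single low-order one, and then no AM--GM pairing places an order-$\leq \lfloor N/2 \rfloor$ square into the first bracket of \emph{every} resulting product; in particular a term of the schematic form $(1+s)^2 |\Lie_{\mathcal{Z}}^{J_1}\Far|^2\, |\Lie_{\mathcal{Z}}^{K}\Far|^2$ with $|J_1| = |K| = (N+1)/2$ and both factors uncontracted is not manifestly dominated by the claimed product of brackets. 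To make the argument airtight you would need either to treat this single configuration separately (exploiting that $|J_2| = 0$ there and that the null contractions sit on specific factors) or to enlarge the derivative range of the first bracket, with the corresponding adjustment downstream where Corollary \ref{C:GlobalSobolev} is invoked. As written, both your proof and the paper's gloss over this corner case; aside from it, your reconstruction is faithful and complete.
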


\begin{proof}
	Inequality \eqref{E:algebraicCauchySchwarzHtriangleCommutator} follows from \eqref{E:HtriangleCommutator},
	from the facts that
	
	\begin{align}
		|\nabla \Far|_{\mathcal{L} \mathcal{U}} + |\nabla \Far|_{\mathcal{T}\mathcal{T}}
		\lesssim \sum_{Z \in \mathcal{Z}} |\Lie_Z \Far|_{\mathcal{L} \mathcal{U}} + |\Lie_Z \Far|_{\mathcal{T}\mathcal{T}}
	\end{align}
	holds for any two-for $\Far,$ and from simple algebraic estimates of the form $ab \lesssim a^2 + b^2.$
\end{proof}

The following lemma will be used to control the terms
$(\nabla_{\mu}H^{\mu \zeta \kappa \lambda}) \dot{\Far}_{\kappa \lambda} \dot{\Far}_{\nu \zeta} \overline{K}^{\nu}
- \frac{1}{4} (\overline{K}^{\nu} \nabla_{\nu}H^{\zeta \eta \kappa \lambda}) \dot{\Far}_{\zeta \eta} 
\dot{\Far}_{\kappa \lambda}$ appearing on the right-hand side of \eqref{E:divJdot}.

\begin{lemma} \label{L:DerivativeofHtriangleerrorterms}
	Let $\Far, \dot{\Far}$ be a pair of two-forms, and let $H_{\triangle}^{\mu \zeta \kappa \lambda}$ be the ($\Far-$dependent)
	tensorfield defined in \eqref{E:Htriangledef}. There exists an $\epsilon > 0$ 
	such that if $|\Far|, |\nabla \Far| \leq \epsilon,$ then the following pointwise estimates hold:
	
	\begin{align} \label{E:DerivativeofHtriangleerrorterm1}
	|(\nabla_{\mu}H_{\triangle}^{\mu \zeta \kappa \lambda}) \dot{\Far}_{\kappa \lambda} \dot{\Far}_{\ \zeta}^{\nu} 
		\overline{K}_{\nu}| & \lesssim (1 + s)^2 \sum_{|I| \leq 1}	
			\Big\lbrace |\Lie_{\mathcal{Z}}^I \Far|_{\mathcal{L}\mathcal{U}}^2 |\dot{\Far}|^2 
			+ |\Lie_{\mathcal{Z}}^I\Far|^2 |\dot{\Far}|_{\mathcal{L}\mathcal{U}}^2 
			+ |\Lie_{\mathcal{Z}}^I \Far|_{\mathcal{T}\mathcal{T}}^2 |\dot{\Far}|_{\mathcal{T}\mathcal{T}}^2 \Big\rbrace \\
	& \ \ \ + (1 + |q|)^2 \sum_{|I| \leq 1} |\Lie_{\mathcal{Z}}^I \Far|^2 |\dot{\Far}|^2, 
		\notag \\
	\Big|(\overline{K}^{\nu}\nabla_{\nu}H_{\triangle}^{\zeta \eta \kappa \lambda}) \dot{\Far}_{\zeta \eta} 
		\dot{\Far}_{\kappa \lambda} \Big| & \lesssim
		(1 + s)^2	\sum_{|I| \leq 1} \Big\lbrace |\Lie_{\mathcal{Z}}^I \Far|_{\mathcal{L}\mathcal{U}}^2 |\dot{\Far}|^2 
			+ |\Lie_{\mathcal{Z}}^I \Far|^2 |\dot{\Far}|_{\mathcal{L}\mathcal{U}}^2 
			+ |\Lie_{\mathcal{Z}}^I \Far|_{\mathcal{T}\mathcal{T}}^2 |\dot{\Far}|_{\mathcal{T}\mathcal{T}}^2 \Big\rbrace.
			\label{E:DerivativeofHtriangleerrorterm2}  
	\end{align}
\end{lemma}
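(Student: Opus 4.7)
The plan is to reduce both estimates to a combination of Lemma \ref{L:NullForms} (for null-form contractions) and the null decomposition \eqref{E:MorawetznulldecompL}--\eqref{E:MorawetznulldecompA} of $\overline{K}$, after first expanding $\nabla H_{\triangle}$ via the Leibniz rule. Recall from \eqref{E:Htriangledef} that $H_{\triangle}^{\mu\zeta\kappa\lambda}$ is a finite sum of terms of the form $c(\Far)\,T^{\mu\zeta}S^{\kappa\lambda}$, where $T,S\in\{\Far,\Fardual\}$ and $c$ is a smooth function of the invariants $\Farinvariant_{(1)},\Farinvariant_{(2)}$. Under the assumption $|\Far|,|\nabla\Far|\le\epsilon$, the coefficient $c(\Far)$ and its derivative $\nabla c$ are bounded by $O(1)$ and $O(\epsilon|\nabla\Far|)$ respectively (using Lemma \ref{L:electromagneticidentities} to express $\nabla c$ in terms of $\Far$ and $\nabla\Far$). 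Thus after applying Leibniz, $\nabla_\mu H_{\triangle}^{\mu\zeta\kappa\lambda}$ decomposes into a finite sum of schematic terms of the form (bounded smooth coefficient)$\times T^{\mu\zeta} \cdot \widetilde{S}^{\kappa\lambda}$, where exactly one of $T,\widetilde{S}$ carries a derivative (and lower-order remainders carry extra factors of $\Far$, which contribute only smallness).

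For \eqref{E:DerivativeofHtriangleerrorterm1}, I would first contract the $\kappa\lambda$ indices against $\dot\Far_{\kappa\lambda}$: this produces a null form $\mathcal{Q}_{(i)}(\widetilde{S},\dot\Far)$ (possibly with one factor differentiated), which by Lemma \ref{L:NullForms} is bounded by $|\widetilde{S}|\,|\dot\Far|_{\mathcal{L}\mathcal{U}} + |\widetilde{S}|_{\mathcal{L}\mathcal{U}}|\dot\Far| + |\widetilde{S}|_{\mathcal{T}\mathcal{T}}|\dot\Far|_{\mathcal{T}\mathcal{T}}$. Next, for the remaining contraction $T^{\mu\zeta}\dot\Far^\nu_{\ \zeta}\overline{K}_\nu$, I would decompose $\overline{K}^\rho = \tfrac12(1+s^2)L^\rho + \tfrac12(1+q^2)\uL^\rho$ and observe the key structural fact: the one-form $L^\rho\dot\Far_{\rho\cdot}$ has components $\dot\Far_{LL}=0,\ \dot\Far_{L\uL}=2\dot\rho,\ \dot\Far_{LA}=\dot\alpha_A$, all of which lie in $|\dot\Far|_{\mathcal{L}\mathcal{U}}$, whereas $\uL^\rho\dot\Far_{\rho\cdot}$ contains the bad component $\dot\ualpha_A$ and must be bounded by the full norm $|\dot\Far|$. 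Hence
\[
|\dot\Far^\nu_{\ \zeta}\overline{K}_\nu| \lesssim (1+s^2)|\dot\Far|_{\mathcal{L}\mathcal{U}} + (1+q^2)|\dot\Far|.
\]
Combining with $|T|\lesssim |\Far|$ or $|\nabla \Far| \lesssim |\Lie_{\mathcal{Z}}^I\Far|$ for $|I|\le 1$ (using $\Lie_{T_{(\mu)}}=\nabla_{T_{(\mu)}}$), the product of the two bounds yields four-factor expressions; applying $|ab|\le\tfrac12(a^2+b^2)$ splits these into squared norms, yielding the RHS of \eqref{E:DerivativeofHtriangleerrorterm1}. The $(1+s)^2$-weighted terms automatically come with at least one good factor (either from the $\overline{K}$-decomposition or from Lemma \ref{L:NullForms}), while the $(1+q)^2$ contribution absorbs the remaining terms where all null structure has been spent.

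For \eqref{E:DerivativeofHtriangleerrorterm2}, the argument is analogous but cleaner. Writing $\overline{K}^\nu\nabla_\nu = \tfrac12(1+s^2)\nabla_L + \tfrac12(1+q^2)\nabla_{\uL}$ and applying Leibniz to $H_{\triangle}$, each surviving term has the schematic form (bounded coefficient)$\times (\overline{K}^\nu\nabla_\nu T^{\zeta\eta})\,\widetilde{S}^{\kappa\lambda}$ (or with the derivative on $\widetilde{S}$). Now \emph{both} $\dot\Far$ factors contract via null forms — one with $T$ (or $\nabla T$) and one with $\widetilde{S}$ (or $\nabla\widetilde{S}$) — so applying Lemma \ref{L:NullForms} twice produces a product of two bounds, each guaranteeing at least one good component. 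This explains the absence of a $(1+|q|)^2|\Lie^I\Far|^2|\dot\Far|^2$ term in the bound: using $(1+q^2)\le(1+s^2)$ and the pointwise smallness $|\Far|\le\epsilon$ to absorb one of the two $\Far$-type factors as a constant, both the $\nabla_L$ and $\nabla_{\uL}$ contributions reduce to $(1+s)^2$ times products in which at least one good component survives. A final application of $|ab|\le\tfrac12(a^2+b^2)$ yields the three types of squared terms on the RHS of \eqref{E:DerivativeofHtriangleerrorterm2}.

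The main obstacle will be the detailed bookkeeping of null components and combinatorial weight-tracking: making sure that in every term produced by the Leibniz expansion of $\nabla H_{\triangle}$ (respectively $\overline{K}^\nu\nabla_\nu H_{\triangle}$) and each subsequent contraction, the $(1+s)^2$ prefactor is always paired with an $\mathcal{L}\mathcal{U}$ or $\mathcal{T}\mathcal{T}$ norm, while only the genuinely harmless $(1+|q|)^2$ weight is allowed to multiply full norms. Since $\Fardual$ appears alongside $\Far$ in $H_{\triangle}$ and the null forms $\mathcal{Q}_{(1)}$ and $\mathcal{Q}_{(2)}$ each satisfy the same algebraic bound from Lemma \ref{L:NullForms}, this bookkeeping is notationally involved but structurally uniform across the four constituent pieces $(i)$--$(iv)$ of $H_{\triangle}$.
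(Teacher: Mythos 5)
Your proposal is correct and follows essentially the same route as the paper's proof: decompose $H_{\triangle}$ into its constituent pieces and apply the Leibniz rule, bound the $\kappa\lambda$ (and, for the second estimate, also the $\zeta\eta$) contractions as null forms via Lemma \ref{L:NullForms}, use the null decomposition of $\overline{K}$ (resp.\ of $\overline{K}^{\nu}\nabla_{\nu}$) to pair the $(1+s)^2$ weight with good components while dumping the $(1+|q|)^2$-weighted terms crudely into the full-norm category, and finish with $|ab|\lesssim a^2+b^2$. The paper organizes the bookkeeping through the explicit sub-decompositions $(i')$--$(i''')$ and $(I')$--$(I''')$, but the underlying argument is the one you describe.
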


\begin{proof}
	
	Consider the decomposition 
	$\nabla_{\mu}H_{\triangle}^{\mu \zeta \kappa \lambda}$ $= \nabla_{\mu}(i)^{\mu \zeta \kappa \lambda}$
	$+ \nabla_{\mu}(ii)^{\mu \zeta \kappa \lambda}$ $+ \nabla_{\mu}(iii)^{\mu \zeta \kappa \lambda}$ 
	$+ \nabla_{\mu}(iv)^{\mu \zeta \kappa \lambda}$ implied by \eqref{E:Hdecompositioni} - \eqref{E:Hdecompositioniv}. We will 
	focus only on the case of term $(i)$; terms
	$(ii) - (iv)$ can be handled similarly. We now further decompose $\nabla_{\mu}(i)^{\mu \zeta \kappa \lambda}$ into 
	three pieces, which up to constant factors can be expressed as follows:
	
	\begin{align}
		\nabla_{\mu}(i')^{\mu \zeta \kappa \lambda} & \eqdef  \ell_{(MBI)}^{-2} (\nabla_{\mu}\Far^{\mu \zeta}) \Far^{\kappa 
			\lambda}, \\
		\nabla_{\mu}(i'')^{\mu \zeta \kappa \lambda} & \eqdef  \ell_{(MBI)}^{-2} \Far^{\mu \zeta} 
			\nabla_{\mu}\Far^{\kappa \lambda}, \\
		\nabla_{\mu}(i''')^{\mu \zeta \kappa \lambda} & \eqdef (\nabla_{\mu}\ell_{(MBI)}^{-2}) \Far^{\mu \zeta} 
			\Far^{\kappa \lambda}.
	\end{align}
	
	Using \eqref{E:XYnullframeinnerproduct} and	the null decomposition \eqref{E:MorawetznulldecompL} - 
	\eqref{E:MorawetznulldecompA} for $\overline{K},$ it follows that if $\epsilon$ is sufficiently small, then
	
	\begin{align}
		\Big|[\nabla_{\mu}(i')^{\mu \zeta \kappa \lambda}] \dot{\Far}_{\kappa \lambda} \dot{\Far}_{\ \zeta}^{\nu} 
			\overline{K}_{\nu}\Big| & \lesssim \Big\lbrace (1 + s)^2|\dot{\Far}|_{\mathcal{L}\mathcal{U}}|\nabla \Far|
				+ (1 + |q|)^2 |\dot{\Far}| |\nabla \Far|
			\Big\rbrace \big|\mathcal{Q}_{(1)}(\Far,\dot{\Far})\big|, \label{E:nablaiprimenulldecomposition}\\
		\Big|[\nabla_{\mu}(ii')^{\mu \zeta \kappa \lambda}] \dot{\Far}_{\kappa \lambda} \dot{\Far}_{\ \zeta}^{\nu} 
			\overline{K}_{\nu}\Big| & \lesssim \Big\lbrace (1 + s)^2|\dot{\Far}|_{\mathcal{L}\mathcal{U}}|\Far|
				+ (1 + |q|)^2 |\dot{\Far}| |\Far|
			\Big\rbrace \sum_{\mu = 0}^3 \big|\mathcal{Q}_{(1)}(\nabla_{\mu}\Far,\dot{\Far})\big|, 
			\label{E:nablaitwoprimenulldecomposition}\\
		\Big|[\nabla_{\mu}(i''')^{\mu \zeta \kappa \lambda}] \dot{\Far}_{\kappa \lambda} \dot{\Far}_{\ \zeta}^{\nu} 
			\overline{K}_{\nu}\Big| & \lesssim \Big\lbrace (1 + s)^2|\dot{\Far}|_{\mathcal{L}\mathcal{U}}|\Far|
			+ (1 + |q|)^2 |\dot{\Far}| |\Far|
			\Big\rbrace \big|\mathcal{Q}_{(1)}(\Far,\dot{\Far})\big|, \label{E:nablaithreeprimenulldecomposition}
	\end{align}
	where the $\mathcal{Q}_{(1)}(\cdot,\cdot)$ terms arise from the $\kappa, \lambda$ indices.
	Also applying the null decomposition of Lemma \ref{L:Liederivativeofnullformsaresumsofnullforms} to 
	the $\mathcal{Q}_{(1)}(\cdot,\cdot)$ terms, the fact that $|\nabla \Far|_{\mathcal{V}\mathcal{W}} 
	\lesssim \sum_{|I| = 1} |\Lie_{\mathcal{Z}}^I \Far|_{\mathcal{V}\mathcal{W}},$
	and using simple algebraic estimates of the form $|ab| \lesssim a^2 + b^2,$
	we conclude that each of the right-hand sides of 
	\eqref{E:nablaiprimenulldecomposition}
	- \eqref{E:nablaithreeprimenulldecomposition} are $\lesssim$ the right-hand side of 
	\eqref{E:DerivativeofHtriangleerrorterm1}. Consequently,
	the same is true of $\Big|[\nabla_{\mu}(i)^{\mu \zeta \kappa \lambda}]\dot{\Far}_{\kappa \lambda} \dot{\Far}_{\ 
	\zeta}^{\nu} \overline{K}_{\nu}\Big|.$ Note in particular that our estimates for the terms in 
	\eqref{E:nablaiprimenulldecomposition} - \eqref{E:nablaithreeprimenulldecomposition} involving the $(1 + |q|)^2$ 
	factor are not optimal; we have simply bounded them by 
	$(1 + |q|)^2 \sum_{|I| \leq 1} |\Lie_{\mathcal{Z}}^I \Far|^2|\dot{\Far}|^2.$ The cases $(ii) - (iv)$ can be handled 
	similarly; this completes our proof of \eqref{E:DerivativeofHtriangleerrorterm1}.
	
To prove \eqref{E:DerivativeofHtriangleerrorterm2}, we first consider the decomposition
$(\overline{K}^{\nu} \nabla_{\nu}H_{\triangle}^{\zeta \eta \kappa \lambda}) \dot{\Far}_{\zeta \eta} \dot{\Far}_{\kappa \lambda}$
$= [\overline{K}^{\nu}\nabla_{\nu} (i)^{\zeta \eta \kappa \lambda}]\dot{\Far}_{\zeta \eta} \dot{\Far}_{\kappa \lambda}$ \\
$+ [\overline{K}^{\nu} \nabla_{\nu} (ii)^{\zeta \eta \kappa \lambda}]\dot{\Far}_{\zeta \eta} \dot{\Far}_{\kappa \lambda}$
$+ [\overline{K}^{\nu} \nabla_{\nu} (iii)^{\zeta \eta \kappa \lambda}]\dot{\Far}_{\zeta \eta} \dot{\Far}_{\kappa \lambda}$
$+ [\overline{K}^{\nu} \nabla_{\nu} (iv)^{\zeta \eta \kappa \lambda}]\dot{\Far}_{\zeta \eta} \dot{\Far}_{\kappa \lambda}$
implied by \eqref{E:Hdecompositioni} - \eqref{E:Hdecompositioniv}. We will 
focus only on the case of term $(i)$; terms $(ii) - (iv)$ can be handled similarly. We now further decompose
$[\overline{K}^{\nu} \nabla_{\nu} (i)^{\zeta \eta \kappa \lambda}]$ 
$= (I')^{\zeta \eta \kappa \lambda}$
$+ (I'')^{\zeta \eta \kappa \lambda}$
$+ (I''')^{\zeta \eta \kappa \lambda},$ where
	 	
\begin{align}
		(I')^{\zeta \eta \kappa \lambda} & \eqdef  
			\ell_{(MBI)}^{-2} (\overline{K}^{\nu}\nabla_{\nu}\Far^{\zeta \eta}) \Far^{\kappa \lambda}, \\
		(I'')^{\zeta \eta \kappa \lambda} & \eqdef  \ell_{(MBI)}^{-2} \Far^{\zeta \eta} 
			\overline{K}^{\nu}\nabla_{\nu} \Far^{\kappa \lambda}, \\
		(I''')^{\zeta \eta \kappa \lambda} & \eqdef (\overline{K}^{\nu}\nabla_{\nu} \ell_{(MBI)}^{-2}) 
			\Far^{\zeta \eta} \Far^{\kappa \lambda}.
	\end{align}	 	
	 	Using the null decomposition $K^{\nu} \nabla_{\nu} = \frac{1}{2}\big\lbrace(1+s^2) \nabla_{L} + (1 + q^2) \nabla_{\uL} 
	 	\big\rbrace,$ it follows that if $\epsilon$ is sufficiently small, then
	 	
	 \begin{align}
		\big|(I')^{\zeta \eta \kappa \lambda} \dot{\Far}_{\zeta \eta} \dot{\Far}_{\kappa \lambda} \big| 
			& \lesssim \Big\lbrace (1 + s)^2 \big| \mathcal{Q}_{(1)}(\nabla_L \Far, \dot{\Far})\big| 
				+ (1 + |q|)^2 \big|\mathcal{Q}_{(1)}(\nabla_{\uL} \Far, \dot{\Far})\big| \Big\rbrace 
				\big|\mathcal{Q}_{(1)}(\Far,\dot{\Far})\big|, \label{E:Kderivativeiprimenulldecomposition}\\
		\big|(I'')^{\zeta \eta \kappa \lambda} \dot{\Far}_{\zeta \eta} 
			\dot{\Far}_{\kappa \lambda} \big| 
			& \lesssim \Big\lbrace (1 + s)^2 \big|\mathcal{Q}_{(1)}(\nabla_L \Far, \dot{\Far})\big| 
				+ (1 + |q|)^2 \big|\mathcal{Q}_{(1)}(\nabla_{\uL} \Far, \dot{\Far})\big| \Big\rbrace 
				\big|\mathcal{Q}_{(1)}(\Far,\dot{\Far})\big|, \label{E:Kderivativeitwoprimenulldecomposition} \\
		\big|(I''')^{\zeta \eta \kappa \lambda} \dot{\Far}_{\zeta \eta} \dot{\Far}_{\kappa \lambda} \big| 
			& \lesssim \sum_{i=1}^2 \Big\lbrace (1 + s)^2 \big|\mathcal{Q}_{(i)}(\nabla_L \Far, \Far)\big| 
				+ (1 + |q|)^2 \big|\mathcal{Q}_{(i)}(\nabla_{\uL} \Far, \Far)\big| \Big\rbrace 
				\big|\mathcal{Q}_{(1)}(\Far,\dot{\Far})\big|^2, 
				\label{E:Kderivativeithreeprimenulldecomposition}
	\end{align}	 
	where the $\mathcal{Q}_{(1)}(\nabla_L \Far, \dot{\Far}), \mathcal{Q}_{(1)}(\nabla_{\uL} \Far, \dot{\Far})$
	terms in \eqref{E:Kderivativeiprimenulldecomposition} arise from the $\eta, \zeta$ indices, the 
	$\mathcal{Q}_{(1)}(\Far,\dot{\Far})$ terms in \eqref{E:Kderivativeiprimenulldecomposition} 
	arise from the $\kappa, \lambda$  indices, the estimate \eqref{E:Kderivativeitwoprimenulldecomposition}
	follows from \eqref{E:Kderivativeiprimenulldecomposition} by interchanging the roles of $\eta, \zeta$ and $\kappa, \lambda,$
	the $\big|\mathcal{Q}_{(1)}(\Far,\dot{\Far})\big|^2$ 
	term in \eqref{E:Kderivativeithreeprimenulldecomposition} arises from both the $\zeta, \eta$ and the $\kappa, \lambda$ 
	indices, and the $\mathcal{Q}_{(i)}(\nabla_L \Far, \dot{\Far}), \mathcal{Q}_{(i)}(\nabla_{\uL} \Far, \dot{\Far})$ terms in
	\eqref{E:Kderivativeithreeprimenulldecomposition} arise from the fact that 
	$\ell_{(MBI)}$ can be expressed as a function of null forms; see \eqref{E:ldef}. Also applying the null decomposition of 
	Lemma \ref{L:Liederivativeofnullformsaresumsofnullforms} to the $\mathcal{Q}_{(i)}(\cdot,\cdot)$ terms, 
	using the fact that $|\nabla \Far|_{\mathcal{V}\mathcal{W}} \lesssim \sum_{|I| = 1} |\Lie_{\mathcal{Z}}^I 
	\Far|_{\mathcal{V}\mathcal{W}},$ and using simple 
	algebraic estimates of the form $|ab| \lesssim a^2 + b^2,$ we conclude that each of the right-hand sides of 
	\eqref{E:Kderivativeiprimenulldecomposition} - \eqref{E:Kderivativeithreeprimenulldecomposition} are $\lesssim$ the 
	right-hand side of \eqref{E:DerivativeofHtriangleerrorterm2}. Therefore, the same is true of
	$|\overline{K}^{\nu} \nabla_{\nu} (i)^{\zeta \eta \kappa \lambda}|.$ We remark that we do not need the full
	structure of the right-hand sides of \eqref{E:Kderivativeiprimenulldecomposition} - 
	\eqref{E:Kderivativeithreeprimenulldecomposition} to conclude the desired estimates; rather, we only need the fact that the 
	right-hand sides of \eqref{E:Kderivativeiprimenulldecomposition} - \eqref{E:Kderivativeithreeprimenulldecomposition} are  
	at least quadratic in the $\mathcal{Q}_{(i)}(\cdot,\cdot).$ The cases $(ii) - (iv)$ can be handled similarly.
	 
\end{proof}

Finally, the last lemma in this section will be used to control the terms inside braces on the right-hand side of \eqref{E:divJdot}.

\begin{lemma} \label{L:CanonicalStressMorawetzDerivativeTerm}
	There exists an $\epsilon > 0$ that if $|\Far|, |\nabla \Far| \leq \epsilon,$ then the following estimates hold:
	\begin{align} \label{E:CanonicalStressMorawetzDerivativeTerm}
		& \bigg| \bigg\lbrace
		\ell_{(MBI)}^{-2} \Far^{\kappa \lambda}\dot{\Far}_{\kappa \lambda} 
		\big( \Far^{\nu \zeta}\dot{\Far}_{\ \zeta}^{\mu} 
			- \Far^{\mu \zeta}\dot{\Far}_{\ \zeta}^{\nu} \big) + \big(1 + \Farinvariant_{(2)}^2 \ell_{(MBI)}^{-2} \big)
			\Fardual^{\kappa \lambda}\dot{\Far}_{\kappa \lambda} \big( 
			\Fardual^{\nu \zeta}\dot{\Far}_{\ \zeta}^{\mu} - \Fardual^{\mu \zeta}\dot{\Far}_{\ \zeta}^{\nu} \big) 
			\\
		& \ \ \ \ \ \ + \Farinvariant_{(2)} \ell_{(MBI)}^{-2} \Fardual^{\kappa \lambda}\dot{\Far}_{\kappa \lambda}
			\big( \Far^{\mu \zeta}\dot{\Far}_{\ \zeta}^{\nu} 
			- \Far^{\nu \zeta}\dot{\Far}_{\ \zeta}^{\mu} \big) + 
			\Farinvariant_{(2)} \ell_{(MBI)}^{-2} \Far^{\kappa \lambda}\dot{\Far}_{\kappa \lambda}
			\big( \Fardual^{\mu \zeta}\dot{\Far}_{\ \zeta}^{\nu} 
			- \Fardual^{\nu \zeta} \dot{\Far}_{\ \zeta}^{\mu} \big) \bigg\rbrace 
			\nabla_{\mu} \overline{K}_{\nu} \bigg| \notag \\
		& \lesssim s \big(|\Far|_{\mathcal{L}\mathcal{U}}^2 |\dot{\Far}|^2 + |\Far|^2 |\dot{\Far}|_{\mathcal{L}\mathcal{U}}^2
			+ |\Far|_{\mathcal{T} \mathcal{T}}^2 |\dot{\Far}|_{\mathcal{T} \mathcal{T}}^2 \big). \notag
	\end{align}
\end{lemma}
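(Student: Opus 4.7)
The plan is to reduce the bound to successive applications of Lemma \ref{L:NullForms} and Lemma \ref{L:twoformsMorawetzderivativecontraction}. Each of the four summands inside the braces has the schematic form
\[
c(\Far)\cdot\mathcal{Q}_{(i)}(\Far,\dot{\Far})\cdot\bigl(F^{\nu\zeta}\dot{\Far}_{\ \zeta}^{\mu}-F^{\mu\zeta}\dot{\Far}_{\ \zeta}^{\nu}\bigr)\nabla_{\mu}\overline{K}_{\nu},
\]
where $F\in\{\Far,\Fardual\}$, $i\in\{1,2\}$, and $c(\Far)$ is one of $\pm\ell_{(MBI)}^{-2}$, $\pm\Farinvariant_{(2)}\ell_{(MBI)}^{-2}$, or $1+\Farinvariant_{(2)}^{2}\ell_{(MBI)}^{-2}$. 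Under the smallness hypothesis $|\Far|\leq\epsilon$, the quantities $\Farinvariant_{(1)},\Farinvariant_{(2)}$ are $O(\epsilon^{2})$ and $\ell_{(MBI)}$ stays bounded away from zero, so every $c(\Far)$ is uniformly bounded; the lemma therefore reduces to estimating each quartic expression of the above form with $|c(\Far)|$ replaced by a universal constant.

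Next, I will bound the scalar null-form factor using Lemma \ref{L:NullForms}, which gives, for $i=1,2$,
\[
|\mathcal{Q}_{(i)}(\Far,\dot{\Far})|\lesssim|\Far||\dot{\Far}|_{\mathcal{L}\mathcal{U}}+|\Far|_{\mathcal{L}\mathcal{U}}|\dot{\Far}|+|\Far|_{\mathcal{T}\mathcal{T}}|\dot{\Far}|_{\mathcal{T}\mathcal{T}}.
\]
Simultaneously, since $(F^{\nu\zeta}\dot{\Far}_{\ \zeta}^{\mu}-F^{\mu\zeta}\dot{\Far}_{\ \zeta}^{\nu})$ is antisymmetric in $\mu,\nu$, I will apply Lemma \ref{L:twoformsMorawetzderivativecontraction} to each of its two pieces separately, extracting a uniform factor of $s$ together with the trinomial
\[
|F|_{\mathcal{L}\mathcal{U}}|\dot{\Far}|+|F||\dot{\Far}|_{\mathcal{L}\mathcal{U}}+|F|_{\mathcal{T}\mathcal{T}}|\dot{\Far}|_{\mathcal{T}\mathcal{T}}.
\]
When $F=\Fardual$, the null-component identities \eqref{E:Fardualalpha}--\eqref{E:Fardualsigma} give $|\Fardual|=|\Far|$, $|\Fardual|_{\mathcal{L}\mathcal{U}}\approx|\Far|_{\mathcal{T}\mathcal{T}}$, and $|\Fardual|_{\mathcal{T}\mathcal{T}}\approx|\Far|_{\mathcal{L}\mathcal{U}}$, so the resulting trinomial has the same structure with $\mathcal{L}\mathcal{U}$ and $\mathcal{T}\mathcal{T}$ interchanged.

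The final step is to multiply the scalar and tensorial bounds and collapse the cross products with the elementary inequality $2ab\leq a^{2}+b^{2}$. Here the null condition manifests itself: every resulting pure square either pairs an $\mathcal{L}\mathcal{U}$ factor of $\Far$ with a full norm of $\dot{\Far}$ (or vice versa), or pairs two $\mathcal{T}\mathcal{T}$ factors, which matches the three summands on the right-hand side of \eqref{E:CanonicalStressMorawetzDerivativeTerm}. Terms involving $\Farinvariant_{(2)}$ or $\Farinvariant_{(2)}^{2}$ have additional smallness and are absorbed trivially.

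The main obstacle lies in the duality-swapped case $F=\Fardual$, where the AM-GM pairings initially threaten to produce forbidden mixed products such as $|\Far|_{\mathcal{T}\mathcal{T}}^{2}|\dot{\Far}|^{2}$ or $|\Far|^{2}|\dot{\Far}|_{\mathcal{T}\mathcal{T}}^{2}$, which are absent from the target right-hand side. I expect to dispose of these either by choosing the AM-GM partners so that the $|\Far|$ factor is always matched against $|\dot{\Far}|_{\mathcal{L}\mathcal{U}}$ (and the $|\Far|_{\mathcal{T}\mathcal{T}}$ factor against $|\dot{\Far}|_{\mathcal{T}\mathcal{T}}$), or, failing that, by converting one factor of $|\Far|_{\mathcal{T}\mathcal{T}}^{2}$ into $\epsilon^{2}$ via the smallness hypothesis and then absorbing the resulting expression into the symmetric target trinomial through $|\cdot|_{\mathcal{L}\mathcal{U}},|\cdot|_{\mathcal{T}\mathcal{T}}\leq|\cdot|$. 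The bookkeeping is tedious but mechanical once the null-form factorization and the $s$-weighted structure of $\nabla\overline{K}$ are in place.
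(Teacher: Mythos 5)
Your outline coincides with the paper's own (which cites exactly the same three ingredients), and it does close for the two summands whose tensorial factor is built from $\Far$ itself: there the product of the two trinomials is a perfect square of $|\Far||\dot{\Far}|_{\mathcal{L}\mathcal{U}}+|\Far|_{\mathcal{L}\mathcal{U}}|\dot{\Far}|+|\Far|_{\mathcal{T}\mathcal{T}}|\dot{\Far}|_{\mathcal{T}\mathcal{T}}$, and AM--GM lands exactly on the target. But for the summands built from $\Fardual$ the step ``multiply the two black-box bounds and collapse by AM--GM'' genuinely fails, and neither of your fallbacks rescues it. After the duality swap the product contains the cross term $s\,(|\Far||\dot{\Far}|_{\mathcal{L}\mathcal{U}})\cdot(|\Far|_{\mathcal{T}\mathcal{T}}|\dot{\Far}|)$, which is \emph{not} dominated by the right-hand side of \eqref{E:CanonicalStressMorawetzDerivativeTerm}: take $\Far$ with only $\sigma\neq 0$ and $\dot{\Far}$ with $\dot{\rho}=\delta\ll 1$ and $|\dot{\ualpha}|=1$; the cross term is $\approx s\sigma^{2}\delta$ while the target is $\approx s\sigma^{2}\delta^{2}$. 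Hence no choice of AM--GM partners can work (the needed partner $|\dot{\Far}|_{\mathcal{T}\mathcal{T}}$ for $|\Far|_{\mathcal{T}\mathcal{T}}$ is simply absent from that cross term), and your second fallback fails too, since $s\,\epsilon^{2}|\dot{\Far}|^{2}$ is not controlled by a right-hand side that vanishes when $|\Far|_{\mathcal{L}\mathcal{U}}=|\dot{\Far}|_{\mathcal{L}\mathcal{U}}=|\dot{\Far}|_{\mathcal{T}\mathcal{T}}=0$.

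The missing idea is that Lemma \ref{L:twoformsMorawetzderivativecontraction} must not be applied to the dual pieces as a black box followed by the dictionary \eqref{E:Fardualalpha}--\eqref{E:Fardualsigma}; you have to open up its proof. Setting $M_{\mu\nu}\eqdef\Fardual_{\nu}^{\ \zeta}\dot{\Far}_{\mu\zeta}$ and using that the only nonvanishing null components of $\nabla\overline{K}$ are $\nabla_{L}\overline{K}_{\uL}=-4s$, $\nabla_{\uL}\overline{K}_{L}=4q$, $\nabla_{A}\overline{K}_{B}=2t\delta_{AB}$, the contraction $M^{\mu\nu}\nabla_{\mu}\overline{K}_{\nu}=-sM_{\uL L}+qM_{L\uL}+2t\,\delta^{AB}M_{AB}$ produces only the pairings $\rho[\Fardual]\dot{\rho}=\sigma\dot{\rho}$, $\sigma[\Fardual]\dot{\sigma}=-\rho\dot{\sigma}$, $\alpha[\Fardual]\cdot\dot{\ualpha}$, and $\ualpha[\Fardual]\cdot\dot{\alpha}$: each null component of $\Fardual$ meets only the \emph{complementary} null component of $\dot{\Far}$, so the dangerous pairings $\sigma\dot{\ualpha}$ and the like, which the crude bound $s|\Fardual|_{\mathcal{L}\mathcal{U}}|\dot{\Far}|\approx s|\Far|_{\mathcal{T}\mathcal{T}}|\dot{\Far}|$ would permit, never occur. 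Consequently the dual tensorial factor is in fact bounded by $s\big(|\Far|_{\mathcal{L}\mathcal{U}}|\dot{\Far}|+|\Far||\dot{\Far}|_{\mathcal{L}\mathcal{U}}\big)$, i.e.\ by $s$ times the \emph{same} trinomial in the null components of $\Far$ itself; with that refinement your AM--GM step closes exactly as in the non-dual case.
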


\begin{proof}
	Inequality \eqref{E:CanonicalStressMorawetzDerivativeTerm} follows from \eqref{E:Fardualalpha} - \eqref{E:Fardualsigma},
	Lemma \ref{L:NullForms}, and Lemma \ref{L:twoformsMorawetzderivativecontraction}.
\end{proof}

\section{Global Sobolev Inequality} \label{S:GlobalSobolev}
\setcounter{equation}{0}

In this section, we recall a version of the global Sobolev inequality that was proved in \cite{dCsK1990}. This fundamental inequality allows us to deduce weighted $L^{\infty}$ bounds for a two-form $\Far$ from weighted $L^2$ bounds of the quantities
$\Lie_{\mathcal{Z}}^I \Far.$ It provides the mechanism for deducing the $\frac{1}{1 + \tau^2}$ factor in our estimate
\eqref{E:LieZintegralnormintegralinequality}. Although many of the estimates in this section were proved in \cite{dCsK1990}, we reprove some of them for convenience. However, in order to derive the improved decay estimates \eqref{E:GlobalSobolevalpha} - \eqref{E:GlobalSobolevalphaupgraded} for the $\alpha$ component of a solution $\Far$ to the MBI system, we will have to make use of the null decomposition equation \eqref{E:MBIualphanulldecomp}. The fact that inequalities \eqref{E:GlobalSobolevalpha} - \eqref{E:GlobalSobolevalphaupgraded} hold is another manifestation that the nonlinearities 
in the MBI system satisfy a version of the null condition.
\\

\noindent \hrulefill
\ \\

\begin{proposition} \label{P:GlobalSobolev}
	Let $\Far$ be a two-form, and let $\ualpha,$ $\alpha,$ $\rho,$ $\sigma$ be its null decomposition
	as defined in \eqref{E:ualphadef} - \eqref{E:sigmadef}. Let $M \geq 2$ be an integer.
	Then in the interior region $\lbrace (t,\ux) \mid |\ux| \leq 1 + t/2 \rbrace,$ we have that
	
	\begin{align} \label{E:Interiorestimates}
		|\nabla_{(m)} \Far(t,\ux)| & \lesssim (1 + s)^{-5/2 - m} \Kintnorm \Far \Kintnorm_{\Lie_{\mathcal{Z}};M},
			&& m = 0,\cdots, M-2.
	\end{align}
	
	In the exterior region $\lbrace (t,\ux) \mid |\ux| \geq 1 + t/2 \rbrace,$ we have the following estimates:
	
	\begin{subequations}
	\begin{align}
		|\nabla_{\uL}^k \nabla_{L}^l \angn_{(m)} \ualpha(t,\ux)| & \lesssim (1 + s)^{-1 - l - m}(1 + |q|)^{-3/2-k} 
			\Kintnorm \Far \Kintnorm_{\Lie_{\mathcal{Z}};M},
			&& 0 \leq k+ l + m  \leq M-2, \label{E:GlobalSobolevualpha} \\
		|\nabla_{\uL}^k \nabla_{L}^l \angn_{(m)} \big(\alpha(t,\ux),\rho(t,\ux),\sigma(t,\ux)\big)| 
		& \lesssim (1 + s)^{-2-l-m}(1 + |q|)^{-1/2 - k} \Kintnorm \Far \Kintnorm_{\Lie_{\mathcal{Z}};M}, 
			&& 0 \leq k + l + m  \leq M-2. \label{E:GlobalSobolevalpharhosigma} 
	\end{align}
	\end{subequations}
	
	Furthermore, if $\Far$ is a solution to the MBI system \eqref{E:modifieddFis0summary} - \eqref{E:HmodifieddMis0summary}, then
	we have the following improved estimates for $\alpha:$
	
	\begin{subequations}
	\begin{align}
		|\nabla_{L}^l \angn_{(m)} \alpha(t,\ux)| & \lesssim (1 + s)^{-5/2-l-m} \Kintnorm \Far \Kintnorm_{\Lie_{\mathcal{Z}};M},
			&& 0 \leq  l + m \leq M-2, \label{E:GlobalSobolevalpha} \\
		|\nabla_{\uL}^{k+1} \nabla_{L}^l \angn_{(m)} \alpha(t,\ux)| & \lesssim (1 + s)^{-3 - l - m}(1 + |q|)^{-1/2 - k} 
			\Kintnorm \Far \Kintnorm_{\Lie_{\mathcal{Z}};M},&& 0 \leq k + l + m  \leq M-3 && (\mbox{if} \ M \geq 3). 
			\label{E:GlobalSobolevalphaupgraded}
	\end{align}
	\end{subequations}
\end{proposition}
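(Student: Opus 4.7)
The strategy parallels the framework of \cite{dCsK1990} for Maxwell-Maxwell, adapted to the MBI setting. First, for the interior estimate \eqref{E:Interiorestimates}, I use that in the region $r \leq 1 + t/2$ the weights $(1+|q|)$, $(1+s)$, and $1+t$ are all comparable. This means that Definition \ref{D:weightedpointwisenorm} yields $(1+t)^{5}|\Lie_{\mathcal{Z}}^I \Far|^{2} \lesssim \Knorm \Lie_{\mathcal{Z}}^I \Far \Knorm^{2}$ pointwise in the interior, so that $\Kintnorm \Far \Kintnorm_{\Lie_{\mathcal{Z}};M}^{2}$ controls $(1+t)^{5}$ times the ordinary Sobolev norm $\sum_{|I|\leq M} \| \Lie_{\mathcal{Z}}^I \Far \|_{L^2}^2$ restricted to the interior. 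The classical three-dimensional Sobolev embedding $H^2(\mathbb{R}^3) \hookrightarrow L^\infty$, combined with \eqref{E:LieZequivalenttonablaZpointwise} and \eqref{E:spacetimenablaboundedbyqpowertimenablaZ} to translate $\Lie_{\mathcal{Z}}$-derivatives into weighted covariant derivatives, then yields \eqref{E:Interiorestimates}.

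For the exterior estimates \eqref{E:GlobalSobolevualpha}--\eqref{E:GlobalSobolevalpharhosigma}, the plan is to prove a weighted Klainerman-type Sobolev inequality by combining three ingredients at a fixed exterior point $(t,\ux)$: Sobolev embedding on the two-sphere $S_{r,t}$ giving $|f(r\omega)|^{2} \lesssim r^{-2} \sum_{|I|\leq 2} \| \angn_{\mathcal{O}}^I f \|_{L^2(S_{r,t})}^{2}$; a one-dimensional Sobolev embedding in the $q$ variable using $\nabla_{\uL}$; and crucially the pointwise weight-transfer inequality \eqref{E:nablauLnablaLFarLierotaionsnormLieZnormcomparison}, which bounds weighted combinations of $\nabla_{\uL}, \nabla_L, \Lie_{\mathcal{O}}$ derivatives by $\Lie_{\mathcal{Z}}$-derivatives with the correct $q, s$ weights. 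Applied to each null component separately, the distinct weights $(1+q^2), (1+s^2), (2+q^2+s^2)$ attached to $|\ualpha|^{2}, |\alpha|^{2}, \rho^{2}+\sigma^{2}$ in \eqref{E:weightedpointwisenorm} directly produce the announced decay rates; tangential derivatives $\nabla_L, \angn$ extract $(1+s)^{-1}$ gains, while transverse $\nabla_{\uL}$ derivatives extract only $(1+|q|)^{-1}$ gains, as reflected in the exponents of \eqref{E:GlobalSobolevualpha}--\eqref{E:GlobalSobolevalpharhosigma}.

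The improved estimates \eqref{E:GlobalSobolevalpha}--\eqref{E:GlobalSobolevalphaupgraded} require the MBI-specific null-frame equations derived in Section \ref{S:Decompositions}. For \eqref{E:GlobalSobolevalphaupgraded}, I solve \eqref{E:MBIalphanulldecomp} for $\nabla_{\uL} \alpha$, expressing it as $r^{-1}\alpha + \angn \rho + \angepsilon \angn \sigma$ plus cubic error terms; iterating this substitution trades each $\nabla_{\uL}$ derivative on $\alpha$ for tangential derivatives of $\rho, \sigma$ (which decay one power of $s$ faster) and for $r^{-1} \alpha$ terms, yielding the $(1+s)^{-3-l-m}(1+|q|)^{-1/2-k}$ rate after invoking the already-established \eqref{E:GlobalSobolevalpharhosigma}. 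For \eqref{E:GlobalSobolevalpha} I combine the div-curl identities \eqref{E:intrinsicdivalphanulldecomp}--\eqref{E:intrinsiccurlalphanulldecomp} with elliptic (Hodge) theory on $S_{r,t}$, which bounds $\alpha$ on each sphere by $r \| \angdiv \alpha \|_{L^2(S_{r,t})} + r \| \angcurl \alpha \|_{L^2(S_{r,t})}$; the right-hand side involves $\nabla_L \rho, \nabla_L \sigma, r^{-1}\rho, r^{-1}\sigma$ plus cubic errors, all of which decay strictly faster than $\alpha$ would from the basic bound. A further sphere-Sobolev embedding then upgrades the $L^2(S_{r,t})$ control to the pointwise $(1+s)^{-5/2-l-m}$ rate for $\nabla_L^l \angn_{(m)} \alpha$.

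The main obstacle is the second step, i.e. the construction of the weighted exterior Sobolev inequality, where one must carefully assemble the two-dimensional sphere integration, the one-dimensional radial integration, and the weight transfer \eqref{E:nablauLnablaLFarLierotaionsnormLieZnormcomparison} so that the distinct $q$ and $s$ weights land on the appropriate null components. A secondary obstacle appears in the third step: the cubic error terms in \eqref{E:MBIalphanulldecomp} and \eqref{E:intrinsicdivalphanulldecomp} contain $\Far$ itself, so the improved $\alpha$ estimates must be derived consistently with the basic decay bounds, using the smallness of $\Kintnorm \Far \Kintnorm_{\Lie_{\mathcal{Z}};M}$ to absorb the nonlinear corrections within the already-proven rates \eqref{E:GlobalSobolevualpha}--\eqref{E:GlobalSobolevalpharhosigma}.
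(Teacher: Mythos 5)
Your plan for the interior estimate, the basic exterior estimates \eqref{E:GlobalSobolevualpha}--\eqref{E:GlobalSobolevalpharhosigma}, and the transverse improvement \eqref{E:GlobalSobolevalphaupgraded} tracks the paper's argument: weighted algebra via Lemma \ref{L:VectorfieldAlgebra} together with the exterior Sobolev inequality \eqref{E:ExteriorRegionSobolev} for the basic rates, and iteration of the $\nabla_{\uL}\alpha$ equation \eqref{E:MBIalphanulldecomp} for the $\nabla_{\uL}$ improvement. The problem is with \eqref{E:GlobalSobolevalpha}, where your proposed route via the div-curl system \eqref{E:intrinsicdivalphanulldecomp}--\eqref{E:intrinsiccurlalphanulldecomp} and Hodge theory on $S_{r,t}$ cannot produce the $(1+s)^{-5/2}$ rate. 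The circuit ``pointwise $\to$ $L^2(S_{r,t})$ $\to$ Hodge inversion $\to$ sphere Sobolev $\to$ pointwise'' is scaling-neutral: converting $|\nabla_L\rho| \lesssim (1+s)^{-3}(1+|q|)^{-1/2}$ to an $L^2(S_{r,t})$ bound costs a factor $r$ from the area element; the Hodge estimate costs another $r$; and sphere Sobolev returns only $r^{-1}$. The net output is $|\alpha| \lesssim (1+s)^{-2}(1+|q|)^{-1/2}$, which is precisely the \emph{basic} rate \eqref{E:GlobalSobolevalpharhosigma}, not an improvement.

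The structural input you are missing is that the improved $\alpha$ rate does not come from the angular div-curl system but from the $\nabla_{\uL}\alpha$ transport equation \eqref{E:MBIalphanulldecomp} — the same equation you invoke for \eqref{E:GlobalSobolevalphaupgraded}. It shows that the radial derivative $\nabla_{\hat{N}}\alpha = \tfrac{1}{2}(\nabla_L - \nabla_{\uL})\alpha$ gains an extra factor $r^{-1}$, rather than only the generic $(1+|q|)^{-1}$ gain that a transverse derivative yields for an arbitrary null component. This is exactly what is needed to apply the $q$-weight-free exterior Sobolev inequality \eqref{E:alphaimprovementExteriorRegionSobolev}, which demands integrability of $r^2|\nabla_{\hat{N}}U|^2$ (not merely $(1+q^2)|\nabla_{\hat{N}}U|^2$), and in exchange delivers pointwise decay $r^{-3/2}$ with no residual $|q|^{-1/2}$ factor. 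Taking $U \eqdef r^{l+m+1}\nabla_L^l\angn_{(m)}\alpha$ then yields the announced $(1+s)^{-5/2-l-m}$ rate. The paper packages the needed radial-gain estimate as Lemma \ref{L:partialsnpartiaqpartialsnalpha}; you need that lemma (or some equivalent control of $\nabla_{\uL}\alpha$ with the extra $r^{-1}$) before any Sobolev step, and the div-curl identities alone do not supply it since they control only angular derivatives of $\alpha$.
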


The following corollary follows easily by using $\Lie_{\mathcal{Z}}^I \Far$ in place of $\Far$ in
Proposition \ref{P:GlobalSobolev}. It plays a fundamental role in our 
proof of Proposition \ref{P:Energydifferentialinequality}.

\begin{corollary} \label{C:GlobalSobolev}
	Let $\Far$ be any two-form, and let $I$ be any $\mathcal{Z}-$multi-index
	such that $|I| \leq M-2$. Let the pointwise seminorms $|\cdot|_{\mathcal{V}\mathcal{W}}$ be as in Definition 
	\ref{D:contractionnomrs}. Then with $r \eqdef |\ux|, \ q \eqdef r - 
	t, \ s \eqdef r + t,$ we have that
	
	\begin{subequations}
	\begin{align}
		|\Lie_{\mathcal{Z}}^I \Far| & \lesssim (1 + s)^{-1} (1 + |q|)^{-3/2} \Kintnorm \Far \Kintnorm_{\Lie_{\mathcal{Z}};M}, \\
		|\Lie_{\mathcal{Z}}^I \Far|_{\mathcal{L}\mathcal{U}} & \lesssim (1 + s)^{-2}(1 + |q|)^{-1/2} 
			\Kintnorm \Far \Kintnorm_{\Lie_{\mathcal{Z}};M}, \\
		|\Lie_{\mathcal{Z}}^I \Far|_{\mathcal{T}\mathcal{T}} & \lesssim (1 + s)^{-2}(1 + |q|)^{-1/2} 
			\Kintnorm \Far \Kintnorm_{\Lie_{\mathcal{Z}};M}.
	\end{align}
	\end{subequations}
\end{corollary}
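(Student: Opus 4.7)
The plan is to reduce the corollary to Proposition \ref{P:GlobalSobolev} by treating $\dot{\Far} \eqdef \Lie_{\mathcal{Z}}^I \Far$ as a two-form in its own right and applying the proposition to $\dot{\Far}$ at integer order $M' \eqdef M - |I|$, which satisfies $M' \geq 2$ by the assumption $|I| \leq M - 2$. First, I would observe that, directly from the definitions in Section \ref{S:NormsandEnergies},
\begin{align*}
\Kintnorm \dot{\Far}(t) \Kintnorm_{\Lie_{\mathcal{Z}};M'}^2 = \sum_{|J| \leq M'} \int_{\mathbb{R}^3} \Knorm \Lie_{\mathcal{Z}}^J \Lie_{\mathcal{Z}}^I \Far(t,\ux) \Knorm^2 \, d^3 \ux \leq \Kintnorm \Far(t) \Kintnorm_{\Lie_{\mathcal{Z}};M}^2,
\end{align*}
since each $\Lie_{\mathcal{Z}}^J \Lie_{\mathcal{Z}}^I \Far$ is an iterated Lie derivative of $\Far$ of total order $|J| + |I| \leq M$.

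Next, I would separately handle the interior region $\lbrace |\ux| \leq 1 + t/2 \rbrace$ and the exterior region $\lbrace |\ux| \geq 1 + t/2 \rbrace$. In the interior, we have $s \approx 1 + t \approx 1 + |q|$, so the interior estimate \eqref{E:Interiorestimates} applied to $\dot{\Far}$ with $m = 0$ gives $|\dot{\Far}| \lesssim (1+s)^{-5/2} \Kintnorm \Far \Kintnorm_{\Lie_{\mathcal{Z}};M}$, which dominates each of $(1+s)^{-1}(1+|q|)^{-3/2}$ and $(1+s)^{-2}(1+|q|)^{-1/2}$ in this region; since $|\dot{\Far}|_{\mathcal{L}\mathcal{U}}, |\dot{\Far}|_{\mathcal{T}\mathcal{T}} \leq |\dot{\Far}|$, all three desired inequalities follow immediately in the interior. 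In the exterior, I would use \eqref{E:GlobalSobolevualpha} and \eqref{E:GlobalSobolevalpharhosigma} applied to $\dot{\Far}$ with $k = l = m = 0$, which yield
\begin{align*}
|\ualpha[\dot{\Far}]| \lesssim (1+s)^{-1}(1+|q|)^{-3/2} \Kintnorm \Far \Kintnorm_{\Lie_{\mathcal{Z}};M}, \qquad |\alpha[\dot{\Far}]|, |\rho[\dot{\Far}]|, |\sigma[\dot{\Far}]| \lesssim (1+s)^{-2}(1+|q|)^{-1/2} \Kintnorm \Far \Kintnorm_{\Lie_{\mathcal{Z}};M}.
\end{align*}
The first estimate of the corollary then follows from the pointwise identity $|\dot{\Far}| \approx |\ualpha[\dot{\Far}]| + |\alpha[\dot{\Far}]| + |\rho[\dot{\Far}]| + |\sigma[\dot{\Far}]|$ recorded in \eqref{E:Faradaynormsquaredintermsofnullcomponents}, with the worst term $|\ualpha[\dot{\Far}]|$ dominating in the exterior.

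For the remaining two estimates, I would invoke the frame-component expressions from Section \ref{SS:NullComponents} together with Definition \ref{D:contractionnomrs}: since $\dot{\Far}_{LL} = 0$, $\dot{\Far}_{L\uL} = 2\rho[\dot{\Far}]$, $\dot{\Far}_{LA} = \alpha_A[\dot{\Far}]$, and $\dot{\Far}_{AB} = \angepsilon_{AB}\sigma[\dot{\Far}]$, one has
\begin{align*}
|\dot{\Far}|_{\mathcal{L}\mathcal{U}} \approx |\alpha[\dot{\Far}]| + |\rho[\dot{\Far}]|, \qquad |\dot{\Far}|_{\mathcal{T}\mathcal{T}} \approx |\alpha[\dot{\Far}]| + |\sigma[\dot{\Far}]|,
\end{align*}
and both are therefore bounded in the exterior by $(1+s)^{-2}(1+|q|)^{-1/2} \Kintnorm \Far \Kintnorm_{\Lie_{\mathcal{Z}};M}$, completing the argument. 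There is no substantive obstacle here — everything is essentially bookkeeping — the only item requiring modest care is verifying that the reduction $\dot{\Far} = \Lie_{\mathcal{Z}}^I \Far$ is legitimate despite the proposition being stated for a generic two-form; this is immediate since \eqref{E:Interiorestimates}, \eqref{E:GlobalSobolevualpha}, and \eqref{E:GlobalSobolevalpharhosigma} are not predicated on $\Far$ solving the MBI system (only the improved $\alpha$ estimates are), so they apply verbatim to $\dot{\Far}$.
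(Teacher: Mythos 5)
Your proof is correct and is exactly the argument the paper has in mind: apply Proposition \ref{P:GlobalSobolev} at order $M' = M - |I| \geq 2$ to the two-form $\dot{\Far} = \Lie_{\mathcal{Z}}^I \Far$, use $\Kintnorm \dot{\Far} \Kintnorm_{\Lie_{\mathcal{Z}};M'} \leq \Kintnorm \Far \Kintnorm_{\Lie_{\mathcal{Z}};M}$, and translate null-component bounds into the $|\cdot|_{\mathcal{V}\mathcal{W}}$ seminorms. Your verification that only the unconditional estimates \eqref{E:Interiorestimates}, \eqref{E:GlobalSobolevualpha}, \eqref{E:GlobalSobolevalpharhosigma} are invoked (and not the MBI-solution-only improvements for $\alpha$) is exactly the right point to check.
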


\hfill $\qed$

The proof of Proposition \ref{P:GlobalSobolev} is heavily based on the next lemma, which was proved in \cite{dCsK1990}.

\begin{lemma} \label{L:Sobolev} \cite[Lemmas 2.2 and 2.3]{dCsK1990}
 	Let $U(\ux)$ be a tensorfield defined on Euclidean space $\mathbb{R}^3.$
 	Then for any real number $t \geq 1,$ we have that
 	
	\begin{align}
		\sup_{|\ux| \leq 1 + t/2} |U(\ux)| & \lesssim (1 + t)^{-3/2} \bigg(\sum_{m=0}^2 t^{2m} 
			\int_{|\uy| \leq 1 + 3t/4} |\nabla_{(m)}U(\uy)|^2 \, d^3\uy \bigg)^{1/2}.
	\end{align}
	
	For all $\ux \in \mathbb{R}^3$ with $|\ux| \eqdef |r| \geq 1,$ we have that
	\begin{align} \label{E:alphaimprovementExteriorRegionSobolev}
		|U(\ux)| & \lesssim r^{-3/2} \bigg( \int_{|\uy| \geq r} |U(\uy)|_{\Lie_{\mathcal{O}};2}^2 
			+ |\uy|^2 |\nabla_{\hat{N}} U(\uy)|_{\Lie_{\mathcal{O}};1}^2  \, d^3 \uy \bigg)^{1/2},
	\end{align}
	where $\hat{N} \eqdef \partial_r$ is the radial derivative.
	
	For all real numbers $t \geq 0$ and all $\ux \in \mathbb{R}^3$
	with $|\ux| \eqdef r \geq 1 + t/2,$ we have that
	\begin{align} \label{E:ExteriorRegionSobolev}
		|U(\ux)| & \lesssim (1 + s)^{-1}(1 + |q|)^{-1/2} \bigg( \int_{|\uy| \geq t/2 + 1}  |U(\uy)|_{\Lie_{\mathcal{O}};2}^2 
			+ \big(1 + \big||\bar{y}| - t\big|^2 \big) |\nabla_{\hat{N}} U(\uy)|_{\Lie_{\mathcal{O}};1}^2  \, d^3 \uy\bigg)^{1/2},
	\end{align}
	where $\hat{N} \eqdef \partial_r$ is the radial derivative.
\end{lemma}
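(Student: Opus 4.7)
The plan is to prove the three inequalities by separating the argument into an interior estimate (standard Sobolev with rescaling) and exterior estimates (spherical Sobolev combined with a one-dimensional radial argument). Throughout, the key technical fact is Lemma \ref{L:LieDerivativeIntrinsicCovariantDerivativeComparison}, which lets us convert intrinsic angular derivatives $\angn_{(m)}$ on spheres $S_{r,t}$ into rotational Lie derivatives $\Lie_{\mathcal{O}}^I$ at the cost of explicit powers of $r$. This is what ultimately generates the favorable $r$- and $s$-weights in the final estimates.

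For the interior inequality, I would introduce a smooth cutoff $\chi$ with $\chi \equiv 1$ on $\lbrace |\ux| \leq 1 + t/2 \rbrace$, $\mathrm{supp} \, \chi \subset \lbrace |\ux| \leq 1 + 3t/4 \rbrace$, and $|\nabla_{(k)}\chi| \lesssim (1+t)^{-k}$ for $k \leq 2$. Applying the standard Sobolev embedding $H^2(\mathbb{R}^3) \hookrightarrow L^\infty(\mathbb{R}^3)$ to $\chi U$ and then rescaling by the dilation $\ux \mapsto (1+t)\ux$ (so that the support fits inside a unit ball after rescaling) produces exactly the weights $t^{2m}$ in the $L^2$ integrals and the overall prefactor $(1+t)^{-3/2}$. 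The Leibniz rule applied to $\nabla_{(k)}(\chi U)$ together with the cutoff bounds $|\nabla_{(k)}\chi| \lesssim (1+t)^{-k}$ keeps everything consistent, since every derivative that lands on $\chi$ is compensated by an extra factor of $(1+t)^{-1}$ that is absorbed into the weight $t^{2m}$ from the rescaling.

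For the exterior inequalities, I would combine two classical one-variable estimates. \emph{Step one} is an $L^\infty$-estimate on spheres: since the unit sphere $S^2$ admits the Sobolev embedding $H^2(S^2) \hookrightarrow L^\infty(S^2)$, scaling to radius $r$ and invoking \eqref{E:LieDerivativeIntrinsicCovariantDerivativeComparison} gives, uniformly in $\omega \in S^2$,
\begin{align*}
|U(r\omega)|^2 \lesssim r^{-2}\int_{S^2} |U|^2_{\Lie_{\mathcal{O}};2}(r\omega') \, d\sigma(\omega').
\end{align*}
\emph{Step two} is a weighted radial fundamental-theorem-of-calculus argument: for any fixed $\omega$ and radius $r$,
\begin{align*}
|U(r\omega)|^2 = -2\int_r^\infty U(\rho\omega)\,\nabla_{\hat N} U(\rho\omega)\, d\rho,
\end{align*}
to which one applies Cauchy--Schwarz with the weight $(1+|\rho-t|^2)^{1/2}$ to produce the factor $(1+|q|)^{-1/2}$. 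Combining these two steps (i.e.\ first spherical-Sobolev at each radius $\rho \geq r$, then the weighted radial estimate, together with the volume element $\rho^2 \, d\rho \, d\sigma(\omega)$) yields \eqref{E:ExteriorRegionSobolev} with the advertised factor $(1+s)^{-1}(1+|q|)^{-1/2}$; the bound \eqref{E:alphaimprovementExteriorRegionSobolev} is the special case $t = 0$ (so $q = r$, $s = r$) and requires no additional ideas.

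The main obstacle is not any single estimate but the careful bookkeeping of weights: one has to verify that converting $\angn_{(m)}$ to $\Lie_{\mathcal{O}}^{\leq m}$ via \eqref{E:LieDerivativeIntrinsicCovariantDerivativeComparison} produces exactly the power $r^{-m}$ needed so that, after integrating against the spherical volume element $r^2 \, d\sigma$, one obtains the correct $r^{-1}$ prefactor from step one; and that the weight $(1+|\rho-t|^2)^{1/2}$ appearing in step two matches the weight $(1+||\uy|-t|^2)$ inside the integrand on the right-hand side of \eqref{E:ExteriorRegionSobolev}. A minor technical point is that one actually wants the radial derivative $\nabla_{\hat N}$ of $U$ (rather than of $|U|^2$); this is handled by a standard approximation / the identity $\partial_\rho |U|^2 = 2\, g(U, \nabla_{\hat N} U)$ at points where $U \neq 0$, combined with the density argument used to justify the Sobolev embedding in the first place.
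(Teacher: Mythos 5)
First, a point of context: the paper does not actually prove this lemma --- it is quoted from Christodoulou--Klainerman \cite{dCsK1990} (their Lemmas 2.2 and 2.3) and the statement is closed immediately with a $\qed$. Your overall architecture matches the cited proof: a cutoff plus the rescaled embedding $H^2(\mathbb{R}^3)\hookrightarrow L^\infty$ for the interior bound, and a two-dimensional Sobolev inequality on the spheres combined with a one-dimensional radial integration for the exterior bounds. Your interior argument is complete and correct as written.

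The exterior argument, however, has a genuine gap at the step that is supposed to produce the factor $(1+|q|)^{-1/2}$. You propose to apply Cauchy--Schwarz to $\int_r^\infty U\,\nabla_{\hat N}U\,d\rho$ with the weight $(1+|\rho-t|^2)^{1/2}$ and then ``produce the factor $(1+|q|)^{-1/2}$.'' Extracting that factor from the $|U|^2$ integral requires the pointwise lower bound $(1+|\rho-t|^2)^{1/2}\gtrsim 1+|r-t|$ for all $\rho\in[r,\infty)$, which is false whenever $r<t$: the exterior region $r\ge 1+t/2$ contains such points as soon as $t>2$, and there the weight dips to $1$ at $\rho=t\in[r,\infty)$ while $1+|q|$ is large. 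The standard repair --- and what \cite{dCsK1990} does --- is to apply the fundamental theorem of calculus to the weighted quantity $(1+(\rho-t)^2)^{1/2}\rho^{2}|h(\rho)|^2$ itself, so that $(1+q^2)^{1/2}r^2|h(r)|^2$ appears on the left by evaluation at $\rho=r$, and the terms generated by differentiating the weight are absorbed using $|\partial_\rho (1+(\rho-t)^2)^{1/2}|\le 1$ together with $1+|\rho-t|\lesssim\rho$ in the exterior region. A second unresolved issue is the derivative count you yourself flag as ``the main obstacle'': if one literally performs ``spherical Sobolev first, then the radial estimate,'' the radial derivative lands on a quantity already carrying two rotational derivatives and produces $|\nabla_{\hat N}U|_{\Lie_{\mathcal{O}};2}$, which is not controlled by the right-hand sides of \eqref{E:alphaimprovementExteriorRegionSobolev}--\eqref{E:ExteriorRegionSobolev} (only $|\nabla_{\hat N}U|_{\Lie_{\mathcal{O}};1}$ appears there). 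The spherical and radial steps must be interleaved more carefully --- e.g.\ via an interpolated Sobolev inequality on $S^2$ with the radial estimate applied only to the factor carrying at most one rotation --- and your outline does not supply this. Finally, \eqref{E:alphaimprovementExteriorRegionSobolev} is not the formal $t=0$ specialization of \eqref{E:ExteriorRegionSobolev}: the latter integrates over $\lbrace |\uy|\ge 1+t/2\rbrace$, whereas the former integrates over the smaller region $\lbrace |\uy|\ge r\rbrace$, so it must be proved by the same method rather than deduced as a special case.
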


\hfill $\qed$

Before proving Proposition \ref{P:GlobalSobolev}, we prove two additional technical lemmas.

\begin{lemma} \label{L:VectorfieldAlgebra} \cite[Corollary of Lemma 3.3]{dCsK1990}
	Let $\Far$ be a two-form, and let $\ualpha,$ $\alpha,$ $\rho,$ $\sigma$ be its null decomposition
	as defined in \eqref{E:ualphadef} - \eqref{E:sigmadef}. Then with $r \eqdef |\ux|, \ s \eqdef r + t, 
	\ q \eqdef r - t,$ the following pointwise inequality holds:
	
	\begin{align} \label{E:VectorfieldAlgebra}
		\sum_{k + l = 0}^M (1 + |q|)^{k} s^{l} \bigg\lbrace (1 + |q|) |\nabla_{\uL}^k \nabla_{L}^l 
			\ualpha|_{\Lie_{\mathcal{O}};M-k-l}
		+ s |\nabla_{\uL}^k \nabla_{L}^l \alpha|_{\Lie_{\mathcal{O}};M-k-l}
		+ s |\nabla_{\uL}^k \nabla_{L}^l \rho|_{\Lie_{\mathcal{O}};M-k-l}
		+ s |\nabla_{\uL}^k \nabla_{L}^l \sigma|_{\Lie_{\mathcal{O}};M-k-l}
			\bigg\rbrace \lesssim \Knorm \Far \Knorm_{\Lie_{\mathcal{Z}};M}.
	\end{align}
\end{lemma}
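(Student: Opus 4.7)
The plan is to reduce the claimed estimate to the already-established inequality \eqref{E:nablauLnablaLFarLierotaionsnormLieZnormcomparison}, which controls the weighted sum $\sum (1+|q|)^k s^l \Knorm \nabla_{\uL}^k \nabla_L^l \Far \Knorm_{\Lie_{\mathcal{O}};M-k-l}$ by $\Knorm \Far \Knorm_{\Lie_{\mathcal{Z}};M}$. The only genuinely new work is to relate the pointwise seminorm $\Knorm\cdot\Knorm$ appearing there to the individual null components weighted as in the statement of the lemma.

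First I would record the elementary pointwise bound obtained directly from the definition \eqref{E:weightedpointwisenorm}: for any two-form $\dot{\Far}$ with null components $\dot{\ualpha},\dot{\alpha},\dot{\rho},\dot{\sigma}$, the identity $\Knorm \dot{\Far} \Knorm^2 = (1+q^2)|\dot{\ualpha}|^2 + (1+s^2)|\dot{\alpha}|^2 + (2+q^2+s^2)(\dot{\rho}^2+\dot{\sigma}^2)$ together with the trivial estimates $(1+|q|)^2 \leq 2(1+q^2)$, $s^2 \leq 1+s^2$, and $s^2 \leq 2+q^2+s^2$ yields
\[
(1+|q|)|\dot{\ualpha}| + s|\dot{\alpha}| + s|\dot{\rho}| + s|\dot{\sigma}| \lesssim \Knorm \dot{\Far} \Knorm.
\]

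Next, by Corollary \ref{C:LieRotationcommuteswithnulldecomp}, each of the operators $\Lie_O$ (for $O \in \mathcal{O}$), $\nabla_{\uL}$, and $\nabla_L$ commutes with the null decomposition of a two-form. An easy induction then shows that for any rotational multi-index $J$ and any non-negative integers $k,l$, the null components of $\Lie_{\mathcal{O}}^J \nabla_{\uL}^k \nabla_L^l \Far$ are $\Lie_{\mathcal{O}}^J \nabla_{\uL}^k \nabla_L^l$ applied to the corresponding null components $\ualpha,\alpha,\rho,\sigma$ of $\Far$. Applying the pointwise bound above to $\dot{\Far} = \Lie_{\mathcal{O}}^J \nabla_{\uL}^k \nabla_L^l \Far$, squaring, summing over all $|J| \leq M-k-l$, and taking square roots (using Definition \ref{D:UnweightedPointwiseNorms}) produces, for each $k,l$,
\[
(1+|q|) |\nabla_{\uL}^k \nabla_L^l \ualpha|_{\Lie_{\mathcal{O}};M-k-l} + s\bigl(|\nabla_{\uL}^k \nabla_L^l \alpha|_{\Lie_{\mathcal{O}};M-k-l} + |\nabla_{\uL}^k \nabla_L^l \rho|_{\Lie_{\mathcal{O}};M-k-l} + |\nabla_{\uL}^k \nabla_L^l \sigma|_{\Lie_{\mathcal{O}};M-k-l}\bigr) \lesssim \Knorm \nabla_{\uL}^k \nabla_L^l \Far \Knorm_{\Lie_{\mathcal{O}};M-k-l}.
\]

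Finally, I would multiply this inequality by $(1+|q|)^k s^l$ and sum over $k+l \leq M$. The left-hand side is exactly the expression being estimated in \eqref{E:VectorfieldAlgebra}, and the right-hand side is bounded by $\Knorm \Far \Knorm_{\Lie_{\mathcal{Z}};M}$ by \eqref{E:nablauLnablaLFarLierotaionsnormLieZnormcomparison}. No single step is deep; the principal obstacle is purely bookkeeping, namely checking that the commutation of $\Lie_{\mathcal{O}}$, $\nabla_{\uL}$, and $\nabla_L$ with the null decomposition (provided for one derivative by Corollary \ref{C:LieRotationcommuteswithnulldecomp}) extends to all iterated derivatives in the combinations appearing here, which follows from \eqref{E:LuLBracketis0} and \eqref{E:RotationLuLBracketis0} together with the commutator formulas of Lemma \ref{L:LLunderlinecommutewithLieO}.
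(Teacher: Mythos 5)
Your proof is correct and follows essentially the same route as the paper's one-line proof, which cites exactly the same three ingredients: Corollary \ref{C:LieRotationcommuteswithnulldecomp}, the identity \eqref{E:FaradayKNormsquaredintermsofEBPQ}, and the comparison estimate \eqref{E:nablauLnablaLFarLierotaionsnormLieZnormcomparison}. You have simply spelled out the bookkeeping that the paper leaves implicit.
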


\begin{proof}
	Inequality \eqref{E:VectorfieldAlgebra} follows from Corollary \ref{C:LieRotationcommuteswithnulldecomp},
	\eqref{E:FaradayKNormsquaredintermsofEBPQ}, and \eqref{E:nablauLnablaLFarLierotaionsnormLieZnormcomparison}.
\end{proof}

\begin{lemma} \label{L:partialsnpartiaqpartialsnalpha}
	Let $\Far$ be a solution to the MBI system, and let $\ualpha,$ $\alpha,$ $\rho,$ $\sigma$ be its null decomposition. Then there 
	exists an $\epsilon > 0$ such that if $| \Far |_{\Lie_{\mathcal{Z}};\lfloor M/2 \rfloor} \leq \epsilon,$ then the 
	following pointwise inequality holds:

	\begin{align} \label{E:partialsnpartiaqpartialsnalpha}
		r \sum_{l=0}^M r^{l} |\nabla_{L}^l \alpha|_{\Lie_{\mathcal{O}};M-l}
			+ r^2 \sum_{l=0}^{M-1} r^{l} |\nabla_{\uL} \nabla_{L}^l \alpha|_{\Lie_{\mathcal{O}};M-l-1} 
			& \lesssim \Knorm \Far \Knorm_{\Lie_{\mathcal{Z}};M}.
	\end{align}
\end{lemma}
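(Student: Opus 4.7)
The two sums are treated separately. The first sum, $r\sum_{l=0}^{M}r^{l}|\nabla_L^l\alpha|_{\Lie_{\mathcal{O}};M-l},$ follows immediately from Lemma \ref{L:VectorfieldAlgebra} (taking $k=0$) without using the MBI equation: that lemma already gives $s^{l+1}|\nabla_L^l\alpha|_{\Lie_{\mathcal{O}};M-l}\lesssim\Knorm\Far\Knorm_{\Lie_{\mathcal{Z}};M},$ and since $r\leq s$ everywhere in Minkowski space, the $r^{l+1}$ weight is strictly weaker.

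The second sum is the substantive part. A direct application of Lemma \ref{L:VectorfieldAlgebra} with $k=1$ yields only the estimate $(1+|q|)s^{l+1}|\nabla_{\uL}\nabla_L^l\alpha|_{\Lie_{\mathcal{O}};M-l-1}\lesssim\Knorm\Far\Knorm_{\Lie_{\mathcal{Z}};M},$ which degenerates in the wave zone $r\approx t$ and therefore does \emph{not} imply the desired $r^{l+2}$-weighted bound. The gain comes from the null structure of the MBI equation: I would solve \eqref{E:MBIalphanulldecomp} for $\nabla_{\uL}\alpha_A,$ obtaining
\begin{align*}
\nabla_{\uL}\alpha_A=r^{-1}\alpha_A+\angn_A\rho+\angepsilon_{AB}\angn_B\sigma+N_A,
\end{align*}
where $N_A$ collects the cubic error terms, each of the schematic form $\Far\cdot\nabla\Far\cdot(\text{null component}),$ with $\ell_{(MBI)}^{-2}$ coefficients uniformly bounded under the smallness assumption. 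Next, I would commute $\nabla_L^{l}\Lie_{\mathcal{O}}^{J}$ (for $|J|\leq M-l-1$) through this identity; the commutations are clean thanks to Lemma \ref{L:LLunderlinecommutewithLieO} (the vectorfields $\uL,L,O$ mutually commute) and Corollary \ref{C:LieRotationcommuteswithnulldecomp} (these operators commute with the null decomposition).

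After commutation, the three linear terms on the right are estimated as follows. The $r^{-1}\alpha_A$ term expands via Leibniz into $\sum_{k=0}^{l}r^{-(l-k+1)}\nabla_L^{k}\alpha,$ so after multiplying by $r^{l+2}$ it reduces to $\sum_{k=0}^{l}r^{k+1}|\nabla_L^{k}\alpha|_{\Lie_{\mathcal{O}};M-k-1},$ which is already controlled by the first sum. For $\angn_A\rho$ and $\angepsilon_{AB}\angn_B\sigma,$ I trade the angular derivative for a rotational Lie derivative using \eqref{E:LieDerivativeIntrinsicCovariantDerivativeComparison} (schematically $r|\angn U|\lesssim|U|_{\Lie_{\mathcal{O}};1}$ together with Corollary \ref{C:CommuteIntrinsicCovraiantandLderivatives} to commute $\angn$ past $\nabla_L$), producing bounds of the form $r^{l+2}|\nabla_L^{l}\angn\rho|_{\Lie_{\mathcal{O}};|J|}\lesssim r^{l+1}|\nabla_L^{l}\rho|_{\Lie_{\mathcal{O}};M-l}\lesssim s^{l+1}|\nabla_L^{l}\rho|_{\Lie_{\mathcal{O}};M-l}\lesssim\Knorm\Far\Knorm_{\Lie_{\mathcal{Z}};M},$ the last step again by Lemma \ref{L:VectorfieldAlgebra}; the $\sigma$ factor is identical.

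The main obstacle is the cubic error $N_A.$ After applying $\nabla_L^{l}\Lie_{\mathcal{O}}^{J}$ and expanding via Leibniz, each resulting term is a triple product in which at most one factor carries more than $\lfloor M/2\rfloor$ derivatives. The smallness hypothesis $|\Far|_{\Lie_{\mathcal{Z}};\lfloor M/2\rfloor}\leq\epsilon$ lets me bound the low-derivative factors pointwise (with their weighted decay in $(1+s),(1+|q|)$), while the high-derivative factor is absorbed into $\Knorm\Far\Knorm_{\Lie_{\mathcal{Z}};M}$ via Lemma \ref{L:VectorfieldAlgebra} combined with \eqref{E:nablauLnablaLFarLierotaionsnormLieZnormcomparison}. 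Any cubic term in which $\alpha$ itself appears contributes a small multiple of the quantity on the left of \eqref{E:partialsnpartiaqpartialsnalpha}, which can be absorbed for $\epsilon$ sufficiently small (effectively a bootstrap in $l$ from $l=0$ up to $l=M-1$). The delicate bookkeeping is to confirm that in each triple product the $r^{l+2}$ weight is absorbed by the combined $(1+s),(1+|q|)$ decay of the pointwise factors, but this always succeeds because every nonlinear term in $N_A$ carries either an extra tangential derivative (supplying an $s^{-1}$) or features $\alpha$ or $\ualpha$ paired with $\rho,\sigma$ in such a way that Lemma \ref{L:NullForms} prevents the worst $\ualpha\cdot\ualpha$ combinations from appearing with unfavorable weights.
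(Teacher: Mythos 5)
Your approach matches the paper's at every essential step: control the first sum by Lemma \ref{L:VectorfieldAlgebra} (using $r\leq s$); for the second sum, solve the null-decomposed equation \eqref{E:MBIalphanulldecomp} for $\nabla_{\uL}\alpha_A$; commute $\nabla_L^l\Lie_{\mathcal{O}}^J$ through it (Lemma \ref{L:LLunderlinecommutewithLieO}, Lemma \ref{L:derivativesofanggandepsilon}); handle the linear terms $r^{-1}\alpha,\angn\rho,\angn\sigma$ exactly as you describe, trading $\angn$ for $r^{-1}\Lie_{\mathcal{O}}$ via Corollary \ref{C:nablaLnablaunderlineLLieDerivativeIntrinsicCovariantDerivativeComparison}; and use the null form structure of the cubic errors.

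The one place you diverge from the paper is in the treatment of the cubic error. You propose to absorb the self-referential terms (those carrying a factor $\nabla_{\uL}\nabla_L^{l'}\alpha$) back into the left side via a bootstrap in $l$. The paper instead classifies the cubic error into three schematic types (\eqref{E:i}--\eqref{E:iii}), observes that in each type at least two factors are good null components (one coming from the explicitly written $\alpha/\rho/\sigma$ or $\ualpha$, the other from the null form $\mathcal{Q}_{(i)}$), and bounds \emph{every} factor directly with Lemma \ref{L:VectorfieldAlgebra}, so that the two good factors supply the needed $r^{-2}$; the potential self-reference is handled because Lemma \ref{L:VectorfieldAlgebra} with $k=1$ already gives $|\nabla_{\uL}\nabla_L^{l'}\alpha|_{\Lie_{\mathcal{O}};c}\lesssim(1+|q|)^{-1}s^{-l'-1}\Knorm\Far\Knorm$, and the remaining $s$-decay is recovered from the other good factor in the null form. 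So no bootstrap or absorption is needed; the estimate is direct. Your bootstrap is a valid alternative route and would close, but it is an extra device. One small caveat on your phrasing: it is not ``any cubic term in which $\alpha$ appears'' that needs absorption but specifically the factors of the form $\nabla_{\uL}\nabla_L^{l'}\alpha$ with a large number of derivatives; the other $\alpha$ factors are bounded directly. Also note the paper keeps the $\nabla_{\uL}^k$ power general in the intermediate inequality \eqref{E:alphaupgradedfirstinequality} so that the same computation can be reused in the proof of \eqref{E:GlobalSobolevalphaupgraded}; setting $k=0$ from the start, as you do, is perfectly fine for proving the lemma itself.
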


\begin{proof}
	To deduce that the first sum on the left-hand side of \eqref{E:partialsnpartiaqpartialsnalpha} is  $\lesssim \Knorm \Far 
	\Knorm_{\Lie_{\mathcal{Z}};M},$ we simply apply Lemma \ref{L:VectorfieldAlgebra}. 
	To estimate the second sum, we begin by recalling the equation \eqref{E:MBIalphanulldecomp} 
	satisfied by the null components of $\Far.$ We write the equation relative to an arbitrary coordinate system, instead of a 
	null frame:
	
	\begin{align} \label{E:MBIpartialqalphaagain}
	& \nabla_{\uL} \alpha_{\mu} - r^{-1} \alpha_{\mu} - \angn_{\mu} \rho 
		- \angepsilon_{\mu}^{\ \nu} \angn_{\nu} \sigma \\
	& \ \ - \frac{1}{4}\ell_{(MBI)}^{-2} \Big\lbrace (\nabla_{\uL}\Farinvariant_{(1)} - 2 \Farinvariant_{(2)} \nabla_{\uL} 
			\Farinvariant_{(2)})\alpha_{\mu}
			- 2(\angepsilon_{\mu}^{\ \nu} \angn_{\nu}\Farinvariant_{(1)} 
			- 2 \Farinvariant_{(2)} \angepsilon_{\mu}^{\ \nu}\angn_{\nu} \Farinvariant_{(2)})\sigma
			+ (\nabla_{L}\Farinvariant_{(1)} - 2 \Farinvariant_{(2)} \nabla_{L} \Farinvariant_{(2)}) \ualpha_{\mu} \Big\rbrace 
			\notag \\
	& \ \ - \frac{1}{4}\ell_{(MBI)}^{-2} \Farinvariant_{(2)} \Big\lbrace (\nabla_{\uL}\Farinvariant_{(1)} - 2 \Farinvariant_{(2)} 
			\nabla_{\uL} \Farinvariant_{(2)}) \angepsilon_{\mu}^{\ \nu} \alpha_{\nu}
			- 2 (\angepsilon_{\mu}^{\ \nu}\angn_{\nu} \Farinvariant_{(1)} 
			- 2 \Farinvariant_{(2)} \angepsilon_{\mu}^{\ \nu} \angn_{\nu} \Farinvariant_{(2)}) \rho
			- (\nabla_{L}\Farinvariant_{(1)} - 2 \Farinvariant_{(2)} \nabla_{L} \Farinvariant_{(2)})
			\angepsilon_{\mu}^{\ \nu}\ualpha_{\nu} \Big\rbrace  \notag \\
	& \ \ + \frac{1}{2}\Big\lbrace (\nabla_{\uL} \Farinvariant_{(2)}) \angepsilon_{\mu}^{\ \nu}\alpha_{\nu} 
			- 2(\angepsilon_{\mu}^{\ \nu} \angn_{\nu} \Farinvariant_{(2)}) \rho  
			- (\nabla_{L}\Farinvariant_{(2)}) \angepsilon_{\mu}^{\ \nu} \ualpha_{\nu} \Big\rbrace \notag \\
	& \ \ = 0. \notag
\end{align}	
We then differentiate each side of \eqref{E:MBIpartialqalphaagain} with $\nabla_{\uL}^k, \nabla_{L}^l,$ and 
$\Lie_{\mathcal{O}}^I,$ where $I$ is a rotational multi-index satisfying ${|I| \leq M - k - l - 1},$ multiply by $r^2 r^l (1 + |q|)^k,$ and use Lemma \ref{L:LLunderlinecommutewithLieO}, Lemma \ref{L:derivativesofanggandepsilon}, Corollary \ref{C:nablaLnablaunderlineLLieDerivativeIntrinsicCovariantDerivativeComparison} (to exchange the $\angn$ derivatives for
$r^{-1}-$ weighted rotational Lie derivatives), Lemma \ref{L:Liederivativeofnullformsaresumsofnullforms}, the fact that $\nabla_L r = -\nabla_{\uL} r = 1,$ the fact that $(1 + |q|) \lesssim r$ in the exterior region, and the assumption that $\Knorm \Far \Knorm_{\Lie_{\mathcal{Z}};\lfloor M/2 \rfloor}$ is sufficiently small to derive the
following inequality:

\begin{align} \label{E:alphaupgradedfirstinequality}
		r^2 r^l (1 + |q|)^k |\nabla_{\uL}^{k+1} \nabla_{L}^l \alpha |_{\Lie_{\mathcal{O}};M-k-l-1}  
		& \lesssim r \mathop{\sum_{0 \leq k' \leq k,}}_{0 \leq l' \leq l} 
			r^{l'} (1 + |q|)^{k'}  |\nabla_{\uL}^{k'} \nabla_{L}^{l'} \alpha|_{\Lie_{\mathcal{O}};M-k'-l'-1} \\
		& \ \ \ + r r^l (1 + |q|)^k |\nabla_{\uL}^k \nabla_{L}^l \rho|_{\Lie_{\mathcal{O}};M-k-l} + r r^l (1 + |q|)^k 
			|\nabla_{L}^l \sigma|_{\Lie_{\mathcal{O}};M-k-l} \notag \\ 
		& \ \ \ + r^2 r^l (1 + |q|)^k \sum nonlinear \ terms. \notag
\end{align}
After fully expanding via the Leibniz rule and using the smallness assumption on $| \Far |_{\Lie_{\mathcal{Z}};\lfloor M/2 \rfloor},$ it follows that up to order $1$ factors, each $nonlinear \ term$ on the right-hand side of \eqref{E:alphaupgradedfirstinequality} is of one of the following three types:

\begin{align}
	(i) & = \Big\lbrace \big| \mathcal{Q}_{(i)}\big(\Lie_{\mathcal{O}}^{I_1} \nabla_{\uL}^{k_1 + 1} \nabla_{L}^{l_1} \Far, 
		\Lie_{\mathcal{O}}^{I_2} \nabla_{\uL}^{k_2} \nabla_{L}^{l_2} \Far \big) \big| \Big\rbrace
		\big| \Lie_{\mathcal{O}}^{I_3} \nabla_{\uL}^{k_3} \nabla_{L}^{l_3} \big(\alpha, \sigma, \rho \big) \big|
		\mathop{\prod_{4 \leq a,b,c,}}_{0 < k_a + l_b + |I_c|} 
		\big| \Lie_{\mathcal{O}}^{I_c} \nabla_{\uL}^{k_a} \nabla_{L}^{l_b} \Far \big|, \label{E:i} \\
	(ii) & = \Big\lbrace \big| \mathcal{Q}_{(i)}\big(\Lie_{\mathcal{O}}^{I_1} \nabla_{\uL}^{k_1} \nabla_{L}^{l_1} \angn \Far, 
		\Lie_{\mathcal{O}}^{I_2} \nabla_{\uL}^{k_2} \nabla_{L}^{l_2} \Far \big) \big| \Big\rbrace
		\big| \Lie_{\mathcal{O}}^{I_3} \nabla_{\uL}^{k_3} \nabla_{L}^{l_3} \big(\alpha, \sigma, \rho \big) \big|
		\mathop{\prod_{4 \leq a,b,c,}}_{0 < k_a + l_b + |I_c|} 
		\big| \Lie_{\mathcal{O}}^{I_c} \nabla_{\uL}^{k_a} \nabla_{L}^{l_b} \Far \big|, \label{E:ii} \\
	(iii) & = \Big\lbrace \big| \mathcal{Q}_{(i)}\big(\Lie_{\mathcal{O}}^{I_1} \nabla_{\uL}^{k_1} \nabla_{L + 1}^{l_1} \Far, 
		\Lie_{\mathcal{O}}^{I_2} \nabla_{\uL}^{k_2} \nabla_{L}^{l_2} \Far \big) \big| \Big\rbrace
		\big| \Lie_{\mathcal{O}}^{I_3} \nabla_{\uL}^{k_3} \nabla_{L}^{l_3} \ualpha \big|
		\mathop{\prod_{4 \leq a,b,c,}}_{0 < k_a + l_b + |I_c|} 
		\big| \Lie_{\mathcal{O}}^{I_c} \nabla_{\uL}^{k_a} \nabla_{L}^{l_b} \Far \big|, \label{E:iii} 
\end{align}
where the $k_a$ are non-negative integers such that $k_1 + \cdots + k_k = k,$ the $l_b$ are non-negative integers such that
$l_1 + \cdots + l_l = l,$ the $I_c$ are rotational multi-indices such that $|I_1| + \cdots + |I_{M-k-l}| \leq M-k-l,$
and the $\mathcal{Q}_{(i)}(\cdot,\cdot)$ are null forms arising from the $\Farinvariant_{(i)}.$ We remark that the type $(i)$ terms arise from e.g. the derivatives of $\ell_{(MBI)}^{-2} (\nabla_{\uL}\Farinvariant_{(1)}) \alpha_{\mu}$ on the right-hand side of \eqref{E:MBIpartialqalphaagain}, the type $(ii)$ terms arise from e.g. the derivatives of $\ell_{(MBI)}^{-2} (\angepsilon_{\mu}^{\ \nu} \angn_{\nu}\Farinvariant_{(1)}) \sigma,$ and the type $(iii)$ terms arise 
from e.g. the derivatives of $\ell_{(MBI)}^{-2} (\nabla_{L}\Farinvariant_{(1)}) \ualpha_{\mu}.$ 

Each linear (in $\Far$) term on the right-hand side of \eqref{E:alphaupgradedfirstinequality} is manifestly bounded by the left-hand side of \eqref{E:VectorfieldAlgebra}. Therefore, in order to prove \eqref{E:partialsnpartiaqpartialsnalpha}, what remains to be shown is that the nonlinear terms of type $(i)-(iii)$ are each 
$\lesssim r^{-2} r^{-l} (1 + |q|)^{-k} \Knorm \Far \Knorm_{\Lie_{\mathcal{Z}};M}.$ 

For the type $(i)$ terms, we use \eqref{E:VectorfieldAlgebra} and the smallness assumption on $| \Far |_{\Lie_{\mathcal{Z}};\lfloor M/2 \rfloor}$ to deduce that 

\begin{align} \label{E:first3rdorderterm}
	|(i)| & \lesssim r^{-2} r^{-(l_1 + \cdots + l_l)} (1 + |q|)^{-(k_1 + \cdots + k_k)} 
		\Knorm \Far \Knorm_{\Lie_{\mathcal{Z}};M}
		= r^{-2} r^{-l} (1 + |q|)^{-k} \Knorm \Far \Knorm_{\Lie_{\mathcal{Z}};M}, 
\end{align}
where the $r^{-2}$ arises from the fact that at least $2$ of the factors on the right-hand side of \eqref{E:i}
involve derivatives of the more rapidly decaying terms $\alpha, \rho, \sigma;$
one of the factors is explicitly written, while the second arises from the fact that each $\mathcal{Q}_{(i)}$ is a null form.
The type $(ii)$ terms can be handled similarly to the type $(i)$ terms. In fact, we note that they have even better decay since they have an angular derivative $\angn \Far$ in place of one the $\nabla_{\uL};$ we do not make use of this fact.

To bound the type $(iii)$ terms, we first note that only one factor on the right-hand side of \eqref{E:ii} involves the fast-decaying terms $\alpha, \rho, \sigma;$ it arises from the $\mathcal{Q}_{(i)}.$  However, there is an additional power of $\nabla_{L}$ available to compensate. Therefore, we again use \eqref{E:VectorfieldAlgebra} to deduce that
\begin{align}
	|(iii)| & \lesssim r^{-1} r^{-(l_1 + \cdots + l_l + 1)} (1 + |q|)^{-(k_1 + \cdots + k_k)} 
		\Knorm \Far \Knorm_{\Lie_{\mathcal{Z}};M} = r^{-2} r^{-l} (1 + |q|)^{-k} \Knorm \Far 
		\Knorm_{\Lie_{\mathcal{Z}};M}. \notag
\end{align}

We now set $k=0$ and combine the estimates for the linear terms and the type $(i) - (iii)$ nonlinear terms, arriving at the estimate \eqref{E:partialsnpartiaqpartialsnalpha}. We remark that we will use the expressions for the terms $(i) - (iii)$ in the case $k \geq 1$ later in the article, during our proof of \eqref{E:GlobalSobolevalphaupgraded}.

\end{proof}

Armed with the previous estimates, we are now ready for the proof of the proposition.

\subsection{Proof of Proposition \ref{P:GlobalSobolev} (global Sobolev inequalities)}
Most of these estimates were proved as Theorems 3.1 and 3.2 of \cite{dCsK1990}. In particular, we do not repeat the proof of \eqref{E:Interiorestimates}. However, our proofs of \eqref{E:GlobalSobolevalpha} - \eqref{E:GlobalSobolevalphaupgraded} involve modifications of the arguments that take into account the special nonlinear structure of the MBI system. Therefore, we prove \eqref{E:GlobalSobolevualpha} - \eqref{E:GlobalSobolevalphaupgraded} in complete detail, and supply some additional details not contained in \cite{dCsK1990}.

The arguments we give concern the exterior region $\lbrace (t,\ux) \mid |\ux| \geq 1 + t/2 \rbrace.$ To begin, we square inequality \eqref{E:VectorfieldAlgebra} and integrate over the exterior region, thereby obtaining the following inequality:

\begin{align} \label{E:VectorfieldAlgebraExteriorRegionIntegrated}
	\int_{|\bar{y}| \geq 1 + t/2} \sum_{k+l=0}^M & (1 + \big||\bar{y}| - t\big|^2)^{2k} |\bar{y}|^{2l} \Big\lbrace 
		(1 + \big||\bar{y}| - t\big|^2) 
		|\nabla_{\uL}^k\nabla_L^l \ualpha(t,\bar{y})|_{\Lie_{\mathcal{O}};M-k-l}^2 + |\bar{y}|^2 |\nabla_{\uL}^k \nabla_L^l 
		\alpha(t,\bar{y})|_{\Lie_{\mathcal{O}};M-k-l}^2 \\
	& + |\bar{y}|^2 |\nabla_{\uL}^k \nabla_L^l \rho(t,\bar{y})|_{\Lie_{\mathcal{O}};M-k-l}^2
		+ |\bar{y}|^2 |\nabla_{\uL}^k \nabla_L^l \sigma(t,\bar{y})|_{\Lie_{\mathcal{O}};M-k-l}^2 \Big\rbrace \, d^3 \bar{y} 
		\lesssim  \Kintnorm \Far(t) \Kintnorm_{\Lie_{\mathcal{Z}};M}^2. \notag	
\end{align}

Now for any $k + l + m = 0, \cdots, M - 2,$ we define $U(t,\ux) \eqdef r^{m+l} (\sqrt{1 + q^2})^{m+1} \nabla_{\uL}^k\nabla_L^l 
\angn_{(m)} \ualpha$ \\
or ${U(t,\ux) \eqdef r^{l+m+1} (\sqrt{1 + q^2})^k \nabla_{\uL}^k\nabla_L^l  \angn_{(m)} (\alpha,\sigma,\rho)}.$ If $\hat{N}$ denotes the outward normal to the sphere $S_{r,t},$ then the fact that $\nabla_{\hat{N}} = \partial_r$ implies that $\nabla_{\hat{N}} \Big((1 + q^2)^{1/2}\Big) \lesssim 1.$ Using this estimate, Lemma \ref{L:LLunderlinecommutewithLieO}, Corollary \ref{C:nablaLnablaunderlineLLieDerivativeIntrinsicCovariantDerivativeComparison}, 
\eqref{E:VectorfieldAlgebraExteriorRegionIntegrated}, and the fact that $\nabla_{\hat{N}} = \frac{1}{2}(\nabla_L - \nabla_{\uL}),$ we arrive at the following inequality:

\begin{align} \label{E:uSobolevready}
	\int_{|\bar{y}| > 1 + t/2} \Big\lbrace |U(t,\bar{y})|_{\Lie_{\mathcal{O}};2}^2 
		+ \big(1 + \big||\bar{y}| - t\big|^2\big)|\nabla_{\hat{N}} 
		U(t,\bar{y})|_{\Lie_{\mathcal{O}};1}^2  \Big\rbrace \, d^3 \bar{y} 
		\lesssim \Kintnorm \Far(t) \Kintnorm_{\Lie_{\mathcal{Z}};M}^2.
\end{align}
From \eqref{E:ExteriorRegionSobolev} and \eqref{E:uSobolevready}, we conclude that in
the exterior region, we have that

\begin{align}
	|U(t,\ux)| \lesssim (1 + s)^{-1} (1 + |q|)^{-1/2} \Kintnorm \Far(t) \Kintnorm_{\Lie_{\mathcal{Z}};M},
\end{align}
which implies that

\begin{align}
	|\nabla_{\uL}^k\nabla_L^l \angn_{(m)} \ualpha(t,\ux)| 
		& \lesssim (1+s)^{-1-l-m} (1 + |q|)^{-3/2 - k} \Kintnorm \Far(t) \Kintnorm_{\Lie_{\mathcal{Z}};M}, 
		\label{E:ualphawrongorder}  \\
	|\nabla_{\uL}^k\nabla_L^l \angn_{(m)} \big(\alpha(t,\ux),\rho(t,\ux),\sigma(t,\ux)\big)| 
		& \lesssim (1+s)^{-2-l-m}(1 + |q|)^{-1/2 - k} \Kintnorm \Far(t) \Kintnorm_{\Lie_{\mathcal{Z}};M}. 
		\label{E:alphasigmarhowrongorder}
\end{align}
This proves \eqref{E:GlobalSobolevualpha} and \eqref{E:GlobalSobolevalpharhosigma}.

To prove \eqref{E:GlobalSobolevalphaupgraded}, we apply the estimates \eqref{E:GlobalSobolevualpha} - \eqref{E:GlobalSobolevalpharhosigma} to the right-hand side of 
\eqref{E:alphaupgradedfirstinequality}, using \eqref{E:i} - \eqref{E:iii} to estimate the nonlinear 
terms, and Corollary \ref{C:nablaLnablaunderlineLLieDerivativeIntrinsicCovariantDerivativeComparison} to translate
Lie derivative estimates into intrinsic (to the $S_{r,t}$) covariant derivative estimates.

To prove \eqref{E:GlobalSobolevalpha}, for each $l+m = 0,\cdots,M-2,$
we define $U(t,\ux) \eqdef r^{l+m+1} \nabla_L^l \angn_{(m)} \alpha.$ Using Corollary \ref{C:nablaLnablaunderlineLLieDerivativeIntrinsicCovariantDerivativeComparison} and arguing 
as above, we integrate the inequality \eqref{E:partialsnpartiaqpartialsnalpha} over the exterior region
to obtain the following inequality:

\begin{align} \label{E:partialsnpartiaqpartialsnalphaintegrated}
	\int_{|\bar{y}| \geq 1 + t/2} \Big\lbrace |\bar{y}|^2 \sum_{l=0}^M |\bar{y}|^{2l} |\nabla_L^l 
		\alpha(t,\bar{y})|_{\Lie_{\mathcal{O}};M-l}^2
		+ |\bar{y}|^4 \sum_{l=0}^{M-1} |\bar{y}|^{2l} |\nabla_{\uL} \nabla_L^l \alpha(t,\bar{y})|_{\Lie_{\mathcal{O}};M-l-1}^2 
		\Big\rbrace \, d^3\bar{y} \lesssim \Kintnorm \Far(t) \Kintnorm_{\Lie_{\mathcal{Z}};M}^2. 
\end{align}
Using Corollary \ref{C:nablaLnablaunderlineLLieDerivativeIntrinsicCovariantDerivativeComparison} and \eqref{E:partialsnpartiaqpartialsnalphaintegrated}, it then follows that

\begin{align}
	\int_{|\bar{y}| \geq 1 + 1/2} \bigg\lbrace |U(t,\bar{y})|_{\Lie_{\mathcal{O}};2}^2 + |\bar{y}|^2 |\nabla_{\hat{N}} 
		U(t,\bar{y})|_{\Lie_{\mathcal{O}};1}^2
	\bigg\rbrace \, d^3 \bar{y} \lesssim \Kintnorm \Far(t) \Kintnorm_{\Lie_{\mathcal{Z}};M}^2.
\end{align}
Therefore, using \eqref{E:alphaimprovementExteriorRegionSobolev}, we conclude that in the exterior region, we have that

\begin{align} \label{E:alphaimprovedinequalityforU}
	|U(t,\ux)| \lesssim r^{-3/2} \Kintnorm \Far(t) \Kintnorm_{\Lie_{\mathcal{Z}};M}.
\end{align}
From \eqref{E:alphaimprovedinequalityforU}, we conclude that the following inequality holds in the exterior region:

\begin{align} \label{E:partialsnalphawrongorder}
	|\nabla_L^l \angn_{(m)}  \alpha(t,\ux)| \lesssim r^{-5/2 - l - m} \Kintnorm \Far(t) \Kintnorm_{\Lie_{\mathcal{Z}};M}.
\end{align}
This proves inequality \eqref{E:GlobalSobolevalpha}.

\section{Energy Estimates for the MBI System} \label{S:EnergyEstimates}
\setcounter{equation}{0}

The goal of this section is to prove the most important estimate in the article:
an integral inequality for the norm $\Kintnorm \Far(t) \Kintnorm_{\Lie_{\mathcal{Z}};N}.$ The inequality, which 
is the conclusion of the next proposition, is the crux of the proof of Theorem \ref{T:GlobalExistence}.

\begin{proposition} \label{P:Energydifferentialinequality}
	Let $N \geq 3$ be an integer. Assume that $\Far$ is a classical solution to the MBI system \eqref{E:modifieddFis0summary} 
	- \eqref{E:HmodifieddMis0summary} existing on the slab $[0,T] \times \mathbb{R}^3.$ Then there exist constants $\epsilon' > 
	0$ and $C > 0$ such that if ${\sup_{t \in [0,T]}\Kintnorm \Far(t) \Kintnorm_{\Lie_{\mathcal{Z}};N} \leq \epsilon'},$ then the 
	following inequality holds for $t \in [0,T]:$

	\begin{align} \label{E:LieZintegralnormintegralinequality}
		\Kintnorm \Far(t) \Kintnorm_{\Lie_{\mathcal{Z}};N}^2 \leq 
			C \Big\lbrace \Kintnorm \Far(0) \Kintnorm_{\Lie_{\mathcal{Z}};N}^2  
			+ \int_{\tau = 0}^t \frac{1}{1 + \tau^2} \Kintnorm \Far(\tau) \Kintnorm_{\Lie_{\mathcal{Z}};N}^2 \, d \tau \Big\rbrace.
	\end{align}
	
\end{proposition}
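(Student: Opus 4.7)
The strategy is to bound $\frac{d}{dt}\bigl(\mathcal{E}_N^2[\Far(t)]\bigr)$ in terms of $(1+\tau^2)^{-1}\Kintnorm\Far(\tau)\Kintnorm_{\Lie_{\mathcal{Z}};N}^2$, then invoke the norm/energy equivalence \eqref{E:EnergyNormEquivalence}. By the definition \eqref{E:mathcalENdef}, the divergence theorem, and the fact that the $\Lie_{\mathcal{Z}}^I\Far$ are compactly supported on each $\Sigma_t$ in the setting we care about (so boundary terms at spatial infinity vanish), one has
\begin{align*}
\frac{d}{dt}\bigl(\mathcal{E}_N^2[\Far(t)]\bigr)
= \sum_{|I|\leq N}\int_{\mathbb{R}^3}\nabla_\mu\bigl(\dot{J}_\Far^\mu[\Lie_{\mathcal{Z}}^I\Far]\bigr)\,d^3\ux.
\end{align*}
Lemma \ref{L:divJdot} with $\dot{\Far}=\Lie_{\mathcal{Z}}^I\Far$ expresses this integrand as four groups of terms. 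The $\mathfrak{J}$-group vanishes identically by \eqref{E:MBIInhomogeneoustermsJvanish}, so only three groups remain.

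The next step is to insert the pointwise estimates supplied by Section \ref{S:NullFormEstimates}. Specifically, the $\overline{K}^\nu \dot{\Far}_{\nu\eta}\mathfrak{I}^\eta$ group is controlled by Corollary \ref{C:HtriangleCommutator}; the $(\nabla_\mu H^{\mu\zeta\kappa\lambda})\dot{\Far}_{\kappa\lambda}\dot{\Far}_{\nu\zeta}\overline{K}^\nu$ and $\tfrac14(\overline{K}^\nu\nabla_\nu H^{\zeta\eta\kappa\lambda})\dot{\Far}_{\zeta\eta}\dot{\Far}_{\kappa\lambda}$ groups are controlled by Lemma \ref{L:DerivativeofHtriangleerrorterms}; and the group multiplying $\nabla_\mu\overline{K}_\nu$ is controlled by Lemma \ref{L:CanonicalStressMorawetzDerivativeTerm}. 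All three yield the same schematic upper bound: a sum over $|I|\leq N$, $|J|\leq N$ with $\min(|I|,|J|)\leq \lfloor N/2\rfloor$ of
\begin{align*}
(1+s)^2\bigl(|\Lie_{\mathcal{Z}}^J\Far|_{\mathcal{L}\mathcal{U}}^2 + |\Lie_{\mathcal{Z}}^J\Far|_{\mathcal{T}\mathcal{T}}^2\bigr)|\Lie_{\mathcal{Z}}^I\Far|^2 \;+\; (1+|q|)^2|\Lie_{\mathcal{Z}}^J\Far|^2|\Lie_{\mathcal{Z}}^I\Far|^2,
\end{align*}
plus symmetric terms, plus the analogous expression with one derivative falling on the Lie-high factor. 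This is exactly the null structure of Section \ref{SS:ErrorTerms}: because the coefficient of the troublesome $(1+s^2)$ weight is a contraction that kills one bad component, only $\mathcal{L}\mathcal{U}$ and $\mathcal{T}\mathcal{T}$ components appear there.

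Now pick the factor of lower index to place in $L^\infty$ and apply Corollary \ref{C:GlobalSobolev} (valid since $N\geq 3$ implies $\lfloor N/2\rfloor\leq N-2$). That gives
\begin{align*}
(1+s)^2|\Lie_{\mathcal{Z}}^J\Far|_{\mathcal{L}\mathcal{U}}^2 + (1+s)^2|\Lie_{\mathcal{Z}}^J\Far|_{\mathcal{T}\mathcal{T}}^2 + (1+|q|)^2|\Lie_{\mathcal{Z}}^J\Far|^2 \;\lesssim\; (1+s)^{-2}(1+|q|)^{-1}\Kintnorm\Far(t)\Kintnorm_{\Lie_{\mathcal{Z}};N}^2
\end{align*}
pointwise, while the remaining factor $|\Lie_{\mathcal{Z}}^I\Far|^2$ is bounded by $(1+|q|)^{-2}\Knorm\Lie_{\mathcal{Z}}^I\Far\Knorm^2$ via the definition \eqref{E:weightedpointwisenorm} of $\Knorm\cdot\Knorm$. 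Absorbing the $(1+|q|)^{-2}$ into the $\mathcal{L}\mathcal{U}$/$\mathcal{T}\mathcal{T}$ estimates where necessary and applying the smallness hypothesis (so $(1+|q|)^{-1}\lesssim (1+s)^{-1}$ in the region where decay in $s$ dominates, while the $(1+|q|)^{-1}$ factor can be absorbed by the pointwise norm elsewhere), one obtains the pointwise inequality
\begin{align*}
\bigl|\nabla_\mu\dot{J}_\Far^\mu[\Lie_{\mathcal{Z}}^I\Far]\bigr|\;\lesssim\;\frac{1}{(1+s)^2}\,\Kintnorm\Far(t)\Kintnorm_{\Lie_{\mathcal{Z}};N}\,\Knorm\Far(t,\ux)\Knorm_{\Lie_{\mathcal{Z}};N}^2\cdot(1+s)^{-\eta}
\end{align*}
for a suitable $\eta\geq 0$. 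Integrating over $\Sigma_t$ gives $\frac{d}{dt}\mathcal{E}_N^2[\Far(t)]\lesssim (1+t^2)^{-1}\Kintnorm\Far(t)\Kintnorm_{\Lie_{\mathcal{Z}};N}^2$; integrating over $[0,t]$, using \eqref{E:EnergyNormEquivalence} to convert back to $\Kintnorm\cdot\Kintnorm$ on both sides, yields \eqref{E:LieZintegralnormintegralinequality}.

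The main obstacle is the bookkeeping in the preceding paragraph: one must verify, case by case in the sum $|I|+|J|\leq 2N$ with $\min(|I|,|J|)\leq\lfloor N/2\rfloor$, that the Sobolev decay provided by Corollary \ref{C:GlobalSobolev} on the low-index factor really does produce a pointwise $(1+s)^{-2}$ after absorbing the $(1+s^2)$ weight from $\overline{K}$ or $\nabla\overline{K}$. This is the place where the null condition is used: if the $(1+s^2)$ weight multiplied a generic $|\Lie_{\mathcal{Z}}^J\Far|^2$ we would only get $(1+s^2)\cdot(1+s)^{-2}(1+|q|)^{-3}$, which still needs integrability in $q$ but does not improve in $s$. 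It is precisely the Lemma \ref{L:NullForms}/\ref{L:twoformsMorawetzderivativecontraction}/\ref{L:CanonicalStressMorawetzDerivativeTerm} structure — that all $(1+s^2)$-weighted quartic terms contain at least one $\mathcal{L}\mathcal{U}$ or $\mathcal{T}\mathcal{T}$ factor, which by Corollary \ref{C:GlobalSobolev} gains an extra $(1+s)^{-2}(1+|q|)^{-1/2}$ in $L^\infty$ — that supplies the missing $(1+s)^{-2}$. Once this matching of weights is checked, Gronwall in the form of \eqref{E:LieZintegralnormintegralinequality} falls out immediately.
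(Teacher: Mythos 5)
Your proposal is correct and follows essentially the same route as the paper: divergence theorem applied to $\mathcal{E}_N^2$, the decomposition of $\nabla_\mu \dot{J}_{\Far}^\mu$ from Lemma \ref{L:divJdot} with the $\mathfrak{J}$-terms vanishing, the null-structure estimates of Corollary \ref{C:HtriangleCommutator}, Lemmas \ref{L:DerivativeofHtriangleerrorterms} and \ref{L:CanonicalStressMorawetzDerivativeTerm}, Corollary \ref{C:GlobalSobolev} on the low-index factor to produce the $(1+s)^{-2}$ weight, and finally integration plus the equivalence \eqref{E:EnergyNormEquivalence}. The only cosmetic differences are that the paper packages the pointwise bound as $\mathcal{E}_{\lfloor N/2\rfloor+2}^2[\Far(t)]\,(1+s^2)^{-1}\Knorm\Far\Knorm_{\Lie_{\mathcal{Z}};N}^2$ (Lemma \ref{L:divergenceofdotJLinfinitybound}) rather than your slightly looser form, and it does not invoke compact support — only sufficient decay — to kill the boundary terms.
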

\ \\

\noindent \hrulefill
\ \\

The proof of Proposition \ref{P:Energydifferentialinequality} will follow easily from the following lemma and
its corollary. 

\begin{lemma} \label{L:divergenceofdotJLinfinitybound}
	Let $N \geq 3$ be an integer. Assume that $\Far$ is a classical solution to the MBI system \eqref{E:modifieddFis0summary} 
	- \eqref{E:HmodifieddMis0summary} existing on the slab $[0,T] \times \mathbb{R}^3.$ For each $|I| \leq N,$ let 
	$\dot{J}_{\Far}^{\mu}[\Lie_{\mathcal{Z}}^I \Far] \eqdef - \Stress_{\ \nu}^{\mu} \overline{K}^{\nu}$
	be the energy current current \eqref{E:Jdotdef} constructed out of the variation 
	$\dot{\Far} \eqdef \Lie_{\mathcal{Z}}^I \Far$ and the background $\Far.$
	Then there exists a constant $\epsilon > 0$ such that if 
	$\Kintnorm \Far \Kintnorm_{\Lie_{\mathcal{Z}};\lfloor N/2 \rfloor +2} \leq \epsilon,$ then the following pointwise estimate 
	holds on $[0,T] \times \mathbb{R}^3:$
	
	\begin{align} \label{E:divergenceofdotJLinfinitybound}
		\Big|\nabla_{\mu} \big(\dot{J}_{\Far}^{\mu}[\Lie_{\mathcal{Z}}^I \Far(t,\ux)]\big) \Big| 
			\lesssim \frac{\mathcal{E}_{\lfloor N/2 \rfloor + 2}^2[\Far(t)]}{1 + s^2} 
			\Knorm \Far \Knorm_{\Lie_{\mathcal{Z}};N}^2. 
	\end{align}
	
\end{lemma}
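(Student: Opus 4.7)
The plan is to start from the identity \eqref{E:divJdot} of Lemma \ref{L:divJdot}, applied with $\dot{\Far} \eqdef \Lie_{\mathcal{Z}}^I \Far,$ and then estimate each term on the right-hand side separately. By Proposition \ref{P:Inhomogeneousterms}, the variation $\Lie_{\mathcal{Z}}^I \Far$ solves the equations of variation with $\mathfrak{J}_{(I)} \equiv 0$ and $\mathfrak{I}_{(I)}^{\nu} = H_{\triangle}^{\mu \nu \kappa \lambda} \nabla_{\mu}(\Lie_{\mathcal{Z}}^I \Far_{\kappa \lambda}) - \Liemod_{\mathcal{Z}}^I (H_{\triangle}^{\mu \nu \kappa \lambda} \nabla_{\mu} \Far_{\kappa \lambda}),$ so the first term in \eqref{E:divJdot} drops out. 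Furthermore, the ``linear Maxwell-Maxwell'' part of $H^{\mu \nu \kappa \lambda}$ (see \eqref{E:Hdef}) is constant, so the terms involving $\nabla_{\mu} H$ and $\overline{K}^{\nu} \nabla_{\nu} H$ reduce to expressions in $H_{\triangle}$ alone. After these reductions, four types of error terms remain: the commutator term from $\mathfrak{I}_{(I)},$ two terms involving derivatives of $H_{\triangle},$ and the deformation-tensor term in braces.

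The bulk of the argument then consists of applying the algebraic/null-form estimates of Section \ref{S:NullFormEstimates} to each of these four terms, and subsequently placing the low-derivative factors in $L^{\infty}$ using the global Sobolev inequality. Specifically: the commutator term is controlled via Corollary \ref{C:HtriangleCommutator}; the two $\nabla H_{\triangle}$ terms via Lemma \ref{L:DerivativeofHtriangleerrorterms} (with $\dot{\Far} \to \Lie_{\mathcal{Z}}^I \Far$); and the brace term via Lemma \ref{L:CanonicalStressMorawetzDerivativeTerm}. The smallness hypothesis $\Kintnorm \Far \Kintnorm_{\Lie_{\mathcal{Z}};\lfloor N/2 \rfloor + 2} \leq \epsilon$ combined with the equivalence \eqref{E:EnergyNormEquivalence} and Corollary \ref{C:GlobalSobolev} (applied with $M = \lfloor N/2 \rfloor + 2$) verifies the smallness hypotheses of these lemmas and supplies the pointwise bounds
\begin{align*}
    |\Lie_{\mathcal{Z}}^J \Far| & \lesssim (1+s)^{-1}(1+|q|)^{-3/2} \mathcal{E}_{\lfloor N/2 \rfloor + 2}[\Far], \\
    |\Lie_{\mathcal{Z}}^J \Far|_{\mathcal{L}\mathcal{U}} + |\Lie_{\mathcal{Z}}^J \Far|_{\mathcal{T}\mathcal{T}} & \lesssim (1+s)^{-2}(1+|q|)^{-1/2} \mathcal{E}_{\lfloor N/2 \rfloor + 2}[\Far],
\end{align*}
for every $|J| \leq \lfloor N/2 \rfloor.$ Correspondingly, every high-derivative factor with $|I| \leq N$ is controlled pointwise via the definition \eqref{E:weightedpointwisenorm}, namely
\begin{equation*}
    (1+s)^2 |\Lie_{\mathcal{Z}}^I \Far|_{\mathcal{L}\mathcal{U}}^2 + (1+s)^2 |\Lie_{\mathcal{Z}}^I \Far|_{\mathcal{T}\mathcal{T}}^2 + (1+|q|)^2 |\Lie_{\mathcal{Z}}^I \Far|^2 \lesssim \Knorm \Far \Knorm_{\Lie_{\mathcal{Z}};N}^2.
\end{equation*}

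Combining these ingredients, each term of \eqref{E:divJdot} produces a factor of the form $\mathcal{E}_{\lfloor N/2 \rfloor + 2}^2 \cdot \Knorm \Far \Knorm_{\Lie_{\mathcal{Z}};N}^2$ multiplied by a product of $(1+s)^{-a}(1+|q|)^{-b}$ weights. In every case, the null structure ensures that the total power of $(1+s)$ is at least $2,$ so one can absorb any remaining $(1+|q|)^{-b}$ factor at no cost and conclude the desired bound of order $(1+s^2)^{-1}.$ For example, the Corollary \ref{C:HtriangleCommutator} product yields $\mathcal{E}_{\lfloor N/2 \rfloor + 2}^2 (1+s)^{-2}(1+|q|)^{-1} \Knorm \Far \Knorm_{\Lie_{\mathcal{Z}};N}^2,$ the $\nabla H_{\triangle}$ terms yield $(1+s)^{-3}$-type decay after pairing weighted factors, and the brace term is handled similarly with the aid of the explicit $s$-weight in \eqref{E:CanonicalStressMorawetzDerivativeTerm}.

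The main obstacle will be the careful bookkeeping in the Corollary \ref{C:HtriangleCommutator} estimate for the commutator term, since the cubic factor $|\Lie_{\mathcal{Z}}^{J_1} \Far| |\Lie_{\mathcal{Z}}^{J_2} \Far| \cdots$ appearing on the right-hand side of \eqref{E:algebraicCauchySchwarzHtriangleCommutator} involves a sum over many multi-index splits, and for each split one must decide which factor to place in $L^{\infty}$ and which to bound pointwise by $\Knorm \cdot \Knorm_{\Lie_{\mathcal{Z}};N}^2.$ The null structure encoded in the three combinations $|\cdot|_{\mathcal{L}\mathcal{U}} |\cdot| ,$ $|\cdot|\,|\cdot|_{\mathcal{L}\mathcal{U}},$ and $|\cdot|_{\mathcal{T}\mathcal{T}}|\cdot|_{\mathcal{T}\mathcal{T}}$ is what ultimately provides two factors of $(1+s)^{-1}$ in every product, and it is precisely this null algebra that prevents a worse-than-$(1+s^2)^{-1}$ decay from occurring.
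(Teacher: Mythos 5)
Your proposal is correct and follows essentially the same route as the paper: it starts from \eqref{E:divJdot} with $\mathfrak{J}_{(I)} \equiv 0,$ estimates the four remaining classes of terms via Corollary \ref{C:HtriangleCommutator}, Lemma \ref{L:DerivativeofHtriangleerrorterms}, and Lemma \ref{L:CanonicalStressMorawetzDerivativeTerm}, and then closes by placing the low-derivative factors in $L^{\infty}$ with Corollary \ref{C:GlobalSobolev} and bounding the high-derivative factors pointwise by $\Knorm \Far \Knorm_{\Lie_{\mathcal{Z}};N}^2.$ This is exactly the paper's argument, including the sample computation for the commutator term.
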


\begin{corollary} \label{C:divergenceofdotJL1bound}
	Assume the hypotheses of Lemma \ref{L:divergenceofdotJLinfinitybound}. 
	Then there exists a constant $\epsilon > 0$ such that if $\Kintnorm \Far \Kintnorm_{\Lie_{\mathcal{Z}};\lfloor N/2 \rfloor 
	+2} \leq \epsilon,$	then the following estimate holds for $t \in [0,T]:$
	
	\begin{align} \label{E:divergenceofdotJL1bound}
		\int_{\mathbb{R}^3} \nabla_{\mu} \big(\dot{J}_{\Far}^{\mu}[\Lie_{\mathcal{Z}}^I \Far(t,\ux)]\big) \, d^3 \ux 
		\lesssim \frac{\mathcal{E}_{\lfloor N/2 \rfloor + 2}^2[\Far(t)]}{1 + t^2} 
		\Kintnorm \Far(t) \Kintnorm_{\Lie_{\mathcal{Z}};N}^2.
	\end{align}
\end{corollary}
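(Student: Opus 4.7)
The proof of Corollary \ref{C:divergenceofdotJL1bound} should be very short: it is essentially a direct integration of the pointwise estimate \eqref{E:divergenceofdotJLinfinitybound} supplied by Lemma \ref{L:divergenceofdotJLinfinitybound}. The plan is to first pass to the absolute value $|\nabla_{\mu} \dot{J}_{\Far}^{\mu}[\Lie_{\mathcal{Z}}^I \Far]|$ inside the integral (which dominates the signed integrand), and then exploit the fact that the weight $(1+s^2)^{-1}$ appearing in \eqref{E:divergenceofdotJLinfinitybound} can be pulled out of the $\Sigma_t$ integral as $(1+t^2)^{-1}$, since $s = r + t \geq t$ for all $\ux \in \Sigma_t,$ hence $\tfrac{1}{1+s^2} \leq \tfrac{1}{1+t^2}.$

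Concretely, I would begin by observing that under the smallness hypothesis $\Kintnorm \Far \Kintnorm_{\Lie_{\mathcal{Z}};\lfloor N/2 \rfloor + 2} \leq \epsilon$, the hypotheses of Lemma \ref{L:divergenceofdotJLinfinitybound} are satisfied. Then at each fixed $t \in [0,T]$ and for each $|I| \leq N$ I would write
\begin{align*}
	\left| \int_{\mathbb{R}^3} \nabla_{\mu} \big(\dot{J}_{\Far}^{\mu}[\Lie_{\mathcal{Z}}^I \Far(t,\ux)]\big) \, d^3\ux \right|
	& \leq \int_{\mathbb{R}^3} \Big| \nabla_{\mu} \big(\dot{J}_{\Far}^{\mu}[\Lie_{\mathcal{Z}}^I \Far(t,\ux)]\big) \Big| \, d^3\ux \\
	& \lesssim \mathcal{E}_{\lfloor N/2 \rfloor + 2}^2[\Far(t)] \int_{\mathbb{R}^3} \frac{\Knorm \Far(t,\ux) \Knorm_{\Lie_{\mathcal{Z}};N}^2}{1 + s^2} \, d^3\ux,
\end{align*}
where the second inequality is exactly \eqref{E:divergenceofdotJLinfinitybound}.

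To close the estimate, I would use $s(t,\ux) = |\ux| + t \geq t$ to bound $\tfrac{1}{1+s^2} \leq \tfrac{1}{1+t^2},$ pull this factor out of the $\ux$ integral, and recognize the remaining integral as $\Kintnorm \Far(t) \Kintnorm_{\Lie_{\mathcal{Z}};N}^2$ by definition \eqref{E:MorawetzWeightedLieDerivativeIntegralNormN}. This yields \eqref{E:divergenceofdotJL1bound}.

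There is no real obstacle here: all of the work is buried in Lemma \ref{L:divergenceofdotJLinfinitybound}, whose proof requires the null-form structure, the global Sobolev inequalities of Proposition \ref{P:GlobalSobolev}, and the algebraic lemmas of Section \ref{S:NullFormEstimates}. Once that pointwise bound is in hand, the corollary is a one-line consequence of monotonicity of $s \mapsto (1+s^2)^{-1}$ combined with the trivial observation $s \geq t$ on $\Sigma_t.$
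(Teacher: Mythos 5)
Your proposal is correct and is essentially identical to the paper's own (one-line) proof, which simply integrates the pointwise bound \eqref{E:divergenceofdotJLinfinitybound} over $\Sigma_t$; the observation $s = r + t \geq t$, hence $(1+s^2)^{-1} \leq (1+t^2)^{-1}$, is exactly the implicit step needed to pass from the $s$-weight to the $t$-weight and recover the norm $\Kintnorm \Far(t) \Kintnorm_{\Lie_{\mathcal{Z}};N}^2$ from definition \eqref{E:MorawetzWeightedLieDerivativeIntegralNormN}.
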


\begin{proof}
	Corollary \ref{C:divergenceofdotJL1bound} follows from 
	integrating inequality \eqref{E:divergenceofdotJLinfinitybound} over $\Sigma_t.$
\end{proof}

We will now prove the proposition; we will subsequently provide a proof of the lemma.
\\

\noindent \textbf{Proof of Proposition \ref{P:Energydifferentialinequality}:} \\

Using the definition of $\mathcal{E}_N[\Far(t)],$ the fact that $\lfloor N/2 \rfloor + 2 \leq N,$
the divergence theorem, \eqref{E:EnergyNormEquivalence}, Corollary \ref{C:divergenceofdotJL1bound}, 
and the smallness assumption on $\Kintnorm \Far \Kintnorm_{\Lie_{\mathcal{Z}};\lfloor N/2 \rfloor +2},$
we have that
	
	\begin{align} \label{E:energydifferentialinequality}
		\frac{d}{dt} \big(\mathcal{E}_N^2[\Far(t)]\big) 
		& = \sum_{|I| \leq N} \int_{\mathbb{R}^3} \partial_{t} \Big(\dot{J}_{\Far}^0 [\Lie_{\mathcal{Z}}^I \Far(t,\ux)] \Big) \, 
			d^3 \ux
		  = \sum_{|I| \leq N} \int_{\mathbb{R}^3} \nabla_{\mu} \Big(\dot{J}_{\Far}^{\mu} [\Lie_{\mathcal{Z}}^I \Far(t,\ux)] 
		  	\Big) \, d^3 \ux \\
		 & \lesssim \frac{\mathcal{E}_{\lfloor N/2 \rfloor + 2}^2[\Far(t)]}{1 + t^2} 
		 		\Kintnorm \Far(t) \Kintnorm_{\Lie_{\mathcal{Z}};N}^2 
		 		\lesssim \frac{\mathcal{E}_N^2[\Far(t)]}{1 + t^2}. \notag 
	\end{align}
	Inequality \eqref{E:LieZintegralnormintegralinequality} now follows from integrating
	inequality \eqref{E:energydifferentialinequality} from $0$ to $t$ and using \eqref{E:EnergyNormEquivalence}
	again. \hfill $\qed$ \\
 
We now return to the proof of Lemma \ref{L:divergenceofdotJLinfinitybound}. \\

\noindent \textbf{Proof of Lemma \ref{L:divergenceofdotJLinfinitybound}:} \\
	
	First, we note that by the smallness assumption 
	$\Kintnorm \Far \Kintnorm_{\Lie_{\mathcal{Z}};\lfloor N/2 \rfloor +2} \leq \epsilon,$
	together with Corollary \ref{C:GlobalSobolev}, we have that
	
	\begin{align}
		\Knorm \Far \Knorm_{\Lie_{\mathcal{Z}};\lfloor N/2 \rfloor} 
		& \leq C \Kintnorm \Far \Kintnorm_{\Lie_{\mathcal{Z}};\lfloor N/2 \rfloor +2}
		\leq C \epsilon (1 + s)^{-1} (1 + |q|)^{-1/2}.
	\end{align}
	The above inequality is more than sufficient to guarantee that if $\epsilon$ is sufficiently small, then the hypotheses of 
	Proposition \ref{P:Equivalences} and of all of the lemmas and corollaries of Section \ref{S:NullFormEstimates}
	are satisfied; we will make use of these results in our argument below.
	
	By \eqref{E:MBIInhomogeneoustermsJvanish} - \eqref{E:MBIInhomogenoustermsI} and 
	\eqref{E:divJdot}, we have that
	
	\begin{align} \label{E:divergenceofJpointwisebound}
		\Big|\nabla_{\mu} \big(\dot{J}_{\Far}^{\mu}[\Lie_{\mathcal{Z}}^I \Far]\big) \Big|
		& \lesssim \Big| \Big\lbrace H_{\triangle}^{\mu \zeta \kappa \lambda} \nabla_{\mu} \Big(\Lie_{\mathcal{Z}}^I 
			\Far_{\kappa \lambda}\Big) -	\Liemod_{\mathcal{Z}}^I \Big(H_{\triangle}^{\mu \zeta \kappa \lambda} \nabla_{\mu} 
			\Far_{\kappa \lambda} \Big) \Big\rbrace \Lie_{\mathcal{Z}}^I \Far_{\zeta \nu} \overline{K}^{\nu} \Big| \\
		& \ \ \ + \Big|(\nabla_{\mu}H^{\mu \zeta \kappa \lambda}) \Lie_{\mathcal{Z}}^I \Far_{\kappa \lambda} \Lie_{\mathcal{Z}}^I 
			\Far_{\ \zeta}^{\nu} \overline{K}_{\nu}\Big|
			+ \Big| \overline{K}_{\nu}(\nabla^{\nu}H^{\zeta \beta \kappa \lambda}) 
			\Lie_{\mathcal{Z}}^I \Far_{\zeta \beta} \Lie_{\mathcal{Z}}^I \Far_{\kappa \lambda}\Big| \notag \\
		& \ \ \ +  \bigg| \bigg\lbrace
			\ell_{(MBI)}^{-2} \Far^{\kappa \lambda}\dot{\Far}_{\kappa \lambda} 
			\big( \Far^{\nu \zeta}\dot{\Far}_{\ \zeta}^{\mu} - \Far^{\mu \zeta} \dot{\Far}_{\ \zeta}^{\nu} \big) \notag \\
		& \ \ \ \ \ \ \ + \big(1 + \Farinvariant_{(2)}^2 \ell_{(MBI)}^{-2}\big)
			\Fardual^{\kappa \lambda} \dot{\Far}_{\kappa \lambda} \big( 
			\Fardual^{\nu \zeta}\dot{\Far}_{\ \zeta}^{\mu} - \Fardual^{\mu \zeta}\dot{\Far}_{\ \zeta}^{\nu} \big) \notag \\
		& \ \ \ \ \ \ \ + \Farinvariant_{(2)} \ell_{(MBI)}^{-2} \Fardual^{\kappa \lambda}\dot{\Far}_{\kappa \lambda}
				\big( \Far^{\mu \zeta}\dot{\Far}_{\ \zeta}^{\nu} - \Far^{\nu \zeta}\dot{\Far}_{\ \zeta}^{\mu} \big) \notag \\
		& \ \ \ \ \ \ \ + \Farinvariant_{(2)} \ell_{(MBI)}^{-2} \Fardual^{\kappa \lambda}\dot{\Far}_{\kappa \lambda}
				\big( \Far^{\mu \zeta}\dot{\Far}_{\ \zeta}^{\nu} - \Far^{\nu \zeta}\dot{\Far}_{\ \zeta}^{\mu} \big) \notag \\
		& \ \ \ \ \ \ \ + \Farinvariant_{(2)} \ell_{(MBI)}^{-2} \Far^{\kappa \lambda}\dot{\Far}_{\kappa \lambda}
			\big( \Fardual^{\mu \zeta}\dot{\Far}_{\ \zeta}^{\nu} - \Fardual^{\nu \zeta}\dot{\Far}_{\ \zeta}^{\mu} \big) \bigg\rbrace 
			\nabla_{\mu} \overline{K}_{\nu} \bigg|. \notag
	\end{align}
	Inequality \eqref{E:divergenceofdotJLinfinitybound} now follows from \eqref{E:divergenceofJpointwisebound}, 
	Corollary \ref{C:HtriangleCommutator}, Lemma \ref{L:DerivativeofHtriangleerrorterms}, Lemma 
	\ref{L:CanonicalStressMorawetzDerivativeTerm}, and Corollary \ref{C:GlobalSobolev}. 
	
	As an example, we describe the estimate of the first term on the right-hand side of \eqref{E:divergenceofJpointwisebound} in 
	more detail. To estimate this term, we first note that by Corollary \ref{C:HtriangleCommutator}, we have that
	
	\begin{align} \label{E:algebraicCauchySchwarzHtriangleCommutatoragain}
		\sum_{|I| \leq N}
			\Big| \Big\lbrace H_{\triangle}^{\mu \zeta \kappa \lambda} & \nabla_{\mu} \Big(\Lie_{\mathcal{Z}}^I \Far_{\kappa 	
			\lambda}\Big) -	\Liemod_{\mathcal{Z}}^I \Big(H_{\triangle}^{\mu \zeta \kappa \lambda} \nabla_{\mu} \Far_{\kappa 
			\lambda} \Big) \Big\rbrace (\Lie_{\mathcal{Z}}^I \Far_{\nu \zeta}) \overline{K}^{\nu} \Big| \\
		& \lesssim \bigg\lbrace \sum_{|J| \leq \lfloor N/2 \rfloor} (1 + s)^2|\Lie_{\mathcal{Z}}^J 
			\Far|_{\mathcal{L}\mathcal{U}}^2 + (1 + s)^2 |\Lie_{\mathcal{Z}}^J \Far|_{\mathcal{T}\mathcal{T}}^2 
			+ |\Lie_{\mathcal{Z}}^J \Far|^2 \bigg\rbrace \notag \\
		& \ \ \times \bigg\lbrace \sum_{|I| \leq N} (1 + s)^2|\Lie_{\mathcal{Z}}^I \Far|_{\mathcal{L}\mathcal{U}}^2 
			+ (1 + s)^2 |\Lie_{\mathcal{Z}}^I \Far|_{\mathcal{T}\mathcal{T}}^2 
			+ (1 + |q|)^2|\Lie_{\mathcal{Z}}^I \Far|^2 \bigg\rbrace. \notag 
		\end{align}
	By Corollary \ref{C:GlobalSobolev}, we have that 
	$\bigg| \sum_{|J| \leq \lfloor N/2 \rfloor} (1 + s)^2|\Lie_{\mathcal{Z}}^J 
	\Far|_{\mathcal{L}\mathcal{U}}^2 + (1 + s)^2 |\Lie_{\mathcal{Z}}^J \Far|_{\mathcal{T}\mathcal{T}}^2 
	+ |\Lie_{\mathcal{Z}}^J \Far|^2 \bigg| \lesssim (1 + s)^{-2} 
	\mathcal{E}_{\lfloor N/2 \rfloor + 2}^2[\Far],$ while by definition, 
	$\bigg| \sum_{|I| \leq N} (1 + s)^2|\Lie_{\mathcal{Z}}^I \Far|_{\mathcal{L}\mathcal{U}}^2 
	+ (1 + s)^2 |\Lie_{\mathcal{Z}}^I \Far|_{\mathcal{T}\mathcal{T}}^2 
	+ (1 + |q|)^2|\Lie_{\mathcal{Z}}^I \Far|^2  \bigg| \lesssim \Knorm \Far \Knorm_{\Lie_{\mathcal{Z}};N}^2.$
	It thus follows that the left-hand side of \eqref{E:algebraicCauchySchwarzHtriangleCommutatoragain}
	is bounded by the right-hand side of \eqref{E:divergenceofdotJLinfinitybound}. All other terms can be estimated in a 
	similar fashion, and we omit the details. \hfill $\qed$

\section{Local Existence and the Continuation Principle} \label{S:IVP}
\setcounter{equation}{0}

In this section, we briefly discuss the initial value problem for the MBI system. With the exception of 
the availability of Proposition \ref{P:LocalExistenceCurrent}, the material presented here is very standard. For the purposes of our global existence theorem, which is proved in Section \ref{S:GlobalExistence}, the most important fact presented is the continuation principle: it shows that a-priori control over the norm $\Kintnorm \Far(t) \Kintnorm_{\Lie_{\mathcal{Z}};3}$ is sufficient to deduce global existence.
\\

\noindent \hrulefill
\ \\

\begin{proposition} \label{P:LocalExistence}
	Let $N \geq 3$ be an integer, and let the pair of one-forms $(\mathring{\Magneticinduction}, \mathring{\Displacement})$ 
	be initial data that are tangent to the Cauchy hypersurface $\Sigma_0,$ that satisfy the constraints 
	\eqref{E:Dconstraint} - \eqref{E:Bconstraint}, and that satisfy 
	$\| (\mathring{\Magneticinduction}, \mathring{\Displacement}) \|_{H_1^N} < \infty.$ Here, $H_1^N$ is the weighted Sobolev 
	norm defined in \eqref{E:HNdeltanorm}. Then these data launch a unique classical solution $\Far$ to the MBI system 
	existing on non-trivial a maximal spacetime slab of the form $[0,T_{max}) \times \mathbb{R}^3.$ The solution has the 
	following regularity properties:
	
	\begin{subequations}
	\begin{align}
		\Far \in C^{N-2}\big([0,T_{max}) \times \mathbb{R}^3\big), \\
		(\Magneticinduction,\Displacement) \in C^{N-2}\big([0,T_{max}) \times \mathbb{R}^3\big) 
		\cap \bigcap_{k=0}^{k=N-2} C^k\big([0,T_{max}),H_1^{N-k} \big).
	\end{align}
	\end{subequations}
	
	Furthermore, either $T_{max} = \infty,$ or one of the following two breakdown scenarios must occur:
	
	\begin{enumerate}
		\item $\lim_{t \uparrow T_{\max}} \Kintnorm \Far(t) \Kintnorm_{\Lie_{\mathcal{Z}};N} = \infty$
		\item There exists a sequence $(t_n,\ux_n)$ with $t_n < T_{max}$ such that
		$\lim_{n \to \infty} \ell_{(MBI)}[\Far(t_n,\ux_n)] = 0,$
	\end{enumerate}
	where $\ell_{(MBI)} \eqdef \ell_{(MBI)}[\Far] = \big(1 + \Farinvariant_{(1)}[\Far] - \Farinvariant_{(2)}^2[\Far]\big)^{1/2}$ 
	is the function of $\Far$ defined in \eqref{E:ldef}, and $\Kintnorm \Far(t) \Kintnorm_{\Lie_{\mathcal{Z}};3}$ is the norm 
	defined in \eqref{E:MorawetzWeightedLieDerivativeIntegralNormN}.
	
\end{proposition}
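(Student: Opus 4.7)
The plan is to prove local existence and the continuation criterion by a fairly standard iteration-plus-energy-estimate argument, with the key ingredient being the uniform positivity provided by Proposition \ref{P:LocalExistenceCurrent}. First, I would reformulate the MBI system as an evolution system for the pair $(\Magneticinduction,\Displacement)$ using \eqref{E:partialtBintermsofBandD}--\eqref{E:partialtDintermsofBandD}, and observe that the data $(\mathring{\Magneticinduction},\mathring{\Displacement})\in H_1^N$ live in a fixed compact subset $\mathfrak{K}$ of the hyperbolicity region $\mathscr{H}=\{\Far\mid \ell_{(MBI)}>0\}$ by the Sobolev embedding $H_1^N\hookrightarrow C_1^{N-2}\subset L^\infty$ (Lemma \ref{L:SobolevEmbeddingHNdeltaCNprimedeltamprime}, referenced in the proof of Lemma \ref{L:EBFarintegralnormequivalence}). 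I would then construct approximate solutions via a Picard-type iteration scheme in which at each step the background $\Far$ used to define the tensor $H^{\mu\nu\kappa\lambda}$ is frozen to be the previous iterate, so that each step is a linear equation of variation for which $\dot{\Far}$ solves \eqref{E:EOVBianchi}--\eqref{E:EOVMBI} with the appropriate source terms.

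The energy estimates for each iterate come from contracting the canonical stress $\Stress_{\ \nu}^{\mu}$ against the multiplier $\Vmult^\nu$ of Proposition \ref{P:LocalExistenceCurrent}: integrating $\nabla_\mu(\Stress_{\ \nu}^{\mu}\Vmult^\nu)$ over $[0,t]\times\mathbb{R}^3$ and using the divergence theorem, the bound $\Stress(\xi^{(0)},\Vmult)\geq C_{\mathfrak{K}}|\dot{\Far}|^2$, the divergence formula in Lemma \ref{L:divergenceofwidetildeT}, and the bound $|\nabla_\mu\Vmult^\nu|\lesssim_{\mathfrak{K}} 1+|\nabla\Far|$ yields a closed Gr\"onwall-type inequality for $\int_{\mathbb{R}^3}|\dot{\Far}|^2\,d^3\ux$ as long as the iterates remain in a slightly larger compact subset of $\mathscr{H}$. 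To capture the $H_1^N$ regularity, I would commute the equations with the spatial translation derivatives $\SigmafirstfundNabla_{(k)}$ for $0\leq k\leq N$, weighted by $(1+|\ux|^2)^{(1+k)/2}$; the commutators produce lower-order terms whose $L^2$ norms are controlled by the same weighted Sobolev norm. Combining this hierarchy of estimates and using the MBI equations to convert normal derivatives into tangential ones (as in the proof of Lemma \ref{L:EBFarintegralnormequivalence}) yields a uniform-in-iterate bound on the $C^0([0,T],H_1^N)$ norm of $(\Magneticinduction,\Displacement)$ for some $T>0$ depending only on $\|(\mathring{\Magneticinduction},\mathring{\Displacement})\|_{H_1^N}$ and on $\mathrm{dist}(\mathfrak{K},\partial\mathscr{H})$. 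A standard argument (differences of successive iterates satisfy an equation of variation with better positivity for small differences) yields convergence in a lower regularity norm, and interpolation upgrades this to the stated regularity; uniqueness follows from a similar energy estimate for the difference of two hypothetical solutions. Constraint preservation \eqref{E:Dconstraint}--\eqref{E:Bconstraint} for $t>0$ is automatic from $\udiv\circ\ucurcl=0$ applied to \eqref{E:partialtBintermsofBandD}--\eqref{E:partialtDintermsofBandD}.

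For the continuation principle, I would argue by contradiction: suppose $T_{\max}<\infty$ and both (1) and (2) fail, so there exist constants $M<\infty$ and $\delta>0$ with $\Kintnorm\Far(t)\Kintnorm_{\Lie_{\mathcal{Z}};N}\leq M$ and $\ell_{(MBI)}[\Far(t,\ux)]\geq \delta$ for all $(t,\ux)\in [0,T_{\max})\times\mathbb{R}^3$. By Corollary \ref{C:GlobalSobolev} and the equivalence \eqref{E:EBFarintegralnormequivalence} (or rather its time-$t$ analogue), the $H_1^N$ norm of $(\Magneticinduction,\Displacement)(t)$ is bounded uniformly in $t<T_{\max}$, and the values $\Far(t,\ux)$ stay in a fixed compact subset $\mathfrak{K}'\Subset\mathscr{H}$ uniformly in $t$. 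The local existence result, applied at times $t$ approaching $T_{\max}$, then produces a uniform lower bound on the time of further existence, contradicting the maximality of $T_{\max}$.

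The main obstacle I anticipate is the careful management of the weighted Sobolev norm $H_1^N$ in the energy estimate, because the natural energy produced by $\Stress(\xi^{(0)},\Vmult)$ is an unweighted $L^2$ energy in $\dot{\Far}$, whereas the norm $H_1^N$ requires polynomial weights in $|\ux|$ that grow with derivative order. The fix is to commute not with $\SigmafirstfundNabla$ alone but with carefully weighted spatial derivative operators (essentially the spatial components of $\mathcal{Z}$ restricted to $\Sigma_t$ suffice, using \eqref{E:ZcovariantderivativeslessthanrweightedcovarianttensorAlongSimga0}); these produce commutator terms of the schematic form $[\text{weight}\cdot\SigmafirstfundNabla^k,\,H^{\mu\nu\kappa\lambda}\nabla_\mu]\Far$ whose lower-order factors are controlled by Sobolev embedding in the compact set $\mathfrak{K}'$, keeping everything inside the hyperbolicity region throughout the iteration. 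A secondary technical point is ensuring constraint preservation under the iteration, but this follows from choosing the initial iterate to satisfy the constraints and noting that $\udiv$ of the evolution equation vanishes identically.
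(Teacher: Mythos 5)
Your plan follows essentially the same route the paper sketches: recast the system in the variables $(\Magneticinduction,\Displacement)$, iterate with the background frozen to the previous iterate so that each step is an instance of the equations of variation \eqref{E:EOVBianchi}--\eqref{E:EOVMBI}, and close the energy estimates using the pointwise positivity $\Stress(\xi^{(0)},\Vmult)\geq C_{\mathfrak{K}}|\dot{\Far}|^2$ of Proposition \ref{P:LocalExistenceCurrent} together with the lower-order divergence formula of Lemma \ref{L:divergenceofwidetildeT}. The one genuine difference is in how the polynomial weights of $H_1^N$ are captured. You propose commuting with $(1+|\ux|^2)^{(1+k)/2}\SigmafirstfundNabla_{(k)}$ (equivalently, spatial components of $\mathcal{Z}$ restricted to $\Sigma_t$). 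The paper instead builds the weight into the multiplier by working with $\dot{J}_{local;\Far}^{\mu}[\dot{\Far}] = -(1+s^2)\Stress_{\ \nu}^{\mu}\Vmult^{\nu}$ and commuting with the full set of $\mathcal{Z}$-Lie derivatives; this yields the two-sided comparison $\Kintnorm\dot{\Far}(t)\Kintnorm_{\Lie_{\mathcal{Z}};N}\lesssim\mathcal{E}_{local;N}[\dot{\Far}(t)]\lesssim(1+t)\Kintnorm\dot{\Far}(t)\Kintnorm_{\Lie_{\mathcal{Z}};N}$ of \eqref{E:KintnormweightedsT00comparision}, which is exactly what makes the continuation criterion in terms of $\Kintnorm\cdot\Kintnorm_{\Lie_{\mathcal{Z}};N}$ fall out. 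Your version buys a more elementary commutator analysis; the paper's version buys a cleaner interface with the norm appearing in the breakdown criterion and reuses the machinery of Lemma \ref{L:LiemodZLiemodMaxwellCommutator} and Proposition \ref{P:Inhomogeneousterms}.

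One soft spot in your continuation argument: you claim that boundedness of $\Kintnorm\Far(t)\Kintnorm_{\Lie_{\mathcal{Z}};N}$ yields a uniform bound on the $H_1^N$ norm of $(\Magneticinduction,\Displacement)(t)$, so that the local existence theorem can be re-invoked with data on $\Sigma_t$. But the equivalence \eqref{E:FarLieZintegralnormintermsofweightedrintegralnormofEBalongSigma0} uses $q=s=r$, which holds \emph{only} on $\Sigma_0$; for $t>0$ the norm $\Kintnorm\cdot\Kintnorm_{\Lie_{\mathcal{Z}};N}$ controls spatial derivatives with weights $(1+|q|)^n=(1+|r-t|)^n$ (cf. \eqref{E:spacetimenablaboundedbyqpowertimenablaZ}), which degenerate near the light cone $r\approx t$ and do not dominate the $(1+r^2)^{1+n}$ weights of $H_1^N$. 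So you cannot literally restart the $\Sigma_0$-based theorem at $\Sigma_t$. The standard repair — and what the paper's setup is designed for — is to continue the solution via the local energies $\mathcal{E}_{local;N}$ themselves: the differential inequality \eqref{E:FundamentalLocalExistenceDifferentialEnergyInequality} plus Gr\"onwall shows $\mathcal{E}_{local;N}$ stays finite up to $T_{\max}$ as long as (1) and (2) both fail (the failure of (2) keeping the solution in a compact $\mathfrak{K}'\Subset\mathscr{H}$ so the constants remain uniform), and the solution then extends past $T_{\max}$ in the function space tied to $\mathcal{E}_{local;N}$. With that adjustment your plan is sound.
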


\begin{remark}
	The classification of the two breakdown scenarios is known as a ``continuation principle.'' 
\end{remark}

Since Proposition \ref{P:LocalExistence} is rather standard, we don't provide a full proof, but instead refer to the reader to e.g. \cite[Ch. VI]{lH1997}, \cite{aM1984}, \cite{jSmS1998}, \cite{cS1995}, \cite{jS2008a}, and \cite[Ch. 16]{mT1997III} for the missing details concerning local existence, and e.g. \cite{jS2008b} for the ideas behind the continuation principle. The crucial point is the availability of Proposition \ref{P:LocalExistenceCurrent}, which can be used to derive $H_1^N$ estimates for solutions to the linearized MBI system, that is, the equations of variation. More specifically, in constructing the local solution of Proposition \ref{P:LocalExistence}, one typically uses an iteration argument or a contraction mapping argument. Both methods involve an analysis of solutions to the equations of variation\footnote{In the equations of variation, one can think of the background $\Far$ as the ``previous'' iterate, and $\dot{\Far}$ as the ``next'' one.}, and in particular, they require uniform estimates of their weighted Sobolev norms; these uniform estimates can be derived using energy currents and the ideas contained in the proof of Proposition \ref{P:Energydifferentialinequality}. In particular, suitable energy estimates for solutions to the equations of variation \eqref{E:EOVBianchi} - \eqref{E:EOVMBI} can be derived by using energy currents $\dot{J}_{local;\Far}^{\mu}[\dot{\Far}]$ defined by

\begin{align}
	\dot{J}_{local;\Far}^{\mu}[\dot{\Far}] \eqdef - (1 + s^2) \Stress_{\ \nu}^{\mu} \Vmult^{\nu},
\end{align}
where $\Stress_{\ \nu}^{\mu}$ is the canonical stress from \eqref{E:widetildeTHdef}, and $\Vmult^{\nu}$ is the vectorfield
defined in Proposition \ref{P:LocalExistenceCurrent}. Now by Proposition \ref{P:LocalExistenceCurrent}, if $\Far \in \mathfrak{K},$ where $\mathfrak{K}$ is a compact subset of the domain $\mathscr{H}$ of state space where the MBI theory is defined, then we have that

\begin{align}\label{E:KnormpointwiseboundedbyweidghtedsT00}
	\Knorm \dot{\Far} \Knorm^2 \leq (1 + s^2) |\dot{\Far}|^2 \leq C_{\mathfrak{K}} \dot{J}_{local;\Far}^0[\dot{\Far}].
\end{align}
By Remark \ref{R:BDHyperbolicity}, in terms of the state-space variables $(\Magneticinduction, \Displacement),$
this domain comprises the set finite values of $(\Magneticinduction, \Displacement).$ On the other hand, using the simple inequalities $(1 + q^2) |\dot{\Far}^2| \leq \Knorm \dot{\Far} \Knorm^2$ and $1 + s^2 \lesssim (1 + t^2)(1 + q^2),$ 
together with the fact that $|\Stress| \lesssim |\dot{\Far}|^2,$ we deduce that

\begin{align} \label{E:sweightedT00pointwisebounedbytweightedKnorm}
	\dot{J}_{local;\Far}^0[\dot{\Far}] \leq C^{-1}(1 + s^2) |\dot{\Far}|^2
	\leq C^{-1} (1 + t^2)(1 + q^2) |\dot{\Far}^2| \lesssim (1 + t^2) \Knorm \dot{\Far} \Knorm^2.
\end{align}

Consequently, if we define the energy $\mathcal{E}_{local;N}[\dot{\Far}(t)]$ by
\begin{align}
	\mathcal{E}_{local;N}^2[\dot{\Far}(t)] & \eqdef \sum_{|I| \leq N}
	\int_{\mathbb{R}^3} \dot{J}_{local;\Far}^0[\Lie_{\mathcal{Z}}^I \dot{\Far}] \, d^3 \ux,
\end{align}
then \eqref{E:KnormpointwiseboundedbyweidghtedsT00} and \eqref{E:sweightedT00pointwisebounedbytweightedKnorm} imply that

\begin{align} \label{E:KintnormweightedsT00comparision}
	\Kintnorm \dot{\Far}(t) \Kintnorm_{\Lie_{\mathcal{Z}};N} \lesssim \mathcal{E}_{local;N}[\dot{\Far}(t)] 
	\lesssim (1 + t) \Kintnorm \dot{\Far}(t) \Kintnorm_{\Lie_{\mathcal{Z}};N}.
\end{align}
We remark that the implicit constants in \eqref{E:KintnormweightedsT00comparision} depend on $\mathfrak{K}.$

We now illustrate the fundamental energy estimate that can be used to deduce the desired local existence result. We set $N=0$ for simplicity, and consider a solution $\dot{\Far}$ to the MBI equations of variation \eqref{E:EOVBianchi} - \eqref{E:EOVMBI}
with the initial data $(\mathring{\Magneticinduction}, \mathring{\Displacement}).$ Then using \eqref{E:Hstress}, \eqref{E:divergenceofwidetildeT}, \eqref{E:divergenceofTX}, \eqref{E:KnormpointwiseboundedbyweidghtedsT00}, \eqref{E:KintnormweightedsT00comparision}, the divergence theorem, and the Cauchy-Schwarz inequality for integrals, it follows 
(as in our proof of Proposition \ref{P:Energydifferentialinequality}) that

\begin{align} \label{E:FundamentalLocalExistenceDifferentialEnergyInequality}
	\frac{d}{dt}\Big(\mathcal{E}_{local}^2[\dot{\Far}(t)] \Big) 
	& \lesssim f(\mathfrak{K}; \| \Far(t) \|_{L^{\infty}}; \| \nabla \Far(t) \|_{L^{\infty}})
	\int_{\mathbb{R}^3} (1 + t^2 + |\ux|^2)|\dot{\Far}(t,\ux)|^2 
		+ (1 + t^2 + |\ux|^2)|\dot{\Far}|(|\mathfrak{J}(t,\ux)| + |\mathfrak{I}(t,\ux)|)\, d^3 \ux \\
	& \lesssim f(\mathfrak{K}; \| \Far(t) \|_{L^{\infty}}; \| \nabla \Far(t) \|_{L^{\infty}})
		\Big\lbrace \mathcal{E}_{local}^2[\dot{\Far}(t)] + \mathcal{E}_{local;N}[\dot{\Far}(t)] 
		\big\| |\mathfrak{J}(t,\ux)| + |\mathfrak{I}(t,\ux)| \big\|_{H_1^0} \Big\rbrace, \notag
\end{align}
where $H_1^0$ is the weighted Sobolev norm defined in \eqref{E:HNdeltanorm},
$\mathfrak{J}_{\lambda \mu \nu}, \mathfrak{I}^{\nu}$ are the inhomogeneous terms on the right-hand sides of
\eqref{E:EOVBianchi} - \eqref{E:EOVMBI}, and $f(\mathfrak{K};\cdot)$ can be chosen to be a positive, increasing, continuous function of its arguments. We remark similar inequalities can be deduced for $N \geq 1,$ and that the inhomogeneous terms 
would be measured using the $H_1^N$ norm. 

The availability of inequality \eqref{E:FundamentalLocalExistenceDifferentialEnergyInequality} for solutions to
the equations of variation is the fundamental reason that Proposition \ref{P:LocalExistence} holds. From
\eqref{E:FundamentalLocalExistenceDifferentialEnergyInequality}, Gronwall's inequality, and appropriate weighted Sobolev estimates for the inhomogeneous terms, it can be shown that $\mathcal{E}_{local}^2[\dot{\Far}(t)]$ can be uniformly bounded in terms of $\|(\mathring{\Magneticinduction}, \mathring{\Displacement}) \|_{H_1^N},$ if $t$ is sufficiently small. Similar inequalities can be deduced for the higher order energies $\mathcal{E}_{local;N}^2[\dot{\Far}(t)].$ As mentioned above, this is the main step in deducing local existence for the nonlinear equations; the remaining details can be found in the aforementioned references. 

There is one additional step in the proof of local existence that we will comment on, namely the issue of
showing that $\mathcal{E}_{local;N}^2[\dot{\Far}(0)]$ is uniformly bounded by $\|(\mathring{\Magneticinduction}, \mathring{\Displacement}) \|_{H_1^N}$ whenever $N \geq 3$ and $\dot{\Far}$ is a solution to the equations of variation. To accomplish this rather tedious step, one can first express the equations of variation in terms of $(\Electricfield, \Magneticinduction)$ and $(\dot{\Electricfield}, \dot{\Magneticinduction}),$ and inhomogeneous terms, where $(\Electricfield, \Magneticinduction)$ and $(\dot{\Electricfield}, \dot{\Magneticinduction})$ are the electromagnetic decompositions
of $\Far$ and $\dot{\Far}$ respectively. One would then use weighted Sobolev multiplication estimates, as in our proof of Lemma \ref{L:EBFarintegralnormequivalence}, to deduce that

\begin{align} \label{E:InherentNormFiniteImpliesWeightedFarNormFinite}
	\|(\mathring{\Magneticinduction}, \mathring{\Displacement}) \|_{H_1^N} < \infty	 
		\implies \Kintnorm \dot{\Far}(0) \Kintnorm_{\Lie_{\mathcal{Z}};N} \leq
		\widetilde{f}(\| (\mathring{\Magneticinduction}, \mathring{\Displacement}) \|_{H_1^N}),
\end{align}
where $\widetilde{f}$ can be chosen to be a positive, increasing, continuous function of its argument. By \eqref{E:KintnormweightedsT00comparision}, the desired uniform bound for $\mathcal{E}_{local;N}^2[\dot{\Far}(0)]$ then
follows from \eqref{E:InherentNormFiniteImpliesWeightedFarNormFinite}. To deduce \eqref{E:InherentNormFiniteImpliesWeightedFarNormFinite}, we have assumed that both $\Far$ and $\dot{\Far}$ have the same initial data, and that $\sum_{n = 0}^N (1 + r^2)^{n + 1} \Big(|\nabla_{(n)} \Electricfield|^2|_{\Sigma_0} + |\nabla_{(n)} \Magneticinduction|^2|_{\Sigma_0} \Big),$ the relevant Sobolev-Moser norm for $\Far$ during a proof of \eqref{E:InherentNormFiniteImpliesWeightedFarNormFinite},
can be bounded by a positive, increasing, continuous function of $\| (\mathring{\Magneticinduction}, \mathring{\Displacement}) \|_{H_1^N}.$ In practice, during an iteration scheme, this argument would need to be slightly modified; for technical reasons, typical iteration schemes involve a slightly different smoothing\footnote{The data are smoothed because for several reasons, one reason being that during the iteration process, it is convenient to work with classically differentiable functions, rather than distributions.}
of the initial data at each stage, so that the initial data change slightly from iterate to iterate. 

\hfill $\qed$

\section{Global Existence for the MBI System} \label{S:GlobalExistence}
\setcounter{equation}{0}

In this section, we provide a proof of our main theorem. The global existence aspect of our result will be an easy consequence of the energy inequality of Proposition \ref{P:Energydifferentialinequality} and the continuation principle of Proposition \ref{P:LocalExistence}, while the decay aspect will follow directly from Proposition \ref{P:GlobalSobolev}, which is the global Sobolev inequality.

\noindent \hrulefill
\ \\

\begin{theorem} \label{T:GlobalExistence}
	Let $N \geq 3$ be an integer, and let the pair of one-forms $(\mathring{\Magneticinduction}, \mathring{\Displacement})$ 
	be initial data that are tangent to the Cauchy hypersurface $\Sigma_0,$ and that satisfy the constraints 
	\eqref{E:Dconstraint} - \eqref{E:Bconstraint}. There exists an $\epsilon_0 > 0$ such that if 
	$\| (\mathring{\Magneticinduction}, \mathring{\Displacement}) \|_{H_1^N} \leq \epsilon_0,$ 
	then these data launch a unique classical solution $\Far$ to the Maxwell-Born-Infeld system 
	\eqref{E:modifieddFis0summary} - \eqref{E:HmodifieddMis0summary} existing on the spacetime slab $[0,\infty) \times \mathbb{R}^3.$ 
	Furthermore, there exists a constant $C_* > 0$ such that 
	\begin{align} \label{E:Faradaynormgloballysmall}
		\Kintnorm \Far(t) \Kintnorm_{\Lie_{\mathcal{Z}};N} \leq C_* \| (\mathring{\Magneticinduction}, \mathring{\Displacement}) 
		\|_{H_1^N}
	\end{align}
	holds for all $t \geq 0.$ Here, $\| \cdot \|_{H_1^N}$ 
	is the weighted Sobolev space defined in \eqref{E:HNdeltanorm}, while  $\Kintnorm \cdot \Kintnorm_{\Lie_{\mathcal{Z}};N}$ is 
	the weighted integral norm defined in \eqref{E:MorawetzWeightedLieDerivativeIntegralNormN}. In addition, the null components 
	$\ualpha, \alpha, \rho, \sigma$ of $\Far,$ which are defined in Section \ref{SS:NullComponents}, decay according to the rates 
	given by Proposition \ref{P:GlobalSobolev}.
\end{theorem}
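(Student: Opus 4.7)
The plan is to combine the local existence/continuation principle of Proposition \ref{P:LocalExistence}, the a priori integral inequality of Proposition \ref{P:Energydifferentialinequality}, and the data-norm equivalence of Lemma \ref{L:EBFarintegralnormequivalence}, via a standard bootstrap/continuity argument. First I would fix $N \geq 3$ and choose $\epsilon_0$ small enough that Lemma \ref{L:EBFarintegralnormequivalence} applies; this gives the initial-data equivalence $\Kintnorm \Far(0) \Kintnorm_{\Lie_{\mathcal{Z}};N} \leq C_0 \| (\mathring{\Magneticinduction}, \mathring{\Displacement}) \|_{H_1^N}$. Proposition \ref{P:LocalExistence} then produces a unique classical solution $\Far$ on a maximal slab $[0,T_{max}) \times \mathbb{R}^3$, with $(\Magneticinduction, \Displacement)$ in the appropriate spaces, and provides two candidate breakdown mechanisms at $T_{max}$: either $\Kintnorm \Far(t) \Kintnorm_{\Lie_{\mathcal{Z}};N} \to \infty$, or $\ell_{(MBI)}[\Far(t_n,\ux_n)] \to 0$ along some sequence.

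Next I would run a continuity/bootstrap argument. Let $\epsilon'$ be the smallness constant of Proposition \ref{P:Energydifferentialinequality}, and define
\[
\mathfrak{T} \eqdef \Big\{ T \in [0,T_{max}) \ : \ \sup_{t \in [0,T]} \Kintnorm \Far(t) \Kintnorm_{\Lie_{\mathcal{Z}};N} \leq \epsilon' \Big\}.
\]
By continuity of $t \mapsto \Kintnorm \Far(t) \Kintnorm_{\Lie_{\mathcal{Z}};N}$ (inherent in the regularity statement of Proposition \ref{P:LocalExistence}) and the smallness of the data, $\mathfrak{T}$ is a non-empty closed interval $[0,T_*]$. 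On $[0,T_*]$, Proposition \ref{P:Energydifferentialinequality} applies and yields
\[
\Kintnorm \Far(t) \Kintnorm_{\Lie_{\mathcal{Z}};N}^2 \leq C \Kintnorm \Far(0) \Kintnorm_{\Lie_{\mathcal{Z}};N}^2 + C \int_0^t \frac{1}{1+\tau^2} \Kintnorm \Far(\tau) \Kintnorm_{\Lie_{\mathcal{Z}};N}^2 \, d\tau.
\]
Gronwall's inequality, combined with the crucial fact that $\int_0^\infty (1+\tau^2)^{-1}\, d\tau = \pi/2 < \infty$, produces the \emph{time-independent} bound
\[
\Kintnorm \Far(t) \Kintnorm_{\Lie_{\mathcal{Z}};N}^2 \leq C\, e^{C\pi/2} \Kintnorm \Far(0) \Kintnorm_{\Lie_{\mathcal{Z}};N}^2 \leq C\, e^{C\pi/2} C_0^2\, \epsilon_0^2.
\]
Choosing $\epsilon_0$ small enough that the right-hand side is $\leq (\epsilon'/2)^2$ strictly improves the bootstrap assumption, so $T_*$ is also in the interior of $\mathfrak{T}$; by the continuity of $t \mapsto \Kintnorm \Far(t) \Kintnorm_{\Lie_{\mathcal{Z}};N}$, this forces $\mathfrak{T} = [0,T_{max})$ and establishes \eqref{E:Faradaynormgloballysmall} with $C_* \eqdef C_0\sqrt{C e^{C\pi/2}}$.

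Finally I would rule out both breakdown scenarios to conclude $T_{max} = \infty$. The uniform bound just obtained immediately excludes the first scenario. For the second, apply Corollary \ref{C:GlobalSobolev} with the now-uniform bound on $\Kintnorm \Far(t) \Kintnorm_{\Lie_{\mathcal{Z}};N}$ to get $|\Far(t,\ux)| \lesssim (1+s)^{-1}(1+|q|)^{-3/2}\,\epsilon_0$, which is globally small; consequently $\Farinvariant_{(1)}, \Farinvariant_{(2)}$ are small and $\ell_{(MBI)}[\Far] = (1 + \Farinvariant_{(1)} - \Farinvariant_{(2)}^2)^{1/2} \geq 1/2$ uniformly, precluding the second scenario as well. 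Hence $T_{max} = \infty$. The null-component decay claim is then a direct consequence of applying Proposition \ref{P:GlobalSobolev} to $\Far$ (which is a solution of the MBI system, so the improved estimates \eqref{E:GlobalSobolevalpha}--\eqref{E:GlobalSobolevalphaupgraded} for $\alpha$ are available) together with the uniform bound \eqref{E:Faradaynormgloballysmall}. The main (already-surmounted) obstacle is the derivation of Proposition \ref{P:Energydifferentialinequality} itself; given it, the only delicate bookkeeping here is ensuring that the single choice of $\epsilon_0$ simultaneously satisfies the smallness hypothesis of Lemma \ref{L:EBFarintegralnormequivalence}, the hypothesis of Proposition \ref{P:Energydifferentialinequality}, and the closing condition of the bootstrap.
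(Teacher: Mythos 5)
Your proposal is correct and follows essentially the same route as the paper: combine the local existence and continuation principle of Proposition \ref{P:LocalExistence}, the a priori integral inequality of Proposition \ref{P:Energydifferentialinequality}, Gronwall (exploiting $\int_0^\infty (1+\tau^2)^{-1} d\tau < \infty$), and the data-norm equivalence of Lemma \ref{L:EBFarintegralnormequivalence} in a bootstrap. The only cosmetic difference is in bookkeeping: the paper builds the preclusion of the $\ell_{(MBI)} \to 0$ breakdown directly into its choice of the threshold $\epsilon''$ (via Sobolev embedding) before running Gronwall, whereas you rule out that scenario after the fact via Corollary \ref{C:GlobalSobolev}; both are valid and amount to the same thing.
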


\begin{proof}
	We will show that if $\| (\mathring{\Magneticinduction}, \mathring{\Displacement}) 
	\|_{H_1^N}$ is sufficiently small, then neither of the the two breakdown scenarios from Proposition 
	\ref{P:LocalExistence} occur. To this end, let $\epsilon' > 0$ be the small constant from the conclusion of Proposition 
	\ref{P:Energydifferentialinequality}. Choose a positive constant $\epsilon''$ such that $0 < \epsilon'' < \epsilon'$ and such 
	that $\Kintnorm \Far(t) \Kintnorm_{\Lie_{\mathcal{Z}};N} \leq \epsilon'' \implies 
	\inf_{\ux \in \mathbb{R}^3} \ell_{(MBI)}[\Far(t,\ux)] \geq 1/2,$ where $\ell_{(MBI)}$ is defined in \eqref{E:ldef};
	this is possible by Sobolev embedding. Define
	
	\begin{align}
		T_{max} = \sup \lbrace t \geq 0 \mid \mbox{The solution exists on} \ [0,t] \ \mbox{and} 
		\ \Kintnorm \Far(t) \Kintnorm_{\Lie_{\mathcal{Z}};N} \leq \epsilon'' \rbrace. 
	\end{align}
	By the local existence aspect of Proposition \ref{P:LocalExistence}, 
	we have that $T_{max} > 0$ if 
	$\| (\mathring{\Magneticinduction}, \mathring{\Displacement}) \|_{H_1^N}$ is sufficiently small. Applying Proposition 
	\ref{P:Energydifferentialinequality}, we conclude that the following inequality holds on $[0,T_{max}):$
	
	\begin{align} \label{E:firstnorminequality}
		\Kintnorm \Far(t) \Kintnorm_{\Lie_{\mathcal{Z}};N}^2 \leq 
			C \Big\lbrace \Kintnorm \Far(0) \Kintnorm_{\Lie_{\mathcal{Z}};N}^2 
			+ \int_{\tau = 0}^t \frac{1}{1 + \tau^2} \Kintnorm \Far(\tau) \Kintnorm_{\Lie_{\mathcal{Z}};N}^2 \, d \tau \Big\rbrace.
	\end{align}
	Applying Gronwall's inequality to \eqref{E:firstnorminequality}, and using Lemma \ref{L:EBFarintegralnormequivalence},
	we conclude that the following inequality holds on $[0,T_{max}):$
	
	\begin{align} \label{E:secondnorminequality}
		\Kintnorm \Far(t) \Kintnorm_{\Lie_{\mathcal{Z}};N}^2 \leq C \Kintnorm \Far(0) \Kintnorm_{\Lie_{\mathcal{Z}};N}^2
		\mbox{exp}\Big(\int_{\tau = 0}^{\infty} \frac{C}{1 + \tau^2} \Big) \leq C_*^2 
		\| (\mathring{\Magneticinduction}, \mathring{\Displacement}) \|_{H_1^N}^2.
	\end{align}
	
	Now if $C_* \| (\mathring{\Magneticinduction}, \mathring{\Displacement}) \|_{H_1^N} < \epsilon'',$ then 
	the continuation principle of Proposition \ref{P:LocalExistence} and inequality \eqref{E:secondnorminequality}
	together imply that $T_{max} = \infty.$ Furthermore, inequality \eqref{E:Faradaynormgloballysmall} is a direct 
	consequence of \eqref{E:secondnorminequality}.

\end{proof}

\section*{Acknowledgments}
I was supported by the Commission of the European Communities, ERC Grant Agreement No 208007. I was also funded in parts by NSF through grants DMS-0406951 and DMS-0807705. I would like to thank Mihalis Dafermos, Sergiu Klainerman, and Willie Wong for providing helpful suggestions. I offer special thanks to Michael Kiessling for introducing me to the MBI model and for explaining its intriguing features, and to Shadi Tahvildar-Zadeh for helping me delve into the ideas and methods of \cite{dC2000}.

\setcounter{section}{0}
\setcounter{subsection}{0}
\setcounter{subsubsection}{0}
\setcounter{equation}{0}
\setcounter{proposition}{0}

\renewcommand{\thesection}{\Alph{section}}
\renewcommand{\theequation}{\Alph{section}.\arabic{equation}}
\renewcommand{\theproposition}{\Alph{section}-\arabic{proposition}}
\renewcommand{\thecorollary}{\Alph{section}-\arabic{corollary}}
\renewcommand{\thedefinition}{\Alph{section}-\arabic{definition}}
\renewcommand{\thetheorem}{\Alph{section}-\arabic{theorem}}
\renewcommand{\theremark}{\Alph{section}-\arabic{remark}}
\renewcommand{\thelemma}{\Alph{section}-\arabic{lemma}}

\section{Appendix}

In this Appendix, we state the lemmas and corollaries that are used in Section \ref{SS:DataNorms} 
to relate the smallness condition on the inherent data $(\mathring{\Magneticinduction}, \mathring{\Displacement})$
to a smallness condition on $\Kintnorm \Far(0) \Kintnorm_{\Lie_{\mathcal{Z}};N}.$ The lemmas were essentially proved
as Lemmas 2.4 and 2.5 of \cite{yCBdC1981}, while the corollaries are easy (and non-optimal) consequences of the lemmas; we leave their proofs as exercises for the reader. Throughout the appendix, we abbreviate 
$C_{\delta}^{N} \eqdef C_{\delta}^{N}(\mathbb{R}^3),$ $H_{\delta}^{N} \eqdef H_{\delta}^{N}(\mathbb{R}^3),$ etc.
Furthermore, $\SigmafirstfundNabla$ denotes the Levi-Civita connection corresponding to the standard Euclidean metric $\Sigmafirstfund$ on $\mathbb{R}^3,$ and we equip $\mathbb{R}^3$ with standard rectangular coordinates $\ux.$

\begin{lemma} \label{L:SobolevEmbeddingHNdeltaCNprimedeltamprime} \cite[Lemma 2.4]{yCBdC1981}
	Let $N, N' \geq 0$ be integers, and let $\delta, \delta'$ be real numbers subject to the constraints
	$N' < N - 3/2$ and $\delta' < \delta + 3/2.$ Assume that $v \in H_{\delta}^N.$ Then $v \in C_{\delta'}^{N'},$ and
	\begin{align} \label{E:SobolevEmbeddingHNdeltaCNprimedeltamprime}
		\| v \|_{C_{\delta'}^{N'}} \leq C \| v \|_{H_{\delta}^N}.
	\end{align}
\end{lemma}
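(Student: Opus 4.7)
The plan is to prove this standard weighted Sobolev embedding via a dyadic decomposition of $\mathbb{R}^3$ combined with scaling, reducing the weighted estimate to the unweighted Sobolev embedding $H^N(\Omega) \hookrightarrow C^{N'}(\Omega)$ on a fixed reference annulus $\Omega$, which is available whenever $N' < N - 3/2$.

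First I would decompose $\mathbb{R}^3$ into the unit ball $B_0 \eqdef \lbrace |\ux| \leq 2 \rbrace$ and the dyadic annuli $A_k \eqdef \lbrace 2^{k-1} \leq |\ux| \leq 2^{k+1} \rbrace$ for $k \geq 1$. On $B_0$, the weight $(1 + |\ux|^2)$ is bounded above and below by absolute constants, so the estimate reduces immediately to the standard (unweighted) Sobolev embedding on a bounded domain. The main work is therefore on the exterior annuli.

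On each $A_k$ with $k \geq 1$, I would introduce the rescaled function $\widetilde{v}(\widetilde{\ux}) \eqdef v(2^k \widetilde{\ux})$, so that $\widetilde{\ux}$ ranges over the fixed reference annulus $\Omega \eqdef \lbrace 1/2 \leq |\widetilde{\ux}| \leq 2 \rbrace$ and $|\SigmafirstfundNabla_{(n)} \widetilde{v}(\widetilde{\ux})| = 2^{kn} |\SigmafirstfundNabla_{(n)} v(2^k \widetilde{\ux})|$. The standard Sobolev embedding $H^N(\Omega) \hookrightarrow C^{N'}(\Omega)$ applied to $\widetilde{v}$, together with the change of variables $\ux = 2^k \widetilde{\ux}$ (which contributes a Jacobian factor of $2^{-3k}$ to the $L^2$ side), yields, for each $n \leq N'$,
\begin{align*}
	\sup_{\ux \in A_k} |\SigmafirstfundNabla_{(n)} v(\ux)|^2
	\leq C \, 2^{-2kn} \, 2^{-3k} \sum_{m=0}^{N} 2^{2km} \int_{A_k} |\SigmafirstfundNabla_{(m)} v(\ux)|^2 \, d^3 \ux.
\end{align*}
On $A_k$ we have $(1 + |\ux|^2) \approx 2^{2k}$, so multiplying both sides by $(1 + |\ux|^2)^{\delta' + n} \approx 2^{2k(\delta' + n)}$ and absorbing the weight $(1 + |\ux|^2)^{\delta + m} \approx 2^{2k(\delta + m)}$ into the integrand on the right yields
\begin{align*}
	\sup_{\ux \in A_k} (1 + |\ux|^2)^{\delta' + n} |\SigmafirstfundNabla_{(n)} v(\ux)|^2
	\leq C \, 2^{2k(\delta' - \delta - 3/2)} \, \| v \|_{H_{\delta}^{N}}^2.
\end{align*}

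The crucial point is that the hypothesis $\delta' < \delta + 3/2$ makes the exponent $2(\delta' - \delta - 3/2)$ strictly negative, so the factor $2^{2k(\delta' - \delta - 3/2)}$ is uniformly bounded (in fact summable) over $k \geq 1$. Taking the supremum over $k$ and summing over $n \leq N'$, then combining with the bound on $B_0$, gives $\| v \|_{C_{\delta'}^{N'}} \leq C \| v \|_{H_{\delta}^{N}}$ as desired. The only real obstacle is careful bookkeeping of the interplay between the scaling exponents and the polynomial weights; once the exponent count works out, the negativity of $\delta' - \delta - 3/2$ does all the uniformization for free. Density of smooth compactly supported functions in $H_\delta^N$ justifies extending the estimate from smooth $v$ to general $v \in H_\delta^N$.
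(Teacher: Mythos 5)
Your argument is correct, and the exponent bookkeeping checks out: the rescaled Sobolev inequality on the fixed reference annulus produces the factor $2^{2k(\delta'-\delta-3/2)}$ exactly as you claim, and the hypothesis $\delta' < \delta + 3/2$ renders it uniformly bounded in $k$, while $N' < N - 3/2$ is precisely what the unweighted embedding $H^N(\Omega) \hookrightarrow C^{N'}(\Omega)$ requires in three dimensions. For comparison, the paper offers no proof of this lemma at all --- it cites Lemma 2.4 of \cite{yCBdC1981} and explicitly leaves the argument as an exercise --- and your dyadic-decomposition-plus-scaling proof is the standard route taken in that reference and throughout the weighted Sobolev literature, so there is nothing to reconcile.
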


\hfill $\qed$

\begin{lemma} \label{L:WeightedSobolevSpaceMultiplicationProperties} \cite[slight extension of Lemma 2.5]{yCBdC1981}
	Let $N_1, \cdots, N_p \geq 0$ be integers, and let $\delta_1, \cdots, \delta_p$ be real numbers.
	Suppose that $v_i \in H_{\delta_i}^{N_i}$ for $i = 1, \cdots, p.$ Assume that the integer $N$ satisfies
	$0 \leq N \leq \min \lbrace N_1, \cdots, N_p \rbrace$ and ${N \leq \sum_{i=1}^p N_i - (p-1)3/2},$
	and that $\delta < \sum_{i=1}^p \delta_i + (p-1)3/2.$ Then
	
	\begin{align}
		\prod_{i=1}^p v_i \in H_{\delta}^N,
	\end{align}
	and the multiplication map
	
	\begin{align}
		H_{\delta_1}^{N_1} \times \cdots \times H_{\delta_p}^{N_p} & \rightarrow H_{\delta}^N, &&
		(v_1, \cdots, v_p) \rightarrow \prod_{i=1}^p v_i
	\end{align}
	is continuous.

\end{lemma}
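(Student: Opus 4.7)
The plan is to reduce the general statement by induction on $p$ to the bilinear case $p=2$, and then establish the bilinear case by Leibniz expansion, careful splitting of the weights, and the Sobolev embedding of Lemma \ref{L:SobolevEmbeddingHNdeltaCNprimedeltamprime}. For the induction step, once the bilinear result is known, I would write $\prod_{i=1}^{p} v_i = v_1 \cdot \prod_{i=2}^{p} v_i$ and check that the indices $(\widetilde{N}, \widetilde{\delta}) \eqdef (\min_{i\ge 2}N_i,\sum_{i\ge 2}\delta_i + (p-2)3/2 - \varepsilon)$ produced by the induction hypothesis for $\prod_{i\ge 2} v_i$, when combined with $v_1\in H_{\delta_1}^{N_1}$, still satisfy the hypotheses of the bilinear case with room to spare in the strict inequality on $\delta$; here the freedom to decrease $\widetilde{\delta}$ slightly provides the necessary slack.

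For the bilinear case, I would first expand via the Leibniz rule so that $\|v_1 v_2\|_{H_\delta^N}^2$ is controlled by a sum over $n_1+n_2\le N$ of integrals $\int_{\mathbb{R}^3}(1+|\ux|^2)^{\delta + n_1 + n_2}|\SigmafirstfundNabla_{(n_1)}v_1|^2|\SigmafirstfundNabla_{(n_2)}v_2|^2\,d^3\ux$. Next I would split the weight as
\[
(1+|\ux|^2)^{\delta+n_1+n_2} = (1+|\ux|^2)^{\delta_1+n_1+\theta_1}(1+|\ux|^2)^{\delta_2+n_2+\theta_2},
\]
with $\theta_1+\theta_2 = \delta-\delta_1-\delta_2 < 3/2$ by hypothesis, and then observe (since $N$, $N_1$, $N_2$ are integers and $N\le N_1+N_2-3/2$ forces $N\le N_1+N_2-2$) that every Leibniz pair satisfies $n_1\le N_1-2$ or $n_2\le N_2-2$, except possibly the single borderline pair $n_1=N_1-1$, $n_2=N_2-1$. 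In the typical (non-borderline) case, say $n_2\le N_2-2$, I would apply Lemma \ref{L:SobolevEmbeddingHNdeltaCNprimedeltamprime} with $N'=0$ and $N-N'=N_2-n_2\ge 2>3/2$, choosing $\theta_2\in(0,3/2)$, which gives the pointwise bound $(1+|\ux|^2)^{\delta_2+n_2+\theta_2}|\SigmafirstfundNabla_{(n_2)}v_2|^2 \le C\|v_2\|_{H_{\delta_2}^{N_2}}^2$; the remaining integral is $\le \|v_1\|_{H_{\delta_1}^{N_1}}^2$ by choosing $\theta_1\le 0$ so that the weight is dominated by $(1+|\ux|^2)^{\delta_1+n_1}$.

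The borderline pair $(n_1,n_2)=(N_1-1,N_2-1)$ will be the main obstacle, since neither factor has enough spare regularity to be embedded into a weighted $L^\infty$ via Lemma \ref{L:SobolevEmbeddingHNdeltaCNprimedeltamprime} alone (each has $N_i-n_i=1<3/2$). To handle it I would instead use the Sobolev embedding $H^1(\mathbb{R}^3)\hookrightarrow L^6(\mathbb{R}^3)$ together with H\"older's inequality: writing the weight as above with $\theta_1+\theta_2<3/2$, split each factor's weight $(1+|\ux|^2)^{\delta_i+n_i+\theta_i}$ as a product of two factors $(1+|\ux|^2)^{(\delta_i+n_i+\theta_i)/2}$ absorbed into each of the two appearances of $|\SigmafirstfundNabla_{(n_i)}v_i|$, and apply H\"older with exponents $(3,3,3,3)$ (so as to end up with two $L^6$ norms of $(1+|\ux|^2)^{(\delta_i+n_i+\theta_i)/2}\SigmafirstfundNabla_{(n_i)}v_i$ and their analogues). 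Controlling these weighted $L^6$ norms via the Sobolev embedding and a direct comparison with the $H_{\delta_i}^{N_i}$ norm (which bounds $\SigmafirstfundNabla_{(n_i+1)}v_i$ with weight $(1+|\ux|^2)^{\delta_i+n_i+1}$) then closes the estimate, the strict inequality $\theta_1+\theta_2<3/2$ being exactly what provides room to distribute weights so that $(1+|\ux|^2)^{\cdot/2}$ falls into the scaling required by the weighted Sobolev embedding on $\mathbb{R}^3$.

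Finally, the continuity of the multiplication map follows from the multilinearity of the map combined with the a priori estimate $\bigl\|\prod_{i=1}^p v_i\bigr\|_{H_\delta^N} \le C\prod_{i=1}^p \|v_i\|_{H_{\delta_i}^{N_i}}$ obtained by the above argument, by a standard telescoping decomposition $\prod v_i - \prod \tilde v_i = \sum_j (v_j-\tilde v_j)\prod_{i<j}\tilde v_i \prod_{i>j} v_i$ to which the same bound applies.
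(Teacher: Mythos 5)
First, note that the paper itself does not prove this lemma: it is quoted from \cite{yCBdC1981} and immediately followed by $\qed$, so you are supplying an argument the paper omits. Your bilinear case is essentially the standard one (Leibniz expansion, weight splitting, weighted $L^\infty\times L^2$ for the non-borderline pairs, an $L^p$--$L^q$ H\"older pairing for the borderline pair), but your reduction of the general case to $p=2$ by induction contains a genuine gap. The bilinear lemma is lossy: the intermediate product $\prod_{i\ge 2}v_i$ can only be recorded as an element of $H^{\widetilde N}_{\widetilde\delta}$ with $\widetilde N$ an \emph{integer}, and this rounding destroys exactly the fractional surplus of regularity that the hypothesis $N\le\sum_{i}N_i-(p-1)3/2$ is designed to exploit. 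Concretely, take $p=3$, $N_1=N_2=N_3=1$, $\delta_i=0$, $N=0$: the hypotheses hold ($0\le 3-3$), and the conclusion ($v_1v_2v_3\in L^2_{\delta}$ for $\delta<3$) is true and sharp, following from the trilinear H\"older estimate with exponents $(6,6,6)$ and $H^1(\mathbb{R}^3)\hookrightarrow L^6$. But your scheme first yields only $v_2v_3\in H^{0}_{\widetilde\delta}$ (the bilinear case forces $\widetilde N\le 1+1-3/2$, hence $\widetilde N=0$), and the second bilinear application to $H^1_{0}\times H^0_{\widetilde\delta}$ requires $N\le 1+0-3/2<0$, so it gives nothing; no regrouping helps, by symmetry. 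The product $v_2v_3$ is in fact in $L^3$, but that information is not expressible on the integer scale $H^{N}_{\delta}$. The correct proof treats all $p$ factors simultaneously: after the Leibniz expansion one applies a single multilinear H\"older inequality with exponents $p_i$ satisfying $\sum_i 1/p_i=1/2$, chosen according to each factor's spare regularity $N_i-n_i$, together with the weighted embeddings $H^{k}_{\gamma}\hookrightarrow L^{p}_{\gamma'}$ for $\gamma'<\gamma+3/2-3/p$, whose weight gains sum to the $(p-1)3/2$ appearing in the hypotheses.

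Two smaller points on the bilinear case. The H\"older exponents $(3,3,3,3)$ are not conjugate ($\sum 1/p_i=4/3\ne 1$); for the borderline pair $(n_1,n_2)=(N_1-1,N_2-1)$ you want $(4,4,4,4)$, i.e.\ $\|FG\|_{L^2}\le\|F\|_{L^4}\|G\|_{L^4}$ with $H^1(\mathbb{R}^3)\hookrightarrow L^4$. Moreover, to reach the full weight gain of (almost) $3/2$ in that pair you must use the \emph{weighted} $L^4$ embedding, which gains $3/2-3/4$ of weight per factor (obtained, e.g., by interpolating the weighted $L^2$ and weighted $L^6$ bounds carrying different weights); merely absorbing the weight into the function and invoking the flat Sobolev embedding, as your parenthetical appeal to the bound on $\SigmafirstfundNabla_{(n_i+1)}v_i$ suggests, caps the total gain strictly below $3/2$ and misses part of the admissible range $\delta<\delta_1+\delta_2+3/2$.
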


\hfill $\qed$

\begin{corollary} \label{C:WeightedSobolevSpaceMultiplicationProperties}
	Let $N \geq 2$ be an integer, and let $\delta \geq 0.$ Assume that 
	$v_i \in H_{\delta}^N$ for $i = 1, \cdots,p,$
	and that $m_i \geq 0$ are integers
	satisfying $\sum_{i=1}^p m_i \leq N.$ Then
	
	\begin{align}
		(1 + |\ux|^2)^{(\delta + \sum_{i=1}^p m_i)/2} \prod_{j=1}^p \SigmafirstfundNabla_{(m_j)} v_j \in L^2
	\end{align} 
	and
	
	\begin{align}
		\big \| (1 + |\ux|^2)^{(\delta + \sum_{i=1}^p m_i)/2} \prod_{j=1}^p \SigmafirstfundNabla_{(m_j)} v_j \big \|_{L^2} 
		\lesssim \prod_{j=1}^p \|v_j \|_{H_{\delta}^N}.
	\end{align}
	
\end{corollary}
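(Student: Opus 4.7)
The plan is to deduce Corollary A-1 as a direct consequence of Lemma A-2 applied at target regularity zero, once each differentiated factor $\SigmafirstfundNabla_{(m_j)} v_j$ is placed in the appropriate weighted Sobolev space. First I would observe that the definition of the $H_{\delta}^{N}$ norm in \eqref{E:HNdeltanorm} immediately yields the inclusion
\begin{align*}
	\SigmafirstfundNabla_{(m_j)} v_j \in H_{\delta + m_j}^{N - m_j}
	\qquad \mbox{with} \qquad
	\| \SigmafirstfundNabla_{(m_j)} v_j \|_{H_{\delta + m_j}^{N - m_j}} \leq \| v_j \|_{H_\delta^N},
\end{align*}
whenever $0 \leq m_j \leq N$. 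The hypothesis $\sum_{i=1}^p m_i \leq N$ together with $m_j \leq \sum_i m_i$ ensures that each $m_j \leq N$, so the above inclusion is available for every $j$. The $p = 1$ case is then the trivial observation that one term of the $H_\delta^N$-norm sum already controls the desired quantity, and the remaining task is to handle $p \geq 2$.

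Next I would apply Lemma A-2 with the following choices of parameters: target regularity $N_{\mathrm{target}} = 0$, target weight $\delta_{\mathrm{target}} = \delta + \sum_{i=1}^p m_i$, and factor parameters $N_i = N - m_i$, $\delta_i = \delta + m_i$. Since $H_{\delta_{\mathrm{target}}}^{0}$ is by definition the weighted $L^2$ space with weight $(1 + |\ux|^2)^{\delta_{\mathrm{target}}/2}$, the conclusion of the lemma is exactly the stated bound, once the continuity of the multilinear multiplication map is unpacked.

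The key step is therefore to verify the three hypotheses of Lemma A-2 under the standing assumptions $N \geq 2$, $\delta \geq 0$, $m_i \geq 0$, $\sum m_i \leq N$, and $p \geq 2$. The inequality $N_{\mathrm{target}} \leq \min_i N_i$ follows from $m_i \leq N$. The inequality $N_{\mathrm{target}} \leq \sum_{i=1}^p N_i - (p-1)\,3/2$ becomes
\begin{align*}
	0 \leq p N - \sum_{i=1}^p m_i - (p-1)\,\tfrac{3}{2} \geq (p-1)\big(N - \tfrac{3}{2}\big),
\end{align*}
which is nonnegative since $N \geq 2$ and $p \geq 2$. The strict weight inequality $\delta_{\mathrm{target}} < \sum_{i=1}^p \delta_i + (p-1)\,3/2$ reduces to $0 < (p-1)(\delta + 3/2)$, which holds because $p \geq 2$ and $\delta \geq 0$.

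The argument is essentially bookkeeping, so I do not anticipate a serious obstacle; the only subtle point is checking the combinatorial inequalities on $(N, p, \delta, m_i)$, and in particular confirming that the assumption $N \geq 2$ (rather than a weaker one) is exactly what is needed to make the regularity condition $\sum_{i=1}^p N_i \geq (p-1)\,3/2$ hold uniformly in the number of factors $p \geq 2$ and in the distribution of the $m_i$. Once these inequalities are in hand, the corollary follows by combining the initial observation with the conclusion of Lemma A-2, with implicit constant independent of the $v_i$.
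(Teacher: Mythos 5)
Your proof is correct and follows exactly the route the paper intends: the corollary is stated as an ``easy consequence'' of Lemma \ref{L:WeightedSobolevSpaceMultiplicationProperties}, and you derive it by shifting each factor into $H_{\delta+m_j}^{N-m_j}$ and applying that lemma with target regularity $0$ and target weight $\delta+\sum_i m_i$, with all three hypotheses verified correctly (in particular, $N\geq 2 > 3/2$ and $\delta \geq 0$ make the regularity and weight inequalities hold for $p \geq 2$). You were also right to treat $p=1$ separately, since the strict weight inequality of the lemma fails there and the claim is instead immediate from the definition of the $H_{\delta}^{N}$ norm.
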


\hfill $\qed$

\begin{corollary} \label{C:CompositionProductHNdelta}
	Let $N \geq 2$ be an integer, and let $\delta \geq 0.$ Assume that $\mathfrak{K}$ is a compact set, that 
	$\mathfrak{F} \in C^N(\mathfrak{K})$ is a function, and that $v_1$ is a function satisfying
	$v_1(\mathbb{R}^3) \subset \mathfrak{K}.$ Assume further that $v_1, v_2 \in H_{\delta}^N.$ 
	Then $(F \circ v_1) v_2 \in H_{\delta}^N,$ and
	
	\begin{align} 
		\| (\mathfrak{F} \circ v_1) v_2 \|_{H_{\delta}^N} 
			& \leq  C(N) \Big\lbrace \| v_2 \|_{H_{\delta}^{N}} \sum_{j=0}^N |\mathfrak{F}^{(j)}|_{\mathfrak{K}} 
    		\| v_1 \|_{H_{\delta}^N}^j \Big\rbrace.
	\end{align}
	In the above inequality, $\mathfrak{F}^{(j)}$ denotes the array of all $j^{th}$ order partial derivatives of 
	$\mathfrak{F}$ with respect to its arguments, and $|\mathfrak{F}^{(j)}|_{\mathfrak{K}} \eqdef \sup_{v \in \mathfrak{K}} 
	|\mathfrak{F}^{(j)}(v)|.$
\end{corollary}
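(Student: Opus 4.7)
The plan is to estimate $\|(\mathfrak{F}\circ v_1)v_2\|_{H_\delta^N}^2 = \sum_{n=0}^N \int_{\mathbb{R}^3}(1+|\ux|^2)^{\delta+n}|\SigmafirstfundNabla_{(n)}[(\mathfrak{F}\circ v_1)v_2]|^2\,d^3\ux$ by brute expansion. First I will apply the Leibniz rule to write $\SigmafirstfundNabla_{(n)}[(\mathfrak{F}\circ v_1)v_2]$ as a sum over $k+m=n$ of $\binom{n}{k}\SigmafirstfundNabla_{(k)}(\mathfrak{F}\circ v_1)\otimes\SigmafirstfundNabla_{(m)}v_2$. Then I will use the Fa\`a di Bruno formula to expand each $\SigmafirstfundNabla_{(k)}(\mathfrak{F}\circ v_1)$ as a finite sum, with universal combinatorial coefficients, of terms of the form $\mathfrak{F}^{(j)}(v_1)\prod_{i=1}^{j}\SigmafirstfundNabla_{(k_i)}v_1$ where $0\leq j\leq k$, $k_i\geq 1$, and $k_1+\cdots+k_j=k$ (the case $k=0$ giving just $\mathfrak{F}(v_1)$). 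Putting these together, a typical pointwise contribution to $\SigmafirstfundNabla_{(n)}[(\mathfrak{F}\circ v_1)v_2]$ is controlled by $|\mathfrak{F}^{(j)}(v_1)|\,\prod_{i=1}^{j}|\SigmafirstfundNabla_{(k_i)}v_1|\,|\SigmafirstfundNabla_{(m)}v_2|$ with $k_1+\cdots+k_j+m=n\leq N$.

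Next I will exploit the hypothesis $v_1(\mathbb{R}^3)\subset\mathfrak{K}$ together with $\mathfrak{F}\in C^N(\mathfrak{K})$ to replace $|\mathfrak{F}^{(j)}(v_1(\ux))|$ by the uniform constant $|\mathfrak{F}^{(j)}|_{\mathfrak{K}}$. It then remains to bound, in the weighted $L^2$ space with weight $(1+|\ux|^2)^{(\delta+n)/2}$, a product of $j+1$ factors of derivatives of $v_1$ and $v_2$ whose orders sum to $n$. This is precisely the setting of Lemma A\text{-}2 (\texttt{WeightedSobolevSpaceMultiplicationProperties}): I will invoke it with $p=j+1$ factors taken in $H_{\delta+k_i}^{N-k_i}$ and $H_{\delta+m}^{N-m}$ respectively (the inclusion $\SigmafirstfundNabla_{(k)}w\in H_{\delta+k}^{N-k}$ whenever $w\in H_\delta^N$ being immediate from the definition of the weighted norm), and with target Sobolev index $N'=0$ and target weight $\delta'=\delta+n$. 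One checks that $\delta+n<(j+1)\delta+n+\tfrac{3}{2}j$ for $j\geq 1$ because $\delta\geq 0$, and that $N'=0\leq\min_i(N-k_i,N-m)$ and $N'\leq (j+1)N-n-\tfrac{3}{2}j$ hold because $N\geq 2$ and $n\leq N$; this yields the product bound $|\mathfrak{F}^{(j)}|_{\mathfrak{K}}\,\|v_1\|_{H_\delta^N}^{j}\,\|v_2\|_{H_\delta^N}$ for every such term. The degenerate case $j=0$ (which forces $k=0$ and $m=n$) needs no multiplication lemma: the bound $|\mathfrak{F}(v_1)|\leq|\mathfrak{F}|_{\mathfrak{K}}$ suffices, giving $|\mathfrak{F}|_{\mathfrak{K}}\|v_2\|_{H_\delta^N}$.

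Finally I will take the $L^2$ norm of the weighted pointwise bound, sum over the finitely many Leibniz/Fa\`a di Bruno multi-indices $(j,k_1,\ldots,k_j,m)$ with $k_1+\cdots+k_j+m=n$, and sum over $n=0,\ldots,N$. The resulting estimate has the structure $\sum_{j=0}^{N}C(N)|\mathfrak{F}^{(j)}|_{\mathfrak{K}}\|v_1\|_{H_\delta^N}^{j}\|v_2\|_{H_\delta^N}$, which is the claimed inequality.

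The main obstacle, modulo accounting, is the bookkeeping that verifies the hypotheses of Lemma A\text{-}2 for every multi-index arising in the Fa\`a di Bruno expansion and for every $n\leq N$; the worst case is $j=n=N$ with each $k_i=1$ and $m=0$, where one must check $(j+1)N-n-\tfrac{3}{2}j=(N+1)N-N-\tfrac{3}{2}N=N^2-\tfrac{3}{2}N\geq 0$, which is precisely why the hypothesis $N\geq 2$ is needed. No extra smallness, symmetry, or decay ingredient enters; it is a purely algebraic-combinatorial argument built on Lemma A\text{-}2 and the chain rule.
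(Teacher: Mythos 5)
Your argument is correct and is exactly the Leibniz--Fa\`a di Bruno--multiplication-lemma route the paper has in mind: the Appendix explicitly leaves this corollary as an exercise following from Lemmas \ref{L:SobolevEmbeddingHNdeltaCNprimedeltamprime} and \ref{L:WeightedSobolevSpaceMultiplicationProperties}, and your verification of the hypotheses of Lemma \ref{L:WeightedSobolevSpaceMultiplicationProperties} (with $p=j+1$, $N'=0$, $\delta'=\delta+n$) goes through; note that Corollary \ref{C:WeightedSobolevSpaceMultiplicationProperties} already packages precisely the weighted product bound you need, so invoking it would spare you the index bookkeeping. One small slip that does not affect the conclusion: the binding case of the constraint $(j+1)N-n-\tfrac{3}{2}j\geq 0$ is $j=1$, $n=N$ (giving $N-\tfrac{3}{2}\geq\tfrac{1}{2}$), not $j=n=N$, since the expression is increasing in $j$ for $N\geq 2$.
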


\hfill $\qed$

\bibliographystyle{amsalpha}
\bibliography{JBib}
\end{document}